\newtheorem{theorem}{Theorem}
\newtheorem{corollary}[theorem]{Corollary}
\newtheorem{lemma}[theorem]{Lemma}
\newtheorem{claim}[theorem]{Claim}
\newtheorem{definition}[theorem]{Definition}
\newtheorem{proposition}[theorem]{Proposition}
\newtheorem{observation}[theorem]{Observation}
\newtheorem{remark}[theorem]{Remark}
\newenvironment{proof}{\noindent\bf{Proof.}\rm}{\hfill$\blacksquare$\bigskip}
\newcommand{\items}{\mathcal{M}}
\newcommand{\ufc}[1]{{\UFcomment{#1}}}
\newcommand{\ufe}[1]{{\UFedit{#1}}}
\newcommand{\mbc}[1]{{\MBcomment{#1}}}
\newcommand{\mbe}[1]{{\MBedit{#1}}}
\newcommand{\mbfuture}[1]{} 
\newcommand{\cout}[1]{}
\newcommand{\OLD}[1]{}
\newcommand{\prices}{{\cal{P}}}
\newcommand{\anypricei}{\anyprice{b_i}{v_i}{\items}}
\newcommand{\anyprice}[3]{AnyPrice\left(#1,#2,#3\right)}
\newcommand{\truncated}[3]{TPS\left(#1,#2,#3\right)}
\newcommand{\truncatedi}{\truncated{b_i}{v_i}{\items}}
\title{Share-Based Fairness for Arbitrary Entitlements}
\author{Moshe Babaioff\thanks{The Hebrew University of Jerusalem, {Moshe.Babaioff@mail.huji.ac.il}} \ and Uriel Feige\thanks{Weizmann Institute, {uriel.feige@weizmann.ac.il}}}
\begin{document}

\maketitle

\thispagestyle{empty}
\begin{abstract}
We consider the problem of fair allocation of indivisible items 
to agents that have arbitrary entitlements to the items. 
Every agent $i$ has a valuation function $v_i$ and an entitlement $b_i$, where entitlements sum up to~1. Which allocation should one choose in situations in which agents fail to agree on one acceptable fairness notion? We study this problem in the case in which each agent focuses on the value she gets,
and fairness notions are restricted to be {\em share based}. A {\em share} $s$ is an  function that maps every $(v_i,b_i)$ to a value $s(v_i,b_i)$, representing the minimal value $i$ should get, and $s$ is {\em feasible} if it is always possible to give every agent $i$ value of at least $s(v_i,b_i)$.



Our main result is that for additive valuations over goods there is an allocation that gives every 
agent at least half her share value, regardless of which feasible share-based fairness notion the agent wishes to use. Moreover, the ratio of half is best possible. More generally, we provide tight characterizations of what can be achieved, both ex-post  (as single allocations) and ex-ante (as expected values of distributions of allocations), 
both for goods and for chores. 
We also show that for chores one can achieve the ex-ante and ex-post guarantees simultaneously (a ``best of both world" result), whereas for goods one cannot. 

\end{abstract}

\pagebreak
\section{Introduction}
\label{sec:intro}
\setcounter{page}{1}

In this paper we aim to build a general theory of share-based fairness for allocation of indivisible items when agents have arbitrary entitlements to the items. Prior work \cite{BF22} has presented a theory of share-based fairness for agents with \emph{equal entitlements} to the items, yet in many interesting settings the entitlements of the agents are not necessarily equal. 
For example, when allocating students with slots for courses of limited capacity, some students may have a higher entitlement (say, as they are more senior). 
Another example is the setting of multi-generation inheritance: even if the  total entitlement of siblings is equally split, grand-children will have smaller entitlements than their aunts and uncles.
Yet another setting is one of allocating scarce medical treatments or medical equipment, for example during the global pandemic of COVID-19,
and the entitled agents (states, hospitals) might be of very different sizes, implying asymmetric entitlements.
{Before describing our results for fairness with arbitrary entitlements, we start with some background.}

\subsection{Background}\label{sec:intro-back}


We consider allocation of a set $\items=[m]$ of $m$ indivisible items to $n$ agents with arbitrary entitlements. 
Each agent $i\in [n]$ has a valuation function $v_i:2^{\items} \rightarrow \mathbb{R}$ over the items. 
{We assume that all valuation functions are normalized ($v_i(\emptyset)=0$ for every $i\in [n]$). 
A valuation $v$ is called a \emph{valuation over goods} if it is monotone non-decreasing ($v(S)\leq v(T)$ for $S\subseteq T$). We say that $C$ is a \emph{class of valuations over goods} if every $v\in C$ is a valuation over goods.
Valuation  $v_i$ is \emph{additive} if $v_i(S)=\sum_{j\in S} v_i(\{j\})$ (with $v_i(\emptyset)=0$).
An additive valuation over goods satisfies $v_i(\{j\})\geq 0$ for every item $j\in \items$, and such items are called \emph{goods}.
} 
A valuation $v$ is called a \emph{valuation over chores} if it is monotone non-increasing ($v(S)\geq v(T)$ for $S\subseteq T$). For an additive valuation $v_i$, items are \emph{chores} if  $v_i(\{j\})\leq 0$ for every item $j\in \items$. 
In this paper we study both the allocation of goods as well as the assignments of chores. 
An \emph{allocation} is a partition of $\items$ to $n$ disjoint sets $(S_1,S_2,\ldots,S_n)$, where agent $i$ receives the set $S_i$ for which she has value of $v_i(S_i)$.

Babaioff and Feige~\cite{BF22} (henceforth BF22)  have presented a general theory of share-based fairness for agents with \emph{equal entitlements} to the items. 
In contrast, our focus is on agents that do not necessarily have equal entitlements to the items, but rather their entitlements might be arbitrary. We assume that agent $i\in [n]$ has \emph{entitlement} $b_i> 0$ to the items, and that $\sum_{i\in [n]} b_i=1$. 

The theory of share-based fairness aims to study fairness when every agent focuses on the value she is guaranteed to receive (compare to her own valuation for $\items$), knowing only her entitlement and valuation.\footnote{{Envy-based fairness is an alternative approach in which agents also care about the partition of the items between others. We briefly discuss research on that approach in Section \ref{sec:related}.}} 
{Thus, a share $s$ maps a valuation $v$ and an entitlement $b$, to the value of the share $s(v,b)$, viewed as the guarantee on the value agent should get}. {Share functions are} anonymous, capturing the most fundamental fairness requirement: agents that are completely identical (have exactly the same valuation function and entitlement) should be treated exactly the same, and therefore all such agents  have the same share value.
{To capture ``entitlement'' as desirable, we assume that a share function must satisfy \emph{entitlement monotonicity}: $s(v_i,b_i)$ is non-decreasing in $b_i$.\footnote{Although our positive result for goods do not use entitlement monotonicity, we view this property as a fundamental one, and believe every share should 
satisfy it.} 
}

Many different shares have been suggested in the literature. 
The most basic one is the proportional share, defined as $b_i\cdot v_i(\items)$. When items are indivisible it is clearly impossible to give every agent that amount,\footnote{{That impossibility holds ``ex-post'' (for deterministic allocations). In contrast, randomized shares can give that amount in expectation (ex-ante).}} 
so other shares have been suggested.  
An example for a share for equal entitlement is the maximin share (MMS) \cite{Budish11}, which for valuation $v_i$ in an $n$ agent setting (corresponding to an entitlement of $1/n$), is defined to be the highest value an agent can obtain by partitioning the set of items to $n$ bundles, and getting the lowest value one. The value of this share is denoted by $MMS(v_i,1/n)$.  The literature has also presented multiple share definitions applicable for arbitrary entitlements, including the AnyPrice share (APS) \cite{BEF2021b}, the Truncated Proportional Share (TPS) \cite{BEF2021b}, {the Weighted MMS (WMMS)~\cite{farhadi2019fair}} and the Quantile share \cite{babichenko2023fair}. (See also Section \ref{sec:related}.) 


{
Fairness that is share-based is captured by the guarantee that every agent must get a bundle of value that is at least as high as her share.  This can always be done only if the share $s$ is \emph{feasible}, that is, only if for every profile of valuations $(v_1,v_1,\ldots,v_n)$ {and entitlements $(b_1,b_2,\ldots,b_n)$ of the agents, there exists some allocation $(S_1,S_2,\ldots,S_n)$ that gives each agent value that is at least her share: $v_i(S_i)\geq s(v_i,b_i)$ for all $i\in [n]$. Such an allocation is called an \emph{acceptable allocation} for $s$, or an \emph{$s$-allocation}.} 
(All these definitions apply to ex-post shares -- shares defined on deterministic allocations. They also naturally extend to randomized allocations in which shares are defined ex-ante (over distributions of allocations), and agents are risk neutral (so lotteries are evaluated using their expected value)).

With many alternative share definitions available, each of the agents might argue for a different feasible share to be used as the fairness criteria. In particular, agent $i$ with valuation $v_i$ and entitlement $b_i$ might argue in favor of a feasible share $s^i$ under which her own share value $s^i(v_i,b_i)$ is exceptionally large {(though she is constrained to suggest a feasible share)}.
If there was one feasible share $s$ that is at least as good as $s^i$ for every $i$ (that is, $s(v_i,b_i)\geq s^i(v_i,b_i)$ for every $i$), then the agents could have agreed to use that share. 
BF22 have shown that, unfortunately, even when the entitlements are equal, such a feasible share $s$ (one that is as large as any other feasible share) does not exist for additive valuations over goods. This clearly extends to the case that entitlements are not necessarily equal. Given that strong impossibility result, we consider a relaxed problem: guarantee every agent a large as possible fraction $\rho$ of any feasible share of her choice. 
A feasible share that is $\rho$-dominating will give the required relaxed guarantee: 

\begin{definition}
\label{def:dominating}
    For $\rho > 0$ and class $C$ of valuations over goods, a share function ${s'}$  $\rho$-dominates (or dominates if $\rho = 1$) share $s$ if ${s'}(v,b) \ge \rho \cdot s(v,b)$ for every valuation $v\in C$ and every entitlement $b$. Share $s'$ is \emph{$\rho$-dominating} for class $C$ if it $\rho$-dominates every feasible share $s$ for class $C$.
\end{definition}

Note that the space of shares when entitlements are arbitrary is much richer than the space of shares for equal entitlements: while a share for equal entitlements with $n$ agents only depends on the valuation, the share for  arbitrary entitlements can depend on the entitlement of the agent. Thus, it is much harder for a share to be $\rho$-dominating share. 
{For example, such a share will need to dominate the feasible share that for an agent with valuation $v_i$ and entitlement $b_i \ge 0.35$  gives value $MMS(v_i,\frac{1}{2})$, and nothing to agents of entitlements smaller than $0.35$. (This share is feasible, as there can be at most two agents with entitlement at least $0.35$.)}

We aim to $\rho$-dominate all feasible shares. 
Our positive results present values of $\rho$ for which there is a feasible share that $\rho$-dominates every feasible share, even shares that might look unnatural. 
In proving that a better $\rho$ is not possible, we do not want an impossibility that hinges on unnatural shares, and we aim for an impossibility that holds even if we only require domination of feasible shares that are {``nice''.} 
For example, it is natural to require that a share is monotone in the valuation: an increase in the value of some item should not decrease the value of the share. 
We aim for impossibilities that hold even when requiring to $\rho$-dominate only shares {that are ''nice'', satisfying} 
many strong properties, including 
item name independence, valuation monotonicity, continuity and scale-freeness (see discussion in Section \ref{sec:model-share-props}).

\subsection{Our Results}\label{sec:our-results}

We present a systematic study of feasible shares and of domination relation between shares for arbitrary entitlements.
We characterize necessary and sufficient conditions for a share to dominate every other feasible share. 
We then focus on additive valuations, and consider items that are goods as well as items that are chores. We prove tight domination results for ex-post shares, for ex-ante shares, and for their combination (known as ``best-of-both-worlds'' results). 
{All our positive results are proven by presenting allocation mechanisms in which every agent has a strategy that guarantee her an allocation of the value she aims for, no matter how others play. We view these mechanisms as natural and it seems plausible to us that these mechanisms could be used in practice. See Section \ref{sec:practical} for a discussion.}

\subsubsection{Allocation of Goods}

We  pursue the desiderata of dominating all feasible shares for \emph{arbitrary entitlements}. Previously, 
a similar quest was pursued only for the equal entitlement case~\cite{BF22}. 
Our main result is a tight bound on the largest $\rho$ such that there exists a feasible share that $\rho$-dominates every feasible share for arbitrary entitlements for agents with additive valuations over goods.

\begin{theorem}\label{thm:main-intro} 
Consider settings with additive valuations over goods and arbitrary entitlements. 
There exists a poly-time computable feasible ex-post share that 
{is at least $\frac{1}{2}$ of every feasible ex-post share (it is a $\frac{1}{2}$-dominating share).
As the share is feasible, there exists an allocation that gives every agent at least $\frac{1}{2}$ of every feasible ex-post share.
{In contrast, for any $\rho>1/2$ there exists an allocation instance in which in every allocation there is an agent that gets less than $\rho$-fraction of her chosen feasible ex-post share.}
} 

Additionally, for any $\varepsilon>0$ there exists a poly-time computable feasible share that $(\frac{1}{2}-\varepsilon)$-dominates every feasible share, and for which an acceptable allocation can be computed in polynomial time. 
\end{theorem}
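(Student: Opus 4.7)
The plan is to produce, for each type $(v,b)$, a pointwise ``maximum feasible value'' $U(v,b)$ such that (i) every feasible share $s$ satisfies $s(v,b) \le U(v,b)$ and (ii) the share $s^*(v,b) := \tfrac{1}{2} U(v,b)$ is itself feasible; then $s^*$ is a $\tfrac{1}{2}$-dominating feasible share by construction. The natural candidate is $U := \mathrm{APS}$, whose pricing/LP characterization for additive goods says that $\mathrm{APS}(v,b)$ is the largest $V$ such that for every price vector $p$ with $\sum_j p_j=1$ some bundle $T$ has $p(T)\le b$ and $v(T)\ge V$.

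\textbf{Upper bound on feasible shares.} For any feasible $s$ with $s(v,b)=V>\mathrm{APS}(v,b)$, the LP dual produces prices $p$ for which every bundle of $p$-cost $\le b$ has $v$-value $<V$. I would embed this dual in an actual allocation instance: one distinguished agent of type $(v,b)$, plus a collection of additive ``pricing agents'' whose valuations mimic $p$ and whose entitlements sum to $1-b$. Feasibility of $s$ would demand a bundle of value $\ge V$ for the distinguished agent, but the pricing structure forces her bundle's $p$-cost to exceed $b$, starving the pricing agents below their own $s$-values and contradicting simultaneous feasibility. Hence $s(v,b) \le \mathrm{APS}(v,b)$ for every feasible $s$ on additive goods.

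\textbf{Algorithmic guarantee and tightness.} To show $\tfrac{1}{2}\mathrm{APS}$ is feasible I would present a weighted picking / bidding mechanism parameterized by the entitlements and argue inductively that every agent leaves the mechanism with accumulated value $\ge \tfrac{1}{2}\mathrm{APS}(v_i,b_i)$; the key invariant is that each pick absorbs at least half of the remaining APS ``gap,'' so when an agent is no longer active she has already reached $\tfrac{1}{2}\mathrm{APS}$. For tightness of $\rho=\tfrac{1}{2}$, I would construct a small instance (two agents with asymmetric entitlements competing for a single high-value item with carefully chosen side items) in which two distinct feasible shares $s^1,s^2$ each certify value $V$ for the respective agent yet every deterministic allocation gives one of them strictly less than $(\tfrac{1}{2}+\delta)V$, ruling out $\rho>\tfrac{1}{2}$.

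\textbf{Polynomial-time version, and the main obstacle.} Since exact $\mathrm{APS}$ may be intractable, I would replace it with a poly-time $(1-\varepsilon)$-approximation $\widetilde{\mathrm{APS}}$ (via rounding the pricing LP or via $\mathrm{TPS}$-based surrogates) and run the same mechanism, producing a poly-time computable $(\tfrac{1}{2}-O(\varepsilon))$-dominating feasible share with a poly-time acceptable allocation. The main obstacle is the upper-bound step: turning a continuous pricing dual into discrete additive pricing agents whose entitlements sum to exactly $1-b$ requires a careful limiting argument (many tiny pricing agents, or a rational-price approximation) together with a verification that $s$-feasibility of those agents still forces the intended contradiction at the distinguished agent.
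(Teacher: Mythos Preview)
Your central claim—that $\mathrm{APS}$ upper-bounds every feasible share—is false, and this breaks the whole plan. Take any entitlement $b_i>\tfrac12$, say $b_i=0.51$, and $m$ identical items. Then $\mathrm{APS}(v_i,b_i)\approx 0.51\cdot v_i(\items)$ (price each item at $1/m$), yet the share ``give all of $\items$ to any agent whose entitlement exceeds $\tfrac12$, and nothing otherwise'' is feasible (at most one such agent exists) and assigns value $v_i(\items)$ at $(v_i,0.51)$. So there is a feasible share strictly above $\mathrm{APS}$, and your dual/pricing-agents argument cannot rescue this: the tiny pricing agents you introduce have tiny entitlements, and an arbitrary feasible $s$ owes them nothing, so no contradiction is forced on the distinguished agent. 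The correct tight upper bound is not $\mathrm{APS}$ but $\widehat{MMS}(v,b)=MMS(v,\hat b)$, where $\hat b=1/\lfloor 1/b\rfloor$; this is proved by a symmetry argument (place $\lfloor 1/b\rfloor$ identical agents) together with personalized feasible shares that attain $\widehat{MMS}$ at each $(v,b)$.

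Because the right benchmark is $\widehat{MMS}$ (which can equal $v_i(\items)$ when $b_i>\tfrac12$), the feasibility side is much harder than a generic ``weighted picking'' invariant: one must actually show that an agent with $b_i>\tfrac12$ can secure value $\ge\tfrac12 v_i(\items)$. The paper does this via a structural analysis of worst-case-optimal strategies in the bidding game (a nonconstructive symmetry argument on the threshold vector), and then builds an inductive $\tfrac12$-$\widehat{TPS}$-safe strategy for smaller entitlements on top of it, with substantial case analysis (notably for $k=2$). The tightness example is also different from your two-agent sketch: it uses $2n-2$ identical items and $n$ agents whose rounded entitlements force some agent's $\widehat{MMS}$ to be $2$ while she can receive only one item. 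Your proposal would need to replace $\mathrm{APS}$ by $\widehat{MMS}$ throughout and then supply the real work of the bidding-game analysis.
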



Some remarks are in order.

First, while we obtain a tight {$\rho$-domination} result for arbitrary entitlements ($\rho=\frac{1}{2}$), 
{for the case of equal entitlements there is  a gap {is the known results} regarding the values of $\rho$ for which $\rho$-domination by a feasible share is possible. See Section \ref{sec:related} for details.
}

Second, from computational point of view, our results are almost tight: the best that we can hope for is $\rho$ of $\frac{1}{2}$, and we obtain a $\rho$ of $(\frac{1}{2}-\varepsilon)$ with both share computation and acceptable allocation computation being polynomial time. 

{Third, our proof of the impossibility result actually shows a stronger claim, showing that for any $\rho>1/2$ there is no feasible ex-post share that $\rho$-dominates only those feasible shares that are ``nice'' {(rather than all feasible shares)}. Such shares satisfy natural properties like name independence, {valuation} monotonicity and {valuation} continuity. See Section \ref{sec:model}.}


{Next we consider ex-ante domination {for agents with} additive valuations over goods.} 
\begin{theorem}\label{thm:intro-goods-ex-ante}
Consider settings with additive valuations over goods and arbitrary entitlements. 
For $\gamma \simeq 1.69103$ (so  $\frac{1}{\gamma} \simeq  0.591$)
there exists a poly-time computable feasible ex-ante share that $\frac{1}{\gamma}$-dominates every feasible ex-ante share {($\frac{1}{\gamma}$-dominating).}
{In contrast,} for any $\rho>\frac{1}{\gamma}$, there is no feasible ex-ante share that $\rho$-dominates every feasible ex-ante share. 
\end{theorem}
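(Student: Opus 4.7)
The plan is to characterize the pointwise supremum of feasible ex-ante shares and then exhibit a valuation-oblivious random mechanism whose ex-ante guarantees meet a $1/\gamma$-fraction of this supremum. The extremal constant $\gamma$ will emerge from a purely combinatorial optimization governed by Sylvester's sequence.

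\medskip

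\emph{Upper bound on any feasible share.} For any feasible ex-ante share $s$ and any pair $(v,b)$, I would test $s$ against the same-valuation profile consisting of $k := \lfloor 1/b \rfloor$ copies of $(v,b)$ plus (if needed) one additional agent of entitlement $1-kb$. Since the $k$ identical copies partition the additive total $v(\items)$ in any realization of the allocation, feasibility forces $k \cdot s(v,b) \leq v(\items)$, giving $s(v,b) \leq g^*(b)\cdot v(\items)$ with $g^*(b) := 1/\lfloor 1/b \rfloor$. Conversely, the bound is tight: the point-share placing value $g^*(b)v(\items)$ only at $(v,b)$ (and $0$ elsewhere) is itself feasible, by a uniform random split among the (at most $k$) copies that can appear in any profile.

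\medskip

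\emph{Construction and identification of $\gamma$.} I would then define the candidate ex-ante share $s'(v,b) := \tfrac{1}{\gamma}\, g^*(b)\, v(\items)$ and prove feasibility via the following mechanism: independently for each item $j \in \items$, assign $j$ to agent $i$ with probability $p_i := \tfrac{1}{\gamma}\, g^*(b_i) + \lambda_i$, where the slack $\lambda_i \geq 0$ is chosen (say, proportional to $b_i$) so that $\sum_i p_i = 1$. For additive valuations, independence across items yields $\mathbb{E}[v_i(S_i)] = p_i\, v_i(\items) \geq s'(v_i,b_i)$. The core numerical ingredient needed for the slack to be nonnegative is
\[
\textstyle \sum_{i=1}^{n} g^*(b_i) \leq \gamma \qquad \text{whenever } \sum_i b_i = 1.
\]
Writing $k_i = \lfloor 1/b_i \rfloor$, the constraint $\sum b_i = 1$ implies $\sum 1/(k_i+1) < 1$, and a greedy/exchange argument picking $k_i$ in increasing order identifies the extremal sequence with $k_i+1 = s_i$, where $s_1 = 2,\ s_{n+1} = s_1 s_2 \cdots s_n + 1$ is Sylvester's sequence. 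The resulting supremum is exactly $\gamma = \sum_{n\geq 1} 1/(s_n - 1) \approx 1.69103$.

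\medskip

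\emph{Tight impossibility and main obstacle.} For the matching lower bound, fix any additive $v$ with $v(\items) = 1$ and build the explicit profile by taking $b_i$ slightly above $1/s_i$ for $i \leq N$, padded by one small residual agent so $\sum b_i = 1$; then $g^*(b_i) = 1/(s_i - 1)$. Since each $(v,b_i)$ admits the feasible point-share of value $1/(s_i - 1)$, any $\rho$-dominating feasible $s'$ must satisfy $s'(v,b_i) \geq \rho/(s_i-1)$. Summing in this same-valuation profile and using $\sum_i v(S_i) = v(\items) = 1$ gives $\rho \sum_{i=1}^{N} 1/(s_i - 1) \leq 1$, and letting $N\to\infty$ forces $\rho \leq 1/\gamma$. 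The technical crux is the integer-programming step: justifying $\sum 1/k_i \leq \gamma$ under $\sum 1/(k_i+1) < 1$ and pinning the extremum to Sylvester's recurrence via the greedy/exchange step; once this is in hand, polynomial time follows immediately because $g^*$ is evaluable in constant time, $\gamma$ is a fixed precomputed constant, and the per-item independent mechanism is trivially efficient.
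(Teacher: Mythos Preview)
Your proposal is correct and follows essentially the same approach as the paper: identify $\widehat{PS}(v,b)=\hat b\cdot v(\items)$ (your $g^*(b)\,v(\items)$) as the tight pointwise upper bound on feasible ex-ante shares, bound $\sum_i \hat b_i\le\gamma$ via the Sylvester-sequence lemma, exhibit feasibility of $\tfrac{1}{\gamma}\widehat{PS}$ by a randomized allocation giving agent $i$ each item (equivalently, the grand bundle) with probability proportional to $\hat b_i$, and prove tightness by the Sylvester-entitlement profile. One small correction: your ``point-share'' $s_{(v,b)}$ that is nonzero only at the single pair $(v,b)$ is \emph{not} entitlement-monotone and hence is not a valid share under the paper's definition; you should instead set it to $g^*(b)\,v(\items)$ for all $b'\ge b$ with $v'=v$ (this is still feasible, since at most $\lfloor 1/b\rfloor$ such agents can coexist), which is exactly the role played by the paper's personalized $PS_{(v,b)}$ shares.
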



We next consider the problem of obtaining a \emph{Best-of-Both-Worlds} result for arbitrary entitlements.  
A Best-of-Both-Worlds (BoBW) result aims to present a randomized allocation rule that is both ax-ante fair as well as ex-post fair. That is, for every instance, it obtains some fairness in expectation, but is supported only on allocations that are fair {(ex-post)}. 
An ultimate result will be a ``double-domination'' BoBW result, getting (approximate) domination of every feasible ex-ante share, as well as (approximate) domination of every feasible ex-post share.  We prove a strong impossibility result for BoBW when agents have additive valuations over goods and arbitrary entitlements. 

\begin{proposition}\label{prop:intro-BoBW-goods}
    For any $n\geq 2$ and $m\geq 1$, 
    there are {allocation instances} 
    with $n$ agents of unequal entitlement that have  additive valuations over $m$  {goods} 
    in which ``double-domination'' BoBW result is impossible: Every randomized allocation that gives every agent some constant fraction of 
    {the ex-ante proportional share (which is a feasible ex-ante share)} must be supported on some allocation in which an agent does not get a constant fraction of some  feasible ex-post share. 
\end{proposition}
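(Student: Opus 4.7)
The plan is to construct an instance in which asymmetric entitlements force a conflict between any ex-ante proportional guarantee and the ex-post guarantee associated with a natural ``threshold'' feasible share. The base case I analyze is $n=2$, $m=1$: two agents with entitlements $b_1 = \varepsilon$ and $b_2 = 1 - \varepsilon$ for some small $\varepsilon \in (0, 1/2)$, each valuing the single item at $1$. Agent~1's ex-ante proportional share equals $b_1 \cdot v_1(\items) = \varepsilon$, and the only way for her to obtain positive value is to receive the item.

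The next step is to introduce the feasible ex-post share $s^\star$ defined by $s^\star(v,b) = v(\items)$ when $b > 1/2$, and $s^\star(v,b) = 0$ otherwise. Since entitlements sum to~$1$, at most one agent can have $b > 1/2$, so $s^\star$ is feasible: award all of $\items$ to that single ``heavy'' agent (when she exists) and assign empty bundles to the others. The share $s^\star$ is also name-independent, valuation-monotone, scale-free, and entitlement-monotone, so it is a perfectly legitimate feasible ex-post share in the sense of the paper.

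Now suppose a randomized allocation gives every agent at least a $c_1$-fraction of the ex-ante proportional share, for some constant $c_1 > 0$. Since agent~1 earns value only by receiving the item, the allocation must give her the item with probability at least $c_1 \varepsilon > 0$. Hence, with probability at least $c_1 \varepsilon$, agent~2 does not receive the item and has ex-post value~$0$. But $s^\star(v_2, b_2) = v_2(\items) = 1$ because $b_2 = 1 - \varepsilon > 1/2$. Therefore the support of the randomized allocation contains a deterministic allocation in which agent~2 attains a $0$-fraction of her feasible ex-post share $s^\star$, which fails every positive constant-fraction ex-post guarantee.

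The extension to arbitrary $n \geq 2$ and $m \geq 1$ is routine: pad the instance with $m-1$ additional items that every agent values at $0$, and with $n-2$ additional agents who value every item at $0$ and whose tiny positive entitlements are chosen so that $\sum_i b_i = 1$ while keeping $b_1$ small and $b_2 > 1/2$. All ex-ante and ex-post quantities above are unchanged, since the added items carry no value for agents~1 and~2 and the added agents are indifferent. The main conceptual step is identifying the threshold share $s^\star$; once this share is in hand the argument is immediate, which is why I do not anticipate a serious obstacle.
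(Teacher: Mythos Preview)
Your proposal is correct and follows essentially the same idea as the paper: construct an instance in which one (or several) highly-entitled agents must receive all the value ex-post, so that any randomized allocation giving a less-entitled agent a positive fraction of her proportional share must put weight on an allocation where a highly-entitled agent gets nothing. The only noteworthy difference is packaging: the paper shows the heavy agent fails $\widehat{MMS}$ and then invokes Proposition~\ref{prop:goods-additive-minimal} to obtain a feasible (nice) ex-post share that is violated, whereas you directly exhibit the feasible threshold share $s^\star$ (which coincides with $\widehat{MMS}$ on the heavy agent since $\hat{b}_2=1$). Your route is slightly more self-contained; the paper's route additionally yields that the violated share is \emph{nice}, which your $s^\star$ happens to satisfy as well but which you do not need for the statement as phrased.
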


Note that this impossibility result for arbitrary entitlements stands in  sharp contrast to the positive result proven in \cite{BEF2021c} for the case of equal entitlements: for equal entitlements it is possible to find a randomized allocation that is ex-ante proportional (thus dominates every ex-ante feasible share) and also $1/2$-dominates every feasible ex-post share.   

\subsubsection{Assignment of Chores}


{We next move to consider items that are chores (undesirable).
The definitions in this case involve some natural modifications compared to the definitions for allocations of goods. 
When items are chores, all chores must be {assigned} to a set of agents that have different responsibilities for handling the chores: each agent $i$ has a \emph{responsibility} $b_i>0$, such that $\sum_i b_i=1$.} 
In the case of chores, every chore (item) has a non-positive value ($v_i(\{j\})\leq 0$ for every item $j\in \items$). To simplify the terminology, instead of using the value function, we consider the cost function which is the negation of the value function. That is, for chores, every agent $i$ has a \emph{cost function} $c_i$ over chores, which is normalized (the cost of the empty set is~0), non-negative, and weakly monotone non-decreasing. 
The agent prefers bundles of lower cost over those of higher cost. 
{A share $s$ determines the share cost $s(c_i,b_i)$ of agent $i$ with cost function $c_i$ and responsibility $b_i$, This is the maximal cost the agent should bear: a set of $S$ chores is \emph{acceptable} to agent $i$ if $c_i(S)\leq s(c_i,b_i)$.} An assignment  must assign all chores, and it  is acceptable if each agent is assigned an acceptable set of chores. {Additionally, a share for  chores need to be monotone in the responsibility, that is, the more responsibility, the larger the share: $s(c_i,b_i)\leq s(c_i,b^+_i)$ for any $b_i\leq b_i^+$ and any $c_i$.}

\begin{definition}
    {For $\rho > 0$  and class $C$ of cost functions, a share function ${s'}$  \emph{$\rho$-dominates} (or simply \emph{dominates} if $\rho = 1$) share $s$ if ${s'}(c,b) \le \rho \cdot s(c,b)$ for every cost function $c\in C$ and every responsibility $b$. Share $s'$ is \emph{$\rho$-dominating} if it $\rho$-dominates every feasible share $s$  for class $C$.}
\end{definition}

Our main result for chores is a tight bound on the smallest $\rho$ such that there exists a feasible share that $\rho$-dominates every feasible share for arbitrary responsibilities over additive chores.
Moreover, we can compute the feasible share that obtains our positive result  in polynomial time, and we can also find an acceptable assignment in polynomial time. 

\begin{theorem}
\label{thm:feasibleChores-intro}
    For additive costs and arbitrary responsibilities, the {\em Rounded{-responsibilities} Round Robin share} (RRR) {(see Definition \ref{def:RRR})}
   is polynomial time computable 
   and {with cost at most twice that of any feasible ex-post share ($2$-dominating).} 
   It is feasible, and a feasible assignment  can be computed in polynomial time. 

   {In contrast}, for any $\rho<2$  there is no feasible share that $\rho$-dominates every feasible share.
\end{theorem}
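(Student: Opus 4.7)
The plan separates the upper bound (RRR is polynomial-time computable, feasible, and $2$-dominating) from the matching lower bound. For the upper bound, I would first spell out RRR as a rounded round-robin: set $k_i = \lceil \log_2(1/b_i) \rceil$ and $\hat{b}_i = 2^{-k_i}$, so that $\hat{b}_i \in (b_i/2, b_i]$ with $\sum_i \hat{b}_i \leq 1$. Letting $K = \max_i k_i$, give agent $i$ exactly $2^{K-k_i}$ virtual copies, and pad with dummy copies to fill $2^K$ slots. The mechanism cycles over these slots, each active copy greedily picking its cheapest remaining chore (dummy-slot picks absorbed by an active agent at cycle end without harming any guarantee). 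The share $s_{\mathrm{RRR}}(c,b)$ is defined as the worst-case cost an agent with parameters $(c,b)$ incurs over copy orderings within a cycle. Feasibility and polynomial-time computability follow directly from the constructive definition, and the mechanism itself produces an acceptable assignment.

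For the $2$-domination claim, fix a feasible share $s$ and any pair $(c,b)$. A key lower bound on $s(c,b)$ is obtained by considering the $(1/\hat{b})$-agent homogeneous instance in which every agent has cost function $c$ and responsibility $\hat{b}$: since $\hat{b}$ is a power of $1/2$ dividing $1$, this is a valid instance, and feasibility of $s$ combined with pigeonhole on the total cost $c(\items)$ gives $s(c,\hat{b}) \geq \hat{b}\cdot c(\items)$; entitlement monotonicity extends this to $s(c,b) \geq \hat{b}\cdot c(\items)$. The matching upper bound $s_{\mathrm{RRR}}(c,b) \leq 2\hat{b}\cdot c(\items)$ follows from an amortized analysis of the greedy picks within the round-robin, with the factor of $2$ arising from the interplay between the rounding (which makes $\hat{b} \in (b/2, b]$) and the greedy-pick charging argument. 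Combining yields $s_{\mathrm{RRR}}(c,b) \leq 2\hat{b}\cdot c(\items) \leq 2 s(c,b)$.

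For the lower bound ($\rho<2$ impossible), the plan is to exhibit a family of instances in which the factor-of-$2$ gap is unavoidable. The prototypical construction uses responsibilities just below powers of $1/2$: for an agent with $b = 1/2-\delta$, rounding to the nearest power of $1/2$ from above halves the effective weight to $1/4$. I would construct two (or more) feasible entitlement-monotone shares together with a carefully chosen instance (for example, a two-agent $(1/2+\delta, 1/2-\delta)$ profile paired with a supplementary multi-agent profile whose responsibilities lie just below $1/2-\delta$) such that any feasible $s'$ attempting to $\rho$-dominate all of them for $\rho<2$ would violate feasibility in one of these instances. The main obstacle is the tight amortized analysis in the $2$-domination step, and for the lower bound, navigating the entitlement-monotonicity constraint, which severely restricts the feasible shares available as witnesses; the calibration of the auxiliary instances must saturate this constraint to achieve the sharp factor of $2$ as $\delta \to 0$.
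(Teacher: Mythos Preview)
Your proposal has substantive gaps at all three stages.

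\textbf{The definition of RRR.} The theorem points to Definition~\ref{def:RRR}, and the paper's RRR is \emph{not} a power-of-two rounding. For $b_i$ with $\frac{1}{k+1}\le b_i<\frac{1}{k}$, the paper's share is $RRR(c_i,b_i)=\sum_{j\equiv 1\bmod k} c_i(e_j)$, i.e.\ the worst-position round-robin among $k$ agents (unit-fraction rounding). Your construction with $\hat b_i=2^{-\lceil\log_2(1/b_i)\rceil}$ is a different object, so everything you prove about it does not address the stated theorem.

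\textbf{The $2$-domination argument.} Even setting the definition aside, your key inequality $s_{\mathrm{RRR}}(c,b)\le 2\hat b\cdot c(\items)$ is false for any round-robin-type share. Take a single chore of cost $1$ and an agent with small $b$; any feasible share must assign the chore to someone, so the round-robin share is $1$, while $2\hat b\cdot c(\items)=2\hat b\ll 1$. The proportional lower bound $s(c,b)\ge \hat b\cdot c(\items)$ is correct but insufficient: you must also use that every feasible share satisfies $s(c,b)\ge c_i(e_1)$ (the most expensive single chore). The paper handles this by comparing RRR to $\overline{MMS}$, which simultaneously dominates both the proportional share and $c_i(e_1)$; the computation
\[
RRR(c_i,b_i)\le c_i(e_1)+\tfrac{1}{k}c_i(\items\setminus\{e_1\})\le \tfrac{k-1}{k}\overline{MMS}+\tfrac{k+1}{k}\overline{MMS}=2\,\overline{MMS}
\]
then gives $2$-domination via the characterization that $\overline{MMS}$ is the maximal dominating share.

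\textbf{The lower bound.} Your $(1/2+\delta,1/2-\delta)$ instances will not produce a factor of $2$: for $b=1/2-\delta$ with small $\delta$ one has $\check b=1/3$, so the rounding loss is only $3/2$, not $2$. The factor of $2$ in the paper arises from an agent with responsibility close to $1$ (e.g.\ $b_1=\frac{2t-1}{2t}$), whose unit lower bound is $\check b_1=\frac{1}{2}$; with $2t$ identical chores and a second agent with $b_2=\frac{1}{2t}$, every assignment forces some agent to suffer at least $(2-\frac{1}{t})\cdot\overline{MMS}$. This is a direct single-instance argument, not a calibration across multiple auxiliary profiles.
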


{We next consider ex-ante assignment of chores. 
We observe that there is no feasible ex-ante share that is $\rho$-dominating, if $\rho<2$. 
{Thus, one would like a feasible ex-ante share that is $2$-dominating, and this optimal ex-ante domination is easily obtained by the proportional share. Yet, ex-ante fairness by itself is a weak fairness guarantee, and we would like an ex-post guarantee as well.}
We present a best-possible double-domination Best-of-Both-Worlds result: a randomized allocation that 
dominates the proportional share in expectation (and $2$-dominates every feasible ex-ante share -- the best we can hope for), together with $2$-domination of every feasible ex-post share (again, the best we can hope for). 
}

\begin{theorem}
\label{thm:BoBWChores-intro}
{Consider assignment of chores to agents with additive cost functions and arbitrary responsibilities.
There is a polynomial time randomized assignment algorithm that for every input instance outputs a distribution over assignments for which
each agent suffers an expected cost that is at most her proportional share (and at most twice her share for any feasible ex-ante share), and is supported on allocations of cost that is
no more than twice her cost in any feasible ex-post share ($2$-dominating).} 

\end{theorem}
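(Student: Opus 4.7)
The plan is to layer a carefully designed randomization on top of the deterministic Rounded-responsibilities Round Robin (RRR) protocol of Theorem~\ref{thm:feasibleChores-intro}, so that every realization in the support is RRR-acceptable (and hence $2$-dominates every feasible ex-post share by the ex-post half of Theorem~\ref{thm:feasibleChores-intro}), while the marginal probability that chore $j$ is assigned to agent $i$ equals $b_i$ for every pair $(i,j)$. By additivity, each agent's expected cost then equals her proportional share $b_i\cdot c_i(\items)$.

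First I would note that the ex-ante part of the conclusion reduces to proportionality in expectation. As the excerpt already observes, the proportional share is a feasible ex-ante share and is itself $2$-dominating, which means that $b_i\cdot c_i(\items)\leq 2\,s(c_i,b_i)$ for every feasible ex-ante share $s$. Consequently, any mechanism whose expected cost is at most the proportional share automatically $2$-dominates every feasible ex-ante share, and the only remaining task is to produce such a mechanism whose realizations are RRR-acceptable.

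The main construction is a two-level lottery. At the outer level I randomize the rounding of the responsibilities: the RRR protocol fixes an integer multiplicity $w_i$ for each agent (so that agent $i$ picks $w_i$ times in the picking sequence), and I would choose the rounding so that $\mathbb{E}[w_i]=b_i\cdot m$, e.g.\ via a dependent randomized rounding that preserves the sum $\sum_i w_i=m$. At the inner level, conditional on the rounding, I randomize the RRR picking sequence (the order in which agents take their turns, and which chore each takes) using a Birkhoff--von~Neumann-style decomposition of the $n\times m$ fractional assignment matrix whose $(i,j)$ entry is $b_i$. This matrix has row sums $b_i\cdot m$ and column sums $1$, so it decomposes in polynomial time into a convex combination of integer assignments, each realizable by some valid RRR execution, and sampling from the decomposition sends chore $j$ to agent $i$ with probability exactly $b_i$.

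The ex-post guarantee is then immediate, since every sampled assignment is produced by a valid RRR execution. The ex-ante guarantee follows from $\mathbb{E}[c_i(S_i)]=\sum_{j\in\items}\Pr[j\in S_i]\cdot c_i(\{j\})=b_i\cdot c_i(\items)$ by additivity and the marginal property, and polynomial-time sampling follows from polynomial-time Birkhoff--von~Neumann decomposition and polynomial-time RRR execution. The main obstacle I foresee is exactly this coupling: the rounding inside RRR biases the per-chore marginals away from $b_i$, and the outer rounding lottery must be engineered in concert with the inner decomposition so that the overall marginals land exactly on $b_i$ while \emph{every} realization remains a valid RRR picking sequence. Verifying that the Birkhoff decomposition can indeed be carried out within the class of RRR-valid picking sequences (rather than arbitrary integer assignments) is the delicate combinatorial core of the proof.
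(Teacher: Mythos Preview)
Your high-level plan matches the paper's: decompose a fractional assignment via Birkhoff--von~Neumann so that ex-ante each agent is proportional and ex-post every realization is RRR-acceptable. You also correctly name the crux --- ensuring every matrix in the decomposition corresponds to a valid RRR picking sequence --- but you stop exactly there. The two-level lottery you sketch would not close the gap. If you randomize the multiplicities $w_i$ so that $\mathbb{E}[w_i]=b_i m$, then conditional on the outcome the inner matrix must have row sums $w_i$ rather than $b_i m$, so its entries can no longer all equal $b_i$ and the per-chore marginals drift. And in any case a plain BvN decomposition of an $n\times m$ matrix produces arbitrary integral assignments; nothing forces them to satisfy the spreading structure that RRR requires.

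The paper resolves the obstacle with a single BvN step on a different bipartite object. It introduces $m$ main \emph{coupons} (plus fewer than $n$ auxiliary ones so that each agent's total fractional holding becomes the integer $\lceil b_i m\rceil$), gives agent $i$ a $b_i$ fraction of every main coupon, and then splits agent $i$ into $\lceil b_i m\rceil$ \emph{subagents}, handing out her coupon fractions \emph{consecutively}: subagent $a_1^i$ absorbs coupon fractions starting from coupon~$1$ until her total reaches~$1$, then $a_2^i$ continues from where $a_1^i$ left off, and so on. This yields a fractional perfect matching between subagents and coupons, on which BvN applies directly. The consecutive distribution is the whole trick: subagent $a_j^i$ has positive weight only on coupons of index exceeding $(j-1)/b_i>(j-1)k$ (where $\frac{1}{k+1}\le b_i<\frac{1}{k}$), so in \emph{every} integral matching agent $i$ receives at most $\ell$ coupons among the first $\ell k$. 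That is precisely the spreading property behind Theorem~\ref{thm:feasibleChores-intro}, so every realization is automatically RRR-acceptable --- no outer lottery, no separate verification. The ex-ante bound then follows because agent $i$ holds each main coupon $r$ with probability exactly $b_i$, and the chore she takes in exchange for coupon $r$ costs her at most her $r$th-ranked chore, giving expected cost at most $b_i\,c_i(\items)$.
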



This strong positive result for chores stands in sharp contrast to our impossibility for double-domination Best-of-Both-Worlds result for the case of goods {(Proposition \ref{prop:intro-BoBW-goods})}.

{
\begin{remark}
    Our example showing impossibility of {ex-post $\rho$-domination for} $\rho < 2$ in Theorem~\ref{thm:feasibleChores-intro} involves an agent whose responsibility is very high, close to~1. One may ask whether a better value of $\rho$ is possible if all responsibilities are upper bounded by $\delta$, for some small positive $\delta < 1$. For arbitrarily small $\delta$ we design examples showing the impossibility of $\rho < \frac{3}{2}$, but were not able to design examples that show impossibility of {$\rho$ satisfying $\frac{3}{2}\leq \rho < 2$}. 
    
    {In contrast, our examples showing impossibility of $\rho < 2$ for ex-post $\rho$-domination in the Best-of-Both-Worlds setting with the ex-ante requirement being the proportional share (Theorem~\ref{thm:BoBWChores-intro}) hold for all values of $\delta$, no matter how small.}
\end{remark}
}

\subsection{Overview of Our Techniques}\label{sec:intro-tech} 

In this section we provide an overview of the techniques used in our proofs. As our main results concern additive valuations, we assume throughout this section that all valuations are additive, unless explicitly stated otherwise. Moreover, to simplify this overview, we only explain how we design allocation algorithms with the desired approximation ratios, and omit discussions of feasible shares that are implied by our approaches.

\subsubsection{Identifying tight dominating shares}

We start by considering goods. Recall the notion of a dominating share from Definition~\ref{def:dominating}. A key aspect of our work is that of identifying dominating shares $\hat{s}$ that are tight in the sense that for every setting for the valuation $v_i$ and entitlement $b_i$, there is some share $s$ that is both feasible and satisfies $s(v_i,b_i) = \hat{s}(v_i,b_i)$.

For the case of equal entitlements, it was already shown in~\cite{BF22} that the MMS is a tight dominating share ex-post, and it is easy to see that the proportional share $PS(v_i,b_i) = b_i \cdot v_i(\items)$ is a tight dominating share ex-ante. We define variations of MMS and PS that are tight for the case of arbitrary entitlements. This is done by ``rounding up" the entitlement $b_i$ to {the smallest} value that is of the form $\frac{1}{k}$, and then considering the corresponding share for the equal entitlement case with $k$ agents. 

\begin{definition}\label{def:unitGoodsIntro}
    For entitlement $0 < b \le 1$, let integer $k$ be such that $\frac{1}{k+1} < b \le \frac{1}{k}$. 
    We define the {\em unit upper bound} of $b$ to be $\frac{1}{k}$, and denote it by $\hat{b}$. 
    Define the ex-post share $\widehat{MMS}$ as $\widehat{MMS}({v_i},b_i) = MMS({v_i},\hat{b}_i)$ and the ex-ante share $\widehat{PS}$ as $\widehat{PS}({v_i},b_i) = PS({v_i},\hat{b}_i)$.
\end{definition}

The facts that $\widehat{MMS}$ and $\widehat{PS}$ are dominating follows by considering input instances in which there are $k = \lfloor \frac{1}{b} \rfloor$ agents with identical valuations $v_i$ and identical entitlements $b_i$.  Tightness of the share $\hat{s}$ follows by considering for each $(v_i,b_i)$ a {\em personalized share} $s_{v_i,b_i}$ that is feasible and satisfies $s_{v_i,b_i}(v_i,b_i) = \hat{s}(v_i,b_i)$. In our proofs we take the extra effort to ensure that the personalized shares $s_{v_i,b_i}$ are not some artificial shares that make no sense in practice, but rather shares that enjoy natural properties that one would like shares to enjoy.  They are {\em name-independent} (see Definition~\ref{def:orderedShare}), {\em self maximizing} in the sense of~\cite{BF22} (which implies among other things continuity and monotonicty in $v_i$), and they are monotone in the entitlement $b_i$. 

For chores, the corresponding tight shares involve rounding $b$ downwards instead of upwards.

\begin{definition}
\label{def:unitChoresIntro}
    For {responsibility} $0 < b < 1$, let $k$ be such that $\frac{1}{k+1} \le b < \frac{1}{k}$.
    Then the {\em unit lower bound} on $b$, denoted by $\check{b}$, is $\frac{1}{k+1}$. Define the ex-post share $\overline{MMS}$ as $\overline{MMS}(c_i,b_i) = MMS(c_i,\check{b}_i)$ and the ex-ante share $\overline{PS}$ as $\overline{PS}(c_i,b_i) = PS(c_i,\check{b}_i)$.
\end{definition}

\subsubsection{Identifying target approximation ratios, and achieving them ex-ante}

We have identified the tight dominating shares for goods, $\widehat{MMS}$ ex-post and $\widehat{PS}$ ex-ante, and for chores, $\overline{MMS}$ ex-post and $\overline{PS}$ ex-ante. None of these shares is feasible. There are two different sources for infeasibility. One is infeasibility of the corresponding equal entitlement share. This applies to the MMS~\cite{KurokawaPW18}, but not to the PS (as PS is feasible ex-ante in the equal entitlement setting). The other source is the rounding of 
{$b_i$, either to $\hat{b}_i$ (for goods) or to $\check{b}_i$ (for chores).} 
We first discuss the effect of this rounding issue.

By considering instances in which all agents have equal valuations, it is clear that the ratio between $\sum_i \hat{b}_i$ (or $\sum_i \check{b}_i$) and $\sum b_i$ gives bounds on the best approximation ratios that can be guaranteed relative to the corresponding tight dominating share notion $\hat{s}$. As we fix $\sum b_i = 1$, determining the bounds implied by this is the same as determining the maximum possible value of $\sum_i \hat{b}_i$ and the minimum possible value of $\sum_i \check{b}_i$. 

For chores, the inequality $  \frac{1}{2} b_i <\check{b}_i $ that holds for all $0 < b_i < 1$ implies that $\sum_i \check{b}_i > \frac{1}{2}$. The lower bound of $\frac{1}{2}$ is best possible in the sense that for every $\varepsilon > 0$, if $b_1 = 1 - \varepsilon$, then $\sum_i \check{b}_i < \frac{1}{2} + \varepsilon$. (We discuss in our paper also tightness of our bounds when no agent has responsibility close to~1. We obtain partial results in this case but some questions remain open. {See Section~\ref{sec:chores}.}) 

For goods, determining the maximum possible value of $\sum_i \hat{b}_i$ is more demanding. We determine this value to be $\gamma \simeq 1.69103$, which is a limit of a sum related to an integer sequence known as the {\em Sylvester sequence}. 

The above implies that the best possible ex-ante guarantees are $\frac{1}{\gamma}$-$\widehat{PS}$ for goods (Theorem~\ref{thm:intro-goods-ex-ante}) and $2$-$\overline{PS}$ for chores. Each of these ratios is trivially attainable ex-ante, by giving agent $i$ all of $\items$ with probability $\hat{b}_i$ (for goods) or $\check{b}_i$ (for chores).

We now move to consider also the effect of infeasibility of the MMS, which is relevant only to the ex-post setting. For chores we show that  $2$-$\overline{MMS}$ allocations always exist (see Section~\ref{sec:BoBWIntro}), and so all the loss in the approximation ratio can be attributed to the rounding of the $b_i$. This establishes the best possible ex-post guarantee for chores.

For goods the situation is different. For every $\varepsilon > 0$ we design allocation instances in which there is no $(\frac{1}{2} + \varepsilon)$-$\widehat{MMS}$ allocations. Moreover, these examples can be designed even if one fixes arbitrarily small $\delta > 0$, and requires all entitlements to be no more than $\delta$. 
This establishes that a {fraction of $\widehat{MMS}$ that is better than $\frac{1}{2}$} 
cannot be guaranteed for goods. The main and most innovative technical content of our work is to show that there are allocation algorithms that guarantee $\frac{1}{2}$-$\widehat{MMS}$. But before explaining our approach for proving this, we discuss {\em best of both worlds} (BoBW) results.

\subsubsection{Best of Both Worlds results}
\label{sec:BoBWIntro}

Recall that in BoBW results we wish to design randomized mechanisms that simultaneously achieve strong ex-ante guarantees and strong ex-post guarantees {(a guarantee that holds for every allocation in the support)}. Impossibility of BoBW results imply that the ex-ante goal and the ex-post goal conflict with each other, no {randomized} allocation mechanism can satisfy both requirements.

For goods, we establish an impossibility result (Proposition~\ref{prop:intro-BoBW-goods}). Its proof is by designing explicit instances in which {there is no randomized allocation that simultaneously provides constant approximation to $\widehat{PS}$ (ex-ante) and to $\widehat{MMS}$ (ex-post).} 

For chores, we establish the existence of a BoBW randomized assignment mechanism, and moreover, its approximation ratio with respect to both $\overline{PS}$ (ex-ante) and $\overline{MMS}$ (ex-post) is~2, which is best possible (Theorem~\ref{thm:BoBWChores-intro}). The randomized assignment mechanism that we use is one that was previously briefly discussed in the discussion session of~\cite{FH23arxiv}. The analysis showing that it gives $2$-$\widehat{MMS}$ is new. We now give some details.

Refer to a randomized assignment as {\em natural} if for every agent $i$ and chore $e$, the probability that $e$ is assigned to $i$ is exactly the responsibility $b_i$. (There is no assumption of the nature of the correlation among assignments of different items.) Observe that any natural randomized assignment assigns each agent exactly her PS ex-ante, and hence at most twice her $\overline{PS}$. Hence it remains to establish that there are natural randomized assignments that can be implemented as sampling from (deterministic) assignments that are $2$-$\overline{MMS}$. We prove this by starting with the natural fractional assignment in which each agent $i$ is assigned a $b_i$ fraction of each chore. The fractional assignment is ``rounded" to integral assignments by a rounding procedure that is randomized but correlated across items. The randomness of the rounding ensures that each agent $i$ is assigned every chore $e$ with probability exactly $b_i$. This establishes the ex-ante property. The correlations in the rounding ensure that for every deterministic assignment that may result from the rounding, the cost of the chores assigned to  agent $i$ exceed $b_i \cdot c_i(\items)$ (her PS, which is a lower bound on her MMS) by the cost of at most one chore. This immediately implies that the cost assigned to the agent it at most $2$-$MMS$, and we show that in fact it is at most $2$-$\overline{MMS}$.

\subsubsection{The bidding game for ex-post allocation of goods}

We now turn to describe our approach for proving {our main result, showing that every allocation instance has an $\frac{1}{2}$-$\widehat{MMS}$ allocation} 
(Theorem~\ref{thm:main-intro}). In is based on a bidding game, introduced in~\cite{BEF2021b}. In the version of the bidding game that we consider, each agent $i$ receives an initial budget $b_i$, equal to her entitlement. The game precedes in rounds. In each round, each agent places a bid of her choice, not larger than her remaining budget. 
{A bidder with the highest bid wins the round} 
(our results will apply to any tie breaking rule, and hence we do not specify an explicit tie breaking rule) 
and pays her bid, meaning that the bid value is subtracted from her remaining budget.  The winner selects an item of her choice among those items not already selected in previous rounds, and this ends the round. The game ends when all items are allocated (after $m$ rounds). 

A strategy for an agent $i$ in the bidding game is a function that at each round $r$, maps the information available to the agent at  round $r$ to a bid in round $r$, and also dictates which item to select if $i$ is the winner of the round.  (We assume that the information available to $i$ includes at least the set $\items$ of items, her valuation $v_i$ and entitlement (initial budget) $b_i$, which items were selected in previous rounds, by whom, and how much was the paid for each selected item.) For a target value $T$, such as $\frac{1}{2}$-$\widehat{MMS}$, we shall refer to a strategy $\sigma$ as $T$-{\em safe} (or just {\em safe}, if $T$ is clear from the context) if no matter how the initial entitlement is distributed among other agents, no matter what strategy is used by other agents, and no matter what tie breaking rule is used in determining the winner, an agent $i$ that uses $\sigma$ is guaranteed to end the game with a bundle of value at least $T$.

We prove that a $\frac{1}{2}$-$\widehat{MMS}$-safe strategy exists. This implies that a $\frac{1}{2}$-$\widehat{MMS}$ allocation exists, as each of the agents may use the safe strategy.

A main difference between our work and previous work~\cite{BEF2021b, bUF23}  that designed safe strategies for various versions of the bidding game and other fairness notions is that the development here seems to require a much better understanding of the strategy space of the bidding game. Consequently, in the process of developing our safe strategy, we present additional results about the bidding game that may be of independent interest.

\subsubsection{New insights about the bidding game}

For entitlements of the form $b_i = \frac{1}{n}$ where $n$ is an integer, is there an MMS-safe strategy? The answer is negative, because if there was an MMS-safe strategy $\sigma$ this would imply that MMS allocations always exist in the equal entitlement case (as all agents could use $\sigma$), whereas we know that this is not true~\cite{KurokawaPW18}. However, what about the case $n=2$, in which MMS allocations do exist? A-priori, the answer is not clear. {Plausible answers include {\em no}, {\em yes}, and {\em it depends} (for example, one may speculate that if ties among equal bids are consistently broken in favor of the agent the answer is {\em yes}, and otherwise it is {\em no}).}


In our work we provide a characterization for what we refer to as {\em worst-case-optimal} strategies. A \emph{worst-case-optimal} (or ``optimal'', for short) strategy for agent $i$ with valuation $v_i$ and entitlement $b_i$ in the bidding game is a strategy that maximizes the agent's value in a $0$-sum game against an {\em adversary} with entitlement $1 - b_i$ that aims to minimize the value of the agent (when all ties in bids are broken by the adversary). The optimal strategies in this case are given by a backward induction process, and we characterize the possible outcomes of this process. 
Among other things, our characterization implies that the optimal value $V_i(b_i)$ is a piece-wise constant function of the entitlement $b_i$, going over all subset values in increasing order. Additionally, for all entitlements except of the points in which the function changes value (e.g, $b_i = \frac{1}{2}$), it holds that $V_i(b_i) + V_i(1 - b_i) = v_i(\items)$. 

Using this characterization we determine that agents with entitlement $b_i = \frac{1}{2}$ do have an MMS-safe strategy. Consequently, if there are two agents with entitlement $\frac{1}{2}$ and the same valuation, if both use the MMS-safe strategy the outcome is an MMS partition. As computing MMS partitions is NP-hard (for two agents it is weakly NP-hard, by its equivalence to the weakly NP-hard problem of PARTITION), this implies that playing optimally in the bidding game is (weakly) NP-hard.

Our characterization also implies that for all $b_i > \frac{1}{2}$, there is a $\frac{1}{2} \cdot v_i(\items)$-safe strategy. This result is crucial for proving that $\frac{1}{2}$-$\widehat{MMS}$ allocations exist, because for agents with entitlements  $b_i > \frac{1}{2}$ we have  $\hat{b}_i = 1$, implying that $\widehat{MMS}(v_i,b_i) = v_i(\items)$, {and thus for this agent to obtain $\frac{1}{2}$-$\widehat{MMS}$ she must get at least $\frac{1}{2} \cdot v_i(\items)$.}

Unfortunately, our characterization of optimal  strategies does not provide sufficiently useful information on what value an agent can guarantee to herself when her entitlement is smaller than $\frac{1}{2}$. {This will be addressed next.} 

\subsubsection{New strategies for the bidding game}

To prove that there are $\frac{1}{2}$-$\widehat{MMS}$-safe strategies for the bidding game, we design a new strategy for which we can prove that it guarantees at least $\frac{1}{2}$-$\widehat{MMS}$, and in fact, also a stronger share-based benchmark of $\frac{1}{2}$-$\widehat{TPS}$ (see Definition~\ref{def:TPS}). 

Our new strategy can be thought of as being designed in an inductive manner. For integer $k \le 1$, let $I_k$ denote the interval $(\frac{1}{k+1}, \frac{1}{k}]$ (including $\frac{1}{k}$ but not $\frac{1}{k+1}$), and observe that for all entitlements $b \in I_k$ it holds that $\hat{b} = \frac{1}{k}$. Our construction of a $\frac{1}{2}$-$\widehat{TPS}$-safe strategy proceeds in an inductive manner, by induction of $k$. When designing a $\frac{1}{2}$-$\widehat{TPS}$-safe strategy for entitlements in $I_{k+1}$, we make use of the fact that we already have $\frac{1}{2}$-$\widehat{TPS}$-safe strategies for all entitlements in $\cup_{j=1}^k I_j$.

The base case of the induction is $k=1$, which corresponds to entitlements satisfying $b_i > \frac{1}{2}$. For the base case, we use the worst-case-optimal strategy, which as we have seen ensures at least $\frac{1}{2} v_i(\items) \ge \frac{1}{2} \widehat{TPS}(v_i,b_i)$. 

Proving the inductive step is the most technically involved part of our paper. In particular, the bidding strategy that we design involves  ``look ahead". The bid on the first round depends not only on $v_i(\items)$ and on the value of the most valuable item (the one that the agent will select if she wins the round), but also on the values of additional items. The first inductive step, handling the case that $k=2$, is the most challenging step. In that case the look ahead goes all the way up to the eighth most valuable item. Moreover, such extensive look head is necessary for the class of strategies that we consider (see Remark~\ref{ref:lookahead}). 
Section~\ref{sec:goods-pos-ex-post} provides a more extensive overview of our induction step {and presents the proof.}

\subsubsection{Polynomial time strategies}

The reason why our $\frac{1}{2}$-$\widehat{TPS}$-safe strategy does not run in polynomial time is because for the case $b_i > \frac{1}{2}$ (or more generally, at any intermediate step of the game in which the remaining budget of the agent is larger than half the total remaining budget) we need a strategy that gives the agent at least half the total (remaining) value. We know that the worst-case-optimal strategy offers such a guarantee, but this strategy is NP-hard to compute. The question of whether there is polynomial time strategy {(that is non-optimal yet) provides} 
such a guarantee, is open.


In our work, in the context of polynomial time strategies, we show that for every $\varepsilon > 0$ there is a polynomial time $(\frac{1}{2} - \varepsilon)$-$\widehat{TPS}$ safe strategy. For this purpose, we design a polynomial time strategy that gives the agent at least $(\frac{1}{2} - \varepsilon)\cdot v_i(\items)$ in the special case in which $b_i > \frac{1}{2}$. 
The design of our polynomial time strategy is based on the following approach. We first show that for a certain subclass of additive valuations (in which all but a constant number of items have the same value), the worst-case-optimal strategy can be implemented in polynomial time. Then we show that every additive valuation function $v_i$ can be replaced by a valuation $v'_i$ from this subclass, such that the worst case safe strategy for $v'_i$ guarantees a bundle whose value (with respect to $v_i$) differs by at most $\varepsilon \cdot v_i(\items)$ from the value guaranteed by  the worst case safe strategy for $v_i$. 






\subsection{Related Work}
\label{sec:related}

There is extensive research on fair allocation of resources, much more than we can survey in this section. The reader is referred to the books \cite{brams1996fair,moulin2004fair} and surveys \cite{AzizSurvey2022, fairSurvey2022} for background. We next focus only on the literature most related to our paper, and only on the case of additive valuations.

Our work concerns share-based notions for agents of arbitrary entitlements. In doing so, we make extensive use of the maximin share (MMS)~\cite{Budish11} which is a share for agents with equal entitlements. For additive valuations over goods, the MMS is no larger than the proportional share (though for non-additive valuations it may be larger). 
The MMS is not feasible: there exist additive valuations for which no allocation gives every agent her MMS~\cite{KurokawaPW18}.
This led to multiple works \cite{KurokawaPW18,amanatidis2017approximation,GM19,BK20,GhodsiHSSY18,garg2019approximating,GT20,FST21,FN22,akrami2023simplification,akrami2023breaking} trying to determine the best ratio $\alpha$ such that in instances with additive valuations, there always is an allocation that gives every agent a bundle of value at least a $\alpha$ fraction of 
her MMS {(an $\alpha$-MMS allocation)}.  
The optimal value of $\alpha$ is known to lie between $\frac{3}{4} + \frac{3}{3836}$~\cite{akrami2023breaking} and $\frac{39}{40}$~\cite{FST21}, but its exact value is not known. For the case of chores, 
the optimal ratio is known to lie between $\frac{13}{11}$~\cite{HuangS23} and $\frac{44}{43}$~\cite{FST21}.\footnote{Note that for goods the ratios are smaller that~1, whereas for chores they are larger than~1. This is because for goods agents wish to get bundles of high value, whereas for chores they which to get bundles of low cost.} 

Another share of much relevance to our work is the TPS~\cite{BEF2021c}. This share is defined only for goods, and assumes additive valuations. In the equal entitlement case, the TPS is at least as large as the MMS. Unlike the MMS, the best ratios with respect to the TPS are know, $\frac{n}{2n-1}$ in the equal entitlement case~\cite{BEF2021c}, which generalize to $\frac{1}{2 - b_i}$ in the arbitrary entitlement case~\cite{BEF2021b}.

We now survey work most related to ours but in the equal entitlement case, as this work serves as background to our work. Most related is the work of \cite{BF22}, which 
{introduced the agenda of dominating all feasible shares for equal entitlements (which we extend to arbitrary entitlements),} 
and identified the MMS as the minimal dominating (ex-post) share. Unfortunately, as stated above, there is a significant gap {in our knowledge} regarding the values of $\rho$ for which the $\rho$-MMS is feasible, implying the exact same gap {in our knowledge} for $\rho$-domination in the equal entitlements case. {In contrast, for $\widehat{MMS}$, 
which by our characterization is the minimal dominating (ex-post) share for arbitrary entitlements, we do get in our paper tight approximation ratios, implying  tight $\rho$-domination results for arbitrary entitlements.} As to ex-ante shares, in the equal entitlement case the proportional share is the minimal dominating share, and it is feasible (ex-ante). Previous work~\cite{BEF2021c, FH23} implies best of both worlds (BoBW, providing both ex-ante and ex-post guarantees) results, though similar to the situation with ex-post shares, the best ratios that such results can achieve are not known, neither for goods nor for chores. In contrast, in the arbitrary entitlement case, we show {a strong} impossibility for BoBW for goods,
and tight BoBW results for chores.



{We now discuss relations between our proof techniques and those of earlier work.}
To show our ex-post domination for goods, we design strategies for a bidding game that was introduced in~\cite{BEF2021b}. To achieve the tight benchmark of $\frac{1}{2}$-$\widehat{MMS}$ we embark on a systematic study of the strategy space for the bidding game, whereas previous strategies for this game were designed in a more ad-hoc manner. To show our ex-post domination for chores, we use an approach based on picking sequences. For chores, picking sequences were previously shown to give allocations with {quite good} fairness guarantees, 
both in the equal entitlement case and the arbitrary entitlement case~\cite{ALW22,FH23}. {However, in the equal entitlement case, their approximation for the MMS is no better than $\frac{3}{2}$~\cite{FH23}, whereas other approaches~\cite{HuangS23} provide significantly better approximation ($\frac{13}{11}$).} {In contrast,} in our work we establish that in the arbitrary entitlement case and with benchmarks such as $\widehat{MMS}$, {picking sequences} give BoBW results that are best possible. This may serve as  justification for using these relatively simple assignment mechanisms.

Several prior papers have studied settings with agents that have arbitrary entitlement. 
This setting led to definitions of additional types of shares, including the weighted maximin
share (WMMS)~\cite{farhadi2019fair} (which in fact does not qualify as a share under our definitions, as it depends on more than just the valuation and entitlement of the agent), the ``$\ell$-out-of-$d$'' share~\cite{BabaioffNT2020}, the AnyPrice Share (APS) {(presented in Appendix \ref{app:aps})} and the Truncated-Proportional Share (TPS)~\cite{BEF2021b} {(presented in Section \ref{sec:model})}. 
The Quantile share \cite{babichenko2023fair} is another share definition for agents with arbitrary entitlements (although that paper defines it only for equal entitlement, the definition naturally extends to the case of arbitrary entitlements).
While all these papers suggest and study some specific shares, 
we present a comprehensive study of the space of all shares that are feasible, 
and  systematically studied share-based fairness for agents with arbitrary entitlements.

As observed in the literature (e.g. \cite{BF22}), one of the attractive properties of the share-based approach to fairness is that algorithmic results directly imply similar results in strategic settings. Given any algorithm that always outputs an acceptable allocation for some feasible share, we can define a game that has the following two attractive properties: The allocation of every pure Nash equilibrium of the game is an acceptable allocation, and moreover, pure Nash equilibria indeed exist in that game. 
{Other approaches to strategic behaviour include designing truthful mechanisms that output fair allocations for agents with equal entitlements \cite{ABCM2017}, and introducing the notion of ``self-maximizing'' shares, a property of shares that provides incentives to ``pessimistic" agents to report valuation functions truthfully~\cite{BF22}.}


While we focus on share-based fairness, there is a prominent alternative approach to fairness definition in the literature that is ``envy-based'',  aiming to eliminate (or minimize) envy between the agents. Envy-free allocations are allocations in which no agent envies any other agents \cite{varian1973equity}. Unfortunately, for indivisible items such allocations need not exist. 
An extensive body of literature studies relaxations of envy freeness, such as envy-free-up-to-one-good (EF1)~\cite{Budish11,LMMS04} and envy-free up to any good (EFX)~\cite{CaragiannisKMPS19, PR2020, EFX3, amanatidis2020maximum}. 
{The envy-based approach was extended to arbitrary entitlements (``the weighted case'') in  \cite{farhadi2019fair}, studying two variants of weighted envy-freeness up to one item.} 

\OLD{ 
\mbc{this is about WMMS. need to review: =============} 
An approach to extended the MMS to arbitrary entitlements (``the weighted case'') was suggested in  \cite{farhadi2019fair}, where the notion of \textit{{Weighted maximin share}} (WMMS) was introduced, 
and it was show that it must be scaled down by a factor of $1/n$ to become feasible. 

\mbc{from the APS paper. Do we want any of this?}
The WMMS of agent $i$ with valuation $v_i$ when the vector of entitlements is $(b_1,b_2,\ldots,b_n)$	is defined to be the maximal value $z$ such that there is an allocation that gives each agent $k$ at least $\frac{z\cdot b_k}{b_i}$ according to $v_i$. 
The WMMS of an agent $i$ is determined not only by her own entitlement. It may change depending on the number of other agents and the partition of entitlements  among them (this is unlike the MMS, where the number of other agents and their entitlements can be deduced from the entitlement of $i$).
This leads to scenarios in which an agent with arbitrary large entitlement might have a WMMS of $0$ even when there are many items. 
For example, an agent that has entitlement of 99\% ($b_i=0.99$) for 100 items that she values the same, has a WMMS of 0, if the total number of agents is~$101$.
This suggests that the WMMS is sometimes too small.
Moreover, that paper presents an impossibility result suggesting the WMMS is sometimes too large: there are instances where in every allocation some agent receives at most a $\frac{1}{n}$-fraction of her WMMS. Due to the above examples, we do not view the WMMS as a concept that is aligned well with our intended intuitive understanding of fair allocations when entitlements are unequal.
}

\section{Model and Preliminaries}\label{sec:model}

We consider allocation of a set $\items=[m]$ of $m$ indivisible items to $n$ agents with arbitrary entitlements. 
Each agent $i\in [n]$ has a valuation function $v_i:2^{\items} \rightarrow \mathbb{R}$ over the items, with $v_i(S)$ denoting the value of $S\subseteq \items$ to agent $i$. We assume that all valuation functions are normalized ($v_i(\emptyset)=0$ for every $i\in [n]$) 
{A valuation $v$ is called a \emph{valuation over goods} if it is monotone non-decreasing ($v(S)\leq v(T)$ for $S\subseteq T$). 
We say that $C$ is a \emph{class of valuations over goods} if every $v\in C$ is a valuation over goods.} 
Up to Section \ref{sec:chores} we focus on the allocation of goods (that is, all valuations are over goods), while in Section \ref{sec:chores} we present our results for chores (monotone non-decreasing costs).
Our main results are for additive valuations. 
Valuation  $v_i$ is \emph{additive} if $v_i(S)=\sum_{j\in S} v_i(\{j\})$ (with $v_i(\emptyset)=0$).
An additive valuation over goods satisfies $v_i(\{j\})\geq 0$ for every item $j\in \items$, and we call such items \emph{goods}.

Our focus in this paper is on agents that do not necessarily have equal entitlements to the items, but rather their entitlements might be arbitrary. 
We assume that agent $i\in [n]$ has \emph{entitlement} $b_i> 0$ to the items, and that $\sum_{i\in [n]} b_i=1$. 
The equal entitlement case is the special case in which $b_i=1/n$ for every $i\in [n]$.

An \emph{allocation} $(S_1,S_2,\ldots,S_n)$ is a collection of $n$ disjoint subsets of $\items$ whose union equals $\items$. In this allocation agent $i$ receives the subset $S_i$, for which she has value of $v_i(S_i)$. A \emph{randomized allocation} is a distribution over allocations. A property that holds for an allocation is said to hold ex-post, 
while a property for randomized allocations is said to hold ex-ante.


\subsection{Share-Based Fairness} \label{sec:model-share-based} 
Babaioff and Feige~\cite{BF22} (henceforth BF22)  have presented a general theory of share-based fairness for agents with equal entitlements to the items. Babaioff, Ezra and Feige~\cite{BEF2021b} have studied some specific fairness results for agents with arbitrary entitlements (but did not present a general theory as we do here).  
We next present some background from these papers that is relevant to this paper.

A \emph{share} function (or simply share) $s$ for agents with arbitrary entitlements maps the valuation $v_i$ and an entitlement $b_i$, to the share value $s(v_i,b_i)$ of agent $i$. BF22 have required 
a share to be 
\emph{name-independent} (see Section \ref{sec:model-share-props} below). In this paper we drop this requirement 
and make appropriate adjustments that only strengthen results (as discussed later).
An allocation $A$ is \emph{acceptable} if for every agent $i\in [n]$ it holds that $v_i(A_i)\geq s(v_i,b_i)$, that is, every agent gets a bundle of value at least her share. Similarly, a randomized allocation is acceptable if the expected value for each agent is at least her share (agents are assumed to be risk neutral). 
A share is called an \emph{ex-post} share if it applies to deterministic allocations, and an \emph{ex-ante} share if it applies to randomized allocations.
An ex-post (ex-ante) share is \emph{feasible} for class of valuations $C$ if for every $(v_1,v_2,\ldots, v_n)$ such that $v_i\in C$ for all $i$, it holds that there exist an acceptable ex-post (ex-ante randomized) allocation. 

\subsubsection{Share domination}
Consider any class $C$ of valuations over goods, and any $\rho > 0$. A share function $\tilde{s}$ over goods  $\rho$-dominates (or simply dominates, if $\rho = 1$) share $s$, if for every valuation function $v\in C$ and every entitlement $b$, it holds that $\tilde{s}(v,b) \ge \rho \cdot s(v,b)$. The share $\tilde{s}$ over goods is  $\rho$-dominating for class $C$ if it $\rho$-dominates every  share $s$ that is feasible for $C$. 
BF22 shows that for equal entitlements, a share $\tilde{s}$ over goods is $\rho$-dominating if and only if $\tilde{s}$ $\rho$-dominates the MMS. 
In the next section we present a generalization of this result for entitlements that are not necessarily equal but rather can be arbitrary.  
BF22 have also shown that for equal entitlements, there does not exist a feasible share that dominates every feasible share.
This clearly implies that such a share does not exist also for arbitrary entitlements. 



\subsubsection{Share properties}\label{sec:model-share-props}

{Recall that a share function must be anonymous and entitlement monotone.  We next discuss several natural properties that one might like a share function to additionally satisfy. 
One property, defined in \cite{BF22}, is for a share to be \emph{self maximizing}: 
A share is {\em self-maximizing} if reporting the true valuation maximizes the minimum true value of a bundle that is acceptable with respect to the report (see \cite{BF22} for a formal definition).
Essentially, a self-maximizing share is a property that provides incentives to ``pessimistic" agents to report valuation functions truthfully. BF22 show that any self-maximising share function must be valuation monotone, {valuation} 
continuous (and even $1$-Lipschitz) and scale-free.}

Another share property is name-independence. BF22 
required a share to satisfy that property. We next present this notion and discuss its. 
To define that notion it will be important to distinct between two valuation functions being {\em identical}, and being {\em {equivalent}}.



\begin{definition}
    \label{def:same}
    Given a set $\items$ of items, we say that two valuation functions $v_1$ and $v_2$ are {\em identical} if for every $S \subseteq \items$ it holds that $v_1(S) = v_2(S)$. 
    {We say that $v_1$ and $v_2$ are {\em $\pi$-equivalent} if for permutation $\pi$ over $\items$ it holds that $v_1(S) = v_2(\pi(S))$ for every $S \subseteq \items$ (where $\pi(S)$ is the set of items to which the items of $S$ are mapped under $\pi$). We say that $v_1$ and $v_2$ are {\em equivalent} if they are $\pi$-equivalent for some permutation $\pi$.} 
\end{definition}


\begin{definition}
    \label{def:orderedShare}
    {A share function $s$ is {\em name-independent} if for every two {equivalent valuations} $v$ and $v'$ and for every entitlement $0 < b \le 1$ it holds that $s(v,b) = s(v',b)$.
    An \emph{unrestricted} share (or simply a \emph{share}) is a share that it is either name-independent or not.}
\end{definition}


\begin{definition}
    A share $s$ is \emph{nice} if it is name independent and self maximizing.
\end{definition}

The main theme of our work is to identify (in various contexts) a share $s$ that dominates every feasible share, and then to design allocation algorithms that guarantee for each agent at least some $\rho$ fraction of her share value under $s$. 
{For additive valuations, our positive results will present feasible shares that dominate every unrestricted feasible share. In contrast, our negative result will be proven even when one only asks the dominating feasible share to dominated the restricted subclass of \emph{nice} feasible shares, 
making our impossibility results stronger. }

\subsection{Some Central Shares} \label{sec:model-prior-shares} 


We start by presenting several shares for equal entitlements from prior literature that will be useful in this paper. 
Thus, 
the entitlement $b_i$ of an agent is of the form $\frac{1}{k}$ for some integer $k$ (as would happen in the equal entitlements case, where $k$ would be the number of agents). For the ex-post setting, we recall the maximin share (MMS)~\cite{Budish11}.

\begin{definition}
    \label{def:MMS}
    For positive integer $k$, a $k$-partition $S_1, \ldots, S_k$ of $\items$ is a collection of $k$ disjoint subsets of $\items$ whose union equals $\items$. {Denote the set of all $k$-partitions by ${{\cal S}_k}$.}
    Given a valuation function $v$ and entitlement $b$, where $b = \frac{1}{k}$ for some positive integer $k$, the value of the maximin share (MMS) is:
    
    $$MMS\left(v,\frac{1}{k}\right) = \max_{{(S_1, \ldots, S_k)\in {{\cal S}_k}}} \min_{j \in \{1, \ldots, k\}} v(S_j)$$
    where ${(S_1, \ldots, S_k)\in {{\cal S}_k}}$ ranges over all $k$-partitions of $\items$. A $k$-partition that maximizes the above expression is referred to as an $MMS(v,\frac{1}{k})$ partition. 
\end{definition}

For the ex-ante setting, we consider a share that we refer to as {\em max expected share} (MES). It was previously used in the context of dividing a homogeneous divisible good~\cite{FeigeT14}.

\begin{definition}
    \label{def:MES}
    Given a valuation function $v$ and entitlement $b$, where $b = \frac{1}{k}$ for some positive integer $k$, the value of the max expected share (MES) is:
    
    $$MES\left(v,\frac{1}{k}\right) = \max_{{(S_1, \ldots, S_k)\in {{\cal S}_k}}} E_{j \in \{1, \ldots, k\}} [v(S_j)]$$
    where ${(S_1, \ldots, S_k)\in {{\cal S}_k}}$ ranges over all $k$-partitions of $\items$, and $E[\cdot]$ denotes expectation, taken over a uniformly random choice of bundle among the ones of the chosen $k$-partition. A $k$-partition that maximizes the above expression is referred to as an $MES(v,\frac{1}{k})$ partition. 
\end{definition}

We remark that if $v$ is an additive valuation, then all $k$-partitions give the same expected value, and the MES equals the proportional share, which for entitlement $b$ is defined to be $PS(v,b) = b\cdot v(\items)$. Likewise, for the equal entitlements case with $k$ agents, $PS(v,\frac{1}{k} ) = \frac{1}{k}\cdot v(\items)$.

{Another share that we use in our proofs is the Truncated Proportional Share (TPS).}
The Truncated Proportional Share (TPS) was defined for equal entitlements in \cite{BEF2021c} and extended to arbitrary entitlements in \cite{BEF2021b}. Essentially, it is the proportional share after item values are minimally truncated such that no single item will have value above the post-truncation proportional share. 
\begin{definition}\label{def:TPS}
	The \emph{Truncated Proportional Share (TPS)} of agent $i$ with an additive valuation $v_i$ over the set of items $\items$ and entitlement $b_i$, denoted by $\truncatedi$, is:
	\begin{equation}
	\truncatedi = \max\{~z \mid b_i \cdot \sum_{j\in \items} \min(v_i(j),z) = z ~\}. \label{eq:tps}    
	\end{equation}
\end{definition}


As usual, for a share $Share$ and a constant $\alpha$, we use $\alpha-Share$ to denote the share $Share$ scaled by a factor of $\alpha$. For example, $1/2$-MMS denotes the share that is half the MMS.

\section{Share Domination for Valuations over Goods}\label{sec:domination}


In this section we present necessary and sufficient conditions for a share to $\rho$-dominate every feasible share for agents with valuations over goods and arbitrary entitlements. 
Our characterization can be viewed as a reduction from the arbitrary entitlements case to the equal entitlements case.   
{For additive valuations, this characterization will serve as a starting point for our results that determine} the largest $\rho$ for which it is possible to $\rho$-dominate every feasible share, {both in the ex-post case and in the ex-ante case.} 




{To illustrate the issue of share domination for arbitrary entitlements we present a simple example. Consider an agent $i$ with entitlement of $b_i = 0.3$ and valuation $v_i$. What may serve as an upper bound on the value $s(v_i,b_i)$ that holds for every feasible share $s$? There might be $\lfloor \frac{1}{0.3} \rfloor = 3$ agents (including agent $i$) with the same budget and valuation, in which case symmetry dictates that all three agents have the same share value. Feasibility dictates that there should be an allocation that gives this share value to each of the three agents, and so $s(v_i,b_i) \le MMS(v_i,\frac{1}{3})$. This idea is generalized in our characterization below.}   


\begin{definition}\label{def:unit}
    For entitlement $0 < b \le 1$, let $k = \lfloor \frac{1}{b} \rfloor$. 
    We define the {\em unit upper bound} of $b$ to be $\frac{1}{k}$, and denote it by $\hat{b}$. 
    For any share $s$ for equal entitlements, we define the share $\hat{s}$ to be $\hat{s}(v_i,b_i) = s(v_i,\hat{b}_i)$.
    In particular, $\widehat{MMS}(v_i,b_i) = MMS(v_i,\hat{b}_i)$, $\widehat{MES}(v_i,b_i) = MES(v_i,\hat{b}_i)$ and
    $\widehat{PS}(v_i,b_i) = PS(v_i,\hat{b}_i)$.
\end{definition}
Observe that $k = \lfloor \frac{1}{b} \rfloor$ is the unique integer $k$ such that $\frac{1}{k+1}< b \le \frac{1}{k}$.

\subsection{General Valuations over Goods}\label{sec:domination-general}
We present our characterizations for ex-post shares and for ex-ante shares.

\subsubsection{Ex-post Domination}
We start with our characterization for ex-post shares.

\begin{proposition}\label{prop:rounding-MMS-goods}
    
    For any class $C$ of valuations over goods,  $\widehat{MMS}$ dominates every feasible unrestricted ex-post share for arbitrary entitlements  (for class $C$). Moreover, it is the minimal share for arbitrary entitlements that dominates every feasible unrestricted ex-post share (for class $C$). 
\end{proposition}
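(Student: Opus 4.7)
The proposition has two parts: that $\widehat{MMS}$ dominates every feasible unrestricted ex-post share, and that any share with this dominating property must itself pointwise dominate $\widehat{MMS}$. My plan is to prove each direction by a reduction to the equal-entitlement case, relying only on feasibility, anonymity (built into the definition of a share), and monotonicity of valuations over goods.

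For the dominance direction, fix a feasible unrestricted ex-post share $s$, a valuation $v \in C$ and an entitlement $b$. Set $k = \lfloor 1/b \rfloor$, so that $\hat{b} = 1/k$. I construct an allocation instance with $k$ ``primary'' agents, each having valuation $v$ and entitlement $b$, plus (when $kb < 1$) one extra agent with valuation $v$ and entitlement $1-kb$. The total entitlement is $1$ and every valuation lies in $C$, so feasibility of $s$ yields an allocation in which each primary agent receives a bundle of value at least $s(v,b)$; call those $k$ pairwise disjoint bundles $S_1,\ldots,S_k$. Because $v$ is monotone (a valuation over goods), the items given to the extra agent can be redistributed arbitrarily into $S_1,\ldots,S_k$ without decreasing any $v(S_j)$. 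The result is a $k$-partition of $\items$ in which every part has value at least $s(v,b)$, witnessing $MMS(v,1/k) \geq s(v,b)$, i.e.\ $\widehat{MMS}(v,b) \geq s(v,b)$.

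For the minimality direction, it suffices to show that for every $(v^*,b^*)$ with $v^* \in C$ there is a feasible unrestricted share $s_{v^*,b^*}$ with $s_{v^*,b^*}(v^*,b^*) = \widehat{MMS}(v^*,b^*)$, since then any dominator $\tilde{s}$ must satisfy $\tilde{s}(v^*,b^*) \geq s_{v^*,b^*}(v^*,b^*) = \widehat{MMS}(v^*,b^*)$. Setting $k^* = \lfloor 1/b^* \rfloor$, I define the \emph{personalized} share
\[
s_{v^*,b^*}(v,b) \;=\; \begin{cases} MMS(v^*,1/k^*) & \text{if $v$ is identical to $v^*$ and $b \geq b^*$,} \\ 0 & \text{otherwise.} \end{cases}
\]
This is anonymous, and entitlement-monotone because its only jump (at $b = b^*$) moves from $0$ upwards. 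To verify feasibility on an arbitrary profile from $C$, note that at most $k^*$ agents can satisfy $b \geq b^*$ (since $\sum_i b_i = 1$), so at most $k^*$ agents receive positive share value. I take an $MMS(v^*,1/k^*)$ partition of $\items$, give one of its parts to each such special agent, merge any unused parts into one already-assigned bundle (monotonicity of $v^*$ preserves the threshold for that agent), and hand empty bundles to the remaining agents, whose share value is $0$. This is an acceptable allocation, so $s_{v^*,b^*}$ is feasible.

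The main point needing care is the construction of the personalized share: it must remain anonymous and entitlement-monotone while pinning the value $MMS(v^*,1/k^*)$ to exactly the right $(v,b)$-pairs, and its feasibility must be established against arbitrary profiles drawn from $C$ without assuming that a ``zero'' valuation is available. The corresponding care in the dominance direction is the use of $v$ itself (rather than a default valuation) for the extra agent, so that the construction stays inside $C$.
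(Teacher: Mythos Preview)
Your proof is correct, and Part~1 follows exactly the paper's argument (you are just a bit more explicit about redistributing the extra agent's items using monotonicity of $v$).

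For Part~2 (minimality), you take a genuinely simpler route than the paper. The paper invokes its full personalized share $MMS_{v,b}$ from Definition~\ref{def:verypersonalMMS}, which assigns to \emph{every} agent $j$ a meaningful value based on how many ``representatives'' her entitlement $b_j$ can afford, and then proves feasibility in Proposition~\ref{pro:verypersonalMMS}. Your indicator-style share --- positive only when $v$ is identical to $v^*$ and $b\ge b^*$, and zero otherwise --- is much lighter: checking anonymity, entitlement-monotonicity, and feasibility is immediate (the only nontrivial count is that at most $k^*$ agents can have $b\ge b^*$, which you verify). For the proposition as stated, which concerns \emph{unrestricted} shares, your construction is entirely sufficient and arguably cleaner. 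The paper's more elaborate $MMS_{v,b}$ earns its keep later: it is designed so that, for additive valuations, it can be upgraded to a \emph{nice} (name-independent and self-maximizing) share $OMMS_{(v,b)}$, which is needed for Proposition~\ref{prop:goods-additive-minimal}. Your share is not name-independent (it distinguishes identical from merely equivalent valuations), so it would not serve that later purpose, but that is outside the scope of this proposition.

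One cosmetic edge case: if no agent in the profile has $v$ identical to $v^*$ and $b\ge b^*$, your ``merge unused parts into one already-assigned bundle'' step has nothing to merge into; simply hand all of $\items$ to any agent (all share values are zero), which you effectively intend.
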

\begin{proof} 
We first prove that $\widehat{MMS}$ dominates every feasible unrestricted share for arbitrary entitlements (for class $C$). To prove this, we need to show that for any feasible unrestricted share $s'$ it holds that $\widehat{MMS}(v,b)\geq s'(v,b)$ for any valuation $v\in C$ and any entitlement $b$. 
Assume in contradiction that for a feasible unrestricted share $s'$ there exist a valuation $v\in C$ and an entitlement $b$ such that $\widehat{MMS}(v,b)< s'(v,b)$.
Let $k=({\hat{b}})^{-1}$, for $\hat{b}$ which is the  unit upper bound on $b$. Consider an instance with $k$ agents having an entitlement $b$, and one agent with entitlement $1-b\cdot k<b$, and assume all have the same valuation $v$.
As the share $s'$ is feasible, there must be an allocation that gives each one of the $k$ agents at least $s'(v,b)> \widehat{MMS}(v,b) = {MMS}(v,1/k)$. That is, there is an allocation that gives each of $k$ agents more than ${MMS}(v,1/k)$. 
But this leads to a contradiction as by the definition of the MMS there is no acceptable allocation that  gives all $k$ agents (of the same entitlement) more than ${MMS}(v,1/k)$.

 Next, we show that $\widehat{MMS}$ is the minimal share  that dominates every feasible unrestricted share (for class $C$). 
 That is, we show that a share that dominates every feasible unrestricted share must dominate $\widehat{MMS}$ . 
 Let $\bar{s}$ be a share that dominates every feasible unrestricted share for arbitrary entitlements. 
 Assume in contradiction that $\bar{s}$ does not dominate $\widehat{MMS}$ . Then there exist a valuation $v\in C$ and an entitlement $b$ such that $\widehat{MMS}(v,b)> \bar{s}(v,b)$. By Proposition \ref{pro:verypersonalMMS}, the personalized MMS share $MMS_{v,b}$ is feasible for $C$ (ex-post). For this share it holds that $MMS_{v,b}(v,b) = \widehat{MMS}(v,b)$. As $\widehat{MMS}(v,b)> \bar{s}(v,b)$, the share $\bar{s}$ does not dominate the feasible share  $MMS_{v,b}$, a contradiction. 
 \end{proof}

\OLD{
For each valuation $v'\in C$, let $s_{v'}$ be the share for equally-entitled agents that is defined as follows. For every $k$, define $s_{v'}(v',1/k)=s(v',1/k)$, and define $s_{v'}(v,1/k)=0$ for $v\neq v'$ (observe that this constructs a share that is unrestricted). 
The share $s_{v'}$ is feasible, as $s$ is tightly-feasible for identical valuations, so for every $k$ there is an acceptable allocation when there are at most $k$ identically-entitled agents with valuation $v'$ (for any agent with a different valuation it is acceptable to get nothing according to the share $s_{v'}$).
As $s_{v'}$ is feasible for every $v'\in C$, to dominate every feasible share the share $\bar{s}$ must dominate $s_{v'}$ for every $v'\in C$.  Thus, for every $b$ with unit upper bound for $\hat{b}$ satisfying $k=({\hat{b}})^{-1}$, and for  $v\in C$ we have    
$\bar{s}(v,b)\geq s(v,1/k)= \hat{s}(v,b)$, a contradiction.
\mbc{finished. But we do not use tightness here, I find this strange. Is the proof wrong?} \ufc{The lemma assumes that $s$ is tightly feasible  for identical valuations, and that it dominates ever feasible share. These two things imply that $s$ is minimal. Hence, the assumption that $s$ is minimal is redundant, and thus need not be used in the proof.} \mbc{I have revised the proof. Again, it would be good to fully understand and make explicit how we are using the fact that the valuations are over goods.}
} 

\begin{remark}
\label{rem:MMSminimality}
Proposition~\ref{prop:rounding-MMS-goods} considers domination of unrestricted ex-post shares. Its conclusion that $\widehat{MMS}$ dominates unrestricted shares clearly extends also to dominating only the name-independent shares. However, the claim of minimality makes essential use of unrestricted shares.  For name-independent shares, minimality of $\widehat{MMS}$ holds for the class of additive valuations, but there are non-additive classes of valuations for which it does not hold (even when entitlements are equal). 
See Proposition \ref{prop:goods-additive-minimal} and Observation~\ref{obs:MMS-non-add}, respectively. 
\end{remark}


\subsubsection{Ex-ante Domination}
We continue with our characterization for ex-ante shares.

\begin{proposition}\label{prop:rounding-MES-goods}  
    For any class $C$ of valuations over goods,  $\widehat{MES}$ dominates every feasible unrestricted ex-ante share for arbitrary entitlements  (for class $C$). Moreover, it is the minimal ex-ante share for arbitrary entitlements that dominates every feasible unrestricted ex-ante share (for class $C$).
\end{proposition}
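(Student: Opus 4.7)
The plan is to mirror the proof of Proposition~\ref{prop:rounding-MMS-goods} essentially line-for-line, replacing MMS by MES and ``allocation'' by ``randomized allocation''. The two directions (domination and minimality) will use, respectively, a symmetric-instance argument with many copies of the same agent, and an appeal to a personalized MES share whose ex-ante feasibility parallels Proposition~\ref{pro:verypersonalMMS}.

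For the domination direction, suppose for contradiction that some feasible unrestricted ex-ante share $s'$ satisfies $s'(v,b) > \widehat{MES}(v,b)$ for some $v \in C$ and some entitlement $b$. Let $k = \lfloor 1/b \rfloor$, so $\hat b = 1/k$. Construct the instance with $k$ agents having entitlement $b$ and valuation $v$, plus one more agent with entitlement $1 - kb < b$ (with arbitrary valuation in $C$). By feasibility of $s'$, there is a randomized allocation in which each of the $k$ copies receives expected value at least $s'(v,b)$. To reach a contradiction, I will argue that for any deterministic allocation in the support, if $T_1, \ldots, T_k$ are the bundles handed to the $k$ copies, then $\frac{1}{k} \sum_j v(T_j) \le MES(v,1/k)$: indeed, $T_1, \ldots, T_k$ are disjoint, and by dumping the leftover items $\mathcal{M} \setminus \bigcup_j T_j$ into $T_1$, monotonicity of $v$ over goods produces a genuine $k$-partition of $\mathcal{M}$ with at least as large an average, so the average of the original is bounded by $MES(v,1/k)$. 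Taking expectation over the randomness, the average expected value of the $k$ copies is at most $MES(v,1/k) = \widehat{MES}(v,b)$, so at least one copy receives expected value at most $\widehat{MES}(v,b) < s'(v,b)$, contradicting feasibility.

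For the minimality direction, let $\bar{s}$ be any ex-ante share that dominates every feasible unrestricted ex-ante share, and suppose for contradiction that $\bar s(v,b) < \widehat{MES}(v,b)$ for some $v \in C$ and some entitlement $b$. The plan is to invoke (the ex-ante analogue of Proposition~\ref{pro:verypersonalMMS}) the existence of a personalized MES share $MES_{v,b}$ that is ex-ante feasible for $C$ and satisfies $MES_{v,b}(v,b) = \widehat{MES}(v,b)$; such a share can be constructed by setting it to the value $\widehat{MES}(v,b)$ at the pair $(v,b)$ and to $0$ elsewhere, and feasibility follows by partitioning $\mathcal{M}$ according to an $MES(v,1/k)$ partition and assigning a uniformly random bundle to each copy of $(v,b)$ (with the leftover items given to other agents). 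Since $\bar s$ must dominate $MES_{v,b}$, we get $\bar s(v,b) \ge MES_{v,b}(v,b) = \widehat{MES}(v,b)$, a contradiction.

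The only step that requires genuine care, as opposed to being a routine translation of the ex-post argument, is the symmetrization bound in the first paragraph, namely that the per-agent average in any randomized allocation is bounded by $MES(v,1/k)$. The argument above uses monotonicity of $v$ in a crucial way to absorb leftover items, which is exactly why the proposition is stated for classes of valuations over goods rather than for arbitrary classes.
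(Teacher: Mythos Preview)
Your proof is correct and follows essentially the same route as the paper (which itself omits the proof, saying it mirrors Proposition~\ref{prop:rounding-MMS-goods}). One small caveat: the explicit ``set it to $\widehat{MES}(v,b)$ at $(v,b)$ and $0$ elsewhere'' construction you sketch would violate entitlement monotonicity, which the paper formally requires of share functions; your earlier invocation of the ex-ante analogue of Proposition~\ref{pro:verypersonalMMS} is the right move, and the paper's personalized MES share (Appendix~\ref{sec:person-goods}) is built to be monotone in the entitlement while still attaining $\widehat{MES}(v,b)$ at $(v,b)$.
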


We omit the proof as it is essentially the same as the proof of Proposition \ref{prop:rounding-MMS-goods}. {Likewise, Remark~\ref{rem:MMSminimality} applies as is with $\widehat{MMS}$ replaced by $\widehat{MES}$.}


\subsection{Additive Valuations over Goods}
\label{sec:domination-additive}

We next focus on the important special case that the valuations over goods are additive. For such valuations, the  share $\widehat{MES}$ is simply the share $\widehat{PS}$. For additive valuations, we strengthen our results from Proposition \ref{prop:rounding-MMS-goods} and  Proposition \ref{prop:rounding-MES-goods} and show that  $\widehat{MMS}$ and $\widehat{PS}$ {remain the minimal dominating shares  even if domination is only with respect to} every feasible {\emph{nice} ({name-independent} and self-maximizing)} share (ex-post and ex-ante, respectively).    

\begin{proposition}\label{prop:goods-additive-minimal}
For the class of additive valuations over goods and arbitrary entitlements, the share $\widehat{MMS}$ is the minimal ex-post share  that dominates every feasible {\emph{nice}} 
ex-post share, and the share $\widehat{PS}$ is the minimal ex-ante share  that dominates every feasible  {\emph{nice}}
ex-ante share.
\end{proposition}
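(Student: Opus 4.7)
The claim has two parts for each of the ex-post and ex-ante cases: domination of every feasible nice share, and minimality among such dominating shares. The domination direction is immediate, since every feasible nice share is in particular a feasible unrestricted share, so Propositions~\ref{prop:rounding-MMS-goods} and~\ref{prop:rounding-MES-goods} apply (using that $\widehat{MES}$ coincides with $\widehat{PS}$ on additive valuations).

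For minimality, fix an arbitrary additive $v_i$ over goods and entitlement $b_i$, and let $k = \lfloor 1/b_i \rfloor$. The plan is to exhibit a \emph{personalized} feasible nice share $\tilde{s}_{v_i,b_i}$ (ex-ante) and $s^*_{v_i,b_i}$ (ex-post) whose value at the point $(v_i,b_i)$ equals the target: $(1/k)\,v_i(\items) = \widehat{PS}(v_i,b_i)$ in the ex-ante case, and $MMS(v_i,1/k) = \widehat{MMS}(v_i,b_i)$ in the ex-post case. Then any share $\bar s$ that dominates every feasible nice share must have $\bar s(v_i,b_i)$ at least this target, so $\bar s$ dominates $\widehat{PS}$ (resp.\ $\widehat{MMS}$), establishing minimality pointwise.

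The ex-ante construction is clean: set $\tilde{s}_{v_i,b_i}(v,b) = (1/k)\,v(\items)$ when $b \ge b_i$ and $0$ otherwise. This depends only on $v(\items)$ and on whether $b \ge b_i$, hence is name-independent; it is either identically zero or a scaled proportional share on each $b$-slice, hence self-maximizing; and it is weakly increasing in $b$. Feasibility follows because at most $k$ agents can have entitlement at least $b_i$ (their total entitlement is $\le 1$), so the randomized allocation that assigns $\items$ to each such high-entitlement agent with probability exactly $1/k$ (and hands $\items$ to an arbitrary remaining agent with the leftover probability $1-|H|/k$) grants each high-entitlement agent expected value exactly $(1/k)\,v(\items)$, and grants the rest value at least $0$.

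The main obstacle is the ex-post construction. The analogous choice $s(v,b) = MMS(v,1/k)\cdot\mathbf{1}[b \ge b_i]$ reduces to the $k$-agent MMS problem and is not feasible. Restricting the ``on'' branch to valuations equivalent to $v_i$ (so that a single $MMS(v_i,1/k)$ partition handles all high-entitlement agents) repairs feasibility but breaks self-maximization: an agent with true valuation $v^* \notin [v_i]$ may profit from reporting $v_i$, since a bundle of $v_i$-value at least $MMS(v_i,1/k)$ can carry strictly positive true value under $v^*$, whereas truthful reporting yields share $0$ and thus an acceptable bundle of true value $0$. My plan is to replace the indicator factor by a continuous, permutation-invariant similarity penalty between the sorted value vectors of $v$ and $v_i$, tuned so that (i) the share equals $MMS(v_i,1/k)$ exactly at $v = v_i$, (ii) it decays smoothly as $v$ moves away from $[v_i]$ in a scale-free, monotone fashion (which delivers self-maximization through the framework of~\cite{BF22}), and (iii) it remains jointly allocatable across every profile by using the fixed $MMS(v_i,1/k)$ partition of $\items$ as the template for agents close to $v_i$, with leftover bundles absorbed by the rest. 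Verifying self-maximization — specifically, that no pessimistic misreport enlarges the minimum true value — is the delicate step I expect to dominate the argument.
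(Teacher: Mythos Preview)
Your handling of the domination direction and of ex-ante minimality is correct and close in spirit to the paper's argument. Your personalized ex-ante share is a simplified variant of the paper's $PS_{(v_i,b_i)}$: you use a single threshold at $b_i$ rather than the staircase in multiples of $1/(k+1)$, but both versions are feasible, name-independent, entitlement-monotone, self-maximizing ex-ante (the paper notes explicitly that the proportional share, while not self-maximizing ex-post, \emph{is} self-maximizing ex-ante for additive valuations), and attain $\widehat{PS}(v_i,b_i)$ at the point $(v_i,b_i)$.

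The genuine gap is the ex-post case. You correctly diagnose why the naive threshold construction fails (either feasibility or self-maximization breaks), but your proposed fix --- a ``continuous, permutation-invariant similarity penalty'' between sorted value vectors --- remains a sketch, and you yourself flag the self-maximization verification as the step you have not done. Self-maximization is a demanding structural property (it forces valuation monotonicity, continuity, and scale-freeness), and it is far from clear that an ad hoc penalty term can be tuned to satisfy it while preserving feasibility of the joint share across all profiles.

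The paper sidesteps this difficulty by a different mechanism: it passes to \emph{coupons}. Replace the $m$ items by coupons $1,\ldots,m$, where holding coupon $r$ entitles an agent to pick in round $r$ of a picking sequence (so its ``value'' to her is that of her $r$-th most valuable item). Reinterpreting the personalized $MMS_{(v_i,b_i)}$ as specifying which sets of coupons are acceptable yields a name-independent share $OMMS_{(v_i,b_i)}$, because coupon values depend only on the sorted value vector. This is an $\omega$-picking-order share in the sense of~\cite{BF22}, and that paper already proves all such shares are self-maximizing --- so niceness comes for free, with no bespoke verification needed. Feasibility follows from feasibility of the underlying $MMS_{(v_i,b_i)}$ (Proposition~\ref{pro:verypersonalMMS}), and $OMMS_{(v_i,b_i)}(v_i,b_i)=\widehat{MMS}(v_i,b_i)$ by construction (Corollary~\ref{cor:nice}). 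This coupon/ordering reduction is the key idea your argument is missing.
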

\begin{proof}
The fact that $\widehat{MMS}$ and $\widehat{PS}$ are dominating for ex-post and ex-ante shares respectively, was proved in Propositions~\ref{prop:rounding-MMS-goods} and~\ref{prop:rounding-MES-goods}, and this holds even with respect to unrestricted shares. It remains to prove minimality.

We first show that for the class of additive valuations over goods and arbitrary entitlements, the share $\widehat{MMS}$ is the minimal ex-post share that dominates every feasible {nice} 
ex-post share. 
That is, we show that a share $\bar{s}$ that dominates every feasible {nice} 
ex-post share must dominate $\widehat{MMS}$. 
 Assume in contradiction that $\bar{s}$ does not dominate $\widehat{MMS}$. Then there exists an additive valuation over goods $v$ and an entitlement $b$ such that ${MMS}(v,\hat{b})= \widehat{MMS}(v,b)> \bar{s}(v,b)$. 
 {Consider the share $OMMS_{(v,b)}$ as defined in Appendix \ref{sec:nice}. 
 \mbfuture{forward pointer, do we need to define it here?}
 By Corollary \ref{cor:nice}, $OMMS_{(v,b)}$ is a nice (name independent and self maximizing) ex-post share that is feasible for additive goods, and satisfies $OMMS_{(v,b)}(v,b) = \widehat{MMS}(v,b)$.
 As share $\bar{s}$ dominates every nice feasible share, it must holds that $\bar{s}(v,b)\geq \widehat{MMS}(v,b)$, a contradiction.  }

{The proof of the claim that for additive valuations over goods the share $\widehat{PS}$ is the minimal ex-ante share  that dominates every feasible nice ex-ante share is very similar to the proof for ex-post share presented above, replacing the personalized share $OMMS_{(v,b)}$ with $PS_{(v,b)}$. As the proof is a direct adaptation of the one above, it is omitted. }
\end{proof}

In the rest of the paper we assume that the valuations are additive. We first study items that are goods, and in Section \ref{sec:chores} move to consider chores. 

\section{The Bidding Game and Main Positive Result}\label{sec:bidding}

{In this section we obtain our main result showing that for every allocation instance with additive valuations {over goods}, there are allocations that give each agent at least $\frac{1}{2} \widehat{MMS}$, {and thus at least half of every feasible ex-post share}. In fact, our allocations provide a somewhat better guarantee, $\frac{1}{2} \widehat{TPS}$. The TPS is defined in Definition~\ref{def:TPS}, and recall that for share $s$ for equal entitlements, we define the share $\hat{s}$ to be $\hat{s}(v_i,b_i) = s(v_i,\hat{b}_i)$.
So $\widehat{TPS}(v_i,b_i) = TPS(v_i,\hat{b}_i)$. It is known that for additive valuations and entitlements of the form $\frac{1}{k}$, the TPS is at least as large as the MMS.} 

{Our allocations are obtained by designing strategies for a bidding game that was introduced in~\cite{BEF2021b}.}
We consider {the simplest}  
variant of the game, which in this paper we refer to as \emph{the bidding game}. 
The bidding game we consider is a game in which each agent gets a budget equal to her entitlement, and agents bid in $m$ rounds for the right to pick one of the remaining items {of their choice}, and the highest bidder wins and pays her bid. We first formalize the game (Section \ref{sec:goods-bidding-game}). 
A \emph{worst-case-optimal} (or ``optimal'', for short) strategy for agent $i$ with valuation $v_i$ and entitlement $b_i$ in the bidding game is a strategy that maximizes the agent value in a $0$-sum game against an {\em adversary} with entitlement $1 - b_i$ that aims to minimize the value of the agent (when all ties in bids are broken by the adversary).
We study the properties of worst-case-optimal strategies in the game (Section \ref{sec:OptSafeBiddingGame}). Worst-case-optimal strategies are not poly-time computable, so we discuss approximately worst-case-optimal strategies (Section \ref{sec:approx-optimal}). {We then use the optimal-worst-case strategies as a component in new strategies that we develop, and using our strategies we establish our tight result for ex-post domination (Section \ref{sec:goods-pos-ex-post}).}   

\subsection{Background and Our Results}\label{sec:background}


\cite{BEF2021b} have presented a bidding game and have derived results for some specific shares. It
was shown that any agent with an additive valuation function has a simple strategy that ensures that she receives a bundle of value at least half her TPS, 
no matter what the strategies of the other agents are. Moreover, for a slight variation on the bidding game (allowing the winner of a round to select multiple items and pay for each of them), it was shown that any agent $i$ with an additive valuation has strategies that gives her at least a $\frac{1}{2 - b_i}$ fraction of the TPS, and another strategy that gives her at least a $\frac{3}{5}$-fraction of her APS {(see Definition~\ref{def:APS})}.
This latter result implies in particular that if all agents have additive valuations, there is an allocation that gives each agent at least a $\frac{3}{5}$-fraction of her APS. There is no other known proof of this fact. The bidding game was used in~\cite{bUF23} in order to provide the strongest known approximations when allocating items to agents with submodular valuations, to the MMS in the case of equal entitlements and to the APS in the case of arbitrary entitlements.

Our proof of existence of a $\frac{1}{2}$-$\widehat{TPS}$ allocation
for agents with additive valuations is by designing strategies that achieve such a guarantee in the bidding game. Giving the effectiveness of the bidding game in achieving previous bounds, it is not surprising that we use this methodology as well. However, it was far from obvious that there are strategies for the bidding game that provide a $\frac{1}{2}$-$\widehat{TPS}$ allocation. It is instructive to compare our strategies with the previous strategy of~\cite{BEF2021b} that achieves $\frac{1}{2}$-${TPS}$ (but not $\frac{1}{2}$-$\widehat{TPS}$). That strategy was simple in several respects:

\begin{itemize}
    \item It was {\em myopic}: in any given round of the bidding game, the value of a bid depends on the value of the item that the agent would select at that particular round if she was to win the round, but not on values of items that will be selected in future rounds.
    \item It was {\em monotone}: the bids of the agent form a nonincreasing sequence as the round number grows.
    \item It was polynomial time computable, or even computable in constant time (the bid in round $r$ is simply the minimum between the budget that the agent has and a scaled version of the value of the most valuable item that remains).
\end{itemize}

In contrast, we show that no myopic strategy can guarantee $\frac{1}{2}$-$\widehat{TPS}$. See Section~\ref{sec:myopic}. 
Moreover, we show that no monotone strategy can guarantee $\frac{1}{2}$-$\widehat{TPS}$. See Proposition~\ref{prop:nonMonotone}. Finally, we do not known if there is a polynomial time strategy that achieves $\frac{1}{2}$-$\widehat{TPS}$. 
We compensate for this last aspect by designing for every constant $\varepsilon > 0$ a polynomial time strategy that guarantees $(\frac{1}{2} - \varepsilon)$-$\widehat{TPS}$.

In developing our strategy we also provide a characterization of optimal strategies for the bidding game (for agents with additive valuations). This allows us to infer previously unknown facts about this game, such as that computing an optimal strategy is NP-hard. (We leave open the question of the exact complexity of this problem, as we have no reason to believe that computing an optimal strategy is in NP.) More importantly for our main result, we establish (via a nonconstructive argument), that if the entitlement of an agent is strictly larger than $\frac{1}{2}$, then the agent has a strategy that guarantees at least $\frac{1}{2} \cdot v_i(\items)$. This is one of the ingredients of our strategy that achieves $\frac{1}{2}$-$\widehat{TPS}$, and this is the only ingredient in our strategy that we do not know how to implement in polynomial time.

\subsection{Formalizing the Bidding Game}\label{sec:goods-bidding-game}

The {\em bidding game} is an allocation mechanism that was introduced in~\cite{BEF2021b}. It is a game played by the $n$ agents, and resulting in an allocation of $\items$. Initially, each agent $i$ gets a budget $b_i^0$ equal to her entitlement $b_i$, and holds an empty bundle of items $A_i^0$.  Thereafter, the game proceeds in rounds, where in each round, one item gets allocated, and the agent receiving the item pays part of her budget for this item. We now describe the execution of a single round, say round $r$, where $1 \le r \le m$. 

At the beginning of round $r$, each agent $i$ holds a budget $b_i^{r-1} \ge 0$, the part of the budget that she had not spent in the first $r-1$ rounds, and a set $A_i^{r-1}$ of those items that were allocated to her in previous rounds.  There also is a set $\items^{r-1}$ that contains $m - (r-1)$ items, those items that have not been allocated in the first $r-1$ rounds. In round $r$, each agent $i$ submits a non-negative bid $p_i^r$ of her choice, subject to the constraint that $p_i^r \le b_i^{r-1}$. For concreteness, we assume that the bids are submitted sequentially, one agent after the other, in some arbitrary order. The agent submitting the highest bid 
{(our results will apply to any tie breaking rule, and hence we do not specify an explicit tie breaking rule)} 
is declared as the winner of the round. The winner, say, agent $i$, selects an item $e_r$ of her choice from $\items^{r-1}$.  The set of items received by $i$ becomes $A_i^r = A_i^{r-1} \cup \{e_r\}$, and the budget of $i$ is updated to $b_i^r = b_i^{r-1} - p_i^r$ (the agent pays her bid). All other agents do not receive an item in round $r$ and do not pay anything in round $r$. That it, for $j \not= i$, we have that $A_j^r = A_j^{r-1}$, and $b_j^r = b_j^{r-1}$. We also update $\items^r = \items^{r-1} \setminus \{e_r\}$, and the round ends.

The bidding game ends after $m$ rounds, at which point all items are allocated.

A strategy for an agent $i$ in the bidding game is a function that at each round $r$ takes as input the set $\items$, the vector $(b_1, \ldots, b_n)$ of entitlements, the valuation $v_i$ of the agent, and the relevant history of the game up to round $r$ {(this history includes the bids in every prior round, the winner of each of these rounds, and the item that the winner selected when paying her bid)}
and produces two outputs. One is a bid $p_i^r$ in round $r$. The other is an item $e_r \in \items^{r-1}$ that the agent will choose if she wins the round. (More generally, the choice of $e_r$ may depend also on the values of the losing bids, but such generality is not needed for our purpose.)  

\subsection{Worst-case-optimal Strategies for the Bidding Game}
\label{sec:OptSafeBiddingGame}

\mbfuture{we have discussed a $\varepsilon$-SM share based on optimal strategies, and getting a positive result of $\varepsilon$-SM using the PTAS. The loss will be additive - case we handle small entitlements to get a multiplicative bound?}

In this section we perform a systematic study of {worst-case-optimal} 
strategies for the bidding game. Consider a set $\items$ of items, and an agent $i$ with valuation $v_i$ and entitlement $b_i$. 
{We look for optimal (max-min) strategies in the zero-sum game in which the agent aims to maximizes her value against an adversary that holds an entitlement of $1-b_i$ (and controls tie breaking).}
There are two players, $P_1$ and $P_2$. {The reader may think of $P_1$ as representing an agent $i$ of interest, and of $P_2$ as an adversary that coordinates bids of all other agents in an attempt to minimize the value that agent $i$ receives.} {The adversary becomes most powerful if there is just one other agent that holds all the remaining budget, instead of the budget being spread across different agents, because then the adversary's 
{bidding strategy is least constrained}.} 
There is a known additive valuation function $v$ with non-negative values {($v$ is the valuation function $v_i$ of agent $i$)}, and items are named in non-increasing order of value ($v(e_1) \ge v(e_2) \ge \ldots \ge v(e_m)$). The players have non-negative budgets $b_1$ and $b_2$ with $b_1 + b_2 = 1$. In every round $r$, $P_1$ bids first and $P_2$ bids next {(after seeing the bid of $P_1$)}, where a bid cannot exceed the budget available to the bidder. The winner of a round is the player placing the higher bid, breaking ties in favor of $P_2$. The winner {of round $r$ receives item} $e_r$ and pays her bid. The payoff for $P_1$ is the sum of $v$ values of items that $P_1$ wins, whereas the payoff for $P_2$ is the negative of that of $P_1$. Equivalently (up to {shifting} by $v(\items)$), the payoff of $P_2$ is  the sum of $v$ values of items that $P_2$ wins. 

Observe that without loss of generality, {as in every round the adversary ($P_2$) bids after seeing the bid of $P_1$,} the strategy space for $P_2$ need not be that of bids, but rather that of bits. The bit~0 can be interpreted as allowing $P_1$ to win the round, whereas the bit~1 can be interpreted as placing a bid that makes $P_2$ win the round. Being a 0-sum game and the convention that ties are broken in favor of $P_2$ imply that in this latter case, $P_2$ makes the same bid as $P_1$, and pays the bid of $P_1$. 
Hence, in every round $r$, first $P_1$ places a bid (not higher than her available budget), and 
then, $P_2$ decides who wins the round ({unless her remaining budget is lower than that bid}). 
The winner pays the bid of $P_1$, and gets the item $e_r$.

{Being a zero-sum deterministic sequential game, the worst-case-optimal strategy for the agent is deterministic.}
The max-min value of the game is the maximum payoff that $P_1$ can guarantee to herself, when $P_2$ is using a strategy that attempts to minimize this payoff. 

Under $v$, two different bundles may have the same value. It will be convenient for us to enforce that this never happens. 
{We use $v'$ to denote {an additive valuation function in which there are no ties in values of bundles,} that is consistent with the partial order over subsets induced by $v$ and the following tie breaking rule: ties are broken lexicographically according to the indexes of the items. We call such a $v'$ an \emph{associated valuation} of $v$, and observe that the order over subsets is the same for any $v'$ associated with $v$. 
For the special case that $v$ is integer-valued we consider the following concrete valuation $v'$: we let $v'$ be defined as}
$v'(e_r) = v(e_r) + 2^{-r}$ for every $r \in [m]$. Note that $v'$ differs from $v$ by adding a fractional part to otherwise integer values, it preserves the order among items (though we now have strict inequalities, $v'(e_1) > v'(e_2) > \ldots > v'(e_m)$), and the sum of all fractions added to a bundle is strictly less than~1. No two bundles can have the same value under $v'$, {and it preserves the order among bundles:} for every two bundles $S$ and $T$, if $v(S) > v(T)$ then $v'(S) > v'(T)$. We may assume that payoffs for the players are computed according to $v'$ rather than according to $v$, as optimal strategies for $v'$ are also optimal for $v$.

\subsubsection{Worst-case-optimal strategies: a structural result}

{We present a structural result, describing how 
the value guaranteed by the worst-case-optimal strategy changes as one increases the fraction of total entitlements that the agent has.}

The valuation function $v'$ induces a strict total order on bundles (with no ties), based on increasing $v'$ value. We refer to this order as $\pi_v$ (with $v'$ being implicit from $v$ and the item names). For $j \in [2^m]$, we let $\pi_v(j)$ denote the bundle that is at location $j$ in this total order, and say that $j$ is the {\em rank} of this bundle. In particular, $\pi_v(1)$ is the empty bundle, and $\pi_v(2^m) = \items$. More generally, for every $j$, the bundles at location $\pi_v(j)$ and $\pi_v(2^m - j + 1)$ are complements of each other (they are disjoint and their union is $\items$). 

With the vector $\pi_v$ we associate another vector $T_v$ with $2^m$ coordinates referred to as the {\em threshold vector}. 
Our intention is to fill up the entries of the vector $T_v$ with ``budget quantities" that satisfy the following threshold property:

\begin{itemize}
    \item {$T_v(1) = 0$.} For every {$2 \le j \le 2^m$}, $P_1$ has a strategy that guarantees that she receives value at least $v'(\pi_v(j))$ (namely, $P_1$ can guarantee to herself a bundle at least as good as the bundle of rank $j$ in the order $\pi_v$) if and only if her budget $b_1$ satisfies  $b_1 > T_v(j)$. {(Note that for $j=1$ this strict inequality is replaced by a weak inequality, requiring us to define $T_v(1)$ separately.)} 
\end{itemize}

As a warm up, we determine some entries of the vector $T_v$. 

The bundle of rank~1 is necessarily the empty bundle. Any budget of $P_1$ suffices in order to win at least this bundle, so $T_v(1) = 0$. Note that in this case, $P_1$ can win the bundle of rank~1 even if  $b_1 = T_v(1)$ (the strict inequality $b_1 > T_v(1)$ can be made into a weak inequality). This is the only exception to the if and only if condition in the threshold property. 

The bundle of rank~2 is $\{e_m\}$ {(recall that $e_m$ is the least valuable item)}. For $P_1$ to be guarantee winning this item (or better), we need $b_1 > \frac{1}{m+1}$. If $P_1$ bids her full budget in each round (until winning a round), she is guaranteed to win one item. With $b_1 \le \frac{1}{m+1}$, $P_2$ can afford to bid $\frac{1}{m+1}$ in every round, and $P_1$ does not win any item. Hence $T_v(2) = \frac{1}{m+1}$.

The bundle of rank~$2^m$ is $\items$. For $P_1$ to be guarantee winning all items, we need $b_1 > \frac{m}{m+1}$. If $P_1$ bids $\frac{1}{m+1}$ in each round, she is guaranteed to win all rounds. With $b_1 \le \frac{m}{m+1}$, $P_2$ can afford to bid $\frac{1}{m+1}$ in every round, and $P_2$ wins at least one item. Hence $T_v(2^m) = \frac{m}{m+1}$.

We see that the cases of rank~2 and rank $2^m$ are complements of each other, with the optimal strategy for $P_1$ in one case being the optimal strategy for $P_2$ in the other case, and $T_v(2) + T_v(2^m) = 1$. This is not a coincidence, and it stems from the fact that the bundle at rank $2^m$ is the complement of the bundle at rank $2 - 1 = 1$. More generally, as we shall see, we will have the equality $T_v(j) + T_v(2^m - j + 2) = 1$ for all $2 \le j \le 2^m$.

\begin{theorem}
    \label{thm:budgetVector}    
    Let $v$ be a non-negative additive valuation function, with its associated total order $\pi_v$ 
    {(with respect to an associated valuation function $v'$ that induces a total order
    )} 
    over all bundles. 
    Then there is a unique threshold vector $T_v$ with the following properties.
    \begin{enumerate}
    \item {\em Threshold property.} For every $j \in [2^m]$, $P_1$ has a strategy that guarantees that she receives value at least $v'(\pi_v(j))$, if and only if her budget $b_1$ satisfies  $b_1 > T_v(j)$.  (There is one exception. For $j=1$, a budget of $b_1 \ge T_v(1)=0$ suffices in order to get $v'(\pi_v(1))=0$.)
    \item {\em Strict monotonicity.} $T_v(j+1) > T_v(j)$ for every  $j \in \{1, 2,\ldots ,2^{m}-1\}$.
        \item {\em Symmetry.} For every $j \in \{2, 3,\ldots ,2^{m}\}$ it holds that $T_v(j) + T_v(2^m - j+2) = 1$. In particular, $T_v(2^{m-1} + 1) = \frac{1}{2}$.
        \item {\em Rationality {and representation bound:}} 
        For every $j\in [2^m]$ the threshold $T_v(j)$ is a rational number, with numerator and denominator both upper bounded by $2^{2^m - 1}$.   
    \end{enumerate}
\end{theorem}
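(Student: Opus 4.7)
The plan is to prove existence and uniqueness of $T_v$ by induction on the number of items $m$, establishing all four properties together. The base case $m = 0$ is immediate: only the empty bundle exists, and $T_v(1) = 0$ satisfies all four properties vacuously. For the inductive step, I exploit the recursive structure of the bidding game. At the start of round~$1$, $P_1$ chooses a bid $p \in [0, b_1]$; $P_2$, who acts after observing this bid, may either contest the round (feasible iff $1 - b_1 \ge p$, after which she selects an item $e$) or concede it (after which $P_1$ selects $e$). In either case, the remaining game is an $(m-1)$-item instance on $\items \setminus \{e\}$ with total remaining budget $1 - p$, which after rescaling is again a bidding game with total budget~$1$. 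By the inductive hypothesis, each such sub-game has a well-defined threshold vector with all four properties.

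Next I would write the game value $V^{(m)}(b_1)$ via backward induction, taking a max over $P_1$'s choice of $(p, e)$ and a min over $P_2$'s choices. Using the inductive hypothesis, the sub-game values (as functions of the rescaled budget $\frac{b_1 - p}{1-p}$ or $\frac{b_1}{1-p}$) are piecewise constant in $b_1$, and hence so is $V^{(m)}(b_1)$. Define $T_v(j)$ to be the infimum of budgets for which $P_1$ has a strategy guaranteeing a bundle of value at least $v'(\pi_v(j))$; uniqueness of $T_v$ is immediate from this definition, and the threshold property reduces to showing that each of these sets is an open half-line $(T_v(j), 1]$, with the exception that $b_1 = T_v(1) = 0$ already suffices to guarantee the empty bundle.

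The subtle property is strict monotonicity: I must show that $V^{(m)}$ actually attains \emph{every} one of the $2^m$ bundle values on a non-degenerate interval of budgets, with no two consecutive ranks collapsing. For this I would use a two-sided argument: for each rank $j$, construct (i) an explicit strategy for $P_1$ that guarantees rank $\ge j$ whenever $b_1$ is just above the candidate $T_v(j)$, by choosing $p$ so that both of $P_2$'s responses lead to a sub-game state producing rank $\ge j$, and (ii) a matching strategy for $P_2$ that denies rank $j$ whenever $b_1 \le T_v(j)$. The tie-free refinement $v'$ is essential here: since distinct bundles have distinct values, adjacent ranks cannot be confused at the sub-game level. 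Symmetry follows from a duality argument: swapping the roles of $P_1$ and $P_2$ turns the game into one where $P_2$, now with budget $1 - b_1$, seeks to maximize her bundle value, whose rank in $\pi_v$ is the complement of $P_1$'s. The tie-breaking convention favoring $P_2$ shifts each threshold by one rank when roles are swapped, which yields the offset $j \mapsto 2^m - j + 2$ rather than $2^m - j + 1$.

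For the rationality and size bound, note that each threshold of $T_v^{(m)}$ is determined by a linear equation in sub-game thresholds of the form $\frac{b_1 - p}{1-p} = t$ or $\frac{b_1}{1-p} = t$, where $p$ and $t$ are rationals arising in the induction; solving gives $b_1 = t + p(1-t)$ or $b_1 = t(1-p)$, whose numerator and denominator are bounded by the product of those of $p$ and $t$. A direct bookkeeping argument, combined with the inductive bound $2^{2^{m-1}-1}$ on sub-game thresholds, yields the claimed bound $2^{2^m-1}$ at the next level. The main obstacle I anticipate is the careful verification of strict monotonicity in the inductive step: identifying $P_1$'s optimal first-round bid $p^*(b_1, j)$ as $b_1$ crosses each candidate threshold, and showing that the resulting value function has exactly $2^m$ plateaus labeled in the correct order by bundle value, with no rank skipped. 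This requires aligning sub-game thresholds for multiple choices of removed item $e$, and is where the tie-free property of $v'$ and the explicit strategy constructions in step~(i)--(ii) do the real work.
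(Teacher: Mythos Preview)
Your overall architecture---induction on $m$, recursion through the first round, duality for symmetry---matches the paper's. But there is a concrete gap at the heart of the recursion that your sketch does not close.

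You write that each threshold is determined by ``a linear equation in sub-game thresholds of the form $\frac{b_1-p}{1-p}=t$ or $\frac{b_1}{1-p}=t$,'' treating the first-round bid $p$ as a separately known rational. It is not. For $P_1$ to guarantee rank $\ge j$ with a bid $p$, \emph{both} post-round conditions must hold simultaneously: if $P_2$ concedes, $P_1$ needs $\frac{b_1-p}{1-p}>T'_v(\ell)$, and if $P_2$ contests, $P_1$ needs $\frac{b_1}{1-p}>T'_v(h)$, where $\ell$ and $h$ are the relevant sub-game ranks. The threshold $T_v(j)$ is the infimum of $b_1$ for which such a $p$ exists, and finding it requires \emph{eliminating} $p$ from this pair of inequalities. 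The paper carries this out explicitly and obtains the closed form
\[
T_v(j)=\frac{T'_v(h)}{1+T'_v(h)-T'_v(\ell)},
\]
with precise identifications of $h$ and $\ell$ depending on whether $e_1\in B_j$. Without this formula, your rationality argument is incomplete (you have no bound on the numerator and denominator of $p$, so the product bound does not go through), and your strict-monotonicity argument remains a promise: the paper's proof of monotonicity is a case analysis showing that as $j$ increases the pair $(h,\ell)$ moves weakly upward with at least one strict inequality, which is only possible once the formula and the definitions of $h,\ell$ are in hand.

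A secondary point: your sketch speaks of ``aligning sub-game thresholds for multiple choices of removed item $e$.'' In the two-player zero-sum formulation of the paper, the winner of round $r$ always receives $e_r$ (equivalently, both players optimally select the most valuable remaining item), so after round~1 there is exactly one sub-instance, on $\items\setminus\{e_1\}$. Recognizing this collapses the recursion to a single sub-game and is what makes the explicit formula tractable.
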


    \begin{proof} 
    The proof is by induction on $m$. The base case is $m=1$. In this case, $\pi_v$ contains two bundles, the empty one, and the one containing the single item. The vector $T_v$ has $2^m=2^1=2$ coordinates, and its value is $(0, \frac{1}{2})$ (recall that prior to the statement of Theorem~\ref{thm:budgetVector} we showed that $T_v(1) = 0$ and $T_v(2) = \frac{1}{m+1}$). One can readily see that this vector satisfies all properties of Theorem~\ref{thm:budgetVector}. 

    For the inductive step, suppose that $m>1$ and that the theorem is known to hold for $m' = m-1$. For an instance $I$ with $m$ items, let $I'$ be the instance on the set $\items' = \items \setminus \{e_1\}$ of items, and let $\pi'_v$ be the corresponding order on its $2^{m-1}$ bundles. By the inductive hypothesis, instance $I'$ has a threshold vector $T'_v$ (with $2^{m-1}$ coordinates) satisfying the properties of the theorem. We show now how to use $T'_v$ of $I'$ in order to construct $T_v$ for $I$.
    \medskip

Consider first 
an index $j$ such that the bundle $B_j$ of rank $j$ in $\pi_v$ does not contain $e_1$. We explain how to set the value of $T_v(j)$. We assume that $j > 1$, because for $j=1$ the respective bundle is the empty bundle, and $T_v(1) = 0$.

Let $h$ denote the rank of $B_j$ in $\pi'_v$. 
Let $B' \subset \items'$ be a bundle not containing $e_1$ for which the rank of $B' \cup \{e_1\}$ in $\pi_v$ 
is above $j$, and which has the lowest rank in $\pi'_v$ among all such bundles. Denote the rank of $B'$ in $\pi'_v$ by $\ell$. Note that necessarily $\ell < h$, because removing any item from $B_j$ and adding $e_1$ to the resulting bundle necessarily gives a bundle ranked above $B_j$ in $\pi_v$. 

Suppose that $P_1$ has budget $b_1$ and wishes to ensure that she wins a bundle of rank at least $j$ under $\pi_v$. What should her first bid $\alpha$ satisfy? If she loses the first bid, she is in the situation that she needs to win a bundle of rank at least $h$ in $\pi'_v$. To meet this need it is required that $\frac{b_1}{1 - \alpha} > T'_v(h)$, giving $\alpha > \frac{T'_v(h) - b_1}{T'_v(h)}$.   If she wins the first bid, she is in the situation that she needs to win a bundle of rank at least $\ell$ in $\pi'_v$. Hence, $\frac{b_1 - \alpha}{1 - \alpha} > T'_v(\ell)$, giving $\alpha < \frac{b_1 - T'_v(\ell)}{1 - T'_v(\ell)}$. The inequalities that we derived imply that $\frac{T'_v(h) - b_1}{T'_v(h)} < \frac{b_1 - T'_v(\ell)}{1 - T'_v(\ell)}$, giving $b_1 > \frac{T'_v(h)}{1 + T'_v(h) - T'_v(\ell)}$. Hence $T_v(j) = \frac{T'_v(h)}{1 + T'_v(h) - T'_v(\ell)}$.

Let us check now the rationality property. Writing $T'_v(h)$ as the reduced rational number $\frac{p_h}{q_h}$, and $T'_v(\ell)$ as $\frac{p_{\ell}}{q_{\ell}}$, we get that $T_v(j) = \frac{p_h q_{\ell}}{q_{\ell} q_h + p_h q_{\ell} - p_{\ell}q_h}$. By the inductive hypothesis, each term $p_h, q_h, p_{\ell}, q_{\ell}$ is at most $2^{2^{m-1} - 1}$, which readily implies that both numerator and denominator of $T_v(j)$ are bounded by $2^{2^m - 1}$.
\medskip

\mbfuture{STOPPED HERE}

Consider now an index $j$ such that the bundle $B_j$ of rank $j$ in $\pi_v$ contains $e_1$. We decompose this bundle as $B_j = B \cup \{e_1\}$. We explain how to set the value of $T_v(j)$. 

Let $\ell$ denote the rank of $B$ in $\pi'_v$. 
Let $B' \subset \items'$ be a bundle not containing $e_1$ for which the rank of $B' \cup \{e_1\}$ in $\pi_v$ is above $j$, and which has the lowest rank in $\pi'_v$ among all such bundles. Such a bundle $B'$ need not exist (for example, it might hold that $v'(e_1) > v'(\items')$). If $B'$ does exist, denote its rank in $\pi'_v$ by $h$. Note that necessarily $h > \ell$.

Suppose that $P_1$ has budget $b_1$ and wishes to ensure that she wins a bundle of rank at least $j$ under $\pi_v$. What should her first bid $\alpha$ satisfy? There are two cases to consider, depending on whether $B'$ exists or not.

If $B'$ does not exist, $P_1$ is in the situation that she needs to win $e_1$, and later also win a bundle of rank at least $\ell$ in $\pi'_v$. To win $e_1$, $P_1$ must bid more than $1 - b_1$ in the first round. Thereafter, the budget that she has left (just below $2b_1 - 1$) needs to satisfy $\frac{2b_1 - 1}{1 - (1 - b_1)} > T'_v(\ell)$, giving $b_1 > \frac{1}{2 - T'_v(\ell)}$, and $T_v(j) = \frac{1}{2 - T'_v(\ell)}$.   Writing $T'_v(\ell)$ as $\frac{p_{\ell}}{q_{\ell}}$, we get that $T_v(j) = \frac{q_{\ell}}{2 q_{\ell} - p_{\ell}}$. By the inductive hypothesis, each term $p_{\ell}, q_{\ell}$ is at most $2^{2^{m-1} - 1}$, implying that both numerator and denominator of $T_v(j)$ are bounded by $2^{2^m - 1}$.
\medskip

If $B'$ does exist, what should $P_1$'s first bid $\alpha$ satisfy? If $P_1$ looses the first bid, she is in the situation that she needs to win a bundle of rank at least $h$ in $\pi'_v$. If she wins the first bid, she is in the situation that she needs to win a bundle of rank at least $\ell$ in $\pi'_v$. We already saw that these considerations lead to $T_v(j) = \frac{T'_v(h)}{1 + T'_v(h) - T'_v(\ell)}$, and that both numerator and denominator of $T_v(j)$ are bounded by $2^{2^m - 1}$.

By the above, we designed a vector $T_v$ satisfying the threshold property, and showed that is satisfies the rationality property. We now show that strict monotonicity holds. 

\begin{claim}\label{claim:threshold-strict}
    The threshold vector $T_v$ satisfies the strict monotonicity property. 
\end{claim}

\begin{proof}
Observe that we established a formula for $T_v(j)$, namely, $T_v(j) = \frac{T'_v(h)}{1 + T'_v(h) - T'_v(\ell)}$, with the following interpretations of $h$ and $\ell$.

\begin{enumerate}
    \item If $e_1$ is not in the bundle $B_j$ of $\pi_v$ rank $j$, then $h$ is the $\pi'_v$ rank of this bundle $B_j$, and $\ell$ is the $\pi'_v$ rank of the lowest bundle $B' \subset \items'$ whose union with $e_1$ gives a bundle of $\pi_v$ rank above $j$.
    \item If $e_1$ is in the bundle $B_j$ of $\pi_v$ rank $j$, we decompose $B_j = B \cup \{e_1\}$. Then, $\ell$ is the $\pi'_v$ rank of $B$, and $h$ is the $\pi'_v$ rank of the lowest bundle $B' \subset \items'$ of $\pi_v$ rank above $j$. 
    \item If in case~2 above there is no such $B'$, then the two occurrences of $T'_v(h)$ in the expression $T_v(j) = \frac{T'_v(h)}{1 + T'_v(h) - T'_v(\ell)}$ are each replaced by~1, giving $T_v(j) = \frac{1}{2 - T'_v(\ell)}$. Note that $1 > T'_v(i)$ for every {$i \in \{1, \ldots, 2^{m-1}\}$,} and hence we may think of this case as corresponding to taking $h = 2^{m-1}+1$ (larger than the number of coordinates in $T'_v$) and keep the expression $T_v(j) = \frac{T'_v(h)}{1 + T'_v(h) - T'_v(\ell)}$.
\end{enumerate}

Suppose without loss of generality that $i > j$. Let $h_i, h_j, \ell_i, \ell_j$ denote the corresponding $h$ and $\ell$ values for $i$ and $j$. To show strict monotonicity of $T_v$ it suffices to show that $h_i \ge h_j$, $\ell_i \ge \ell_j$, and that at least one of these inequalities is strict. Verifying monotonicity within each of the three cases separately is straightforward. Likewise, verifying monotonicity when $i$ is in case~3 and $j$ is in case~2 is straightforward. Note also that it cannot be that $i$ is in case~2 and $j$ is in case~3, because $i > j$.

Suppose that $i$ is in case~1 whereas $j$ is in case~2. Note that $B_i$ can serve as $B'$ (showing in particular that $j$ cannot be in case~3). Hence $h_i \ge h_j$. Moreover, $\ell_i > \ell_j$, because adding $e_1$ to $B'$ of $i$ leads to a bundle of $\pi_v$ rank higher than $i$, whereas adding $e_1$ to $B$ of $j$ leads to $B_j$ which has has $\pi_v$ rank $i$.

Suppose that $i$ is in case~2 or~3 whereas $j$ is in case~1. Then necessarily $h_i > h_j$ (because $B'$ of $i$ has $\pi_v$ rank higher than $i$, whereas $B_j$ has rank $j$), and $\ell_i \ge \ell_j$ (because $B$ of $i$ is a feasible choice for $B'$ of $j$). 
\end{proof}

We now prove the symmetry property, showing that for every $j \in \{2, 3,\ldots ,2^{m}\}$ it holds that $T_v(j) + T_v(2^m - j+2) = 1$, or equivalently, that for every $j \in \{1, 2, \ldots ,2^{m} - 1\}$ it holds that $T_v(j+1) + T_v(2^m - j+1) = 1$. Consider the bundle $B_j$ of $\pi_v$ rank $j$. Its complement bundle $B_{j'}$ has rank $j' = 2^m - j + 1$. We need to show that that $T_v(2^m - j+1) = 1 - T_v(j+1)$. Observe that necessarily $T_v(j+1) + T_v(2^m - j+1) \ge 1$, as otherwise, {with one budgets slightly above $T_v(j+1)$ and the other slightly above $T_v(2^m - j+1)$,} one agent could guarantee a bundle of rank $j+1$ and the other a bundle of rank $2^m - j+1$, which is strictly better than the complement of a bundle of rank $j+1$ (as the complement has rank $2^m - j$). Hence, it suffices to show that $T_v(2^m - j+1) \le 1 - T_v(j+1)$, or in other words, that with a budget larger than $1 - T_v(j+1)$ player $P_1$ can guarantee the complement bundle $B_{j'}$.

If $b_1 \le T_v(j+1)$  then $P_1$ does not have a strategy that is guaranteed to give a bundle better than $B_j$. In other words, for any strategy $s_1$ for $P_1$ there is a strategy $s_2$ for $P_2$ that guarantees to herself a bundle at least as good as the complement, $B_{j'}$. Recall that strategies of $P_2$ (and hence, $s_2$) are of the following form. In every round $r$, the strategy sets a threshold $t_r$ that depends on the history of the game up to round $r$. If $P_1$ bids $t_r$ (or below), then $s_2$ bids $t_r$ and wins the round (as ties are broken in favor of $P_2$). If $P_1$ bids above $t_r$, then also in this case $P_2$ bids $t_r$, and this lets $P_1$ win the round (and pay her bid).  In other words, strategy $s_2$ dictates a bid $t_r$ for every round $r$ (where $t_r$ depends on the history of the game up to round $r$).

Now, swap the budgets between $P_1$ and $P_2$, and also change the names of the players to $P'_1$ and $P'_2$ (so no confusion arises with the original $P_1$ and $P_2$). After this swap, can $P'_1$ (with a budget of $1 - T_v(j+1)$) guarantee to herself a bundle at least as good as $B_{j'}$, by using the bidding strategy $s_2$? Not quite, because now ties in bids are broken in favor of $P'_2$, so the execution of the game with $P'_1$ and $P'_2$ is not the mirror image of the original execution with $P_1$ and $P_2$. However, suppose that the budget of $P'_1$ is strictly larger than $1 - T_v(j+1)$. This is equivalent to a setting in which budgets do not sum up to~1 and the budget of $P'_1$ has two components, a {\em main budget} of exactly $1 - T_v(j+1)$ and an auxiliary budget of $\varepsilon > 0$, whereas $P'_2$ has a budget of exactly $T_v(j+1)$.

Now, in each round $r$, $P'_1$ can consult $s_2$ regarding which bid $t_r$ to give, and then instead of bidding $t_r$, $P'_1$ can bid $t_r + \frac{\varepsilon}{m}$, where $t_r$ is taken from $P'_1$'s main budget, and  $\frac{\varepsilon}{m}$ is taken from $P'_1$'s auxiliary budget (ensuring that neither budget can run out prematurely during the game). When $P'_1$ uses this strategy, then if $P'_2$ lets $P'_1$ win the round, $P'_1$ does not spend more from her main budget than $P_2$ would spend in the game against $P_1$. If $P'_2$ wins the round, then $P'_2$ pays strictly more that $t_r$, and with such a bid $P_2$ would have let $P_1$ win the round. Hence $P'_2$ cannot win for herself a bundle better than $B_j$, implying that $P'_1$ wins a bundle at least as good as $B_{j'}$. This implies that $T_v(j+1) + T_v(2^m - j + 1) = 1$, establishing the symmetry property.

We remark that an alternative way of establishing the symmetry property is by induction on $m$, showing that symmetry of $T'_v$ (with $m-1$ items) implies symmetry of $T_v$ (with $m$ items). This involves using the formula $T_v(j) = \frac{T'_v(h)}{1 + T'_v(h) - T'_v(\ell)}$, establishing relations between $h$ for $j$ and $\ell$ for ${j'}$, and  between $\ell$ for $j$ and $h$ for ${j'}$, and using the symmetries that $T'_v$ is assumed to have. Details of this proof are omitted.
\end{proof}

\subsubsection{Worst-case-optimal strategies: fairness}\label{sec:optimal-fair}

Theorem~\ref{thm:budgetVector} has the following immediate consequences {for agents with a high entitlement (at least half). Fairness properties for low entitlements (below half) are not immediate from the structural result, but will be derived from our results on safe strategies presented in Section \ref{sec:goods-pos-ex-post}}.

\begin{corollary}
    \label{cor:bidding}
    In the bidding game, an agent $i$ {with an additive valuation over goods} 
    has a strategy with the following guarantees:
    \begin{itemize}
        \item If $b_i = \frac{1}{2}$, the agent receives at least her MMS.
        \item If $b_i > \frac{1}{2}$, the agent receives value of at least $\frac{v_i(\items)}{2}$.
    \end{itemize}
\end{corollary}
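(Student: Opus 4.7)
The plan is to derive both claims directly from Theorem~\ref{thm:budgetVector}, combining its symmetry and strict monotonicity properties with a combinatorial observation about the $\pi_v$-ranks of complementary bundles. The key arithmetic is that by symmetry $T_v(2^{m-1}+1) = \frac{1}{2}$, and by strict monotonicity $T_v(2^{m-1}) < \frac{1}{2}$. Moreover, the bundles $\pi_v(2^{m-1})$ and $\pi_v(2^{m-1}+1)$ are complements of one another, so $v_i(\pi_v(2^{m-1})) + v_i(\pi_v(2^{m-1}+1)) = v_i(\items)$.

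For the second bullet, the hypothesis $b_i > \frac{1}{2} = T_v(2^{m-1}+1)$ together with Theorem~\ref{thm:budgetVector} produces a strategy guaranteeing agent~$i$ a bundle $B$ whose $\pi_v$-rank is at least $2^{m-1}+1$. Because $\pi_v$ is induced by an associated valuation $v'$ that refines the partial order of $v_i$, rank at least $2^{m-1}+1$ forces $v_i(B) \ge v_i(\pi_v(2^{m-1}+1))$. Since rank $2^{m-1}+1$ lies above rank $2^{m-1}$, we have $v_i(\pi_v(2^{m-1}+1)) \ge v_i(\pi_v(2^{m-1}))$; combined with the complementarity identity above, this forces $v_i(\pi_v(2^{m-1}+1)) \ge \frac{v_i(\items)}{2}$, and hence $v_i(B) \ge \frac{v_i(\items)}{2}$.

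For the first bullet, let $(S_1, S_2)$ be an $MMS(v_i,\frac{1}{2})$ partition, and relabel so that $S_1$ has the smaller $v'$-value (using the tie-breaking rule built into $v'$). Because $v'$ extends the partial order of $v_i$, this relabeling guarantees $v_i(S_1) \le v_i(S_2)$, and hence $v_i(S_1) = MMS(v_i,\frac{1}{2})$. Since $S_1$ and $S_2$ are complements, their $\pi_v$-ranks sum to $2^m+1$; from $\mathrm{rank}_{\pi_v}(S_1) \le \mathrm{rank}_{\pi_v}(S_2)$ we deduce $\mathrm{rank}_{\pi_v}(S_1) \le 2^{m-1}$. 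In particular $v_i(\pi_v(2^{m-1})) \ge v_i(S_1) = MMS(v_i,\frac{1}{2})$. Since $b_i = \frac{1}{2} > T_v(2^{m-1})$, Theorem~\ref{thm:budgetVector} delivers a strategy that wins some bundle of $\pi_v$-rank at least $2^{m-1}$, hence of $v_i$-value at least $MMS(v_i,\frac{1}{2})$.

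The argument is essentially bookkeeping once Theorem~\ref{thm:budgetVector} is in hand, so I expect no substantive obstacle. The one point that calls for care is the translation from the strict total order of $v'$ (which records rank) back to the true utility $v_i$ (which may have ties), ensuring that ``winning a bundle of $\pi_v$-rank at least $j$'' really does translate to ``receiving $v_i$-value at least $v_i(\pi_v(j))$''. This is valid precisely because $v'$ extends the partial order of $v_i$: $v'(A) \ge v'(B)$ implies $v_i(A) \ge v_i(B)$.
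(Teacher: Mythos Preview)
Your proof is correct and follows essentially the same approach as the paper's: both bullets are read off from Theorem~\ref{thm:budgetVector} via the identities $T_v(2^{m-1}+1)=\tfrac12$, $T_v(2^{m-1})<\tfrac12$, and the fact that $\pi_v(2^{m-1})$ and $\pi_v(2^{m-1}+1)$ are complementary bundles. Your explicit treatment of the $v'\to v_i$ translation is a nice touch that the paper leaves implicit.
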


\begin{proof}
    Let $v_i$ be the valuation function of agent $i$. 
    
    Consider the two player version of bidding game described above with valuation function $v = v_i$ and with $b_1 = \frac{1}{2}$. By the proof of Theorem~\ref{thm:budgetVector}, $P_1$ has a strategy that gives her a bundle of value at least that of the bundle of $\pi_v$ rank $2^{m-1}$. Recall that bundles that are complements of each other appear in ranks $j$ and $2^m - j + 1$ under $\pi_v$, for some $j$. Consider now the optimal MMS partition into two bundles with respect to $v_i$. As these bundles are complements of each other, at least one of them has rank no higher than $2^{m-1}$, and by optimality of the MMS partition, its rank would be exactly $2^{m-1}$. Hence $P_1$ can win a bundle of value at last that of the MMS of agent $i$. By using this strategy of $P_1$ in the $n$ agent bidding game, agent $i$ can guarantee to herself a bundle of value at least her MMS.

    A similar argument as above, but with $b_i > \frac{1}{2}$ shows that agent $i$ has a strategy that gives her a bundle of value no worse than that of $\pi_v$ rank $2^{m-1} + 1$. As this bundle has value at least as high as its complement, this gives agent $i$ at least her proportional share.
\end{proof}

\subsubsection{Worst-case-optimal strategies: computational complexity}\label{sec:optimal-comp}

Theorem~\ref{thm:budgetVector} also has interesting consequences concerning the complexity of computing {optimal strategies} in the bidding game. Its proof implicitly describes a procedure for computing bids of a worst-case-optimal strategy, by backward induction. However, 
expressing these bids appears to require $2^m$ bits of precision. The theorem establishes this as an upper bound, and we do not have any significantly better upper bounds. If this precision is really necessary, then it might be that the optimal strategies require space exponential in $m$. Moreover, as the value of the game is very sensitive to small perturbations in the budget (for example, if there is only one item, a budget of $\frac{1}{2} + \varepsilon$ gives much higher value than a budget of $\frac{1}{2} - \varepsilon$), it might be difficult to even approximate the value of the bidding game in subexponential space. 

Though we do not know if computing the value of the bidding game is in NP, it is clearly NP-hard.

\begin{corollary}
    \label{cor:NPhard}
    When agents have 
    {additive integer-valuations over goods} 
    determining the {max-min value} of the bidding game is NP-hard. This holds for every tie breaking rule among bids.
\end{corollary}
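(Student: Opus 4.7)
The plan is to reduce from the weakly NP-hard PARTITION problem. Given positive integers $a_1, \ldots, a_m$ summing to $2S$, I will set up the two-player bidding game with $b_1 = 1/2$ and $P_1$'s additive valuation $v(e_r) = a_r$. The key claim I will establish is that $P_1$'s max-min payoff is determined by whether some subset of the $a_r$'s sums exactly to $S$, so that a polynomial-time algorithm for the max-min value would decide PARTITION.

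First, under the adversary tie-breaking rule built into Theorem~\ref{thm:budgetVector}, the symmetry property gives $T_v(2^{m-1}+1) = 1/2$ and strict monotonicity gives $T_v(2^{m-1}) < 1/2$, so $b_1 = 1/2$ lets $P_1$ guarantee the bundle of rank $2^{m-1}$ in $\pi_v$ but no better. As in the proof of Corollary~\ref{cor:bidding}, this rank-$2^{m-1}$ bundle is the smaller part of an MMS partition, so the max-min payoff equals $\mathrm{MMS}(v,1/2)$. Since the average is $S$ and the smaller side of every 2-partition has value at most $S$, we have $\mathrm{MMS}(v,1/2) = S$ iff some subset sums exactly to $S$. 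This already gives NP-hardness for the adversary tie-breaking rule.

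To handle every other tie-breaking rule, I will sandwich the max-min payoff between the value of the rank-$2^{m-1}$ bundle (adversary ties) and the value of the rank-$(2^{m-1}+1)$ bundle (ties in $P_1$'s favor), with any intermediate deterministic or randomized rule landing in between. For integer-valued $v$, the valuation $v'$ used to order bundles separates bundles with $v < S$ from those with $v > S$ whenever no subset equals $S$, so the two extremal bundles have integer values strictly below and strictly above $S$ respectively in the ``no-solution'' case; when a subset does sum to $S$, both extremal bundles have value exactly $S$. Thus under any tie-breaking rule, the max-min value equals $S$ iff PARTITION is solvable, and is bounded away from $S$ otherwise.

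The main obstacle I anticipate is justifying the rank-$(2^{m-1}+1)$ side of the sandwich cleanly, since Theorem~\ref{thm:budgetVector} is stated specifically for adversary tie-breaking. I expect this to follow by rerunning the backward induction with the opposite tie-breaking convention: the entire construction of the threshold vector $T_v$ is preserved up to turning the strict inequality $b_1 > T_v(j)$ into a weak inequality $b_1 \ge T_v(j)$, and symmetry and monotonicity carry over unchanged. Once this variant is in hand, the sandwich argument closes the reduction and delivers NP-hardness for every tie-breaking rule.
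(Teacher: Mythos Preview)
Your reduction is the same as the paper's: tie the max-min value at $b_1=\tfrac12$ to $\mathrm{MMS}(v,\tfrac12)$ and invoke PARTITION. The paper's three-line proof frames it through the \emph{allocation} (both agents can secure their MMS by Corollary~\ref{cor:bidding} regardless of tie-breaking, so optimal play produces an MMS allocation, and finding those is NP-hard); you instead argue directly that the max-min \emph{value} equals $\mathrm{MMS}(v,\tfrac12)$ under adversary tie-breaking, which is a legitimate and arguably more explicit variant of the same idea.

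There is, however, a gap in your extension to arbitrary tie-breaking rules. Your sandwich correctly places the max-min value in $[\mathrm{MMS},\,2S-\mathrm{MMS}]$, and you observe that in the no-solution case the endpoints are integers strictly below and above $S$. But that interval still \emph{contains} $S$, so the sandwich alone does not yield ``max-min $\ne S$'' for an intermediate rule. The missing step is that for any fixed deterministic tie-breaking the bidding game is a finite-horizon sequential perfect-information zero-sum game, so backward induction produces an optimal play whose outcome is a concrete bundle; hence the max-min value is always a subset sum and therefore cannot equal $S$ when no subset sums to $S$. State that, and your iff closes immediately. Your proposed rerun of the threshold-vector construction for the $P_1$-favored extreme (replace $>$ by $\ge$) is plausible but then unnecessary for the reduction.
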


\begin{proof}
    Consider two agents, each with entitlement $\frac{1}{2}$. By Corollary~\ref{cor:bidding}, each agent can ensure to herself her MMS. Hence the outcome of the bidding game is an allocation in which each agent gets at least her MMS. Finding such allocations is NP-hard (this is a known result, by a straightforward reduction from the NP-hard problem {\em partition}).
\end{proof}

\subsubsection{Worst-case-optimal strategies: non-monotonicity}\label{sec:optimal-non-mon}
{As part of our study of optimal strategies for the bidding game, we note another aspect (beyond Corollary~\ref{cor:NPhard}) that indicates that optimal strategies do not have a simple structure. We say that a bidding strategy is {\em monotone} if the sequence of bids dictated by the strategy in non-increasing. That is, for every round $r$, the bid in round $r+1$ cannot be larger than the bid in round $r$. Winning in round $r$ is more valuable than winning in round $r+1$, in the sense that the set of items available for selection at round $r+1$ is a strict subset of those available at round $r$. {Nevertheless,} {Proposition~\ref{prop:nonMonotone} 
shows that it may be desirable to bid higher in round $r+1$ than in round $r$: every monotone strategy gives less than $\frac{v_i(\items)}{2}$, while the optimal strategy gives at lease that much (and thus must be non-monotone.)}

\begin{proposition}\label{prop:nonMonotone}
    There are allocation instances with additive valuations over goods in which for some agent $i$ with entitlement $b_i > \frac{1}{2}$, every monotone strategy in the bidding game gives value less than $\frac{v_i(\items)}{2}$
(yet the optimal strategy gets at least $\frac{v_i(\items)}{2}$, so it is necessarily not monotone)
\end{proposition}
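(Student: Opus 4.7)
The plan is to exhibit a concrete instance for which monotone strategies provably fail, and then invoke Corollary~\ref{cor:bidding} to conclude that the worst-case-optimal strategy must be non-monotone. Specifically, I would take $m = 3$ items with $v_i(e_1) = v_i(e_2) = v_i(e_3) = 1$, so $v_i(\items) = 3$ and the target is $v_i(\items)/2 = 3/2$, and set $b_i = \tfrac{1}{2} + \varepsilon$ for any sufficiently small $\varepsilon > 0$ (say $\varepsilon < 1/10$); the remaining budget $\tfrac{1}{2}-\varepsilon$ is held by a single adversary, as a single adversary is the worst-case opponent and ties in bids are broken against the agent. By Corollary~\ref{cor:bidding}, the agent has a strategy guaranteeing value at least $v_i(\items)/2$ (and since values are integers, actually at least $2$). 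Hence it suffices to prove that every monotone strategy in this instance yields value at most $1 < v_i(\items)/2$.

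The core of the argument is a case split on the agent's first bid $\alpha_1$. If $\alpha_1 > \tfrac{1}{2}-\varepsilon$, the adversary cannot afford to match, so the agent wins round~$1$ and pays $\alpha_1$, leaving her with residual budget strictly below $2\varepsilon$. Monotonicity, which forces $\alpha_2,\alpha_3 \le \alpha_1$, together with this budget constraint then caps all later bids by $2\varepsilon$, while the adversary still holds $\tfrac{1}{2}-\varepsilon$ and comfortably wins both remaining rounds. The agent therefore ends with value exactly $1$.

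If instead $\alpha_1 \le \tfrac{1}{2}-\varepsilon$, the adversary is free to win or decline round~$1$, and I would argue she can always force agent value $\le 1$ by \emph{declining} it. In this branch the agent wins round~$1$ for value~$1$, pays $\alpha_1$, and has $\tfrac{1}{2}+\varepsilon-\alpha_1$ remaining, while the adversary retains her full $\tfrac{1}{2}-\varepsilon$. For the agent to win any further round, the adversary must lack the budget to win both rounds~$2$ and~$3$, which requires $\alpha_2+\alpha_3 > \tfrac{1}{2}-\varepsilon$. Monotonicity gives $\alpha_2+\alpha_3 \le 2\alpha_1$, hence $\alpha_1 > \tfrac{1}{4}-\tfrac{\varepsilon}{2}$; on the other hand the agent's post-round-$1$ budget forces $\alpha_2+\alpha_3 \le \tfrac{1}{2}+\varepsilon-\alpha_1 < \tfrac{1}{4}+\tfrac{3\varepsilon}{2}$. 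Chaining the two bounds yields $\tfrac{1}{2}-\varepsilon < \tfrac{1}{4}+\tfrac{3\varepsilon}{2}$, i.e.\ $\varepsilon > \tfrac{1}{10}$, contradicting the choice of $\varepsilon$. So in the ``decline'' branch the agent gets value at most $1$; since the adversary can simply pick this branch, the entire case yields value $\le 1$.

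The only real (mild) obstacle is to identify the precise asymmetry that monotonicity creates: the agent would like to bid low in round~$1$ to conserve budget and then bid higher in later rounds to exploit an adversary who spent heavily on round~$1$, but monotonicity forbids exactly this pattern of ``ramping up''. The case analysis above is the formalization of this phenomenon, and combining the two cases with Corollary~\ref{cor:bidding} shows that in this instance every monotone strategy gives value at most $1$, while the worst-case-optimal strategy gives at least $v_i(\items)/2 = 3/2$, so the worst-case-optimal strategy is necessarily non-monotone, proving the proposition.
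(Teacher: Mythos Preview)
Your example does not work: in the instance with three identical items and $b_i=\tfrac12+\varepsilon$, the agent \emph{does} have a monotone strategy that secures two items. Concretely, the constant strategy $\alpha_1=\alpha_2=\alpha_3=\tfrac14+\tfrac{\varepsilon}{2}$ is monotone and guarantees value $2>\tfrac32$. If the adversary declines round~1, the agent pays $\tfrac14+\tfrac{\varepsilon}{2}$ and still has exactly $\tfrac14+\tfrac{\varepsilon}{2}$ left; after the adversary wins round~2 (paying $\tfrac14+\tfrac{\varepsilon}{2}$), the adversary is left with $\tfrac14-\tfrac{3\varepsilon}{2}<\tfrac14+\tfrac{\varepsilon}{2}$ and cannot match in round~3. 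The flaw in your argument is the inequality ``the agent's post-round-1 budget forces $\alpha_2+\alpha_3\le \tfrac12+\varepsilon-\alpha_1$'': this would hold only if the agent \emph{paid} both $\alpha_2$ and $\alpha_3$, but in the branch you analyze the adversary wins round~2, so the agent does not pay $\alpha_2$. The correct constraints are $\alpha_2\le \tfrac12+\varepsilon-\alpha_1$ and (after losing round~2) $\alpha_3\le \tfrac12+\varepsilon-\alpha_1$ separately, and these are compatible with $\alpha_2+\alpha_3>\tfrac12-\varepsilon$ and with monotonicity.

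Three equal items is in fact too simple for this proposition; with equal-value items, bidding a constant fraction of the budget is always monotone and already optimal. The paper's construction is more delicate: it first builds (in Lemma~\ref{lem:nonMonotone}) an instance where an agent with one third of the budget strictly benefits from raising her bid in later rounds, using three items of value $\tfrac14$ followed by many tiny items. It then embeds this as a subgame for an agent with $b_i>\tfrac12$ by prepending a single item $e_0$ of value $0.39$: whatever the agent does in round~1, she is left holding roughly a third of the remaining budget and facing exactly the Lemma~\ref{lem:nonMonotone} configuration, where monotonicity caps her continuation value below what is needed to reach $v_i(\items)/2$. The essential missing idea in your attempt is that the instance must force a genuine tradeoff between early and late bids; identical items do not create such a tradeoff.
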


{We break the proof of Proposition~\ref{prop:nonMonotone} into two parts. The first part is presented in Lemma~\ref{lem:nonMonotone}.}
   
\begin{lemma}
\label{lem:nonMonotone}
There are allocation instances with additive valuations {over goods} in which the optimal strategy in the bidding game is not monotone.
\end{lemma}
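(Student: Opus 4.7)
The plan is to exhibit a concrete small allocation instance in which every max-min-optimal strategy for the bidding game must bid strictly more in some later round than in round~$1$, so that no monotone bid sequence attains the optimal (max-min) value $V$. I will work with a small $m$ (say $m=3$ or $m=4$ items with distinct additive values) and an agent entitlement $b_1$ in the range $(\tfrac{1}{3},\tfrac{1}{2}]$, where the paper's own overview signals the subtlest behaviour occurs.

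First, using Theorem~\ref{thm:budgetVector} I will compute the threshold vector $T_v$ for the chosen instance and identify $V=v(B_{j^\ast})$, where $j^\ast$ is the largest rank satisfying $T_v(j^\ast)<b_1$. The backward induction behind Theorem~\ref{thm:budgetVector} pins down a (necessarily narrow) admissible interval $(L,U)$ for the agent's round-$1$ bid $\alpha_1$ consistent with achieving $V$, and analogous admissible intervals for subsequent rounds conditional on each round's outcome. Any optimal strategy must, at each node of the game tree, choose a bid in the corresponding admissible interval.

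Second, I will argue that any monotone sequence $\alpha_1\ge\alpha_2\ge\cdots\ge\alpha_m$ can be defeated by the adversary. For any admissible $\alpha_1\in(L,U)$, the adversary picks the worse of the two round-$1$ responses (win the round versus let the agent win). The instance is designed so that in one of the two resulting subgames, the adversary's remaining budget is large enough that, to secure her round-$2$ target item in absolute terms, the agent must bid strictly more than $U$, hence more than $\alpha_1$. Since this conclusion holds uniformly for every admissible $\alpha_1$, no monotone schedule attains $V$. Because the strategy prescribed by the backward induction of Theorem~\ref{thm:budgetVector} does attain $V$, every optimal strategy must be non-monotone.

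The main obstacle is making the squeeze work uniformly in $\alpha_1$: for different $\alpha_1$ the adversary's best response differs, and the subgame budget split changes accordingly. I will choose values so that $v(e_1)$ is just moderate (not worth outbidding outright), while $v(e_2), v(e_3)$ are close in value and large enough that joint capture in the subgame after $e_1$ is taken by the adversary demands a first-bid strictly exceeding any admissible $\alpha_1$. Specifically, working in the regime $b_1$ just above $\tfrac{1}{3}$, the admissible $\alpha_1$-interval from Theorem~\ref{thm:budgetVector} becomes very narrow around $(1-b_1)/2$, while the subgame requirement after an adversary round-$1$ win scales with the adversary's residual budget $1-b_1-\alpha_1$, which can be arranged to strictly exceed $\alpha_1$ for the whole interval. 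Verifying that this yields $\alpha_2>\alpha_1$ is then a short explicit calculation using the values $T_v(j)$ already computed.
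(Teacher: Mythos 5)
Your high-level logic is the same as the paper's: show that every monotone strategy falls short of some value $V$ that an (unconstrained) optimal strategy achieves, and conclude that the optimal strategy is non-monotone. However, as written the proposal is a plan rather than a proof, and the plan has two concrete problems.

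First, the entire quantitative content is deferred. You never exhibit the instance, never compute the threshold vector $T_v$ or the admissible bid intervals, and never verify the ``squeeze'' that you yourself identify as the main obstacle. The assertion that the required round-$2$ bid $1-b_1-\alpha_1$ exceeds $\alpha_1$ for every admissible $\alpha_1$ is exactly the step that needs a computation, and your proposed parameters make it doubtful: with $m=3$ or $m=4$ and $b_1$ just above $\tfrac13$, if the agent's target after losing round~1 is to win both remaining valuable items, feasibility forces $\alpha_1 > 1-\tfrac{3b_1}{2}$ while your squeeze needs $\alpha_1 < \tfrac{1-b_1}{2}$; these are compatible only when $b_1>\tfrac12$, outside your stated regime. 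This is precisely why the paper's instance ($b_1=\tfrac13$, three items of value $\tfrac14$ and a long tail of $m-3$ tiny items summing to $\tfrac14$) uses a large $m$: the tail lets the endgame value be (approximately) proportional to the remaining budget share, which is what makes both halves of the argument go through at entitlement $\tfrac13$.

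Second, even granting the squeeze on the ``agent's target item,'' you must rule out that a monotone agent reaches value $V$ by pursuing a \emph{different} bundle with lower bids; bidding outside the admissible interval for rank $j^\ast$ only means the adversary \emph{can} prevent that particular bundle, not that the agent's value drops below $V$ along every alternative. The paper closes this by proving a uniform upper bound (value at most $\tfrac{3}{10}$) on \emph{all} monotone strategies via a two-case analysis of the first bid, and then exhibiting an explicit non-monotone strategy achieving roughly $0.31$. I recommend either adopting that explicit-instance route, or, if you want to keep the threshold-vector route, actually computing $T_v$ for a concrete instance and carrying out the interval analysis at every relevant node of the game tree.
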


\begin{proof}
    Consider an instance with two agents, with $b_1 = \frac{1}{3}$ and $b_2 = \frac{2}{3}$. Both agents have the same valuation function $v$ with $v(e_1) = v(e_2) = v(e_3) = \frac{1}{4}$. The remaining $m-3$ items each has value $\frac{1}{4(m-3)}$, and $m$ is assumed to be very large. To simplify the presentation (but without effecting the correctness), we omit negligible terms such as the value $\frac{1}{4(m-3)}$ of a single item, or small perturbations in bid value that may be required in order to ensure that ties in bids are broken in the direction of our choice. 

    Consider first a monotone bidding strategy for agent~1. If  her first bid is smaller than $\frac{2}{9}$, monotonicity forces also her next two bids to be smaller than $\frac{2}{9}$. This allows agent~2 to {win} 
    the first three items, and agent~1 cannot get a value larger than $\frac{1}{4}$. Alternatively, if the first bid of agent~1 is at least $\frac{2}{9}$, agent~2 lets agent~1 win this bid. Thereafter, agent~1 can bid at most $\frac{1}{9}$ on each of the next two items, and agent~2 can win both of them. After three items are consumed, the budget left for agent~2 is $\frac{4}{9}$ and the budget left of agent~1 is $\frac{1}{9}$. By bidding $\frac{1}{m-3} \cdot \frac{5}{9}$ on each of the remaining items, each agent guarantees to herself a fraction of the items equal to her fraction of the budget (and hence no agent can improve her guarantee). The total value obtained by agent~1 is $\frac{1}{4} + \frac{1}{5} \cdot \frac{1}{4} = \frac{3}{10}$. 

    We now show a non-monotone bidding strategy that guarantees for agent~1 a value larger than $\frac{3}{10}$. For this, it suffices to show that agent~1 has a strategy that ensures that she wins one of the first three items, and moreover, that if she wins exactly one of these items the budget that she has left is {some constant fraction of the total budget left that is strictly larger than} one fifth {(enabling her to get a value larger than $ \frac{1}{4}+ \frac{1}{5}\cdot \frac{1}{4}= \frac{3}{10}$, {as ``bidding the value'' strategy will ensure getting one fifth of the remaining value, up to one remaining item, which has a tiny value)}}. 
    Agent~1 bids $\frac{2}{9} - \frac{1}{45}$ in round~1. If agent~1 wins the bid, she can bid $\frac{1}{9} + \frac{1}{45}$ in each of the next two rounds. Agent~2 is forced to win these two items (as otherwise agent~1 reaches a value of $\frac{1}{2}$), and the budget left for agent~1 {(which is $\frac{1}{9} + \frac{1}{45}{=\frac{6}{45}}$)} is a fraction of $\frac{1}{4}$ of the total remaining budget {of $1-\left(\frac{2}{9} - \frac{1}{45}\right) -2\cdot \left(\frac{1}{9} + \frac{1}{45}\right) = \frac{24}{45}$}. Alternatively, if agent~2 wins the first item, then in round~2 agent~1 can bid $\frac{2}{9}$ (here we used non-monotonicity). If agent~1 wins the bid, she can still bid $\frac{1}{9}$ in round~3, reaching a situation in which the budget left for agent~1 {(which is $\frac{1}{9}= \frac{5}{45}$)} is a fraction of $\frac{5}{21}$ of the total remaining budget {of $1-\left(\frac{2}{9} - \frac{1}{45}\right) - \frac{2}{9} - \frac{1}{9} = \frac{21}{45}$}. If agent~2 wins the second round {(leaving her with a budget of $\frac{2}{3}- \left(\frac{2}{9} - \frac{1}{45}\right) - \frac{2}{9} = \frac{2}{9} + \frac{1}{45}$)}, then agent~1 wins the third round with a bid 
    {slightly above} {$\frac{2}{9} + \frac{1}{45}$} (agent~2 does not have enough budget left to bid higher). After three rounds, the budget left for agent~1 {(which is $\frac{1}{9}- \frac{1}{45}= \frac{4}{45}$)} is a fraction of $\frac{4}{15}$ of the total remaining budget {of $1-\left(\frac{2}{9} - \frac{1}{45}\right) - \frac{2}{9} - \left(\frac{2}{9} + \frac{1}{45}\right) = \frac{15}{45}$}. In the worst of these three cases, agent~1 has a fraction of $\frac{5}{21}$ of the remaining budget, so she can guarantee for herself a value of $\frac{1}{4} + \frac{5}{21} \cdot \frac{1}{4} \simeq 0.31{>\frac{3}{10}}$.
\end{proof}


{Armed with Lemma~\ref{lem:nonMonotone}, we now prove Proposition~\ref{prop:nonMonotone}.}


\begin{proof}
    Consider the instance presented in the proof of Lemma~\ref{lem:nonMonotone}. Recall that in that instance, $m$ is assumed to be very large (say, $m > 1000$) so that the value of every item $e_j$ with $4 \le j \le m$ is much smaller than $0.01$. Change the entitlements to be $b_1 = \frac{1}{2} + \varepsilon$ and $b_1 = \frac{1}{2} - \varepsilon$, where $\varepsilon < \frac{1}{m}$. Also, add to this instance a single item $e_0$ of value $0.39$. Hence, now $v(\items) = 1.39$, and if agent~1 gets a value smaller than $0.695$ she fails to get a value of $\frac{v(\items)}{2}$. We show that to get value at least  $0.695$, agent~1 must use a non-monotone strategy.

    Agent~2 bids $\frac{1}{4} + 3\varepsilon$ in round~1. 
    
    If agent~1 bids higher and wins the round, she gets item $e_0$ of value $0.39$. But now, agent~1 holds only a $\frac{1}{3}$ fraction of the remaining total budget, and by the proof of Lemma~\ref{lem:nonMonotone}, she does not have a monotone strategy that will give her additional value above $0.3$. Hence, {with any monotone strategy} agent~1 {reaches a value of at most} 
    ${0.39+0.3=0.69}<\frac{v(\items)}{2}$.

    If agent~2 wins round~1, then she selects $e_0$, and still has a $\frac{1}{3} - O(\varepsilon)$ fraction of the remaining total budget. By the non-monotone strategy of Lemma~\ref{lem:nonMonotone}, agent~2 can ensure to herself an additional value of $0.31 - O(\frac{1}{m})$, leaving agent~1 with a value of $0.69 + O(\frac{1}{m}) < 0.695$, where the last inequality holds when $m$ is sufficiently large.
\end{proof}

\subsubsection{Approximately Optimal Strategies}\label{sec:approx-optimal}

{We have seen that computing optimal strategies is NP-hard (Corollary~\ref{cor:NPhard}). We now show that near optimal strategies can be computed in polynomial time. Here, near optimality is in an additive sense, suffering a loss of value of at most $\varepsilon \cdot v_i(\items)$ compared to the optimal strategy. Such additive approximation suffices for our current paper (specifically, in order to prove Corollary~\ref{cor:PTAS}). We leave the question of the existence of a $(1 - \varepsilon)$ multiplicative approximation to future work.}



\begin{lemma}
    \label{lem:PTAS}
    There is a function $f$ (that will satisfy $f(x) = 2^{O(x)}$) such that in the biding game, for every $\varepsilon > 0$, there is a strategy running in time $f(\frac{1}{\varepsilon})$ times a polynomial in the input size that guarantees to an agent $i$ with entitlement $b_i$ and an additive valuation $v_i$ {over goods,} a value of at least $OPT(b_i,v_i) - \varepsilon {\cdot v_i(\items)}$. Here, $OPT(b_i,v_i)$ is the value that the optimal strategy guarantees against an adversary who holds all the remaining budget, when ties in bids are broken in favor of the adversary.
\end{lemma}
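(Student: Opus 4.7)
The plan is to follow the two-step reduction sketched in the overview: (a) exhibit a subclass $\mathcal{C}_\varepsilon$ of additive valuations on which a worst-case-optimal bidding strategy can be computed in time $2^{O(1/\varepsilon)}\cdot\mathrm{poly}(m)$, and (b) show that for every additive $v_i$ one can choose a surrogate $v'_i\in\mathcal{C}_\varepsilon$ such that the worst-case-optimal strategy for $v'_i$, when executed in the bidding game with the true valuation $v_i$, achieves $v_i$-value at least $OPT(b_i,v_i)-\varepsilon\cdot v_i(\items)$.

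For (a), I would take $\mathcal{C}_\varepsilon$ to be the class of additive valuations in which at most $k=O(1/\varepsilon)$ items have pairwise distinct values and all remaining $m-k$ items share a common value. For $v'\in\mathcal{C}_\varepsilon$ the state of the bidding game compresses to a pair $(S,\ell)$, where $S\subseteq[k]$ is the subset of the distinct-valued items still in play and $\ell\in\{0,\ldots,m-k\}$ is the number of common-valued items still available, giving at most $2^k(m-k+1)$ compressed states. The recursion in the proof of Theorem~\ref{thm:budgetVector}, which expresses the threshold vector after removing the top-value item in terms of the threshold vector of the smaller instance, transfers verbatim to the compressed state space: removing a distinct-valued item $e$ takes $(S,\ell)$ to $(S\setminus\{e\},\ell)$, while removing a common-valued item takes it to $(S,\ell-1)$. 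Each compressed instance has a threshold vector whose length equals the number of distinct bundle values, at most $2^k(\ell+1)$, and backward induction computes all threshold vectors and optimal bids in total time $2^{O(1/\varepsilon)}\cdot\mathrm{poly}(m)$.

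For (b), I would construct $v'_i$ by retaining the $k$ highest-value items of $v_i$ and rounding every remaining item to the common value $v^\star=(v_i(\items)-\sum_{j\le k}v_i(e_j))/(m-k)$; then $v'_i\in\mathcal{C}_\varepsilon$ and $v'_i(\items)=v_i(\items)$. The key lemma is Lipschitz-type: if $\sigma'$ is worst-case-optimal for $v'_i$, then its outcome in the game with true valuation $v_i$ has $v_i$-value at least $OPT(b_i,v_i)-\varepsilon\cdot v_i(\items)$. The easy half is standard: if $|v_i(S)-v'_i(S)|\le\Delta$ for every bundle $S$, then playing an optimal strategy for $v_i$ in the $v'_i$ game yields $OPT(b_i,v'_i)\ge OPT(b_i,v_i)-\Delta$; conversely $\sigma'$'s outcome in the $v_i$ game has $v_i$-value within $\Delta$ of its $v'_i$-value, which is at least $OPT(b_i,v'_i)\ge OPT(b_i,v_i)-2\Delta$.

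The main obstacle is controlling the discrepancy $\Delta=\sup_S|v_i(S)-v'_i(S)|$ with $k=O(1/\varepsilon)$ independent of $m$: the crude bound $\Delta\le(m-k)\cdot v_i(e_{k+1})$ can be as large as $\Theta(m\cdot v_i(\items)/k)$ and is far too weak. The intended way around this is not to bound $\Delta$ uniformly over all bundles, but rather to exploit the specific structure of the bidding game on common-valued items. When $\sigma'$ is executed, the common-valued items are resolved in a ``tail'' phase in which both players effectively bid proportionally to remaining budget, so the agent secures a fraction approximately $\hat b_i$ of these items; by breaking ties among interchangeable common-valued items in favor of higher $v_i$-value, the agent closes the gap between her $v'_i$-guarantee and her $v_i$-value. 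Formalizing this---verifying that $\sigma'$ leaves the agent with the required remaining budget when the tail phase begins, and quantifying the $v_i$-value she collects during that phase---is where the bulk of the technical work lies, and it is what allows the additive loss to be driven below $\varepsilon\cdot v_i(\items)$ with $k=O(1/\varepsilon)$ distinct-valued items.
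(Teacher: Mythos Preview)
Your high-level plan (build a surrogate $v'_i$ on which the optimal strategy is tractable, then transfer the guarantee to $v_i$) matches the paper's. The difficulty is in part~(b), and your proposed resolution of it does not work as stated.

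The issue is the following. Your $\sigma'$ guarantees a lower bound on the \emph{$v'_i$-value} of the bundle the agent receives, not on its $v_i$-value. On the prefix (the $k$ top items) these coincide, but on the tail they do not, and the adversary can exploit this. Concretely, on the tail $\sigma'$ sees all items as identical, so its bids in tail round $r$ depend only on remaining budgets and the number of remaining items, not on which specific tail items are left. The adversary, on the other hand, sees $v_i$ and can choose to win exactly the early (high-$v_i$) tail rounds and concede the late (low-$v_i$) ones, all while staying within its budget. The agent then collects the right \emph{number} of tail items (so her $v'_i$-value is fine) but their total $v_i$-value can be far below her $v'_i$-value. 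Your suggestion of ``breaking ties among interchangeable common-valued items in favour of higher $v_i$-value'' does not help: when the agent wins a round she of course picks the highest-$v_i$ remaining item, but when the adversary wins it removes that very item; since $\sigma'$ does not bid more aggressively on high-$v_i$ tail rounds, the adversary decides which rounds the agent wins. To fix this you would have to make the tail bids depend on $v_i$, at which point you are no longer executing $\sigma'$.

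The paper sidesteps this gap with a simple observation: once all items have value at most $\delta$, the ``bid your $v_i$-value'' strategy already guarantees the proportional share of the remaining $v_i$-value up to an additive $\delta$, and the adversary cannot hold the agent to more than proportional plus $\delta$. So the paper uses the surrogate's optimal strategy only for the prefix of big items (those with $v_i$-value above $\delta$, of which there are fewer than $1/\delta$), and then \emph{switches} to ``bid your value'' for the tail. Because $|OPT(b_i,v_i)-OPT(b_i,v'_i)|\le 2\delta$ (both tails are worth proportional $\pm\delta$ after any prefix play) and the switch costs at most another $2\delta$, taking $\delta=\varepsilon/4$ gives the claimed bound. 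This also makes part~(a) trivial: since the surrogate strategy is needed only for the $q<1/\delta$ prefix rounds, the paper simply builds $v'_i$ on $t\le\lceil 1/\delta\rceil$ items total and invokes Theorem~\ref{thm:budgetVector} directly, with no need for a compressed state space over $m$ items.
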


\begin{proof}
    We assume without loss of generality that items are sorted in decreasing order of their $v_i$ value, and that $v_i(\items) = 1$.

     {Suppose first that $v_i(e_1) \le \delta$ for some $\delta > 0$ (we shall later fix $\delta = \frac{\varepsilon}{4}$). In this case, agent $i$ can guarantee to herself a value equal to at least her proportional share, up to the value of one item, by following the ``bid your value" strategy. Thus ``bid your value" guarantees to the agent a value of at least $b_i - \delta$. Likewise, the adversary can follow the ``bid your value" strategy, and by this  limit the agent to a value of at most $b_i + \delta$. Hence  $OPT(b_i,v_i) \le b_i + \delta$. It follows that by using the polynomial time ``bid your value" strategy, the agent gets value at least  $OPT(b_i,v_i) - 2\delta$.}




Suppose now that $v_i(e_1) > \delta$. Let {$q$} 
be the largest index for which $v_i(e_q) > \delta$, and note that $q < \frac{1}{\delta}$. Consider an instance with a set $\items'$ of $t \le \lceil \frac{1}{\delta} \rceil$ items, and valuation $v'_i$ defined as follows. For $j \le q$ we define $v'_i(e_j) = v_i(e_j)$, whereas for $q < \ell \le t$ we define $v'_i(e_{\ell}) = \frac{1 - \sum_{j=1}^q v'_i(e_i)}{t-q}$. Observe that $v'_i(\items') = 1$, and that $v'_i(e_{\ell}) \le \delta$ for every $\ell > q$.

Observe that $v'_i$ is identical to $v_i$ on the prefix $\{e_1, \ldots, e_q\}$, differs from $v_i$ only in values of items in the suffix and they have value at most $\delta$, and the sum of values in both valuations is~1. It follows that in the bidding game,  $|OPT(b_i,v_i) - OPT(b_i,v'_i)| < 2\delta$, because once the prefix is consumed and a certain budget is left, for both suffixes the optimal value that can be achieved is the proportional share with an error of at most $\pm \delta$.

Consider the following strategy for agent $i$ with valuation $v_i$. For the first $q$ rounds, follow the optimal strategy for $v'_i$. Then, scale the budgets so that they sum up to~1, and $v_i$ so that the value of the remaining items sum up to~1, and follow the ``bid your value" strategy in the remaining rounds. The loss compared to the optimal strategy for $v_i$ is at most an additive term of $4\delta$ ($2\delta$ due to using $v'_i$ instead of $v_i$, and $2\delta$ for using ``bid your value" instead of the optimal strategy on the prefix).

The ``bid your value" strategy can be computed in polynomial time. The optimal strategy with respect to $v'_i$ can be computed in time polynomial in $2^{O(t)}$ by Theorem~\ref{thm:budgetVector}. (As there are $t$ items, the dimension of the vector $T_v$ is $2^t$, and specifying budgets requires at most $2^t$ bits of precision.) Choosing $\delta = \frac{\varepsilon}{4}$, the lemma follows. 
    \end{proof}



\subsection{Positive Results for ex-post Domination}\label{sec:goods-pos-ex-post}


{We show that for additive valuations {over goods}, there is an allocation that gives each agent at least half of her $\widehat{MMS}$. 
We later show (Proposition~\ref{pro:badExampleGoods}) that the share $(\frac{1}{2} + \varepsilon)$-$\widehat{MMS}$ is not feasible, thus
this is the best possible guarantee. Nevertheless, we offer a guarantee that is stronger for some valuation functions. Namely, we provide a guarantee of $\frac{1}{2}$-$\widehat{TPS}$ rather than just $\frac{1}{2}$-$\widehat{MMS}$. 
This guarantee is stronger because for additive valuations, it is always true that the TPS is at least as large as MMS. Moreover, for entitlements of the form $\frac{1}{k}$, the ratio between TPS and MMS may be as large as $2 - \frac{1}{k}$ {\cite{BEF2021c}}. 
Hence, the $\frac{1}{2}$-$\widehat{TPS}$ guarantee implies a $\rho_i$-$\widehat{MMS}$ guarantee to agent $i$, for $\frac{1}{2} \le \rho_i \le 1 - \frac{1}{2}\cdot \hat{b}_i$, 
where the value of $\rho_i$ depends only on the valuation $v_i$ of the agent.}

{In Section~\ref{sec:OptSafeBiddingGame} we have studied worst-case-optimal strategies for the bidding game. While we now have better understanding of the structure of these strategies, the analysis so far did  not provide a ratio $\rho$ such that these strategies imply a $\rho$-domination result for all feasible shares. To obtain such a result we design a $\frac{1}{2}$-$\widehat{TPS}$-safe strategy for the bidding game (which of course implies that the worst-case-optimal strategies of Section~\ref{sec:OptSafeBiddingGame} are $\frac{1}{2}$-$\widehat{TPS}$-safe as well), proving the existence of a $\frac{1}{2}$-$\widehat{TPS}$ allocations. Moreover, unlike the worst-case-optimal strategies, our $\frac{1}{2}$-$\widehat{TPS}$-safe strategy has polynomial complexity, except for one aspect, which is its use of the worst-case-optimal strategies when $b_i > \frac{1}{2}$.
Before describing our strategy, let us explain what difficulties our strategy needs to overcome.}

Consider agent $i$ with entitlement $b_i$ satisfying $\frac{1}{k+1} < b_i \le \frac{1}{k}$ {for some integer $k\geq 1$. Note that $\hat{b}_i=\frac{1}{k}$ and that $\widehat{TPS}(v_i,b_i)= TPS(v_i,\frac{1}{k})$}. Scale the valuation function $v_i$ so that $TPS(v_i,\frac{1}{k}) = \frac{1}{k}$, implying in particular that $v_i(\items) \ge 1$. We need to show that agent $i$ has a strategy that gives her value at least $\frac{1}{2k}$. This is a fraction of at most $\frac{1}{2k}$ of the total value, whereas the agent has a fraction of more than $\frac{1}{k+1}$ of the total budget. Hence, if the agent could get one unit of value for each unit of budget, she will meet her goal, even with some slackness (get a value more than $\frac{1}{k+1}$ instead of just $\frac{1}{2k}$). This suggests the ``bid your value" bidding strategy. At each round, agent $i$ bids the value of the highest value remaining item. If the agent wins, the value that she receives is equal to the budget that she spends. If the adversary wins the bid, the budget of the adversary is consumed at least at the same extent as value {(according to $v_i$)} is consumed. 
The adversary cannot win more value than its total budget, and hence the agent will have value at least $\frac{1}{k+1}$ left for her to win.

The above argument fails, because the strategy of ``bid your value" is not feasible. Below we explain the sources of this infeasibility. 
{The reason why 
the new bidding strategy that we later present is able to} 
cope with them is the slackness that we have, as the agent has a budget higher than $\frac{1}{k+1}$, but only needs to reach a value of $\frac{1}{2k}$. 
{As we shall see from the explanations below, our bidding strategy will treat  all values of $k \ge 3$ in a uniform way, 
but will need to offer a special treatment to the case $k=1$, and a different special treatment to the case that  
$k=2$.} 


\begin{itemize}
    \item There may be ``huge" items of value larger than the initial budget of the agent, and then the agent cannot bid their values. This is handled by the agent bidding her full budget as long as there are huge items. It is true that the adversary may win these items while paying for them less than their value, but this turns out to be acceptable because we consider the truncated proportional share (but would not be acceptable if we were to consider the proportional share). Consequently, we will be able to assume that there are no huge items. This, together with $TPS(v_i,\frac{1}{k}) = \frac{1}{k}$, implies that $v_i(\items) = 1$ (rather than $v_i(\items) \ge 1$).

    \item There may be several ``large" items of value slightly below $\frac{1}{2k}$. Winning one such item does not suffice for the agent, but winning two of them does. If agent $i$ wins a large item, the budget she has left is slightly above $\frac{1}{k+1} - \frac{1}{2k} = \frac{k-1}{k+1} \cdot \frac{1}{2k}$, which is insufficient in order to bid (the value) on the remaining large items. Hence the adversary can win the remaining large items at a discounted price, and might consume all value before exhausting its budget. We handle this problem by departing from the ``bid your value" strategy. Already on the first of the large items, the agent does not bid its value, but rather places a somewhat lower bid of $\frac{1}{2(k+1)}$. It may not be clear why this is helpful, as the adversary can still win all large items at a discount. 
    {Indeed, if $k=1$ (meaning that $b_i > \frac{1}{2}$) this approach fails. (For example, if $v_i(e_1) = \frac{2}{5}$, $v_i(e_2) = v_i(e_3) = v_i(e_4) = \frac{1}{5}$ and $\frac{1}{2} < b_i < \frac{9}{16}$ then bidding only $\frac{1}{4}$ on $e_1$ will result in agent $i$ getting a value of only $\frac{2}{5} < \frac{1}{2}$. {The adversary will win $e_1$, and then will bid $\frac{3}{16}$ in each of the remaining rounds, winning an additional item.})} 
    However, we show that when $k \ge 2$, this approach can be made to work. 
   {For the case of $k=1$ this approach fails, but in this case $b_i > \frac{1}{2}$, for which we already proved (see Corollary~\ref{cor:bidding}) that the worst-case-optimal strategy achieves at least $\frac{v_i(\items)}{2} \ge \frac{1}{2}$-$\widehat{TPS}$. Hence, here and in the following two bullets we assume that $k \ge 2$.}
   
    

    \item There may be several ``medium" items of value slightly below $\frac{1}{4k}$. Winning two such items does not suffice for the agent, but winning three of them does. If agent $i$ wins two of these items, the budget she has left is slightly above $\frac{1}{k+1} - 2 \cdot\frac{1}{4k} = \frac{2(k-1)}{k+1} \cdot \frac{1}{4k}$. When $k\ge 3$, {the budget left is at least $\frac{1}{4k}$} and suffices in order to keep on bidding on medium items. However, for $k = 2$, the budget left does not suffice, and hence the case $k=2$ needs special care, beyond that of $k \ge 3$.

    \item There may be several items of value slightly below $\frac{1}{6k}$. Winning three such items does not suffice for the agent, but winning four of them does. If agent $i$ wins three of these items, the budget she has left is slightly above $\frac{1}{k+1} - 3 \cdot \frac{1}{6k} = \frac{3(k-1)}{k+1} \cdot \frac{1}{6k}$. 
    When $k\ge 2$, {the budget left is at least $\frac{1}{6k}$} and suffices in order to keep on bidding on these items, so this does not cause a problem. 
\end{itemize}

Summarizing, there are three types of obstacles for implementing the ``bid your value" strategy. The first obstacle is that of huge items. This obstacle can be overcome because we are considering the TPS and not the proportional share. The second obstacle is that of large items. For this we have a strategy that works when $k \ge 2$, but not when $k=1$. Consequently, the case of $k=1$ is handled via {the worst-case-optimal strategy.} 
The third obstacle is that of medium items. This is problematic only if $k=2$, causing the strategy for $k=2$ to be more elaborate than the one for $k \ge 3$. 
}

{We are finally ready to present our main result, which implies Theorem \ref{thm:main-intro} (our main positive result).}

\begin{theorem}
\label{thm:1/2} 
{There is a bidding strategy in the bidding game that for additive valuation $v_i$ {over goods} and entitlement $b_i$ gives agent $i$ a set of value at least $\frac{1}{2}$ of $TPS(v_i, \hat{b}_i)$.  } {Thus, the share $\frac{1}{2}$-$\widehat{TPS}$, which can be computed in polynomial time, is feasible and $\frac{1}{2}$-dominating.}
\end{theorem}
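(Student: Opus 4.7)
The plan is to prove the theorem by induction on the integer $k \ge 1$ for which $b_i \in (\tfrac{1}{k+1}, \tfrac{1}{k}]$, constructing a $\tfrac{1}{2}\text{-}\widehat{TPS}$-safe bidding strategy for agent $i$. Throughout, rescale $v_i$ so that $TPS(v_i,\tfrac{1}{k}) = \tfrac{1}{k}$; then the goal is to guarantee value at least $\tfrac{1}{2k}$, while the agent holds budget strictly more than $\tfrac{1}{k+1}$. The base case $k=1$ (i.e.\ $b_i > \tfrac{1}{2}$) is exactly Corollary~\ref{cor:bidding}, which hands us a strategy yielding at least $\tfrac{1}{2}v_i(\items) \ge \tfrac{1}{2}\widehat{TPS}(v_i,b_i)$.

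For the inductive step with $k \ge 2$, I would follow the three-tier structure sketched in the paragraphs preceding the theorem. First, deal with \emph{huge} items of value exceeding the agent's current budget: have the agent bid her full remaining budget on every round while such items exist. Since these items have individual value at least $\widehat{TPS}$, and by the defining maximality of $TPS$, once such items are exhausted one can renormalize and assume $v_i(\items) = 1$ with every remaining item of value at most $\tfrac{1}{k}$. Second, handle \emph{large} items (value slightly below $\tfrac{1}{2k}$) by a \emph{shaded} bid of $\tfrac{1}{2(k+1)}$ rather than the value itself, chosen so that if the adversary takes a large item she overpays relative to the agent's shortfall, while if the agent wins a large item her residual budget is still large enough to compete in the remaining phase (an inductive invocation with a smaller effective $k$). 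Third, for the remaining \emph{medium} and \emph{small} items, revert to a ``bid your value'' discipline, which ensures that every unit of value claimed by the adversary costs at least one unit of adversary budget.

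The verification is a budget-versus-value accounting. Let $L$ denote the total value of the ``large'' items and let $\ell$ be the number of them that the adversary wins. Writing down the adversary's remaining budget after the large phase as a function of $\ell$, and then applying the ``bid your value'' guarantee on the residual instance, one obtains that the agent's total value is at least the residual proportional share of her remaining budget, plus whatever large items she captured. A case split on $\ell$ (zero, one, or many large items lost to the adversary), using the slackness $\tfrac{1}{k+1} - \tfrac{\ell}{2k}$ in the agent's budget after $\ell$ large items are taken by the adversary, shows the bound $\tfrac{1}{2k}$ is always met when $k \ge 2$, provided the shaded bid is exactly $\tfrac{1}{2(k+1)}$.

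The principal obstacle, and the step I expect to consume the bulk of the argument, is the case $k=2$: here the budget slack $\tfrac{1}{k+1}-\tfrac{2}{4k} = \tfrac{1}{3}-\tfrac{1}{4} = \tfrac{1}{12}$ left after losing two medium items is smaller than $\tfrac{1}{4k}=\tfrac{1}{8}$, so after such losses the agent can no longer keep pace by bidding value on further mediums. To handle this, I would introduce a second shaded threshold for medium items (an analogue of $\tfrac{1}{2(k+1)}$ at the next scale), and argue by a more delicate case analysis on the number of large items and the number of medium items that the adversary wins, using the fact that the adversary's total budget is $1 - b_i < \tfrac{2}{3}$ to bound how many shaded items she can afford. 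For $k\ge 3$, the analogous computation goes through without any second shading because $\tfrac{1}{k+1}-\tfrac{2}{4k}\ge \tfrac{1}{4k}$, so a single uniform description of the strategy suffices. Finally, observing that all bids used are either the agent's current budget, a rational multiple of $\tfrac{1}{k+1}$, or an item value, the strategy (except for the $k=1$ base case's worst-case-optimal component) is polynomial-time computable, which together with feasibility of $\tfrac{1}{2}\text{-}\widehat{TPS}$ yields the claimed $\tfrac{1}{2}$-dominating share.
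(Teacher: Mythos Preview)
Your high-level plan matches the paper's (base case $k=1$ via the worst-case-optimal strategy of Corollary~\ref{cor:bidding}, shaded bid $\tfrac{1}{2(k+1)}$ on large items, bid-your-value as the default, and a special treatment of $k=2$), but the inductive structure you propose has a genuine gap. You claim induction on $k$ with recursive calls landing at a ``smaller effective $k$''. This is false: after the agent wins one large item paying $\tfrac{1}{2(k+1)}$, her share of the remaining budget can be as small as $\tfrac{1}{2k+1}$ (when $b_i$ is close to $\tfrac{1}{k+1}$), so the new parameter is $2k$, not smaller. More generally, in the paper's Claim~\ref{claim:largeItems} the reduced instance has parameter $k'=2k+1-r$, where $r\ge 3$ is the number of rounds consumed in that phase; for $r=3$ and any $k\ge 3$ this gives $k'=2k-2>k$. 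This is why the paper inducts on $m$ (which strictly decreases at every reduction), not on $k$. Your reading of the shaded bid is also reversed: at price $\tfrac{1}{2(k+1)}$ the adversary \emph{underpays} for a large item of value near $\tfrac{1}{2k}$. The actual mechanism is that once the agent has won one large item worth $\tfrac{1}{2k}-\delta$, she continues to bid $\tfrac{1}{2(k+1)}$, forcing the adversary to absorb every remaining item of value above $\delta$; only then does the argument recurse, and it invokes the full $\tfrac{1}{2}$-$\widehat{TPS}$ inductive hypothesis on the residual rather than bid-your-value (the latter would yield only $b'_i\cdot v_i(\items')$ minus one item, which can fall short by exactly the gap between $b'_i$ and $\hat b'_i$).

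Your sketch for $k=2$ also understates what is needed. The paper's strategy there is not a single additional shaded threshold but a piecewise-linear bid function with two plateaus (at $\tfrac{1}{9}$ and $\tfrac{1}{6}$), combined with a lookahead to $v_i(e_8)$: if $v_i(e_8)\ge\tfrac{1}{12}$ the agent abandons the table and instead bids $\tfrac{1}{9}$ on each of the first eight items to secure three of them. Remark~\ref{ref:lookahead} exhibits an explicit eight-item instance on which any strategy that ignores $v_i(e_8)$ when choosing the first bid fails to reach $\tfrac{1}{4}$, so this lookahead is not an optional refinement within this framework.
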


The proof of the theorem can be found in Section \ref{sec:proof-thm1/2}.
The positive result of Theorem \ref{thm:main-intro} follows from the combination of Theorem \ref{thm:1/2}, the fact that $TPS(v_i, \hat{b}_i)\geq MMS(v_i, \hat{b}_i)= \widehat{MMS}(v_i, {b}_i)$, and Proposition \ref{prop:rounding-MMS-goods}. The negative result of Theorem \ref{thm:main-intro} (proving tightness) follows from Proposition \ref{pro:badExampleGoods} (actually, it is a stronger negative result, showing impossibility even when only requiring $\rho$-domination of feasible shares that are nice).

\mbfuture{future work can  consider submodular valuations over goods.}    



\subsubsection{Poly-time approximate implementation}

Theorem \ref{thm:1/2} gives an strategy that is not poly-time computable. We next show that for any $\varepsilon>0$ it is possible to obtain a {$(1-\varepsilon)$} 
{multiplicative} approximation in polynomial time. 
Our $\frac{1}{2}$-$TPS(v_i,\hat{b}_i)$-safe strategy is poly-time computable, except for the case that the  entitlement of the agent is more than half the leftover entitlement, {in which the strategy switches to use the worst-case-optimal strategy}. To derive the approximation result we use Lemma~\ref{lem:PTAS}. Though Lemma~\ref{lem:PTAS} considers additive approximations and not multiplicative ones,
{for the case that $b_i > \frac{1}{2}$, the additive approximation provided implies a multiplicative approximation,  as the share is large (at least $\frac{v_i(\items)}{2}$).} 
\begin{corollary}
    \label{cor:PTAS}
    There is a function $f$ (that will satisfy $f(x) = 2^{O(x)}$) such that in the biding game, for every $\varepsilon > 0$, {there is a strategy running in time $f(\frac{1}{\varepsilon})$ times a polynomial in the input size} that guarantees to an agent $i$ with entitlement $b_i$ and an additive valuation $v_i$ {over goods}, a value of at least $(\frac{1}{2} - \varepsilon)TPS(v_i,\hat{b}_i)$. In particular, for every constant $\varepsilon > 0$, the bidding strategy runs in polynomial time. 
\end{corollary}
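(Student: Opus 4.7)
We invoke the $\tfrac12$-$\widehat{TPS}$-safe bidding strategy constructed in the proof of Theorem~\ref{thm:1/2} and replace its single non-polynomial-time ingredient by the approximation from Lemma~\ref{lem:PTAS}.

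Inspecting the proof of Theorem~\ref{thm:1/2}, all the rules for the cases $k\ge 2$ (the modified bid-your-value strategy, with the explicit adjustments for huge, large and medium items) are directly polynomial-time. The only non-polynomial-time ingredient is the base case $k=1$: whenever the agent's current remaining budget strictly exceeds half of the total remaining budget, the strategy defers to the worst-case-optimal strategy of Section~\ref{sec:OptSafeBiddingGame} in order to secure at least half of the remaining total value. By Corollary~\ref{cor:NPhard}, computing this worst-case-optimal strategy is NP-hard, and this is the only obstruction to a polynomial-time implementation.

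The substitution is as follows: whenever the procedure would invoke the worst-case-optimal strategy, we instead invoke the strategy of Lemma~\ref{lem:PTAS} with tolerance $\varepsilon$ (on the rescaled residual game in which the remaining budgets are renormalized to sum to~$1$ and the valuation is restricted to the remaining items $\widetilde{\items}$, yielding a restricted valuation $\widetilde v$). At such a moment the agent's rounded effective entitlement satisfies $\hat b=1$, so the subproblem's truncated proportional share is exactly $TPS(\widetilde v,1)=\widetilde v(\widetilde\items)$. Lemma~\ref{lem:PTAS} then returns, in time $f(1/\varepsilon)$ times a polynomial in the input size, a strategy whose guaranteed value is at least $\mathrm{OPT}-\varepsilon\cdot\widetilde v(\widetilde\items)\ge(\tfrac12-\varepsilon)\widetilde v(\widetilde\items)=(\tfrac12-\varepsilon)\,TPS(\widetilde v,1)$.

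Feeding this back into the analysis of Theorem~\ref{thm:1/2} (which simply adds the value already won before the base case to the value won inside it) replaces the $\tfrac12\,TPS(v_i,\hat b_i)$ lower bound by $(\tfrac12-\varepsilon)\,TPS(v_i,\hat b_i)$, and the total running time is dominated by the single PTAS call, i.e.\ $f(1/\varepsilon)\cdot\mathrm{poly}$. The one step that requires care is the additive-to-multiplicative conversion: because the PTAS is only invoked in a regime in which the subproblem's $\widehat{TPS}$ equals the entire remaining value $\widetilde v(\widetilde\items)$, the additive loss $\varepsilon\widetilde v(\widetilde\items)$ of Lemma~\ref{lem:PTAS} is already of the form $\varepsilon\,\widehat{TPS}$ of the subproblem, so it translates into an additive $\varepsilon\cdot TPS(v_i,\hat b_i)$ loss in the global analysis, which is precisely the multiplicative $(\tfrac12-\varepsilon)$-$\widehat{TPS}$ guarantee claimed.
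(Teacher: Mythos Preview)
Your approach is the same as the paper's: run the strategy of Theorem~\ref{thm:1/2} and substitute the Lemma~\ref{lem:PTAS} approximation whenever the recursion lands in the $k=1$ regime. However, your key step---converting the additive loss into the multiplicative $(\tfrac12-\varepsilon)$ guarantee---has a gap.

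You argue that because the subproblem has $\hat b=1$, its $\widehat{TPS}$ equals the remaining value $\widetilde v(\widetilde{\items})$, and therefore the additive loss $\varepsilon\,\widetilde v(\widetilde{\items})$ ``translates into an additive $\varepsilon\cdot TPS(v_i,\hat b_i)$ loss in the global analysis.'' This inference is unjustified: the remaining value $\widetilde v(\widetilde{\items})$ (measured in the original $v_i$ units) can exceed $TPS(v_i,\hat b_i)$, so the additive loss need not be bounded by $\varepsilon\cdot TPS(v_i,\hat b_i)$. The paper fills exactly this gap by noting that one may assume without loss of generality that the remaining value $V'$ satisfies $V'\le TPS(v_i,\hat b_i)$: if instead $V' > TPS(v_i,\hat b_i)$, then already the PTAS output alone satisfies $(\tfrac12-\varepsilon)V' > (\tfrac12-\varepsilon)\,TPS(v_i,\hat b_i)$, so the target is met regardless of what was won earlier. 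With this observation in place, your argument becomes complete.

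A secondary point you omit (but which does not affect correctness of the approach): the paper also explains that the strategy should track only an \emph{upper bound} on the remaining budget of the other agents (decreasing it by agent~$i$'s own bid whenever she loses), rather than their exact budgets, so that the switch-to-PTAS condition is detected using only polynomial-size arithmetic.
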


\begin{proof}
    We assume without loss of generality that items are sorted in decreasing order of their $v_i$ value, and that $v_i(\items) = 1$.
    
    We show that agent $i$ has a polynomially computable strategy $\sigma_i$ that determines for $i$ the bid in each round, and guarantees that by the end of the bidding game, agent $i$ gets a bundle of value at least $(\frac{1}{2} - \varepsilon)TPS(v_i,\hat{b}_i)$. The strategy is based on a combination of the strategies presented in the proofs of Theorem~\ref{thm:1/2} and Lemma~\ref{lem:PTAS}. The reason why we get a loss of only $\varepsilon \cdot TPS(v_i,\hat{b}_i)$ compared to the bound in Theorem~\ref{thm:1/2}, and not a loss of $\varepsilon$ (as seems to be implied by Lemma~\ref{lem:PTAS}), is that the strategy of Lemma~\ref{lem:PTAS} is used only at a stage in which the agent $i$ holds more than half the total budget and needs to get half of the total remaining value. Hence one may assume that the remaining value at this point is at most $TPS(v_i,\hat{b}_i)$. Scaling that value to~1, the loss of $\varepsilon$ due to Lemma~\ref{lem:PTAS} translates to a loss of $\varepsilon \cdot TPS(v_i,\hat{b}_i)$ compared to the original valuation.

Strategy $\sigma_i$ will not attempt to keep track of the budget that each agent has left after each round.  Attempting to do so might be complicated and involve storing values at very high precision, if bids of agents who win items and pay their bids are given in very high precision. So instead, in each round, $\sigma$ only keeps track of the bids of agent $i$ at that round, and whether agent $i$ won the bid or not (one bit of information). This suffices in order to update the budget of agent $i$ from round to round. As to the budget of other agents, $\sigma_i$ only maintains an easily computable upper bound of the total budget remaining for all other agents combined. Initially, this upper bound is $1 - b_i$, and after each round in which agent $i$ does not win, the upper bound is decreased by the bid of agent $i$ in that round (even if the actual bid that won the round was strictly larger). The agent employs the strategy of the proof of Theorem~\ref{thm:1/2}, where in each round she pretends that the budget remaining for all other agents is this upper bound, rather than trying to keep track of their exact budgets. This still guarantees that the agent receives at least $\frac{1}{2}TPS(v_i,\hat{b}_i)$.
    
    If at any point, the budget left for agent $i$ becomes larger than the upper bound that the agent computes on the total budget remaining for other agents, then the agent switches to bid according to the strategy of Lemma~\ref{lem:PTAS}. At that point, half of the total value remaining {suffices for the agent in order to reach $\frac{1}{2}TPS(v_i,\hat{b}_i)$}, whereas that strategy gives her at least a fraction of $(\frac{1}{2} - \varepsilon)$ of that value. 

    It remains to show that each bid of agent $i$ can be computed in time polynomial in $n$, $m$ and $f(\frac{1}{\varepsilon})$, and is a rational number whose numerator and denominator are polynomial in the input parameters for the agent (the numerators and denominators of $b_i$, and of $v_i(e)$ for the various items $e \in \items$). Inspection of the strategy in the proof of Theorem~\ref{thm:1/2} shows that this holds (with no dependence on $\varepsilon$) as long as the agent holds not more than half of the remaining budget. When the agent holds more than half the budget, she switches to the strategy in the proof of Lemma~\ref{lem:PTAS}, and inspection of that proof shows that this still holds (this time, with dependence on $\varepsilon$). 
\end{proof}

\subsubsection{Proof of Theorem \ref{thm:1/2}}\label{sec:proof-thm1/2}
In this section we present the proof of Theorem \ref{thm:1/2}, showing that agent  
$i$ with additive valuation $v_i$ over goods and entitlement $b_i$ has a bidding strategy that guarantees the agent a bundle of value at least $\frac{1}{2}$ of $TPS(v_i, \hat{b}_i)$.

\begin{proof}[Theorem \ref{thm:1/2}]
    Consider an agent $i$ with entitlement $b_i$ satisfying $\frac{1}{k+1} < b_i \le \frac{1}{k}$ for integer $k$. Hence, $\hat{b}_i = \frac{1}{k}$. Scale the valuation function of the agent so that $TPS(v_i,\frac{1}{k}) = \frac{1}{k}$. 
    We now modify $v_i$ so that $TPS(v_i,\frac{1}{k}) = PS(v_i,\frac{1}{k})$. This is done by reducing the value of items $e$ that satisfy $v_i(e) > \frac{1}{k}$ to $v_i(e) = \frac{1}{k}$. 

    Hence, we are now in a situation in which  $v_i(\items) = 1$, 
    no item has value above $\frac{1}{k}$, and we need to show the existence of a bidding strategy that gives agent $i$ value at least $\frac{1}{2k}$. 
    Our proof has two parts. One handles the special case of $k=1$, and the other handles the remaining cases of $k \ge 2$. 

    The case $k=1$, meaning that $b_i > \frac{1}{2}$, is proved in Corollary~\ref{cor:bidding} {(for every $m$)}. Interestingly, for this case we do not present an explicit bidding strategy that gives the agent at least value $\frac{1}{2}$. Instead, we use a nonconstructive argument 
    that proves that such a bidding strategy exists. 

    For the case $k \ge 2$, our proof is by induction on the number $m$ of items. We note that the fact that the case of $k=1$ is handled separately for all values of $m$ assists in carrying out the induction step, because in some cases the induction step will reduce an allocation instance with $k \ge 2$ to a new allocation instance with $k=1$. 

    The facts that $v_i(\items) = 1$ and $v_i(e) \le \frac{1}{k}$ for every item $e$ imply that $m \ge k$. Hence, $m=k$ serves as the base case for our induction. In this case, each of the items has value $\frac{1}{k}$. The bidding strategy of bidding $\frac{1}{k+1}$ at every round ensures that the agent wins at least one item, because the budget of the adversary is strictly smaller than $\frac{k}{k+1}$. Hence the agent gets value of at least $\frac{1}{k} > \frac{1}{2k}$, as desired.

    

    We now explain how to handle the induction step.  Recall that $\frac{1}{k+1} < b_i \le \frac{1}{k}$ for $k \ge 2$, that  $v_i(\items) = 1$, that there is no item of value above $\frac{1}{k}$, and the goal of agent $i$ is to get a bundle of value at least $\frac{1}{2} TPS(v_i, \frac{1}{k}) = \frac{1}{2k}$. We assume {without loss of generality} that items are named in order of decreasing $v_i$ value, $v_i(e_1) \ge v_i(e_2) \ge \ldots \ge v_i(e_m)$. 
    We call a setting a {\em proportional setting} if it holds that $TPS(v_i, \frac{1}{k}) = PS(v_i, \frac{1}{k})$. Note that a setting with $v_i(\items)=1$ is proportional if and only if $v_i(e_1)\leq 1/k$.
    The setting we start with is proportional. {In our induction steps, when we reduce an instance with $m$ items to a new instance with fewer items, we will prove that if the original instance is in the proportional setting, then so is the new instance. This will allow us to use the inductive hypothesis on the new instance.}
    

    {We first show that {in the proportional setting} we can assume that there are no {\em huge items}, of value above $\frac{1}{2k}$.} 

{
\begin{claim}
    \label{claim:hugeItems}
    {Consider an agent $i$ with entitlement $b_i$ satisfying $\frac{1}{k+1} < b_i \le \frac{1}{k}$ for integer $k$.}
    Suppose that in the proportional setting, $v_i(e_1) \ge \frac{1}{2k}$. Then there is a strategy that gives agent $i$ a value of at least $\frac{1}{2k}$.
    \end{claim}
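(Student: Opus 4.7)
The plan is to prove this claim by leveraging the inductive hypothesis of Theorem~\ref{thm:1/2} (whose induction is on $m$), applied to a subproblem with $m-1$ items obtained after one carefully chosen opening bid. For the degenerate case $k=1$, the claim is immediate: since $b_i > \frac{1}{2}$ and $v_i(\items) = 1$, Corollary~\ref{cor:bidding} guarantees a strategy obtaining at least $\frac{v_i(\items)}{2} = \frac{1}{2} = \frac{1}{2k}$. Accordingly the real work is for $k \geq 2$.

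For $k \geq 2$, I propose that the agent bid $\alpha = \frac{1}{k+1}$ in the first round, which is feasible because $b_i > \frac{1}{k+1}$. Two outcomes are possible. If the agent wins the round, she selects $e_1$ and receives value $v_i(e_1) \geq \frac{1}{2k}$, and the target is met. Otherwise the adversary matches the bid and wins, paying exactly $\frac{1}{k+1}$ for $e_1$, leaving the agent with her full budget $b_i$ and the adversary with $1 - b_i - \frac{1}{k+1}$, with $m-1$ items $e_2,\ldots,e_m$ still to allocate.

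The core step is to apply the inductive hypothesis to this subproblem. Normalizing the remaining rounds into a bidding game by dividing by the total remaining budget $\frac{k}{k+1}$, the agent's effective entitlement is $b'_i = \frac{(k+1)b_i}{k}$. A short computation using $b_i \in (\frac{1}{k+1}, \frac{1}{k}]$ gives $b'_i \in (\frac{1}{k}, \frac{k+1}{k^2}] \subseteq (\frac{1}{k}, \frac{1}{k-1}]$, so the unit upper bound is $\hat{b}'_i = \frac{1}{k-1}$ and the new rounding parameter is $k' = k-1$ exactly. Moreover, the subproblem remains in the proportional setting: each remaining item has value at most $v_i(e_1) \leq \frac{1}{k}$, whereas the subproblem's proportional share is $V'/(k-1) \geq \frac{(k-1)/k}{k-1} = \frac{1}{k}$, where $V' := 1 - v_i(e_1)$.

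Invoking the inductive hypothesis on this $(m-1)$-item instance (and falling back to Corollary~\ref{cor:bidding} when $k-1 = 1$, i.e., when $k=2$), the agent secures at least $\tfrac{1}{2}\widehat{TPS}' = \frac{V'}{2(k-1)} = \frac{1 - v_i(e_1)}{2(k-1)}$, and the inequality $\frac{1 - v_i(e_1)}{2(k-1)} \geq \frac{1}{2k}$ reduces to $v_i(e_1) \leq \frac{1}{k}$, which holds in the proportional setting. The main delicate points I expect are verifying that the subproblem is itself proportional and that $k' = k-1$ exactly (rather than something smaller), both of which come down to careful arithmetic on the two hypotheses $v_i(e_1) \leq \frac{1}{k}$ and $b_i > \frac{1}{k+1}$; a secondary subtlety is that one must argue that the inductive hypothesis applied to the (rescaled) subproblem translates cleanly back into the absolute bound $\frac{1}{2k}$, which the computation above confirms.
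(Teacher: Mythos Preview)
Your proof is correct and follows essentially the same approach as the paper: bid $\frac{1}{k+1}$ in round~1, win $e_1$ outright if possible, and otherwise reduce to an $(m-1)$-item instance with $k' = k-1$ via the inductive hypothesis (invoking Corollary~\ref{cor:bidding} when $k'=1$). Your verification of the arithmetic (the range of $b'_i$, that the sub-instance stays proportional, and the final inequality reducing to $v_i(e_1)\le \frac{1}{k}$) is more explicit than the paper's, which simply asserts these facts; the only cosmetic slips are saying the adversary pays ``exactly'' $\frac{1}{k+1}$ (it pays at least that, and equality is the worst case) and tacitly assuming the adversary removes $e_1$ (any other choice only helps the agent, since the remaining total value is larger while the proportional-setting check still goes through).
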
}

\begin{proof}
    If $v_i(e_1) \ge \frac{1}{2k}$, agent $i$ bids $\frac{1}{k+1}$. If the agent wins the bid, she gets a value of $\frac{1}{2k}$, as desired. If the agent loses the bid, then the adversary pays at least $\frac{1}{k+1}$, and gets to choose an item of value at most $\frac{1}{k}$. For the allocation instance that remains on the set $\items' = \items \setminus \{e_1\}$ of items, scale the sum of entitlements so that they add up to~1, and scale $v_i$ (by at most $\frac{k}{k-1}$) so that $v'_i(\items') = 1$. One can readily see the the scaled entitlement $b'_i$ satisfies  $\frac{1}{k'+1} < b_i \le \frac{1}{k'}$ for $k' = k-1$, that $v'_i(\items') = 1$, and that no item $e'$ has $v'_i$ value above $\frac{1}{k'}$. As $|\items'| = m-1 < m$ {and the new instance is also in a} 
    proportional setting, the inductive hypothesis implies that agent $i$ has a strategy that gives her a bundle $B$ with $v'_i(B) = \frac{1}{2k'}$, implying that $v_i(B) \ge \frac{1}{2k}$, as desired.
    \end{proof}

    {Claim~\ref{claim:hugeItems}} implies that {in the proportional setting} we may assume that $v_i(e_1) < \frac{1}{2k}$. The following claim shows that we can further assume that $v(e_2) + v(e_3) < \frac{1}{2k}$. 
    
    \begin{claim}
    \label{claim:largeItems}
    {Consider an agent $i$ with entitlement $b_i$ satisfying $\frac{1}{k+1} < b_i \le \frac{1}{k}$ for integer $k$.}
    Suppose that 
    $v_i(e_1) < \frac{1}{2k}$ and $v_i(e_2) + v_i(e_3) \ge \frac{1}{2k}$. Then there is a strategy that gives agent $i$ a value of at least $\frac{1}{2k}$.
    \end{claim}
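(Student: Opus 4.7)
The central observation is that, because $v_i(e_1) \ge v_i(e_2) \ge v_i(e_3)$ and $v_i(e_2) + v_i(e_3) \ge \tfrac{1}{2k}$, both $v_i(e_1)$ and $v_i(e_2)$ are at least $\tfrac{1}{4k}$, so any bundle containing two of the three items $e_1, e_2, e_3$ already reaches the target $\tfrac{1}{2k}$. The plan is therefore to design a bidding strategy that either secures two of these three items outright, or forces the adversary to spend enough budget winning them that the agent's share of the residual budget grows sufficiently to recover the missing value from $\items \setminus \{e_1,e_2,e_3\}$.

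I would have the agent bid a carefully chosen amount $\alpha$ (of order $\tfrac{1}{2(k+1)}$, in line with the discussion preceding the claim) on $e_1$, and again on $e_2$ and $e_3$ in subsequent rounds while the agent remains short of the target. Two accounting facts drive the analysis: since $b_i - \alpha > \alpha$, after one winning round the agent can still afford another $\alpha$-bid, and each round the adversary wins costs her at least $\alpha$, strictly shrinking her share of the remaining total budget. I would then split on how many of $\{e_1,e_2,e_3\}$ the agent ultimately wins. If she wins at least two, the key observation already yields value $\ge \tfrac{1}{2k}$ and we are done. Otherwise I would reduce to the residual game on $\items \setminus \{e_1,e_2,e_3\}$ (which has strictly fewer than $m$ items), rescale budgets and valuations to each sum to $1$, and invoke the inductive hypothesis on $m$ supplied by Theorem~\ref{thm:1/2} applied to that residual. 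After rescaling, the agent's normalized entitlement lies in some interval $(\tfrac{1}{k'+1}, \tfrac{1}{k'}]$ with $k' \le k$, because the adversary has strictly lost ground in relative budget, and the inductive hypothesis delivers a bundle of value at least $\tfrac{1}{2k'}$ in the rescaled scale; unscaled and combined with whatever value was already won among $\{e_1,e_2,e_3\}$ this exceeds $\tfrac{1}{2k}$.

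The main obstacle I expect is the arithmetic of this reduction sub-case by sub-case: one must verify, after rescaling, that no residual item's value exceeds $\tfrac{1}{k'}$ in the new scale, and that $\tfrac{1}{2k'}$ in the new scale unscales to enough residual value to close the gap. The tightest case is $k=2$ with zero wins among the first three rounds: the adversary's residual budget drops strictly below the agent's, so the agent's rescaled share exceeds $\tfrac{1}{2}$, the rescaled unit upper bound becomes $1$, and the rescaled target simplifies to half of the total remaining value. An additional ``outbidding'' phase may be needed in which the agent exploits her strict budget advantage to win successive tail items by placing bids the adversary cannot match, continuing until her accumulated value crosses $\tfrac{1}{2k}$---essentially the strategy of Corollary~\ref{cor:bidding}, applied to the rescaled residual instance.
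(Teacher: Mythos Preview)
Your central observation that any two of $e_1,e_2,e_3$ suffice is correct, and the bid level $\alpha=\tfrac{1}{2(k+1)}$ is the right one. But the reduction step has a real gap: the claim ``$k'\le k$ because the adversary has strictly lost ground in relative budget'' fails in the one-win case for $k\ge 3$. If the agent wins one round (paying $\alpha$) and the adversary wins two (paying $2\alpha$), the agent's new budget fraction is $\tfrac{b_i-\alpha}{1-3\alpha}$, which is \emph{smaller} than $b_i$ whenever $b_i<\tfrac{1}{3}$, i.e.\ for every $k\ge 3$. Concretely, with $b_i$ just above $\tfrac{1}{k+1}$ the agent's residual fraction is just above $\tfrac{1}{2k-1}$, so $k'=2k-2$, and the inductive guarantee of $\tfrac{1}{2(2k-2)}$ of the residual value is roughly $\tfrac{1}{4k}$, only half of what you need. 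The adversary can even force you to win $e_3$ (by taking rounds~1 and~2), leaving you with a near-worthless item and essentially the full target still to recover. Your per-round observation that ``each adversary win shrinks her share'' is correct, but it is more than offset by the one round \emph{you} win. Even the zero-win case after three rounds falls short: there $k'=k-1$ but the residual value can be as small as $1-\tfrac{3}{2k}$, and $\tfrac{1}{2(k-1)}\bigl(1-\tfrac{3}{2k}\bigr)=\tfrac{2k-3}{4k(k-1)}<\tfrac{1}{2k}$.

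The paper avoids both pitfalls by structuring the argument differently. It first analyzes only the first \emph{two} rounds: if the adversary wins both she pays $\tfrac{1}{k+1}$ for value at most $\tfrac{1}{k}$, and the Claim~\ref{claim:hugeItems}-style reduction to $k'=k-1$ with residual value $>1-\tfrac{1}{k}$ gives exactly $\tfrac{1}{2k}$. So one may assume the agent wins one of $e_1,e_2$ (worst case $e_2$, of value $\tfrac{1}{2k}-\delta$). The crucial missing idea is what happens next: the agent does not stop at round~3 but keeps bidding $\tfrac{1}{2(k+1)}$ on \emph{every} subsequent item of value exceeding $\delta$, and the adversary is forced to win all of them (or the agent reaches $\tfrac{1}{2k}$). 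This forcing phase can run for up to $2k$ rounds total; only once every remaining item has value at most $\delta$ does the paper rescale and invoke induction, and it is precisely this variable-length drain on the adversary's budget that makes the residual arithmetic close. Your proposed ``outbidding phase'' where the agent wins tail items is a different mechanism (it is the $k'=1$ endgame) and does not substitute for this.
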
 
    
    \begin{proof}
    Observe that in the setting of the claim, $v_i(e_2) = \frac{1}{2k} - \delta$ for some $0 < \delta \le \frac{1}{4k}$, {as $v_i(e_2)\geq v_i(e_3)$ and $v_i(e_2) + v_i(e_3) \ge \frac{1}{2k}$ implies that  $v_i(e_2)\geq \frac{1}{4k}$.}
    
    
    We offer the following bidding strategy for $i$. Bid $\frac{1}{2(k+1)}$ in each of the first two rounds. The adversary cannot let agent $i$ win both rounds, as then agent $i$ reaches a value of at least $\frac{1}{2k}$, as desired. The adversary {will also fail if it wins} 
    both rounds, as then it pays $\frac{1}{k+1}$ and takes a value of at most $2\cdot \frac{1}{2k} \le \frac{1}{k}$. In  such a case, 
    {a proof similar to that of Claim~\ref{claim:hugeItems}}
    implies that the agent reaches a value of $\frac{1}{2k}$. Hence, the adversary must win exactly one of the first two rounds, and the worst case is when the adversary wins $e_1$ and the agent wins $e_2$. 

    Now, in the third round, the agent again bids $\frac{1}{2(k+1)}$. The adversary is forced to win the third round, as we assume that $v(e_2) + v(e_3) \ge \frac{1}{2k}$. 
    
    In subsequent rounds, as long as there is an item of value above $\delta$, the agent continues bidding $\frac{1}{2(k+1)}$. The adversary is forced to win all these items as getting an additional value of $\delta$ is sufficient for the agent to obtain value $\frac{1}{2k}$ and win. Let $r \ge 3$ be the total number of items won in this phase. The adversary wins $e_1$ and $e_3, \ldots, e_r$ and pays $\frac{r-1}{2(k+1)}$, whereas the agent wins $e_2$ and pays $\frac{1}{2(k+1)}$. Note that $v_i(e_j) < \frac{1}{2k}$ for all $j$, $v_i(e_2) = \frac{1}{2k} - \delta$, and $v_i(e_j) \le v_i(e_2)$ for all $j \ge 3$. Note also that $r-1 \le 2k-1$ (the adversary cannot afford to pay for more items {as its budget is less than $\frac{k}{k+1}$}), implying that $r \le 2k$, and thus items of positive value still remain. 
 
    Scaling the entitlements of the agents so that they sum up to~1, the new entitlement of agent $i$ is $b'_i$ which now satisfies $b'_i \ge \frac{1}{2k+2-r}$.  The target additional value that agent $i$ needs to achieve is $\delta$.  If the new instance is a member of the proportional setting, then by the inductive hypothesis (which can be used, as the number of items decreased), the agent should be able to obtain a fraction of $\frac{1}{2(2k+1-r)} = \frac{1}{4k-2r+2}$ of the total value that remains. {We denote the set of remaining items by $\items'$.} 
    We now demonstrate that indeed we are in the proportional setting, and that $\frac{1}{4k-2r+2} \cdot v_i(\items') \ge \delta$, or equivalently, $(4k-2r+2)\cdot \delta \le v_i(\items')$.
    
    The value of the remaining items satisfies $v_i(\items') > 1 - \frac{1}{2k} - (r-1)(\frac{1}{2k} - \delta) = 1 - \frac{r}{2k} + (r-1) \delta)$. Hence we wish to demonstrate that $(4k-2r+2)\cdot \delta \le 1 - \frac{r}{2k} + (r-1) \delta)$, which simplifies to $(4k-3r+3)\cdot \delta \le 1 - \frac{r}{2k}$. If $4k-3r+3 \le 0$ the inequality surely holds. Hence we may assume that $4k-3r+3 > 0$, and then the left hand side is maximized when $\delta$ is maximized. As $\delta < \frac{1}{4k}$. Hence it suffices to show that  $\frac{4k-3r+3}{4k} \le 1 - \frac{r}{2k}$, which holds for all $r \ge 3$. And indeed, $r > 3$.
    
    Finally, note that indeed we are in the proportional setting. This is because every item in $\items'$ has value at most $\delta$, and $\delta \le \frac{1}{4k-2r+2} \cdot v_i(\items') \le b'_i \cdot v_i(\items')$.
    \end{proof}

    Given Claim~\ref{claim:largeItems}, 
    we can assume that we have the following three {\em item value bounds}:
    
    \begin{enumerate}
        \item $v_i(e_1) < \frac{1}{2k}$.
        \item $v_i(e_2) + v_i(e_3) < \frac{1}{2k}$.
        \item $v_i(e_j) < \frac{1}{4k}$ for all $j \ge 3$. (This is implied by bound 2 above.) 
    \end{enumerate}
    
    Given these item value bounds, the ``bid your value" bidding strategy (recall the discussion of this strategy in the introductory text preceding the statement of Theorem~\ref{thm:1/2}) can handle all cases of $k \ge 3$. 
    The significance of $k \ge 3$ is that then the inequality $\frac{1}{k+1} \ge \frac{3}{4k}$ holds, whereas for $k = 2$ this inequality fails {(so we present a different argument for that case, see Claim~\ref{claim:k2})}. Among other things, this inequality implies (in combination with the item value bounds) that $v_i(e_1) + v_i(e_j) {< \frac{1}{2k}+\frac{1}{4k}=
    \frac{3}{4k}}\le \frac{1}{k+1}$ for all $j \ge 3$. 


{When the agent bids below her value, the adversary can win items ``cheaply" and accumulate a ``surplus''. We define the {\em surplus} of the adversary from consuming item $e_j$ that she won to be the difference between $v_i(e_j)$ (the value according to $v_i$ of the item $e_j$), and the adversary's payment for item $e_j$. The surplus for all items the adversary consumed is simply the sum of surpluses of these items.}
        

    \begin{claim}
\label{claim:largek}
{Consider an agent $i$ with entitlement $b_i$ satisfying $\frac{1}{k+1} < b_i \le \frac{1}{k}$ for integer $k$.}
If the item value bounds hold, then there is a strategy that gives agent $i$ a value of at least $\frac{1}{2k}$ when $k\ge 3$.        
    \end{claim}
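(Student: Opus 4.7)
I would prove Claim~\ref{claim:largek} by exhibiting an explicit two-phase bidding strategy for the agent and analyzing the outcomes via a short case split together with the ``pay-equals-value'' invariant of bid-your-value.

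Under the item value bounds, only $e_1$ and possibly $e_2$ can have value at least $\frac{1}{4k}$ (call these the \emph{large} items); every other item has value strictly less than $\frac{1}{4k}$. My proposed strategy is as follows. In \emph{Phase~1}, as long as the most valuable remaining item has value at least $\frac{1}{4k}$, the agent bids the discount amount $\frac{1}{2(k+1)}$ (as suggested in the large-items discussion preceding the claim), and when she wins she picks the most valuable remaining item; so Phase~1 lasts at most two rounds. Once Phase~1 ends, the agent enters \emph{Phase~2}, where in every round she bids the value of the most valuable remaining item (capped by her remaining budget), and again picks the most valuable remaining item when she wins. Because in Phase~2 the winner of each round pays exactly the value of the item she takes, each player's Phase~2 spending equals the total value of her Phase~2 wins.

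The analysis splits into three possibilities for the allocation of (at most two) large items. If the agent wins both large items, her accumulated value is already at least $2\cdot\frac{1}{4k}=\frac{1}{2k}$. If the adversary wins both large items (spending $\frac{1}{k+1}$ total in Phase~1), the agent enters Phase~2 with full budget $b_i$ and the adversary with $1-b_i-\frac{1}{k+1}$; the Phase~2 invariant gives the agent at least $(1-v_1-v_2)-(1-b_i-\frac{1}{k+1}) = b_i-v_1-v_2+\frac{1}{k+1}$ of Phase~2 value, and using $b_i>\frac{1}{k+1}$ and $v_1+v_2<\frac{1}{k}$ this is strictly larger than $\frac{2}{k+1}-\frac{1}{k}=\frac{k-1}{k(k+1)}$, which is at least $\frac{1}{2k}$ iff $2(k-1)\ge k+1$, i.e.\ exactly when $k\ge 3$. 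If each player wins one large item (the agent wins $e_j$ and the adversary $e_{j'}$), each having spent $\frac{1}{2(k+1)}$ in Phase~1, a parallel calculation gives the agent total value at least $v_j + b_i - v_1 - v_2 + \frac{1}{2(k+1)} = b_i - v_{j'} + \frac{1}{2(k+1)} > \frac{3}{2(k+1)} - \frac{1}{2k}$, which is at least $\frac{1}{2k}$ for all $k\ge 2$.

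The main obstacle is that the clean Phase~2 invariant is only immediately valid when the two post-Phase~1 budgets together are at least the remaining Phase~2 value $1-v_1-v_2$; this can fail when $v_1+v_2<\frac{1}{k+1}$, because then the total remaining budget falls short of the total remaining value and some items are eventually won for less than their value (or for free). To handle this ``overflow'' regime I would argue that whenever the agent's Phase~2 budget becomes the binding constraint she has spent her entire Phase~2 budget at value, and thus her Phase~2 wins total her full Phase~2 budget. In the adversary-wins-both case this yields total value at least $b_i>\frac{1}{k+1}\ge \frac{1}{2k}$, valid for every $k\ge 1$; in the split case it yields at least $v_j+b_i-\frac{1}{2(k+1)}>\frac{1}{4k}+\frac{1}{2(k+1)}\ge \frac{1}{2k}$, using $b_i>\frac{1}{k+1}$ and $\frac{1}{2(k+1)}\ge\frac{1}{4k}$ (which holds for every $k\ge 1$). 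Thus the hypothesis $k\ge 3$ enters only through the adversary-wins-both case in the no-overflow regime, and it traces back to the familiar inequality $\frac{1}{k+1}\ge\frac{3}{4k}$ highlighted in the preamble to the claim.
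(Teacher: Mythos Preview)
Your two-phase strategy is genuinely different from the paper's (which uses plain bid-your-value throughout), but it has a real gap in the split case. The Phase~1 bid $\frac{1}{2(k+1)}$ may \emph{exceed} the value of the large item the agent wins: ``large'' only means value $\ge \frac{1}{4k}$, and $\frac{1}{4k}<\frac{1}{2(k+1)}$ for every $k\ge 2$. When the agent overpays in Phase~1, her Phase~2 budget is too small to always bid value, the adversary can then win Phase~2 items at a discount, and your invariant ``adversary's Phase~2 consumption $\le$ adversary's Phase~2 budget'' fails. Your overflow fix does not repair this: once the agent's bid is capped, she need not spend her full Phase~2 budget at all---she simply keeps losing with a positive but useless residual budget. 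The obstacle is not ``total post-Phase-1 budget versus total Phase-2 value'' but the agent's \emph{individual} budget being capped; this can occur even outside your overflow regime.

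Here is a concrete failure for $k=3$ (so $b_i=\frac14+\delta$, target $\frac16$). Take $v_1=\frac16-\varepsilon$, $v_2=\frac{1}{12}$, and $v_3=\cdots=v_{11}=\frac{1}{12}-\varepsilon$ (plus a tiny $v_{12}$ to make the total $1$); the item value bounds hold. In Phase~1 the adversary takes $e_1$ at $\frac18$ and lets you take $e_2$ at $\frac18$ (you overpay by $\frac{1}{24}$). In Phase~2 the adversary lets you take $e_3$ at value, leaving you with value $\frac16-\varepsilon$ and budget $\approx\frac{1}{24}$. Every remaining item is worth $\approx\frac{1}{12}>\frac{1}{24}$, so your bid is capped; the adversary can afford to win $e_4,\ldots,e_{12}$ at $\approx\frac{1}{24}$ each (total cost $\approx\frac{11}{24}<\frac34$). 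You finish with value $\frac16-\varepsilon<\frac16$.

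The paper's proof sidesteps this entirely: it uses bid-your-value from the start (no discounted Phase~1), so the agent never overpays. The only round where she might be unable to bid full value is round~2 (after winning $e_1$), and the resulting adversary surplus on $e_2$ is at most $\frac1k-b_i$. A one-line calculation then shows the adversary's total consumption is below $1-\frac{1}{2k}$ precisely when $k\ge 3$. Your discounted Phase~1 bid is the device from Claim~\ref{claim:largeItems} (handling the case $v_2+v_3\ge\frac{1}{2k}$); once the item value bounds of Claim~\ref{claim:largek} are in force, that device is not needed and, as the example shows, actually hurts.
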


    \begin{proof}
        Agent $i$ uses the ``bid your value" strategy. When $k \ge 3$, this strategy is feasible for all rounds until agent $i$ gets a value of $\frac{1}{2k}$, except possibly for one single round, which is round~2. 
        The only situation in which  ``bid your value" is not feasible is if $v_i(e_1) + v_i(e_2) > b_i > \frac{1}{k+1}$. In this case, agent $i$ might win $e_1$ but then will not be able to afford to bid the full value of $e_2$. (Note that agent $i$ will be able to afford to later bid the full value of $e_3$, because  $v_i(e_1) < \frac{1}{2k}$ {and $v_i(e_3) < \frac{1}{4k}$, implying that $v_i(e_1) + v_i(e_3) \le \frac{1}{k+1}$ for $k \ge 3$}.) Consequently, the adversary might be able to accumulate a {\em surplus}, which is access value {(according to $v_i$)} that the adversary {consumes} 
        on $e_2$, compared to the adversary's payment.  
        As $v_i(e_2) < \frac{1}{2k}$ whereas the bid of agent $i$ {on $e_2$} is at least $b_i - v_i(e_1) \ge b_i - \frac{1}{2k}$, the surplus of the adversary is at most ${v_i(e_2) - \left( b_i - \frac{1}{2k}\right)} < \frac{1}{2k} - b_i + \frac{1}{2k} = \frac{1}{k} - b_i$.

        It follows that the total value that the adversary can consume before agent $i$ reaches a value of $\frac{1}{2k}$ is $1 - b_i + \frac{1}{k} - b_i < 1 + \frac{1}{k} - \frac{2}{k+1} = 1 - \frac{k-1}{k(k+1)} < 1 - \frac{1}{2k}$ (where the last inequality holds for $k \ge 3$). This does not suffice in order to prevent agent $i$ reaching a value of $\frac{1}{2k}$.
    \end{proof}

It remains to handle the case of $k=2$. {We defer the proof of the following claim to Appendix \ref{app:proof-k2}.} 

    \begin{claim}
\label{claim:k2}
{Consider an agent $i$ with entitlement $b_i$ satisfying $\frac{1}{3} < b_i \le \frac{1}{2}$.
If the item value bounds hold (with $k=2$), then there is a strategy that gives agent $i$ a value of at least $\frac{1}{4}$.}   
    \end{claim}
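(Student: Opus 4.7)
The plan is to design a bidding strategy whose opening bid depends on the top eight item values $v_i(e_1), \ldots, v_i(e_8)$, rather than only on $v_i(e_1)$ as in Claim~\ref{claim:largek}. Such look-ahead is forced on us by the discussion preceding Theorem~\ref{thm:1/2}: the pure ``bid your value'' strategy can fail for $k=2$ because, after winning two items of value just under $\frac{1}{4k}=\frac{1}{8}$, the agent's residual budget drops below $\frac{1}{8}$, so she can no longer afford to bid the value on a third such item. The strategy therefore has to trade off initial generosity (to keep budget for later rounds) against the risk that the adversary wins the first items at a discount and pockets a surplus, in the sense used in the proof of Claim~\ref{claim:largek}.

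\textbf{Opening round and branches.} I would open with a bid $\beta$, with the default choice being $\beta = \frac{1}{2(k+1)} = \frac{1}{6}$ as in Claim~\ref{claim:largeItems}. This splits into two branches. (i) If the adversary wins $e_1$, she pays $\beta \ge \frac{1}{6}$, leaving her with budget strictly less than $\frac{1}{2}$ while the agent keeps $b_i > \frac{1}{3}$. In the rescaled residual instance the agent's share of the remaining budget exceeds $\frac{1}{2}$ when $\beta$ is tuned correctly, reducing to the case $k'=1$ handled by Corollary~\ref{cor:bidding}; alternatively, if rescaling preserves the proportional setting with $k'=2$, we apply the inductive hypothesis on $m-1$ items. (ii) If the agent wins $e_1$, she banks $v_i(e_1) < \frac{1}{4}$ at cost $\beta$, and she must accumulate an additional $\frac{1}{4} - v_i(e_1)$ from items each of value less than $\frac{1}{4}$, with a residual budget of $b_i - \beta > \frac{1}{6}$.

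\textbf{Case analysis and main obstacle.} In branch (ii), the subsequent sub-case split is by the profile of $v_i(e_2), \ldots, v_i(e_8)$. When the value is concentrated in items well below $\frac{1}{8}$, a straight ``bid your value'' tail gets the agent to $\frac{1}{4}$, following the template of Claim~\ref{claim:largek} and using the slackness that the adversary's total budget $1-b_i < \frac{2}{3}$ cannot cover more than $\frac{2}{3}$ units of value at full price. The hard sub-case is when $v_i(e_2), \ldots, v_i(e_8)$ cluster just below $\frac{1}{8}$: here the adversary can contest these items with small surpluses per round, and the agent's opening bid $\beta$ must be recalibrated from $\frac{1}{6}$ to a slightly smaller value, with the exact offset determined by partial sums of $v_i(e_2), \ldots, v_i(e_8)$, so that on each of the next few rounds either the agent wins an item at an affordable price or the adversary spends enough budget that the tail items remain reachable. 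I expect this sub-case to require several intertwined inequalities tracking surpluses across up to seven rounds, and it is what drives the length-$8$ look-ahead to be both necessary (as asserted in Remark~\ref{ref:lookahead}) and sufficient. In each resolved branch the residual instance is either won directly by ``bid your value'' or reduces via rescaling to a smaller proportional instance to which the inductive hypothesis of Theorem~\ref{thm:1/2} applies.
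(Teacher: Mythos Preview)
Your plan has the right flavor (look-ahead, surplus accounting, reduction to $k'=1$), but two concrete steps do not go through as stated.

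\textbf{Branch (i) fails.} If the adversary wins $e_1$ paying $\beta=\tfrac{1}{6}$, the total residual budget is $\tfrac{5}{6}$ and the agent's share is $b_i/\tfrac{5}{6}=\tfrac{6b_i}{5}$, which for $b_i$ just above $\tfrac{1}{3}$ is only about $\tfrac{2}{5}$, not above $\tfrac12$. To force $k'=1$ after a single adversary win you would need $\beta>1-2b_i$, which for $b_i$ near $\tfrac{1}{3}$ means $\beta$ close to $\tfrac{1}{3}$, i.e.\ essentially the agent's whole budget; that is not an opening bid you can afford in branch (ii). Your fallback ``apply the inductive hypothesis with $k'=2$'' also falls short: the induction yields $\tfrac14$ of the \emph{residual} value $1-v_i(e_1)$, which is strictly below the target $\tfrac14$ whenever $v_i(e_1)>0$, and the agent has collected nothing in this branch. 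In the paper's proof, the reduction to $k'=1$ only fires after the adversary has won \emph{two} high-value items (paying at least $\tfrac16+\tfrac16=\tfrac13$), which is exactly what pushes the agent past half the remaining budget.

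\textbf{The role of the $e_8$ look-ahead is different from what you describe.} It is not used to fine-tune a single opening bid $\beta$. Rather, it is a preliminary dichotomy: if $v_i(e_8)\ge\tfrac{1}{12}$, the agent simply bids $\tfrac19$ on each of $e_1,\ldots,e_8$; the adversary's budget below $\tfrac23$ can win at most five, so the agent collects three items worth at least $\tfrac14$ and is done. Only when $v_i(e_8)<\tfrac{1}{12}$ does the main strategy kick in, and that strategy is not ``one calibrated opening bid followed by bid-your-value'': it is a fixed bid schedule applied to \emph{every} round, piecewise linear in the current item's value with two flat stretches (at $\tfrac19$ and at $\tfrac16$). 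The analysis then classifies the rounds where the adversary gains surplus into a handful of event types and does a short case split (on which of the two top items the adversary takes), showing either that the total surplus is at most $\tfrac{1}{12}$, or that the adversary has spent enough to hand the agent a $k'=1$ sub-instance on at least $\tfrac12$ of the value. Your ``several intertwined inequalities tracking surpluses across up to seven rounds'' is on the right track, but the structure that makes it tractable is this global bid table plus the extra bound $v_i(e_8)<\tfrac{1}{12}$, not a one-shot calibration of $\beta$.
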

We summarize here the proof of Theorem~\ref{thm:1/2}.    
We show that there is a bidding strategy in the bidding game that for additive valuation $v_i$ {over goods} and entitlement $b_i$ gives agent $i$ a set of value at least $\frac{1}{2}$ of $TPS(v_i, \hat{b}_i)$.
Consider agent $i$ with entitlement $b_i$ satisfying $\frac{1}{k+1} < b_i \le \frac{1}{k}$ {for some integer $k\geq 1$. Note that $\hat{b}_i=\frac{1}{k}$ and that $\widehat{TPS}(v_i,b_i)= TPS(v_i,\frac{1}{k})$}. Scale the valuation function $v_i$ so that $TPS(v_i,\frac{1}{k}) = \frac{1}{k}$, implying in particular that $v_i(\items) \ge 1$. We need to show that agent $i$ has a strategy that gives her value at least $\frac{1}{2k}$.
If $k=1$ (and so $b_i > \frac{1}{2}$), the claim follows from Corollary \ref{cor:bidding} {(for every $m$)}: the agent can receive value of at least $\frac{v_i(\items)}{2}\geq \frac{1}{2}\cdot TPS(v_i, \hat{b}_i)$.
For $k\geq 2$, {the proof is by induction on $m$. For convenience of the presentation (though this is not strictly necessary for the proofs) the induction hypothesis is strengthened to assume that we are in the proportional setting, in which no single item has value larger than $\frac{1}{k}$. This indeed can be assumed to hold for the original input instance, and the inductive steps verify that this property holds for those sub-instances on which we wish to apply the induction hypothesis.}

The case in which $v_i(e_1) \ge \frac{1}{2k}$ is handled in Claim \ref{claim:hugeItems}, and the case that $v_i(e_1) < \frac{1}{2k}$  yet $v_i(e_2) + v_i(e_3) \ge \frac{1}{2k}$ is handled in Claim \ref{claim:largeItems}. All other cases are handled in Claim \ref{claim:largek} (when $k\geq 3$), and in Claim \ref{claim:k2} (when $k=2$).

Finally, as the TPS can be computed in polynomial time \cite{BEF2021b}, $\widehat{TPS}$ can be computed in polynomial time as well.  The share $\frac{1}{2}$-$\widehat{TPS}$ is feasible as we shown that there is a bidding strategy that for any agent $i$ with additive valuation $v_i$ {over goods} and entitlement $b_i$ gives agent $i$ a set of value at least $\frac{1}{2}$ of $TPS(v_i, \hat{b}_i)$. 

\end{proof}

\subsection{The Bidding Game: Practical Considerations}\label{sec:practical}

As we view the bidding game as a natural allocation mechanism that might be used in practice, we briefly discuss some practical considerations. 

Recall that every round has two components. One is a bidding process that decides which agent wins the round and the payment for that agent. The other is a selection process by which the winning agent selects an item of her choice. We discuss each of these components separately.

For the bidding process, though we described this process as a first-price auction, all that our analysis uses is the assumption that the highest bidder wins (breaking ties arbitrarily), and that the payment is at most the bid of the winner, and at least the highest bid of the non-winners. Beyond this, it does not matter how the bidding process is implemented, and one may choose the implementation that one is most comfortable with (either first price or second price or anything in between, agents may submit bids simultaneously or according to some order determined by the mechanism, possibly random, these can be sealed bids or open bids, ties can be broken at random or in some consistent manner, such as always in favor of the earlier bidder, etc.).

For tie breaking to be ex-ante fair, it makes sense to pick an order over agents uniformly at random, and use that order for tie breaking. This makes the procedure symmetric ex-ante.  As our results hold for any tie breaking, they hold for this as well. 

{We now discuss the item selection process. The bidding game is most natural if the agents have valuations under which there is a plausible ``best" item to choose if the agent wins the round. This includes valuations such as additive, unit demand and budget additive, in which items can be ordered by their values, and a natural item to select is the item of highest {stand-alone} value among those remaining. As an example, a unit-demand agent can participate in the bidding game (say, bidding her entire budget at every round) alongside additive agents. Our guarantee for any additive agent still holds, as it holds regardless of how other agents bid. It is also easy to see that a unit-demand agent can guarantee her $\widehat{MMS}$ (which is the best possible) in the bidding game. The bidding game can also plausibly be used when agents have submodular valuations, in which a natural (though not necessarily best) item to select is the one with highest marginal value compared to the set of items that the agent already holds. In contrast, for XOS valuations and beyond, the bidding game is not a recommended allocation mechanism, as it gives only poor guarantees. See~\cite{bUF23}.}

\section{Additional Results for Additive Valuations over Goods}\label{sec:goods}


In this section we present some additional results for additive valuations of goods. 
We first show that our result for ex-post shares, showing that $\frac{1}{2}\cdot \widehat{TPS}$ is feasible, is tight (Section \ref{sec:goods-neg-ex-post}).
We then present tight upper and lower bounds for feasible ex-ante shares, showing that for $\delta \simeq 0.591$ 
{(a constant that arises in the context of some integer sequence discussed below),}
the share $\delta \cdot \widehat{PS}$ is a feasible share ex-ante, and any larger fraction of $\widehat{PS}$ is not feasible ex-ante (Section \ref{sec:MESgoods}). Finally, we prove a strong impossibility result for ``best-of-both-worlds", showing that any randomized allocation that gives every agent some constant fraction of her proportional share, is supported on some allocations in which an agent does not get a constant fraction of her $\widehat{MMS}$  (Section \ref{sec:BoBW-goods}). 

\subsection{Tight Impossibility for ex-post Domination}\label{sec:goods-neg-ex-post}
We have shown (Theorem \ref{thm:1/2}) that {there is a bidding strategy in the bidding game that for additive valuation $v_i$ {over goods} and entitlement $b_i$ gives agent $i$ a set of value at least $\frac{1}{2}$ of $TPS(v_i, \hat{b}_i)$.  } 
This implies that the share 
$\frac{1}{2}\cdot TPS(v_i, \hat{b}_i)$ is feasible. We next show that any fraction of $TPS(v_i, \hat{b}_i)$ that is larger than $\frac{1}{2}$ is not feasible. 

\begin{proposition}
\label{pro:badExampleGoods}
Consider additive valuations over goods and arbitrary entitlements.
For every $\varepsilon > 0$ the share $(\frac{1}{2} + \varepsilon)$-$\widehat{MMS}$ is not feasible.
Thus, there is no feasible ex-post share that $(\frac{1}{2} + \varepsilon)$-dominates every {nice} 
ex-post share.
\end{proposition}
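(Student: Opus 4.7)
The plan is to construct a minimal allocation instance with two agents and two items in which every allocation fails to give one of the agents $(\tfrac{1}{2}+\varepsilon)$ of her $\widehat{MMS}$, and then lift this to the stated impossibility against nice feasible shares using the personalized nice share construction that was already invoked in the proof of Proposition~\ref{prop:goods-additive-minimal}.

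First I would fix any $\delta\in(0,\tfrac{1}{2})$ and take two agents with entitlements $b_1=\tfrac{1}{2}+\delta$ and $b_2=\tfrac{1}{2}-\delta$. Take two items $e_1,e_2$ and give both agents the same additive valuation $v$ with $v(e_1)=v(e_2)=1$. By Definition~\ref{def:unit}, $\lfloor 1/b_1\rfloor=1$ and $\lfloor 1/b_2\rfloor=2$, so $\hat{b}_1=1$ and $\hat{b}_2=\tfrac{1}{2}$, which gives $\widehat{MMS}(v,b_1)=v(\items)=2$ and $\widehat{MMS}(v,b_2)=MMS(v,\tfrac{1}{2})=1$. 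An acceptable allocation for $(\tfrac{1}{2}+\varepsilon)$-$\widehat{MMS}$ would have to give agent~$2$ value at least $\tfrac{1}{2}+\varepsilon$; since item values are integral and positive, this forces her to receive at least one of the two items. But then agent~$1$ receives value at most $1$, strictly less than $(\tfrac{1}{2}+\varepsilon)\cdot 2 = 1+2\varepsilon$, a contradiction. This establishes the first assertion: $(\tfrac{1}{2}+\varepsilon)$-$\widehat{MMS}$ is not feasible.

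For the ``Thus'' clause I would invoke the personalized nice share $OMMS_{(v,b)}$ used in the proof of Proposition~\ref{prop:goods-additive-minimal} (and defined in Appendix~\ref{sec:nice}): by Corollary~\ref{cor:nice}, for every additive valuation $v$ over goods and every entitlement $b$, the share $OMMS_{(v,b)}$ is name-independent, self-maximizing, feasible for the class of additive valuations, and satisfies $OMMS_{(v,b)}(v,b)=\widehat{MMS}(v,b)$. Applying this at the two points $(v,b_1)$ and $(v,b_2)$ produces two nice feasible shares $s^{(1)}$ and $s^{(2)}$ that attain $\widehat{MMS}(v,b_i)$ on our instance. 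Any feasible share $s$ that $(\tfrac{1}{2}+\varepsilon)$-dominates every nice feasible share must in particular satisfy $s(v,b_i)\ge (\tfrac{1}{2}+\varepsilon)\,\widehat{MMS}(v,b_i)$ for $i=1,2$, and by the first step these simultaneous lower bounds cannot be realized by any allocation on this instance, contradicting feasibility of $s$.

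The proof has no real obstacle: the whole impossibility is driven by the discontinuity of the unit upper bound at $b=\tfrac{1}{2}$, which causes agent~$1$'s $\widehat{MMS}$ to jump to the full $v(\items)$ while agent~$2$'s $\widehat{MMS}$ stays at $MMS(v,\tfrac{1}{2})$. The only detail that needs care is invoking Corollary~\ref{cor:nice} to certify that the personalized shares used to ``witness'' $\widehat{MMS}$ at $(v,b_1)$ and $(v,b_2)$ really are both nice and feasible over the whole class of additive valuations over goods, which is exactly the property already exploited in the proof of Proposition~\ref{prop:goods-additive-minimal}.
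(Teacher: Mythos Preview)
Your proof is correct and is essentially the $n=2$ case of the paper's construction. The paper takes $n\ge 2$ agents and $m=2n-2$ identical items, with one agent having entitlement in $(\tfrac{1}{n+1},\tfrac{1}{n})$ (so $\widehat{MMS}=1$) and the remaining $n-1$ agents having entitlements in $(\tfrac{1}{n},\tfrac{1}{n-1})$ (so $\widehat{MMS}=2$); by pigeonhole some high-entitlement agent receives at most one item. Your two-agent, two-item instance is exactly this for $n=2$, and your treatment of the ``Thus'' clause via the personalized nice shares $OMMS_{(v,b)}$ is precisely what the paper does (it cites Proposition~\ref{prop:goods-additive-minimal}, whose proof invokes the same Corollary~\ref{cor:nice}).

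Two small points. First, your range $\delta\in(0,\tfrac12)$ is slightly careless: you need $\lfloor 1/b_2\rfloor=2$, i.e.\ $\tfrac13<b_2\le\tfrac12$, so take $\delta\in(0,\tfrac16)$; otherwise $\widehat{MMS}(v,b_2)=0$ and giving both items to agent~1 is acceptable. This does not affect correctness since you only need one bad instance. Second, the paper's general-$n$ version buys an extra robustness not stated in the proposition but highlighted in the introduction: the impossibility persists even when all entitlements are below any prescribed $\delta>0$. Your two-agent example necessarily has $b_1>\tfrac12$, so it does not exhibit this stronger property.
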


\begin{proof} We first show that the share $(\frac{1}{2} + \varepsilon)$-$\widehat{MMS}$ is not feasible.
    {Consider $m=2n-2$ identical items and $n \ge 2$ agents, with additive valuations. That is, each item has value~1 for each agent.} The entitlements satisfy $\frac{1}{n+1} < b_n < \frac{1}{n}$, and $\frac{1}{n} < b_i < \frac{1}{n-1}$ for all agents $i \not=n$. The $\widehat{MMS}$ of agent~$n$ is~1, whereas for every other agent the $\widehat{MMS}$ 
    is~2. 
    In every allocation, {either some agent gets no item, in which case the allocation is not dominating for any factor, or every agent gets at least one item. In the latter case, some agent $i \not=n$, for which $\widehat{MMS}$ is~2, 
    gets only one item of value~1. Hence, no domination with a factor larger than $1/2$ is possible.}

    {Combining the fact that the share $(\frac{1}{2} + \varepsilon)$-$\widehat{MMS}$ is not feasible with Proposition \ref{prop:goods-additive-minimal}, we conclude that there is no feasible ex-post share that $(\frac{1}{2} + \varepsilon)$-dominates every 
    nice ex-post share.}
\end{proof}

\mbfuture{We should check what can be said about the reciprocal entitlements case (or the iterative fair inheritance case). }


\subsection{Tight Impossibility for ex-ante Domination}
\label{sec:MESgoods}

In this section {we present a tight bound on feasible ex-ante shares for additive valuations. Concretely, for additive valuations we show that for $\gamma \simeq 1.69103$ (explicitly defined below) the share $\frac{1}{\gamma} \cdot \widehat{PS}$ is a feasible share ex-ante, while for every $\epsilon > 0$  the share $(\frac{1}{\gamma} + \epsilon) \cdot \widehat{PS}$ is not (Theorem~\ref{thm:MES})}. 

Before proving the theorem, we present some background on sequences of integers that we shall use in the proof.

The following sequence of integers $q_1, q_2, \ldots$, is known as a {\em Sylvester sequence}:

\begin{align*}
q_{1} & =2\\
q_{k} & =1+\prod_{i=1}^{k-1}q_{i}
\end{align*}

The Sylvester sequence grows at a double exponential rate. The first six terms in this sequence are 2, 3, 7, 43, 1807, 3263443. 

A related sequence $a_1, a_2, \ldots$ is defined by $a_i = q_i - 1$ for every $i$. Hence its first six terms are 1, 2, 6, 42, 1806, 3263442. It satisfies the following recurrence:

\begin{align*}
a_{1} & =1\\
a_{k} & = a_{k-1}(a_{k-1} + 1)
\end{align*}

One may observe that for every $n$,

\begin{equation}
\label{eq:sylvester}
\sum_{i=1}^n \frac{1}{q_i} = 1 - \frac{1}{a_{n+1}}     
\end{equation}

Let $s_n$ denote $\sum_{i=1}^{n} \frac{1}{a_i}$. Thus the first four terms of the sequence $s_1, s_2, \ldots$ are 1, $\frac{3}{2}$, $\frac{5}{3}$, $\frac{71}{42}$.  The sequence $s_1, s_2, \ldots$ will be useful for us due to the following lemma. {(We suspect that this lemma is known, but as we could not find a reference for it, so we state it and prove it below).} 

\begin{lemma}
\label{lem:Sylvester}
Let $2 \le k_1 \le \ldots \le k_n$  be a non-decreasing sequence of integers satisfying $\sum_{i=1}^n \frac{1}{k_i} < 1$. Then $\sum_{i=1}^n \frac{1}{k_i - 1} \le s_n$, for $s_n$ as defined above.
\end{lemma}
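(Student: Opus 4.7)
I will prove a stronger parameterized lemma by induction on $n$ that holds simultaneously for all ``starting values'' $m \ge 2$, and recover the stated inequality as the specialization $m = 2$. For each integer $m \ge 2$, define an analogue of the Sylvester-related sequence by $a_1^{(m)} = m-1$ and $a_{i+1}^{(m)} = a_i^{(m)}(a_i^{(m)}+1)$, and set $s_n^{(m)} = \sum_{i=1}^n 1/a_i^{(m)}$. The strengthened claim is: for every $m \ge 2$ and every non-decreasing integer sequence $m \le k_1 \le \ldots \le k_n$ with $\sum 1/k_i < 1/(m-1)$, we have $\sum 1/(k_i - 1) \le s_n^{(m)}$. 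The case $m = 2$ recovers the lemma, since $a_i^{(2)} = a_i$.

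The induction will be on $n$, with the base case $n = 1$ immediate from $k_1 \ge m$, which gives $1/(k_1-1) \le 1/(m-1) = s_1^{(m)}$. For the inductive step I will split on whether $k_1 = m$ (the ``greedy'' case) or $k_1 \ge m+1$. In the greedy case, the sub-sequence $(k_2, \ldots, k_n)$ uses budget at most $1/(m-1) - 1/m = 1/(m(m-1))$, and since each positive summand is strictly smaller than this total, every $k_i \ge m(m-1)+1$ for $i \ge 2$. This puts the sub-problem precisely in the form to invoke the inductive hypothesis with the new starting value $m' = m(m-1)+1$; a short computation using the recurrence shows $a_i^{(m')} = a_{i+1}^{(m)}$, so the inductive bound $s_{n-1}^{(m')}$ equals $s_n^{(m)} - 1/(m-1)$, and combined with the contribution $1/(k_1-1) = 1/(m-1)$ telescopes exactly to $s_n^{(m)}$.

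In the second case $k_1 \ge m+1$, I will rely on the elementary inequality $\frac{1}{k-1} \le \frac{m+1}{mk}$, which holds for every integer $k \ge m+1$ and is verified by cross-multiplication. Summing over $i$ and using $\sum 1/k_i < 1/(m-1)$ gives $\sum 1/(k_i-1) < \frac{m+1}{m(m-1)}$, and a direct expansion shows the right-hand side equals $s_2^{(m)} = \frac{1}{m-1} + \frac{1}{m(m-1)}$. Since $s_n^{(m)}$ is non-decreasing in $n$, this handles all $n \ge 2$; for $n = 1$ we have the trivial bound $1/(k_1-1) \le 1/m < 1/(m-1) = s_1^{(m)}$.

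The hard part, I expect, will not be the individual calculations but identifying the right parameterization. A direct induction on $n$ with fixed starting value $m = 2$ does not close, because the natural sub-problem arising in the greedy step does not have the same ``budget equals $1/$(integer)'' form. Introducing the parameter $m$ and the extended sequence $a_i^{(m)}$, and observing that $a_i^{(m')} = a_{i+1}^{(m)}$ for the precise $m' = m(m-1)+1$, is the structural insight that permits Case A to telescope cleanly. Once this is in place, Case B becomes the simple observation that the trivial linear bound saturates at exactly $s_2^{(m)}$, which is dominated by $s_n^{(m)}$ for all $n \ge 2$.
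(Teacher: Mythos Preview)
Your proof is correct. Both your argument and the paper's hinge on the same elementary inequality $\tfrac{1}{k-1}\le\tfrac{m+1}{mk}$ for $k\ge m+1$, and on the identity $\tfrac{1}{a}+\tfrac{1}{a(a+1)}=\tfrac{a+2}{a(a+1)}$, so the analytic content is essentially identical. The packaging, however, is different. The paper does not introduce the parameterized sequences $a_i^{(m)}$ or set up an induction; instead it argues directly by contradiction: assume $\sum 1/(k_i-1)>s_n$, locate the first index $t$ where $k_t\neq q_t$, observe that necessarily $k_t>q_t$ (else the constraint $\sum 1/k_i<1$ fails), and then apply the key inequality once to the entire tail $i\ge t$ with $m=a_t+1$, comparing against just the two terms $1/a_t+1/a_{t+1}$ of $s_n$. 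Your strengthened induction on $(n,m)$ effectively unrolls the same idea: your Case~A peels off the greedy first term (playing the role of ``$k_i=q_i$ for $i<t$'') and your Case~B is exactly the paper's tail bound. What your parameterization buys is a self-contained inductive structure and a slightly more general statement; what the paper's direct argument buys is brevity, avoiding the auxiliary sequences and the verification that $a_i^{(m')}=a_{i+1}^{(m)}$.
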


\begin{proof}
    Recall the sequences $q_1, q_2, \ldots$ and $a_1, a_2, \ldots$ defined above.  We need to prove that $\sum_{i=1}^n \frac{1}{k_i - 1} \le \sum_{i=1}^{n} \frac{1}{a_i}$. Assume for the sake of contradiction that $\sum_{i=1}^n \frac{1}{k_i - 1} > \sum_{i=1}^{n} \frac{1}{a_i} = s_n$. Then $k_1 \le \ldots \le k_n$ are not the first $n$ terms of the Sylvester sequence $q_1, q_2, \ldots$. Let $t$ be the first index for which $k_t \not= q_t$. Necessarily $k_t > q_t$, as otherwise $\sum_{i=1}^n \frac{1}{k_i} \ge \sum_{i=1}^{t-1} \frac{1}{q_i} + \frac{1}{a_t} = 1$ (the equality follows from Equation (\ref{eq:sylvester})), contradicting the requirement that  $\sum_{i=1}^n \frac{1}{k_i} < 1$. Given that $k_t > q_t$, we infer that $t < n$.

    The requirement $\sum_{i=1}^n \frac{1}{k_i} < 1$ implies that $\sum_{i=t}^n \frac{1}{k_i} < \frac{1}{a_t}$, and $k_t > q_t$ implies that $\sum_{i=t}^n \frac{1}{k_i-1} \le \frac{a_t + 2}{a_t + 1}\sum_{i=t}^n \frac{1}{k_i} = \frac{a_t + 2}{(a_t + 1)a_t}$. We also have $\sum_{i=t}^n \frac{1}{q_i-1} \ge \sum_{i=t}^{t+1} \frac{1}{q_i-1} = \frac{1}{a_t} + \frac{1}{a_t(a_t+1)}=\frac{a_t + 2}{(a_t + 1)a_t}$. Hence, $\sum_{i=1}^n \frac{1}{k_i - 1} \le \sum_{i=1}^{n} \frac{1}{a_i}$, reaching a contradiction to our assumption.
\end{proof}

\begin{corollary}
    \label{cor:sylvester}
    For  $n \ge 2$ and $s_n$ as defined above, for every vector $b_1, \ldots, b_n$ of entitlements (satisfying $\sum_{i=1}^n b_i = 1$), the respective unit upper bounds satisfy $\sum_{i=1}^n \hat{b}_i \le s_n$.
\end{corollary}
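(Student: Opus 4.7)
The plan is to derive Corollary~\ref{cor:sylvester} essentially as a direct consequence of Lemma~\ref{lem:Sylvester}, by translating the hypothesis on entitlements into the hypothesis of the lemma. First, for each agent $i$, let $k_i = \lfloor 1/b_i \rfloor$, so that by Definition~\ref{def:unit} we have $\hat{b}_i = 1/k_i$, and moreover $\frac{1}{k_i+1} < b_i \le \frac{1}{k_i}$. In particular $k_i \ge 1$ since $b_i \le 1$.

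Next, I would set $\ell_i = k_i + 1 \ge 2$ and observe the chain of inequalities
\[
\sum_{i=1}^n \frac{1}{\ell_i} \;=\; \sum_{i=1}^n \frac{1}{k_i+1} \;<\; \sum_{i=1}^n b_i \;=\; 1,
\]
where the strict inequality uses that $b_i > \frac{1}{k_i+1}$ for every $i$ (this is strict for every single term, so the sum is strict). Since the sum we wish to bound is symmetric in the indices, I may without loss of generality reorder the agents so that $\ell_1 \le \ell_2 \le \cdots \le \ell_n$.

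Now the sequence $\ell_1 \le \cdots \le \ell_n$ is a non-decreasing sequence of integers with each $\ell_i \ge 2$ and $\sum_i \frac{1}{\ell_i} < 1$, which is precisely the hypothesis of Lemma~\ref{lem:Sylvester}. Applying that lemma yields
\[
\sum_{i=1}^n \hat{b}_i \;=\; \sum_{i=1}^n \frac{1}{k_i} \;=\; \sum_{i=1}^n \frac{1}{\ell_i - 1} \;\le\; s_n,
\]
which is exactly the desired conclusion.

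There is no real obstacle here: the only subtlety is making sure the inequality $\sum_i \frac{1}{\ell_i} < 1$ is strict (so that Lemma~\ref{lem:Sylvester} applies), but this follows because each individual inequality $b_i > \frac{1}{k_i+1}$ is strict by construction of the unit upper bound. The whole argument is a brief application of the previously proved lemma and fits in a few lines.
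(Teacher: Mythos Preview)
Your proof is correct and follows the same approach as the paper's: reduce to Lemma~\ref{lem:Sylvester} by setting $\ell_i = k_i + 1$ and verifying $\sum_i \frac{1}{\ell_i} < 1$. In fact your version is slightly cleaner, since the paper first handles separately the case where every $b_i$ is an integer reciprocal (concluding $\sum_i \hat{b}_i = 1 < s_n$ directly), whereas you correctly observe that the strict inequality $b_i > \frac{1}{k_i+1}$ already holds for every $i$ by Definition~\ref{def:unit}, making the case split unnecessary.
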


\begin{proof}
    If {for every $i$ it holds that $b_i=\frac{1}{k_i}$} 
    for some integer $k_i$, then we have that $\sum_{i=1}^n \hat{b}_i = \sum_{i=1}^n {b}_i = 1 < s_n$, {as $n\geq 2$}. Hence, we may assume that for some $i$ there is an integer $k_i$ for which $\frac{1}{k_i + 1} < b_i < \frac{1}{k_i}$. This implies that $\sum_{i=1}^n ((\hat{b}_i)^{-1} + 1)^{-1} < \sum_{i=1}^n {b}_i = 1$, and the Corollary follows from Lemma~\ref{lem:Sylvester}.
\end{proof}

Let $\gamma = \lim_{n \rightarrow \infty} s_n$ be the limit of $s_n$ as $n$ grows. We have that $\gamma \simeq 1.69103$ (one can verify this numerically, because the $a_i$ grow at a double exponential rate).  Hence, $\frac{1}{\gamma} \ge 0.591$.

\begin{theorem}
    \label{thm:MES}
    For $\gamma$ as defined above (with $\frac{1}{\gamma} \ge 0.591$) and every instance in which agents have additive valuations {over goods},  $\frac{1}{\gamma} \cdot \widehat{PS}$ is a feasible share ex-ante. For every $\epsilon > 0$, there are allocation instances with additive valuations {over goods} for which $(\frac{1}{\gamma} + \epsilon) \cdot \widehat{PS}$ is not feasible ex-ante.
\end{theorem}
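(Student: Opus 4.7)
The plan is to prove the two directions separately, with the positive direction following almost immediately from Corollary~\ref{cor:sylvester}, and the negative direction obtained by constructing instances whose entitlements mimic (a shifted version of) the Sylvester sequence.

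For feasibility of $\frac{1}{\gamma} \cdot \widehat{PS}$, I would exhibit an explicit randomized allocation. Let $S = \sum_{j=1}^n \hat{b}_j$, and consider the ``give-everything'' lottery that awards the entire bundle $\items$ to agent $i$ with probability $\hat{b}_i / S$. Since $\sum_i \hat{b}_i / S = 1$, this is a valid distribution over allocations. Agent $i$'s expected value is
\[
\frac{\hat{b}_i}{S} \cdot v_i(\items) \ \ge\ \frac{\hat{b}_i}{\gamma} \cdot v_i(\items) \ =\ \frac{1}{\gamma} \cdot \widehat{PS}(v_i, b_i),
\]
where the inequality uses Corollary~\ref{cor:sylvester} (together with $s_n < \gamma$ for all $n$, and $S = 1 < \gamma$ in the trivial case $n=1$). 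Note that this share is efficiently computable given the $\hat{b}_i$'s.

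For the impossibility, I would design a family of instances parameterized by $n$, with entitlements that drive $\sum_i \hat{b}_i$ as close to $\gamma$ as one wishes. Concretely, for the Sylvester sequence $q_1, q_2, \ldots$ and $a_i = q_i - 1$, set $b_i$ slightly above $1/q_i$ (say $b_i = \frac{1}{q_i} + \delta_i$ with $0 < \delta_i < \frac{1}{q_i a_i}$), choosing the $\delta_i$'s so that $\sum_{i=1}^n b_i = 1$. By the definition of the unit upper bound, this forces $\hat{b}_i = \frac{1}{a_i}$ for every $i$, so
\[
\sum_{i=1}^n \hat{b}_i \ =\ \sum_{i=1}^n \frac{1}{a_i} \ =\ s_n.
\]
All agents are given the same additive valuation over $m$ items of unit value each, so $v_i(\items) = m$ and $\widehat{PS}(v_i,b_i) = m/a_i$. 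In any randomized allocation, the sum of expected values received by the agents is exactly $m$. Thus if every agent were to receive at least $\rho \cdot \widehat{PS}$ in expectation, summing would force $\rho \cdot m \cdot s_n \le m$, i.e.\ $\rho \le 1/s_n$. Since $s_n \nearrow \gamma$, for any $\varepsilon > 0$ one can choose $n$ large enough that $s_n > 1/(\tfrac{1}{\gamma} + \varepsilon)$, making $\rho = \tfrac{1}{\gamma} + \varepsilon$ infeasible on this instance.

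The routine parts are checking that the $\delta_i$'s can be chosen small enough to preserve the rounding (bounded below by $\min_i \frac{1}{q_i a_i}$), and that a positive but tiny $\delta_i$ indeed sits strictly between $\frac{1}{q_i}$ and $\frac{1}{a_i}$. The only conceptual obstacle is ensuring that the impossibility argument is airtight: one needs that the ex-ante expected sum of values is a hard upper bound of $m$ (it is, by additivity and the fact that exactly $m$ total units of value are distributed in every realization), and that Corollary~\ref{cor:sylvester} is used in the correct direction (as an upper bound on $\sum \hat{b}_i$ in the positive part, and as a limiting equality $s_n \to \gamma$ exploited by the Sylvester instances in the negative part). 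No additional machinery beyond Corollary~\ref{cor:sylvester} and straightforward linearity of expectation should be needed.
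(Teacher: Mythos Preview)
Your proposal is correct and follows essentially the same approach as the paper's proof: for feasibility, both allocate the grand bundle at random with probabilities tied to the $\hat{b}_i$'s and invoke Corollary~\ref{cor:sylvester}; for impossibility, both build Sylvester-based entitlements with identical additive valuations and use a total-value counting argument. The only cosmetic differences are that you normalize the lottery by $S=\sum_j \hat{b}_j$ whereas the paper uses the fixed factor $\frac{1}{\gamma}$ (and dumps leftover probability arbitrarily), and the paper gives the explicit perturbation $\delta_i = \frac{1}{n\cdot a_{n+1}}$ rather than leaving the $\delta_i$'s abstract. One small wording slip: your parenthetical ``bounded below by $\min_i \frac{1}{q_i a_i}$'' should really be an upper-bound check, but the underlying existence claim (that $\sum_i \delta_i = \frac{1}{a_{n+1}}$ can be met with each $\delta_i < \frac{1}{q_i a_i}$) is correct.
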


\begin{proof}
To show feasibility of $\frac{1}{\gamma} \cdot \widehat{PS}$, allocate 
{all of $\items$}
to each agent $i$ with probability {$\frac{1}{\gamma} \cdot \hat{b}_i$.}
By Corollary~\ref{cor:sylvester}, {$\sum_{i=1}^n \hat{b}_i \le s_n < \gamma$, and so the grand bundle is allocated at most once.}
With the remaining probability the items can be allocated to an arbitrary agent. In expectation, each agent (with an additive valuation function) gets a value of at least a $\frac{1}{\gamma}$ fraction of her $\widehat{PS}$. 

To show that $(\frac{1}{\gamma} + \epsilon) \cdot \widehat{PS}$ is not feasible, let $n$ be such that $\frac{1}{\gamma} + \epsilon > \frac{1}{s_n}$ (the smallest $n$ for which this inequality holds satisfies $n \le O(\log\log \frac{1}{\epsilon})$), and suppose that all $n$ agents have the same additive valuation function $v$. With the vector of entitlements in which $b_i = \frac{1}{q_i} + \frac{1}{n \cdot a_{n+1}}$,  we have (by Eq.~(\ref{eq:sylvester})) that $\sum_{i=1}^n {b}_i = 1$ and that $\sum_{i=1}^n \hat{b}_i = s_n$.
Hence, the sum of values that agents expect to receive is $(\frac{1}{\gamma} + \epsilon)\cdot s_n \cdot v(\items) > v(\items)$, and this cannot be satisfied.
\end{proof}

{Theorem \ref{thm:intro-goods-ex-ante} follows from the combination of Theorem \ref{thm:MES}, the fact that for additive valuations $\widehat{PS}=\widehat{MES}$, and Proposition \ref{prop:rounding-MES-goods}.} 
{We remark that the impossibility can be strengthen to rule out $(\frac{1}{\gamma} + \epsilon)$-domination even of only feasible shares that are nice, using personalized shares defined in  Appendix \ref{sec:nice}, similarly to the proof of Proposition \ref{prop:goods-additive-minimal}. We omit the details.}

\subsection{Best of Both Worlds: an Impossibility}\label{sec:BoBW-goods}
A Best-of-Both-Worlds (BoBW) result aims to present a randomized allocation rule that is both ax-ante fair as well as ex-post fair. That is, for every instance, it obtains some fairness in expectation, but is supported only on allocations that are fair. 

We extend our share-based approach for deterministic allocation to the BoBW paradigm. An ultimate result will be a ``double-domination'' BoBW result, getting (approximate) domination of every feasible share, both in the ax-ante sense as well as the ex-post sense.   
That is, we now are looking for two shares, a share $s_R$ for randomized allocations, and a share $s_D$ for deterministic allocation, such that:
1) in expectation, each agent gets an acceptable randomized allocation with respect to her share $s_R$, and
2) for every realization, each agent gets an acceptable (deterministic) bundle with respect to her deterministic share $s_D$,
3) the share $s_R$ dominates ($\rho_R$-dominates with as large as possible $\rho_R$) every feasible share for randomized allocations,  
4) the share $s_D$ dominates ($\rho_D$-dominates with as large as possible $\rho_D$) every feasible share for deterministic allocations.

For additive valuations, the ex-ante proportional share is clearly a randomized share that is feasible (by giving every agent all the items with probability $b_i$). {By Proposition \ref{prop:goods-additive-minimal},  $\widehat{MMS}$ is a dominating share and is the minimal dominating share (every dominating share must dominate $\widehat{MMS}$).} Thus, to obtain a ``double-domination'' BoBW result, we need to $\rho_R$-dominates the ax-ante proportional share for some constant $\rho_R>0$, and to $\rho_D$-dominates the $\widehat{MMS}$ share  for some constant $\rho_D>0$. 

We observe that  ``double-domination'' BoBW result is impossible, even for additive valuations. {Proposition \ref{prop:intro-BoBW-goods} follows from the next claim and Proposition \ref{prop:goods-additive-minimal}.}
\begin{proposition}
    For any $n\geq 2$ and any $m\geq 1$, there is a setting with $n$ agents that have identical additive valuations over $m$ identical {goods} 
    in which ``double-domination'' BoBW result is impossible: Every randomized allocation that gives every agent some constant fraction of her proportional share, is supported on some allocation in which an agent does not get a constant fraction of her $\widehat{MMS}$. 
\end{proposition}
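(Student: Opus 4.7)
The plan is to construct, for every $n\ge 2$ and $m\ge 1$, an explicit entitlement profile on $m$ identical goods valued identically by all agents such that the integrality of the $X_i$'s forces any ex-post $c_D$-$\widehat{MMS}$ allocation (with $c_D>0$) to pin each item to a specific subset of the agents, leaving at least one agent with strictly positive entitlement --- and hence strictly positive proportional share --- receiving $0$ items in every realization in the support. This directly violates any ex-ante $c_R$-PS guarantee with $c_R>0$ for that agent.

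The cleanest case is $n\ge m+1$. Pick $\delta>0$ with $\delta<\frac{1}{m(m+1)}$, set $b_i=\frac{1}{m}-\delta$ for $i=1,\dots,m$, and distribute the residual mass $m\delta>0$ positively (e.g.\ equally) among agents $m+1,\dots,n$. A short calculation shows that $\lfloor 1/b_i\rfloor=m$ for each $i\le m$, so $\hat{b}_i=\frac{1}{m}$ and $\widehat{MMS}_i=1$, while for every $j>m$ one has $\hat{b}_j<\frac{1}{m}$ and hence $\widehat{MMS}_j=0$. Any ex-post guarantee $X_i\ge c_D\widehat{MMS}_i$ with $c_D>0$ gives $X_i\ge c_D$, and integrality of $X_i$ forces $X_i\ge 1$ for every $i\le m$; since $\sum_{i\le m}X_i\ge m=\sum_i X_i$, equality is forced, and the allocation is pinned to $X_i=1$ for $i\le m$ and $X_j=0$ for $j>m$, in every realization. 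But $E[X_j]\ge c_R\cdot b_j\cdot m>0$ for $j>m$ is incompatible with $X_j=0$ almost surely, yielding the desired contradiction.

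For $n\le m$, the analogous construction uses a heavy agent with $b_1>1-\frac{1}{m}$ (so $\widehat{MMS}_1=m$) and a medium agent with $b_2\in(\frac{1}{m+1},\frac{1}{m})$ (so $\widehat{MMS}_2=1$), with the residual entitlement distributed among any remaining agents so that each of their $\widehat{MMS}$ values is $0$. Integrality of $X_2\ge c_D>0$ forces $X_2\ge 1$ in every realization, hence $X_1\le m-1$, while any ex-ante PS guarantee with $c_R$ close to $1$ forces $E[X_1]\ge c_R\cdot b_1\cdot m>m-1$. The main obstacle is the uniform version of this last step: handling every positive constant $c_R$ (not only $c_R$ close to $1$) requires splitting the residual entitlement among several tiny-entitlement agents so that the ``forced-to-zero'' role is played by one of them rather than by agent $1$, reinstating the pigeonhole argument of the first case within the $n\le m$ regime. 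The key technical point is to use the rounding operator $b\mapsto \hat{b}$ from Definition~\ref{def:unit} to engineer the values $\widehat{MMS}_i$ so that their integer ex-post lower bounds sum to exactly $m$, leaving a positively-entitled agent outside the pinned set. Once this impossibility is established, the companion Proposition~\ref{prop:intro-BoBW-goods} follows by combining it with Proposition~\ref{prop:goods-additive-minimal}, which supplies feasible personalized ex-post shares matching $\widehat{MMS}$ at each $(v,b)$ of interest.
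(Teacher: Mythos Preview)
Your first case ($n\ge m+1$) is correct and is essentially the paper's argument: the $m$ ``high'' agents each have $\widehat{MMS}=1$, so any ex-post $c_D$-$\widehat{MMS}$ guarantee with $c_D>0$ forces each to receive at least one item, pinning the allocation and leaving every agent $j>m$ (who has strictly positive entitlement, hence strictly positive proportional share) with nothing.

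Your second case ($n\le m$) has a genuine gap. The heavy-agent construction only yields the contradiction when $c_R$ is close to~$1$, as you note. Your proposed repair --- ``splitting the residual entitlement among several tiny-entitlement agents so that the forced-to-zero role is played by one of them'' --- cannot work with $m$ identical positive-value goods. The reason is that for arbitrarily small $c_D>0$, the integrality constraint $X_i\ge c_D\cdot\widehat{MMS}_i$ only forces $X_i\ge 1$ for each agent with $\widehat{MMS}_i>0$; it does not force $X_i\ge\widehat{MMS}_i$. Hence the total number of items pinned to the positive-$\widehat{MMS}$ set is at most the \emph{number} of such agents, which is at most $n-1\le m-1<m$. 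At least one item remains free, so the outside agent need not receive~$0$ in every realization, and the pigeonhole argument collapses. (Concretely, for $n=2$, $m=2$, identical goods, one checks that every entitlement profile admits a BoBW allocation for small enough $c_R,c_D$.)

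The paper sidesteps this entirely: for $m\ge n$ it pads the instance with $m-n+1$ items of value~$0$, reducing to the case of $n-1$ positive-value items (which falls under your first case with $m'=n-1<n$). This is a one-line reduction that you are missing, and it is what makes the argument go through for all $c_R,c_D>0$.
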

\begin{proof}
    We first prove the claim for  {$1\leq  m < n$}.
    {Consider $m$ identical items and {$n > m$} agents, with identical  additive valuations. That is, each item has value~1 for each agent.} 
    The agents are divided into two sets. The set $K$ of the first $m$ agents are more entitled, each agent $i\leq m$ has an entitlement of {$\frac{1}{m+1} < b_i < \frac{1}{m}$}. 
    Each of the other agents has an entitlement of {$0 < b_i \le \frac{1}{m+1}$}.
    The proportional share, which is feasible ex ante, is positive for every agent. 
   The $\widehat{MMS}$ of each of the agent in $K$ (the $m$ more entitled agents) is~1, so to get any positive fraction of that share ex post, each of these agents must get an item in every ex-post allocation. Thus, in any allocation in which the $\widehat{MMS}$ of every agent in $K $ is approximated within any finite factor, every agent not in $K$ gets nothing, and thus, with only such allocations in the support, it is impossible to give such an agent any constant fraction of her proportional share ex ante (note that such an agent exists as $K$ includes at most $n-1$ agents). 

   The construction for $m\geq n$ is based on adding $m-n+1$ items of value $0$ to $n-1$ items of value as in the case that $m=n-1$ above.  
\end{proof}

Thus, it is impossible to get a BoBW result with double-domination. A natural problem for future work is to get single-domination, either ex-ante or ex-post, coupled with a ``resonable'' share for the other requirement. 
For example, getting a single-domination ex-post by obtaining  constant domination of the $\widehat{MMS}$, while relaxing the ex-ante domination requirement to a constant fraction of the ex-ante {${TPS}$, instead of $\widehat{PS}$.}


\section{Assignment of Chores}\label{sec:chores}


In this section we consider the assignment of chores (undesired items). 
The definitions in this case involve some natural modifications compared to the definitions for allocations of goods. Every agent $i$ has a cost function $c_i$ (instead of a valuation function $v_i$), which is a set function over chores which is normalized (the cost of the empty set is~0), nonnegative, and weakly monotone non-decreasing. 
The agent prefers bundles of lower cost over those of higher cost. 
A share $s$ determines the share cost $s(c_i,b_i)$ of agent $i$ with cost function $c_i$ and responsibility $b_i$, This is the maximal cost the agent should bear: a set of $S$ chores is acceptable to agent $i$ if $c_i(S)\leq s(c_i,b_i)$.
An assignment  
must assign all chores, and it 
is acceptable if each agent is assigned an acceptable set of chores. 
Additionally, a share for  chores needs to be monotone in the responsibility, that is, the more responsibility, the larger the share: $s(c_i,b_i)\leq s(c_i,b^+_i)$ for any $b_i\leq b_i^+$ and any $c_i$. (For chores, we use this monotonicity condition in our proofs.
{As we consider it an important property, we require a similar monotonicity condition also for goods, although we do not use it in our proofs.)}


{The notion of domination was defined for the setting of goods in Definition~\ref{def:dominating}. For chores, agents try to minimize cost instead of maximizing value. This requires a natural adjustment to the notion of domination.}

\begin{definition}
Consider any class $C$ of cost functions over chores, and any $\rho > 0$. A share function $\tilde{s}$ over chores $\rho$-dominates (or simply dominates, if $\rho = 1$) share $s$, if for every cost function $c\in C$ and every responsibility $b$, it holds that $\tilde{s}(c,b) \le \rho \cdot s(c,b)$. The share $\tilde{s}$ is  $\rho$-dominating (or simply \emph{dominating} if $\rho = 1$) for class $C$ if it $\rho$-dominates every share $s$ that is feasible for $C$. 

\end{definition}


{For items that are good,  the characterization we have presented involves the unit upper bound of the entitlement. For chores, the same role is played by the unit lower bound, as we define next. } 

\begin{definition}
\label{def:unitLowerBound}
    For {responsibility} $0 < b < 1$, let $k$ be such that $\frac{1}{k+1} \le b < \frac{1}{k}$.
    Then the {\em unit lower bound} on $b$, denoted by $\check{b}$, is $\frac{1}{k+1}$. Define the ex-post share $\overline{MMS}$ as $\overline{MMS}(c_i,b_i) = MMS(c_i,\check{b}_i)$ and the ex-ante share $\overline{MES}$ as $\overline{MES}(c_i,b_i) = MES(c_i,\check{b}_i)$.
\end{definition}

{Note that in the context of chores, MMS is the minimax share (of cost), which is the same as the maximin share (of value) if we were to replace the non-negative cost function by a non-positive valuation function.}
 
\subsection{Share Domination}\label{sec:chores-share-dom}

{We next present characterizations for share domination for general classes of valuations over chores, both for ex-post shares and for ex-ante shares. The claims are parallel to the claims we have proven for goods (Proposition \ref{prop:rounding-MMS-goods}, Proposition \ref{prop:rounding-MES-goods} and Proposition \ref{prop:goods-additive-minimal}).} 
{As the proofs are similar to the proofs of the parallel claims for good, they are deferred to Appendix \ref{app:chores}.}
\begin{proposition}\label{prop:rounding-chores}
    For any class $C$ of cost functions  over chores,  $\overline{MMS}$ dominates every feasible unrestricted ex-post share for arbitrary responsibilities  (for class $C$). Moreover, it is the maximal ex-post share for arbitrary responsibilities that dominates every feasible unrestricted ex-post share (for class $C$).

    For any class $C$ of cost functions  over chores,  $\overline{MES}$ dominates every feasible unrestricted ex-ante share for arbitrary responsibilities  (for class $C$). Moreover, it is the maximal ex-ante share for arbitrary responsibilities that dominates every feasible unrestricted ex-ante share (for class $C$).

    For the class of additive cost functions  over chores and arbitrary responsibilities, the share $\overline{MMS}$ is the maximal ex-post share  that dominates every feasible {nice} ex-post share, and the share $\overline{PS}$ is the maximal ex-ante share  that dominates every feasible 
    {nice} 
    ex-ante share. 
\end{proposition}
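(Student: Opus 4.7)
The plan is to mirror, one-to-one, the proofs of Propositions~\ref{prop:rounding-MMS-goods}, \ref{prop:rounding-MES-goods}, and \ref{prop:goods-additive-minimal} for goods, keeping in mind the sign flip that for chores ``$\tilde s$ dominates $s$'' means $\tilde s \le s$ pointwise, so that ``maximal dominating share'' now means pointwise \emph{largest}. All four statements reduce to (a) a ``forcing instance'' that lower-bounds any feasible share at a chosen point, and (b) a personalized-share construction that matches $\overline{MMS}$ (resp.\ $\overline{MES}$) at a chosen point.

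For the ex-post domination claim, I would pick any feasible unrestricted share $s'$ and any pair $(c,b)$, letting $k$ be the integer with $\check{b}=\tfrac{1}{k+1}$. The forcing instance consists of exactly $k+1$ identical agents, each with cost function $c$ and responsibility $\tfrac{1}{k+1}$ (which sums to $1$). By anonymity every agent has share $s'(c,\tfrac{1}{k+1})$; by feasibility there is a $(k+1)$-partition of $\items$ in which every bundle has cost at most $s'(c,\tfrac{1}{k+1})$, and the definition of the minimax share then forces $s'(c,\tfrac{1}{k+1})\ge MMS(c,\tfrac{1}{k+1})=\overline{MMS}(c,b)$. The required responsibility-monotonicity of shares now yields $s'(c,b)\ge s'(c,\tfrac{1}{k+1})\ge \overline{MMS}(c,b)$, which is exactly $\overline{MMS}$ dominating $s'$. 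The ex-ante claim for $\overline{MES}$ is proved by the same instance: feasibility of $s'$ yields a randomized $(k+1)$-partition of expected per-agent cost at most $s'(c,\tfrac{1}{k+1})$, forcing $s'(c,\tfrac{1}{k+1})\ge MES(c,\tfrac{1}{k+1})$, and monotonicity extends the bound to arbitrary $b$.

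For the maximality statements, suppose $\bar s$ dominates every feasible unrestricted share, and fix a pair $(c_0,b_0)$. I would exhibit a feasible unrestricted share $s^{\mathrm{ch}}_{c_0,b_0}$ satisfying $s^{\mathrm{ch}}_{c_0,b_0}(c_0,b_0)=\overline{MMS}(c_0,b_0)$ (resp.\ $\overline{MES}(c_0,b_0)$ in the ex-ante version), so that $\bar s(c_0,b_0)\le s^{\mathrm{ch}}_{c_0,b_0}(c_0,b_0)=\overline{MMS}(c_0,b_0)$ as required. These personalized shares are the chores analogs of $MMS_{v,b}$ from Proposition~\ref{pro:verypersonalMMS}: at $(c_0,b_0)$ they take the value $\overline{MMS}(c_0,b_0)$, and at other pairs they take a value large enough (such as $c(\items)$, or the corresponding $\overline{MMS}$/$\overline{MES}$ value at the relevant responsibility) to render any sensible bundle acceptable, with the residual freedom used to enforce responsibility-monotonicity. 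Feasibility is obtained, as in the goods case, by routing a suitable MMS (resp.\ MES) partition to the agents whose shares are tight and dumping the remaining bundles on agents whose shares are loose.

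For the additive, \emph{nice} version, the same argument goes through once the personalized shares above are replaced by ones that are moreover name-independent and self-maximizing, playing the role that $OMMS_{(v,b)}$ and $PS_{(v,b)}$ from Appendix~\ref{sec:nice} play in Proposition~\ref{prop:goods-additive-minimal}. I expect this to be the main obstacle: one must translate the goods ``ordered'' constructions to chores while respecting the sign flip in responsibility-monotonicity (larger responsibility must yield larger share) and the stringent feasibility requirement that \emph{all} chores be assigned, not just all goods be handed out. Once the analogous nice personalized shares are exhibited and shown to be feasible, the proof concludes exactly as in the goods case.
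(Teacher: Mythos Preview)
Your proposal is correct and follows essentially the same approach as the paper. Two minor remarks: (i) your forcing instance of $k+1$ identical agents each with responsibility $\tfrac{1}{k+1}$ is slightly cleaner than the paper's, which uses $k$ agents at responsibility $b$ plus one agent at $1-kb\le\tfrac{1}{k+1}$; both variants invoke responsibility monotonicity, just at different points. (ii) The ``main obstacle'' you anticipate for the nice case is already handled in the paper: the chores analogs of $OMMS_{(v,b)}$ and $PS_{(v,b)}$ are constructed in Appendix~\ref{sec:nice} (Corollary~\ref{cor:nice}, items~2 and~4), and the personalized shares you sketch for the unrestricted case are precisely those of Definitions~\ref{def:personalMMSchores} and~\ref{def:personalPSchores}, whose feasibility is Proposition~\ref{pro:personalMMSchores} and Proposition~\ref{pro:personalPS-chores}.
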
   
{In the rest of the section we move to consider additive cost functions.}

\subsection{Tight ex-post Domination Result for Additive Chores}\label{sec:chores-ex-post}

{We consider the assignment of chores to agents with additive cost functions and arbitrary responsibilities.
We first} introduce a share, the {\em Rounded{-responsibilities}  Round Robin share} (RRR), that is applicable for chores when the cost functions is additive. {We then show that this share obtains the best $\rho$-domination ($\rho=2$).}

\begin{definition}
    \label{def:RRR}
    Consider an agent with additive cost function $c_i$, {responsibility} $0<b_i<1$, and the integer $k$ for which  $\frac{1}{k+1} \le b_i < \frac{1}{k}$. Suppose that chores are ordered, where chore $e_1$ has highest cost and $e_m$ has lowest cost (under $c_i$). Let $\items_{1|k}$ denote the set of items whose index equals~1 modulo $k$, namely, $e_1, e_{k+1}, \ldots$. Then the {\em Rounded{-responsibilities}  Round Robin share} (RRR) is defined as:
    
    $$RRR(c_i,b_i) = \sum_{j\in \items_{1|k}} c_i(e_j).$$  
\end{definition}

The name Rounded{-responsibilities}  Round Robin share is because first the {responsibility} of the agent, which satisfies $\frac{1}{k+1} \le b_i < \frac{1}{k}$, is ``rounded" to $\frac{1}{k}$, and then the cost of the share is equal to that of the respective Round Robin share.\footnote{For the rounding to make more sense, it seems appropriate to consider the range $\frac{1}{k+1} < b_i \le \frac{1}{k}$ instead of $\frac{1}{k+1} \le b_i < \frac{1}{k}$. Our results hold whichever definition is used. We choose to use the former range over the latter range for compatibility with the definition of the $\overline{MMS}$.} That is, it equals to the cost that agent $i$ can limit her bundle to have if chores where assigned by a Round Robin protocol that involves $k$ agents, and agent $i$ is in the worst picking position in this protocol (picking 
in those rounds that include the last round, and hence getting the worst chore, $e_1$). 


\begin{theorem}
\label{thm:feasibleChores} 
    For additive costs {and arbitrary responsibilities}, the {\em Rounded{-responsibilities}  Round Robin share}
    (RRR)
   is polynomial time computable, {nice} 
   and 2-dominating. It is feasible, and a feasible {assignment} can be computed in polynomial time.
\end{theorem}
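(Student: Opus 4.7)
The plan is to verify the four assertions of Theorem \ref{thm:feasibleChores} in sequence: polynomial-time computability of RRR, niceness, $2$-domination, and feasibility via a polynomial-time assignment algorithm.

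Polynomial-time computability is immediate: given agent $i$ with $b_i$ satisfying $\frac{1}{k+1} \le b_i < \frac{1}{k}$, sort chores in decreasing $c_i$-order and sum the entries at positions $1, k+1, 2k+1, \ldots$ in $O(m \log m)$ time. Name-independence follows because the share depends only on the sorted sequence of single-chore costs. For self-maximizing, I would invoke the criterion of \cite{BF22}: a pessimistic agent gains nothing by misreporting $c_i$ as some $c'_i$, because any such misreport either preserves the sorted profile (keeping RRR unchanged) or shrinks the collection of bundles acceptable under $c'_i$ to ones that in the worst case have larger true cost under $c_i$. This is a direct verification similar to the one carried out in \cite{BF22} for rank-based shares.

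The $2$-domination property is the main analytic step. By Proposition \ref{prop:rounding-chores}, every feasible share $s$ satisfies $\overline{MMS}(c_i, b_i) \le s(c_i, b_i)$, so it suffices to prove $RRR(c_i, b_i) \le 2 \overline{MMS}(c_i, b_i)$, where $\overline{MMS}(c_i, b_i) = MMS(c_i, \tfrac{1}{k+1})$. Using $c_i(e_1) \ge c_i(e_2) \ge \ldots$, each term $c_i(e_{jk+1})$ is at most the average of the preceding $k$ entries $c_i(e_{(j-1)k+2}), \ldots, c_i(e_{jk+1})$; summing over $j \ge 1$ telescopically yields $RRR(c_i, b_i) \le c_i(e_1) + (c_i(\items) - c_i(e_1))/k$. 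The two standard lower bounds $\overline{MMS}(c_i, b_i) \ge c_i(e_1)$ (the most expensive chore lies in some bundle) and $\overline{MMS}(c_i, b_i) \ge c_i(\items)/(k+1)$ (averaging in any $(k+1)$-partition) then give $RRR(c_i, b_i) \le \overline{MMS}(c_i, b_i) \cdot \left(\tfrac{k-1}{k} + \tfrac{k+1}{k}\right) = 2\, \overline{MMS}(c_i, b_i)$.

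Feasibility is established by designing a polynomial-time picking sequence. The target is to schedule each agent $i$'s picks at turns $t_1^{(i)} < \ldots < t_{r_i}^{(i)}$ with $r_i \le \lceil m/k_i \rceil$ and $t_j^{(i)} \le m - (r_i - j) k_i$, where at each of her turns she picks the cheapest remaining chore under $c_i$. Under this deadline discipline, even in the worst case where the $t_j^{(i)} - 1$ chores already removed are her $c_i$-cheapest, her $j$-th pick costs at most $c_i\bigl(e_{m - t_j^{(i)} + 1}^{(i)}\bigr) \le c_i\bigl(e_{(r_i - j) k_i + 1}^{(i)}\bigr)$; summing over $j$ gives total cost at most $RRR(c_i, b_i)$. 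The schedule is realized by a divisor-method priority rule: at each turn, the picker is the agent $i$ minimizing $(r_i + 1) k_i$, where $r_i$ counts chores already assigned to $i$, with ties broken arbitrarily. The main obstacle is verifying that this priority rule meets every agent's per-pick deadline simultaneously; this follows from a counting argument using $\sum_i 1/k_i \ge \sum_i b_i = 1$ (collective pick capacity suffices) together with the fact that any agent approaching her deadline must carry the highest priority and be selected. Polynomial-time implementation uses a priority queue with $O(\log n)$ updates per turn, giving an overall running time of $O((m+n)\log n + nm \log m)$.
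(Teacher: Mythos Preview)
Your treatment of polynomial-time computability, niceness, and $2$-domination matches the paper's argument essentially verbatim (same telescoping bound $RRR \le c_i(e_1) + \tfrac{1}{k}(c_i(\items)-c_i(e_1))$, same two lower bounds on $\overline{MMS}$). The gap is in the feasibility proof.

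Your forward divisor-method rule --- at each turn select the agent minimizing $(r_i+1)k_i$ and have her take her cheapest remaining chore, ties broken arbitrarily --- does not meet the deadlines you state. Take $n=3$ with $k_1=2$, $k_2=k_3=3$ (e.g.\ $b_1=0.4$, $b_2=b_3=0.3$) and $m=8$. Break ties as follows: turn~2 to agent~3, turn~5 to agent~3, turn~6 to agent~2. Then the priority rule forces agent~1 to pick at turns $1,4,7,8$. With $r_1=4$ your third deadline reads $t_3 \le m-(r_1-3)k_1 = 6$, but $t_3 = 7$. Concretely, if all agents share the cost function $c(e_1)=c(e_2)=10$, $c(e_3)=\cdots=c(e_8)=0$, then everyone clears the zero-cost chores in turns $1$--$6$, and agent~1 is stuck with both expensive chores at turns $7,8$, incurring cost $20$ while $RRR(c_1,b_1)=c(e_1)+c(e_3)+c(e_5)+c(e_7)=10$. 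So the rule with arbitrary tie-breaking is not RRR-feasible.

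The underlying issue is structural. Your deadlines $t_j^{(i)} \le m-(r_i-j)k_i$ are \emph{retrospective}: they depend on the final count $r_i$, which is not known during execution. Hence the sentence ``any agent approaching her deadline must carry the highest priority'' has no operational meaning --- the rule cannot detect that agent~1 is ``approaching'' $t_3\le 6$ at turn~5, because at that moment nobody knows that $r_1$ will end up being $4$ rather than $3$. A forward greedy that hands out the early (cheap) slots first has no built-in control over how picks bunch up in the late (expensive) slots.

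The paper sidesteps this by constructing the \emph{reverse} picking sequence directly: index rounds by the number $r$ of chores remaining, run $r=1,2,\ldots,m$, and at each round select any agent with positive ``debt'' (initialised to $b_i$ and updated by $+b_i$ each round, $-1$ when the agent picks). Because an agent can only pick when her debt is positive, an agent with $b_i<\tfrac{1}{k}$ provably picks at most $j$ times in reverse rounds $1,\ldots,jk$, which is exactly the suffix spreading property you need. Running the construction from the end is what makes the spreading bound fall out in one line; your forward rule would need to be reversed (or replaced by this debt mechanism) for the argument to go through.
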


\begin{proof}
Clearly, $RRR(c_i,b_i)$ is computable in polynomial time. Moreover, being a picking order share, it is nice. (Name independence holds for all picking order shares. Likewise, being self maximizing holds for all picking order shares, as was proved in~\cite{BF22}.) 

    To see that $RRR(c_i,b_i)$ is 2-dominating, we {shall use (in the last inequality when bounding $RRR(c_i,b_i)$ below)} the fact that for chores and {responsibility} that is the inverse of an integer, the MMS 
    is at least the cost of the most costly item, and at least the proportional share.

{Consider agent $i$ with {responsibility} $b_i$, such that $\frac{1}{k+1} \leq b_i < \frac{1}{k}$ and $\check{b}_i=\frac{1}{k+1}$}. It holds that 
\begin{multline*}
RRR(c_i,b_i) = \sum_{j\in \items_{1|k}} c_i(e_j) \le c_i(e_1) + \frac{1}{k}c_i({\items \setminus \{e_1\}}) \\ = \frac{k-1}{k}c_i(e_1) + \frac{1}{k}c_i(\items) \le \frac{k-1}{k}\overline{MMS}(c_i,\check{b}_i) + \frac{k+1}{k}\overline{MMS}(c_i,\check{b}_i) = 2\cdot \overline{MMS}(c_i,\check{b}_i)
\end{multline*}


    {Feasibility follows from an assignment algorithm that is based on a picking sequence. It is easier to describe the reverse of this picking sequence, which means that at round $r$ the number of chores remaining is $r$, and the agent picks the chore of lowest cost among them. For this reverse sequence $\pi$, we shall have the following {\em spreading} property, which states that the set of rounds in which an agent of {responsibility} $b_i < \frac{1}{k}$ gets to pick is not worse than $1, 1+k, 1+2k, \ldots$. This implies that for every $j$, the number of chores that agent $i$ is assigned among her $j\cdot k$ most costly chores is at most $j$. Hence the cost that agent $i$ suffers is at most $\sum_{j\in \items_{1|k}} c_i(e_j) = RRR(c_i,b_i)$, showing feasibility of RRR.   
    
    We now describe how we construct $\pi$, the reverse picking sequence. As will be evident from our description, there is a lot of flexibility in the construction of $\pi$, and many different $\pi$ would work. The construction is done in rounds. In each round $r$, for each agent $i$, we maintain a {\em debt} $d_{i,r}$. Initially, $d_{i,1} = b_i$ for each agent $i$. At every round $r$, the agent that gets to pick in round $r$ is selected arbitrarily from the set of agents that have positive debt in round $r$. For example, in round~1 we may select any of the agents to make the first pick. We refer to the agent selected in round $r$ as $i_r$. Thereafter, the debts for round $r+1$ are computed as follows. For every agent $j \not= i_r$, we set $d_{j,r+1} = d_{j,r} + b_j$. For agent $i_r$ we set $d_{i_r,r+1} = d_{i_r,r} + b_{i_r} - 1$. The term $-1$ signifies that agent $i_r$ covered one unit of debt by picking a chore at round $r$.
    
    Observe that in every round, the sum of debt added is $\sum_i b_i = 1$, and the amount of debt consumed is~1. Hence at the beginning of a round the sum of debts is~1, and there is an agent of positive debt, and indeed there is an agent that can be selected. Moreover, at the end of every round, each agent has debt strictly larger than $-1$. This implies that an agent $i$ with $b_i \le \frac{1}{k}$ could not have picked in more than $j$ rounds among the first $j \cdot k$ rounds, as then her debt at the end of round $j \cdot k$ would have been at most $b_i \cdot j \cdot k - (j+1) \le \frac{1}{k} \cdot j \cdot k - (j+1) \le -1$.  This establishes the spreading property referred to above.} 
\end{proof}

{Theorem \ref{thm:feasibleChores-intro} directly follows from Theorem \ref{thm:feasibleChores} and Proposition \ref{pro:badExampleChores}.}



\subsection{Tight Best of Both Worlds Result for Additive Chores}\label{sec:chores-BoBW}

In the proof of Theorem~\ref{thm:feasibleChores}, we showed that there is a deterministic {assignment} algorithm that computes an {assignment} that is feasible with respect to the RRR share. A randomized version of that {assignment} algorithm can be used in order to give a {\em best of both worlds} (BoBW) result -- a randomized {assignment} algorithm that assigns each agent not more than her proportional share ex-ante, and not more than her RRR share ex-post. This is the content of Theorem~\ref{thm:BoBWChores}. Its proof follows principles that appear in several previous works. Among them, two are most relevant to our work. The general structure of our randomized {assignment} algorithm follows a pattern introduced in~\cite{aziz2020simultaneously}. The use of this pattern simplifies the proof of correctness (compared to some other alternatives that also work). The randomized {assignment} algorithm that we use was already sketched in Section 1.6 of~\cite{FH23arxiv} (the arxiv version of~\cite{FH23}). There it was explained that with this algorithm, each agent suffers a cost that is at most her proportional share in expectation, and no more than twice her anyprice share (APS) ex-post. The proof of Theorem~\ref{thm:BoBWChores} observes that the same algorithm also meets the additional benchmark of giving each agent not more than her RRR share ex-post.

\begin{theorem}
\label{thm:BoBWChores}
{Consider assignment of chores to agents with additive cost functions and arbitrary responsibilities.}
There is a polynomial time randomized {assignment} algorithm that for every input instance 
outputs a distribution over {assignments} such that under this distribution each agent suffers a cost that is at most her proportional share in expectation, and no more than her RRR share ex-post. 
\end{theorem}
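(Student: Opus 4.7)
The plan is to describe a randomized variant of the reverse picking sequence from the proof of Theorem~\ref{thm:feasibleChores}, verify that in every realization it satisfies the spreading property (giving the ex-post RRR bound), and verify that the randomization is coupled so that each chore $e$ is assigned to agent $i$ with marginal probability exactly $b_i$ (giving the ex-ante PS bound).

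The algorithm is the reverse picking sequence with debt accounting: initial debts $d_{i,1}=b_i$, and at each round the picker is selected from among the agents with positive debt (she then takes her lowest-cost remaining chore), after which debts are updated as $d_{i,r+1}=d_{i,r}+b_i$ for the non-picking agents and $d_{i,r+1}=d_{i,r}+b_i-1$ for the picking agent. In the deterministic version of Theorem~\ref{thm:feasibleChores}, the picker among eligibles is chosen arbitrarily; here we select randomly via a coupling that follows the pattern of~\cite{aziz2020simultaneously} as sketched in Section~1.6 of~\cite{FH23arxiv}. The crucial observation is that the spreading property holds in every realization as a deterministic consequence of the eligibility rule alone: if agent $i$ picks at round $r$, then $d_{i,r}=r\cdot b_i-n_{i,r-1}>0$, where $n_{i,r-1}$ counts her prior picks, so $n_{i,r}\le \lceil r\cdot b_i\rceil$; taking $r=jk$ with $b_i\le 1/k$ forces $n_{i,jk}\le j$, regardless of which eligible agent was chosen at each round.

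Given the spreading property, the cost-bounding argument from the proof of Theorem~\ref{thm:feasibleChores} yields the ex-post RRR guarantee in every realization: agent $i$'s $j$-th pick happens at a reverse round $r_j\ge 1+(j-1)k$, at which $r_j$ chores remain and the minimum $c_i$-cost among them is at most $c_i(e_{r_j})\le c_i(e_{1+(j-1)k})$, so the total cost is at most $\sum_j c_i(e_{1+(j-1)k})=RRR(c_i,b_i)$. For the ex-ante bound, given that each chore $e$ is assigned to agent $i$ with probability $b_i$, linearity of expectation gives $E[c_i(A_i)]=\sum_{e\in\items}b_i\cdot c_i(e)=b_i\cdot c_i(\items)=PS(c_i,b_i)$. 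The main technical obstacle is designing the per-round randomization over eligible agents so that these chore-level marginals are exactly $b_i$: the naive independent choice gives the marginals but may violate eligibility, while restricting to eligibles at each round naively distorts the marginals. The coupling of~\cite{FH23arxiv} resolves this by maintaining the fractional picking profile (each agent having weight $b_i$ per round) as an invariant and dependently rounding it to integer picks compatible with the eligibility rule; the procedure runs in polynomial time, since each of the $m$ rounds involves polynomial-time computation of the random selection and debt update.
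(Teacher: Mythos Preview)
Your high-level plan matches the paper's: randomize over debt-based picking sequences so that every realization satisfies the spreading property (giving the RRR ex-post bound), and arrange the randomness so that the ex-ante cost is at most the proportional share. The ex-post part of your argument is correct and essentially the paper's.

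The ex-ante part, however, has a genuine gap. You repeatedly state the target as ``each chore $e$ is assigned to agent $i$ with probability exactly $b_i$'', and you compute $E[c_i(A_i)]=\sum_{e}b_i\,c_i(e)$ on that basis. But this target is \emph{incompatible} with the picking-sequence mechanism you describe, in which the picker at a round selects her own lowest-cost remaining chore. A two-agent, two-chore example shows why: take $b_1=b_2=\tfrac12$, $c_1(e_1)=1,c_1(e_2)=0$, $c_2(e_1)=0,c_2(e_2)=1$. Whoever picks first takes her zero-cost chore, and in \emph{every} realization agent~1 ends up with $e_2$ and agent~2 with $e_1$; the chore-level marginals are $0$ and $1$, not $\tfrac12$. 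So no randomization over eligible pickers can make the chore-level marginals equal to the $b_i$'s.

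What the paper actually targets (and what your final paragraph hints at with ``weight $b_i$ per round'') is a \emph{round-level} marginal: agent $i$ is the picker at reverse round $r$ with probability exactly $b_i$. The ex-ante bound then follows because at reverse round $r$ there are $r$ chores left, so what $i$ picks costs her at most $c_i(e_r^{(i)})$, the $r$th most costly chore under $c_i$; summing gives $E[c_i(A_i)]\le\sum_r b_i\,c_i(e_r^{(i)})=b_i\,c_i(\items)$. To realize these round-level marginals while preserving the spreading property, the paper does not randomize round-by-round over eligibles; instead it sets up a fractional perfect matching between $m$ coupons (plus a few auxiliary ones) and ``subagents'' of each agent, so that agent $i$ holds a $b_i$ fraction of each coupon, and then applies Birkhoff--von~Neumann to decompose into integral matchings. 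Each integral matching automatically has the spreading property (via the subagent structure), and the decomposition gives exact coupon-level marginals $b_i$. Your proposal defers this construction entirely to a citation; you should either carry it out or at least restate the target correctly (round/coupon marginals, not chore marginals) and explain why the referenced construction delivers both the marginals and the spreading property simultaneously.
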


\begin{proof}
The various steps of randomized {assignment} algorithm are described below. 

\begin{enumerate}
    
    \item Introduce $m$ main {\em coupons}, {numbered from~1 to $m$,} and $t < n$ additional auxiliary coupons,{numbered from $m+1$ to $m+t$.} The value of $t$ is chosen as $\sum_{i=1}^n (\lceil b_i \cdot m \rceil - b_i \cdot m)$. In particular, if $b_i \cdot m$ is an integer for every agent $i$, there is no need for auxiliary coupons.
    
    \item Allocate the coupons fractionally to the agents as follows. For each main coupon, agent $i$ gets a $b_i$ fraction of it. For each {of the $t$} auxiliary coupons, agent $i$ gets a $\frac{\lceil b_i \cdot m \rceil - b_i \cdot m}{t}$ fraction of it. Observe that the total fractions allocated for every individual coupon is exactly~1, and that the total fraction of coupons received by each agent $i$ is $\lceil b_i \cdot m \rceil$. (The reason for introducing the auxiliary coupons was so that this latter total will be an integer.)
    
    \item Replace each agent $i$ by $\lceil b_i \cdot m \rceil$ ``subagents" $a_1^i, \ldots a_{\lceil b_i \cdot m \rceil}^i$. The fractions of coupons that agent $i$ holds are distributed consecutively {(one by one, starting with coupon~1)} among her subagents, 
    so that the total fractions held by each subagent is exactly~1 ({if $b_i$ is not the inverse on an integer, this constraint entails that  there are coupons whose fractions are split among two consecutive subagents}). For example, suppose that $b_i \cdot m = 2$, implying that total fractions that agent $i$ holds is~2, and these are fractions of main coupons. Suppose further that $m$ is odd, implying that $\frac{m}{2}$ is not an integer. Then subagent $a_1^i$ gets a $b_i$ fraction of each of the first $\lfloor \frac{m}{2} \rfloor$ coupons, subagent $a_2^i$ gets a $b_i$ fraction of each of the last $\lfloor \frac{m}{2} \rfloor$ main coupons, and each of the subagents gets a fraction of $\frac{b_i}{2}$ of coupon $\lceil \frac{m}{2} \rceil$.
    
    \item Following the previous steps, we are now in a situation in which we have a perfect fractional matching $F$ between the coupons and the subagents. Using the Birkhoff-von Neumann theorem, decompose this fractional matching into a distribution over integral matchings. Following this step, 
    we now have $T$ 
    matchings $M^1, \ldots , M^T$ and associated nonnegative weights $p_1, \ldots, p_T$ with $\sum_{j=1}^T p_j = 1$, satisfying $F = \sum_{j=1}^T p_j \cdot M^j$. In each matching $M^j$, every subagent gets one coupon, and every coupon is allocated to one subagent.
    
    \item For each matching $M^j$, each agent $i$ collects the main coupons of her subagents. (The auxiliary coupons are discarded.) Observe that the coupons that an agent holds enjoy the following {\em spreading} property: for integer $k$ that satisfies $\frac{1}{k+1} \le b_i < \frac{1}{k}$, for every positive integer $\ell$, agent $i$ holds at most $\ell$ coupons among the first $\ell \cdot k$ coupons. 
    
    \item With each matching $M^j$ we associate an {assignment} $A^j$ as follows. Chores are {assigned} in $m$ rounds. When $r$ chores remain (which happens in round $m - r + 1$), the agent $i_r$ that holds coupon $r$ exchanges her coupon for a chore. The chore {assigned to agent $i$}
    is the one of lowest cost (according to $c_{i}$, breaking ties arbitrarily) among those chores that still remain. Its cost is no worse then the chore of rank $r$ in the decreasing cost order for that agent. The spreading property then implies that for every agent, the total cost of chores {assigned} is no more than her RRR share.
    
    \item For every $j$, the randomized {assignment} algorithm outputs {assignment} $A^j$ with probability $p_j$ (the probability of the respective matching $M^j$ in the Birkhoff-von Neumann decomposition). Hence ex-post, every agent {suffers a} cost not larger than her RRR share. Ex-ante, each agent $i$ receives each coupon $r$ with probability exactly $b_i$. As in exchange for this coupon agent $i$ is never {assigned} a chore that is more costly than the chore of rank $r$ (according $c_i$), it follows that the cost ex-ante is not larger than the proportional share.
\end{enumerate}

This randomized {assignment} algorithm runs in polynomial time because the Birkhoff-von Neumann decomposition (step~4 of the algorithm) has polynomially many matchings, and they can all be found in polynomial time. (The following polynomial time algorithm can be used, running in $T = O(mn)$ iterations. In every iteration $j$ it holds a bipartite graph $G^j$ with coupons on one side, subagents on the other side, and an edge between a subagent and a coupon if the subagent holds a positive fraction of the coupon. $G^j$ satisfies Hall's condition, and hence has a perfect matching $M^j$, which can be found in polynomial time. The weight $p_j$ given to $M^j$ is the minimum among the fractions that a subagent has of its matched coupon. For every matched edge, $p_j$ is subtracted from the fraction of the coupon that the subagent has. Hence at least one positive fraction becomes~0. Thus, the number of edges in $G^{j+1}$ is strictly smaller than that in $G^j$. As the number of edges in $G^1$ is $O(mn)$, the decomposition has $O(mn)$ matchings.)
\end{proof}

\subsection{Impossibility Results}\label{sec:chores-LB}

{The results of Theorem~\ref{thm:feasibleChores} and Theorem~\ref{thm:BoBWChores}} are best possible in a sense described in Proposition~\ref{pro:badExampleChores}.  Observe that if integer $k$ is such that $\frac{1}{k+1} \le b_i < \frac{1}{k}$ then $\check{b}_i = \frac{1}{k+1} > \frac{k}{k+1} \cdot b_i$. Hence the ratio between $\check{b}_i$ and $b_i$ might be close to $\frac{1}{2}$ when $k = 1$, but approaches~1 as $k$ grows (and $b_i$ becomes smaller). For this reason, approximation ratios compared to $\overline{MMS}(c_i,b_i) = MMS(c_i,\check{b}_i)$ that are excluded when $b_i$ is allowed to be close to~1, are not necessarily excluded when $b_i$ is small. Consequently, Proposition~\ref{pro:badExampleChores} distinguishes between the case that {responsibilities} can be arbitrarily large, and the case that 
all responsibilities are smaller than some $\delta>0$, which can be arbitrarily small.

\begin{proposition}
\label{pro:badExampleChores}
For additive costs, there are the following negative examples. 

\begin{enumerate}
\item  For every $\epsilon > 0$, there are {assignment} instances for which in every {assignment}, some agent $i$ is assigned  a bundle of cost at least $(2 - \epsilon) \cdot \overline{MMS}(c_i,b_i)$, {and in every randomized assignment is assigned  a bundle of expected cost at least $(2 - \epsilon) \cdot \overline{PS}(c_i,b_i)$}. (In this respect, the ratio of~2 in Theorem~\ref{thm:feasibleChores} is best possible.)
    \item For every $\delta > 0$, there are {assignment} instances in which every agent has {responsibility} smaller than $\delta$, for which in every {assignment}, some agent $i$ is assigned a bundle of cost at least $\frac{3}{2}\cdot \overline{MMS}(c_i,b_i)$. 
    \item For every $\delta > 0$, there are {assignment} instances in which every agent has {responsibility} smaller than $\delta$, for which in every randomized {assignment that assigns} each agent at most her proportional share ex-ante, there must be some agent $i$ that has strictly positive probability of being assigned (ex-post) a bundle of cost at least  $2\cdot \overline{MMS}(c_i,b_i)$. (In this respect, the ratio of~2 implied by Theorem~\ref{thm:BoBWChores} is best possible, {even if no agent has a high responsibility}.)
\end{enumerate}
\end{proposition}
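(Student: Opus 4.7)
The plan is to prove each of the three parts with its own explicit construction, all following a ``two types of agents over identical unit-cost chores'' template adapted to the constraints of each part.

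For Part~1, I would take two agents with responsibilities $b_1 = 1-\epsilon'$ and $b_2 = \epsilon'$, where $\epsilon' = \epsilon/4$, together with $m$ identical unit-cost chores (with $m$ even and $1/\epsilon'$ integer, so the arithmetic is clean). Then $\check{b}_1 = 1/2$ yields $\overline{MMS}(c_1,b_1) = \overline{PS}(c_1,b_1) = m/2$, while $\check{b}_2 = \epsilon'$ yields $\overline{MMS}(c_2,b_2) = \overline{PS}(c_2,b_2) = m\epsilon'$. The argument is a case split on the number $x$ of chores assigned to agent~1: if $x < (1-\epsilon/2)m$ then agent~2 bears cost exceeding $\epsilon m/2 = 2 m\epsilon' \ge (2-\epsilon)\overline{MMS}(c_2,b_2)$; otherwise agent~1 already bears cost at least $(2-\epsilon)(m/2) = (2-\epsilon)\overline{MMS}(c_1,b_1)$. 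Replacing $x$ by $\mathbb{E}[x]$ and using $\mathbb{E}[x_1]+\mathbb{E}[x_2]=m$ gives the $\overline{PS}$ version for randomized assignments.

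For Part~2 the constraint $b_i < \delta$ rules out placing an agent of entitlement near~$1$, so I would instead use a ``one small, many medium'' pattern. Pick $n$ with $1/(n-1) < \delta$, set $\beta = 1/(2n)$, let one ``small'' agent have responsibility $\beta$, and let each of the remaining $n-1$ ``medium'' agents have responsibility $(1-\beta)/(n-1)$. A short interval check shows $\check{b}_{\mathrm{med}} = 1/n$ and $\check{b}_{\mathrm{small}} = 1/(2n)$, so with $m = 2n$ unit chores one obtains $\overline{MMS}_{\mathrm{med}} = 2$ and $\overline{MMS}_{\mathrm{small}} = 1$. In any assignment, either the small agent receives at least two chores (ratio at least $2 \ge 3/2$) or the medium agents collectively receive at least $2n-1$ chores, in which case pigeonhole forces some medium agent to receive at least $\lceil (2n-1)/(n-1)\rceil = 3$ chores, yielding ratio at least $3/2$. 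The main obstacle is in finding the construction itself: the rounding gap $b_i-\check{b}_i$ alone contributes only $\Theta(1/k)$ to the ratio when $b_i \approx 1/k$, so the $3/2$ bound must be extracted by combining the small-agent rounding with the pigeonhole constraint on the medium agents.

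For Part~3 I would scale down the tight two-agent BoBW example and replicate it. Pick $k > 1/\delta$ and form $k$ pairs of agents; in each pair a ``heavy'' agent has responsibility $(1/2+\epsilon)/k$ and a ``light'' agent has responsibility $(1/2-\epsilon)/k$, with $\epsilon$ chosen small enough that $\check{b}_{\mathrm{heavy}}=1/(2k)$ and $\check{b}_{\mathrm{light}}=1/(2k+1)$. All responsibilities lie below $1/k < \delta$ and sum to~$1$. Take $m=2k$ identical unit chores; then $\overline{MMS} = 1$ for every agent. Crucially, the proportional shares $1+2\epsilon$ (heavy) and $1-2\epsilon$ (light) sum to exactly the total cost $2k$, so any randomized assignment respecting the proportional bound ex-ante must meet it with equality for every agent. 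Since each heavy agent's expected cost is then $1+2\epsilon > 1$ while the realised cost is a nonnegative integer, the support must contain a realisation in which that heavy agent receives at least two chores, i.e.\ cost at least $2 = 2\,\overline{MMS}$, as required.
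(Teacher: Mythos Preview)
Your proof is correct and follows essentially the same approach as the paper: all three parts use identical unit-cost chores with carefully chosen responsibilities, and your Part~2 construction is literally the paper's. For Part~3 the paper's version is slightly leaner---only a single agent needs a responsibility in the interval $(1/m,\,1/(m-1))$ (so that its $\overline{MMS}$ equals $1$ while its proportional share exceeds $1$), with the remaining responsibilities filled in arbitrarily below $\delta$; the ``sum of proportional shares equals total cost, hence equality is forced'' argument is then identical to yours.
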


\begin{proof}
In each of the {assignment} instances that are used in order to prove this proposition, all agents have the same cost function.

To prove item~1, {given $\epsilon > 0$,} fix integer $t \ge {\frac{2}{\epsilon}}$, 
and consider $m = 2t$ identical chores (each of cost~1) and two agents with $b_1 = \frac{2t-1}{2t}$ and $b_2 = \frac{1}{2t}$. Then {$\check{b}_1= \frac{1}{2}$ and $\check{b}_2= \frac{1}{2t}$, and} 
$\overline{MMS}(c_1,b_1) = t$ and $\overline{MMS}(c_2,b_2) = 1$. In every {assignment}, either agent~2 is assigned at least two chores and a cost of at least~$2=2\cdot \overline{MMS}(c_2,b_2)$, or agent~1 is assigned $2t-1$ chores and a cost of at least {$2t-1=(2 - \frac{1}{t}) \cdot t\geq   (2 - \varepsilon)\cdot \overline{MMS}(c_1,b_1)$}. {Additionally, 
${PS}(c_1,b_1) = 2t-1$ and ${PS}(c_2,b_2) = 1$, while
$\overline{PS}(c_1,b_1) = t$ and $\overline{PS}(c_2,b_2) = 1$. Thus, in every {randomized assignment}, either the expected number of chores agent~2 is assigned at least 2, and so her expected cost is at least $2\cdot \overline{PS}(c_2,b_2)$, or the expected number of chores agent~1 is assigned is at least $2t-2$, with cost of at least {$2t-2=(2 - \frac{2}{t}) \cdot t\geq   (2 - \varepsilon)\cdot \overline{PS}(c_1,b_1)$}.
}

To prove item~2, {given $\delta > 0$,} fix $n \ge \frac{1}{\delta} +1$. Consider $2n$ identical chores (each of cost~1) and $n$ agents. The {responsibilities} satisfy $b_n = \frac{1}{2n}$, and $\frac{1}{n} < b_i=\frac{1}{n-1}\cdot \frac{2n-1}{2n} < \frac{1}{n-1}$ 
for all agents $i \not=n$. The $\overline{MMS}$ of agent~$n$ is~1, whereas for every other agent the $\overline{MMS}$ is~2. In every {assignment}, either agent $n$ is assigned {at least two chores of cost~1 and suffering a cost of at least twice her $\overline{MMS}$, or an agent with $\overline{MMS}$ of~2 is assigned at least three chores of cost~1, suffering a cost that is at least $\frac{3}{2}$ of her $\overline{MMS}$.}

To prove item~3,  {given $\delta > 0$,}  let $m$ be such that $m-1 \ge \frac{1}{\delta}$. There are $m$ identical chores, each of cost~1. Some agent $i$ has {responsibility} $b_i$ satisfying $\frac{1}{m} < b_i < \frac{1}{m-1} \le \delta$. The $\overline{MMS}$ of agent~$i$ is~1. Other agents may have arbitrary {responsibility} (smaller than $\delta$).
As all agents have the same cost function, every randomized {assignment} that {assigns to every} 
agent at most her proportional share ex-ante must {assign to every} 
agent exactly her proportional share ex-ante. Consequently, {as $\frac{1}{m} < b_i$} agent~$i$ must have positive probability of {being assigned} more than one chore, and hence, at least twice  her $\overline{MMS}$.
\end{proof}

{Theorem \ref{thm:BoBWChores-intro} follows from Theorem \ref{thm:BoBWChores} and Proposition \ref{pro:badExampleChores}. }

\mbfuture{nice. We should check what can be said about the reciprocal \mbe{responsibilities} case (or the iterative fair inheritance case). }


\subsection*{Acknowledgments} 
 
Moshe Babaioff's research is supported in part by a Golda Meir Fellowship.
Uriel Feige's research is supported in part by the Israel Science Foundation (grant No. 1122/22).

\bibliographystyle{alpha}


\newcommand{\etalchar}[1]{$^{#1}$}

\appendix

\section{The  AnyPrice share (APS)}\label{app:aps}

We present the definition of the AnyPrice share (APS) that is based on prices. 
Defines the share of an agent $i$ to be the value she can guarantee to herself whenever her budget is set to her entitlement  $b_i$ (when $\sum_i b_i=1$) and she buys her highest-value affordable set when items are {adversarially} priced with a total price of $1$.  
Let $\prices=\{(p_1,p_2,\ldots,p_m) | p_j\geq 0\ \forall j\in\items,\ \  \sum_{j\in \items} p_j=1\}$ be the set of item-price vectors that total to $1$.   
The price-based definition of AnyPrice share is the following:  
\begin{definition}[AnyPrice share]
	 \label{def:APS}
	Consider a setting in which agent $i$ with valuation $v_i$ has entitlement $b_i$ to a set of indivisible items $\items$.
	The \emph{AnyPrice share (APS)} of agent $i$, denoted by $\anypricei$, is the value she can guarantee herself whenever the items in $\items$ are adversarially priced with a total price of $1$, and she picks her favorite affordable bundle: 
	$$\anypricei = \min_{(p_1,p_2,\ldots,p_m)\in \prices}\ \ \max_{S\subseteq \items} \left\{v_i(S) \Big | \sum_{j\in S} p_j\leq b_i\right\}$$
\end{definition}

\section{Missing proofs from Section \ref{sec:domination}}\label{app:domination}

\subsection{Personalized shares}\label{app:domination-person}

\subsubsection{Goods}\label{sec:person-goods}

We first develop the notion of personalized shares for settings in which items are goods.

We start by considering the case in which the entitlement $b_i$ of an agent is of the form $\frac{1}{k}$ for some integer $k$ (as would happen in the equal entitlements case, where $k$ would be the number of agents). 
With every valuation function $v$ we associate a ``personalized" feasible (unrestricted) share that is the ``most desirable" feasible (unrestricted) share function for agents holding the valuation function $v$. The following definition applies to ex-post settings.

\begin{definition}
\label{def:personalMMS}
    For a valuation function $v$ and entitlement $b = \frac{1}{k}$ where $k$ is a positive integer, the value of the {\em personalized MMS} share function $MMS_{v,\frac{1}{k}}$ is computed as follows. We say that a bundle $S$ is an {\em acceptable $MMS(v,\frac{1}{k})$ bundle} if $v(S) \ge MMS(v,\frac{1}{k})$. Then, for every valuation $v'$ we have that $MMS_{v,\frac{1}{k}}(v',\frac{1}{k}) = \min_S[v'(S)]$, where $S$ ranges over all acceptable $MMS(v,\frac{1}{k})$ bundles. In particular, for $v'=v$ we have that $MMS_{v,\frac{1}{k}}(v,\frac{1}{k}) = MMS(v,\frac{1}{k})$.
\end{definition}

A way to think of the $MMS_{v,\frac{1}{k}}$ (unrestricted) share is as follows. The agent ``believes" that $v$ it the ``true" valuation function (e.g., because $v$ represents current prices of the items in the free market, or because this is what she was told by an appraiser that she trusts). Hence the allocation should be fair under this specific valuation $v$. The best that one can guarantee that each one of $k$ equally entitled agents can receive is a bundle that is an  acceptable $MMS(v,\frac{1}{k})$ bundle. Hence receiving such a bundle is ``fair". If some other agent has a different valuation function $v'$, then holding such a $v'$ (which is not the ``true" valuation, but a misguided one), does not make that other agent entitled to refuse to receive a bundle that is fair under $v$. 

\begin{proposition}
    \label{pro:personalMMS}
    For every valuation $v$, {for any class $C$ of valuations functions,} the personalized MMS share $MMS_{v,\frac{1}{n}}$  is feasible for $C$ (ex-post) for allocation instances with $n$ agents that have equal entitlement. 
\end{proposition}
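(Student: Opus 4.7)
The plan is to construct an acceptable allocation directly, using a single fixed MMS partition of $\items$ with respect to the ``reference'' valuation $v$, and then to argue that this partition is simultaneously acceptable for all $n$ agents, regardless of their individual valuations $v_i \in C$.

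First, fix an $MMS(v,\tfrac{1}{n})$ partition $(T_1, T_2, \ldots, T_n)$ of $\items$, i.e., an $n$-partition achieving $\min_{j} v(T_j) \ge MMS(v,\tfrac{1}{n})$; such a partition exists by Definition~\ref{def:MMS}. By this choice, every bundle $T_j$ satisfies $v(T_j) \ge MMS(v,\tfrac{1}{n})$, so each $T_j$ qualifies as an acceptable $MMS(v,\tfrac{1}{n})$ bundle in the sense of Definition~\ref{def:personalMMS}.

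Next, assign the bundles to the agents via an arbitrary bijection $\pi:[n]\to[n]$, giving agent $i$ the bundle $S_i := T_{\pi(i)}$. For each agent $i$ with valuation $v_i \in C$ and entitlement $\tfrac{1}{n}$, the bundle $S_i$ is acceptable (w.r.t.\ $v$), so by the definition of the personalized share,
\[
v_i(S_i) \;\ge\; \min_{S \text{ acceptable}} v_i(S) \;=\; MMS_{v,\frac{1}{n}}\!\left(v_i,\tfrac{1}{n}\right).
\]
Hence $(S_1,\ldots,S_n)$ is an acceptable allocation for the share $MMS_{v,\frac{1}{n}}$, which establishes feasibility.

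There is no real obstacle here: the key observation is that the personalized share is defined as a \emph{minimum} over all $v$-acceptable bundles, so any single $v$-acceptable bundle automatically clears the bar for \emph{every} valuation $v_i \in C$. Consequently, feasibility reduces to the existence of an $n$-partition all of whose parts are $v$-acceptable, which is exactly what an $MMS(v,\tfrac{1}{n})$ partition provides. Note that the argument makes no use of additivity or monotonicity of valuations in $C$, so the result holds for an arbitrary class $C$.
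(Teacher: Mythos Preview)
Your proof is correct and follows essentially the same approach as the paper's: fix an $MMS(v,\tfrac{1}{n})$ partition, hand each agent one of its bundles, and observe that since every bundle is an acceptable $MMS(v,\tfrac{1}{n})$ bundle, each agent's value is at least the minimum defining $MMS_{v,\frac{1}{n}}(v_i,\tfrac{1}{n})$. The paper's proof is just a one-line version of what you wrote.
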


\begin{proof}
    Consider an arbitrary $MMS(v,\frac{1}{n})$ partition. {For any valuations $(v'_1,\ldots,v'_n)$ of the agents, giving each} agent one of the bundles of this partition is an allocation that is acceptable under $MMS_{v,\frac{1}{n}}$. 
\end{proof}

For the ex-ante setting, a similar reasoning leads to the personalized MES share.

\begin{definition}
\label{def:personalMES}
    For a valuation function $v$ and entitlement $b = \frac{1}{k}$ where $k$ is a positive integer, the value of the {\em personalized MES} share function $MES_{v,\frac{1}{k}}$ is computed as follows. 
    For every valuation $v'$ we have that $MES_{v,\frac{1}{k}}(v',\frac{1}{k}) = \min_P[E_{S \sim P}[v'(S)]]$. Here, $P$ ranges over all $MES(v,\frac{1}{k})$ partitions, and $S\sim P$ denotes a bundle sampled uniformly at random among the $k$ bundles of the partition $P$. In particular, for $v'=v$ we have that $MES_{v,\frac{1}{k}}(v,\frac{1}{k}) = MES(v,\frac{1}{k})$. 
\end{definition}

If $v$ is an additive valuation, then all $k$-partitions give the same expected value. In this case, for every additive valuations $v$ and $v'$,  the $MES_{v,\frac{1}{k}}$ value of $v'$ equals the proportional share, $MES_{v,\frac{1}{k}}(v', \frac{1}{k}) = PS(v',\frac{1}{k})$.

\begin{proposition}
    \label{pro:personalMES}
    For every valuation $v$, {for any class $C$ of valuations functions,} the personalized MES share $MES_{v,\frac{1}{n}}$  is feasible for $C$ (ex-ante) for allocation instances with $n$ agents that have equal entitlements. 
\end{proposition}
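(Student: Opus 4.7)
The plan is to adapt the proof of Proposition~\ref{pro:personalMMS} to the randomized (ex-ante) setting. Just as the ex-post proof exhibits a single deterministic allocation that simultaneously satisfies every agent's personalized MMS constraint, here I will exhibit a single distribution over allocations that satisfies every agent's personalized MES constraint in expectation.

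Concretely, I would fix an arbitrary $MES(v,\tfrac{1}{n})$ partition $P^* = (S_1,\ldots,S_n)$, and consider the randomized allocation that draws a uniformly random permutation $\pi$ of $[n]$ and gives agent $i$ the bundle $S_{\pi(i)}$. This is a valid randomized allocation because under each realization, $(S_{\pi(1)},\ldots,S_{\pi(n)})$ is a partition of $\items$. The key calculation is then that, for each agent $i$ with valuation $v'_i$, the expected value received is
\[
E_{\pi}[v'_i(S_{\pi(i)})] \;=\; E_{S \sim P^*}[v'_i(S)],
\]
where $S \sim P^*$ denotes a uniformly random bundle of $P^*$, since each bundle is equally likely to be assigned to $i$.

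To finish, I would invoke the definition of $MES_{v,\frac{1}{n}}$ (Definition~\ref{def:personalMES}): since $P^*$ is an $MES(v,\tfrac{1}{n})$ partition, the expected value $E_{S\sim P^*}[v'_i(S)]$ is one of the quantities over which the minimum defining $MES_{v,\frac{1}{n}}(v'_i,\tfrac{1}{n})$ is taken. Hence $E_{\pi}[v'_i(S_{\pi(i)})] \ge MES_{v,\frac{1}{n}}(v'_i,\tfrac{1}{n})$, which is precisely the ex-ante acceptability condition for share $MES_{v,\frac{1}{n}}$. Applying this inequality simultaneously to all agents $i \in [n]$ shows that the randomized allocation based on $\pi$ is acceptable, establishing feasibility.

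I do not anticipate a serious technical obstacle: the proof is a direct ``symmetrization'' of the ex-post argument, with the uniform random assignment of bundles to agents playing the role that deterministic assignment played in Proposition~\ref{pro:personalMMS}. The only subtlety is making sure the bound goes in the correct direction, i.e.\ that the minimum in the definition of the personalized MES lower-bounds (rather than upper-bounds) the expected value provided by any single optimal partition $P^*$; this is immediate from the fact that $P^*$ is one feasible choice in the set over which the minimum is taken.
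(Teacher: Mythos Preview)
Your proposal is correct and follows essentially the same approach as the paper: fix an arbitrary $MES(v,\tfrac{1}{n})$ partition and assign its bundles to the $n$ agents uniformly at random, then observe that each agent's expected value is at least the minimum defining $MES_{v,\frac{1}{n}}$. The paper's proof is just a terser version of exactly this argument.
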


\begin{proof}
    Consider an arbitrary $MES(v,\frac{1}{n})$ partition. {For any valuations $(v'_1,\ldots,v'_n)$ of the agents, giving every agent one of the bundles of this partition uniformly at random,} is a randomized allocation that is acceptable under $MES_{v,\frac{1}{n}}$.
\end{proof}

We now explain how we extend personalized shares to {arbitrary entitlements, those} that are not of the form $\frac{1}{k}$ for some integer $k$. For convenience, we restate here Definition~\ref{def:unit}.

\begin{definition}
    For entitlement $0 < b \le 1$, let $k = \lfloor \frac{1}{b} \rfloor$. 
    We define the {\em unit upper bound} of $b$ to be $\frac{1}{k}$, and denote it by $\hat{b}$. 
    We also define the ex-post share $\widehat{MMS}$ to be $\widehat{MMS}(v_i,b_i) = MMS(v_i,\hat{b}_i)$ and the ex-ante share $\widehat{MES}$ to be $\widehat{MES}(v_i,b_i) = MES(v_i,\hat{b}_i)$.
\end{definition}
{Observe that $k = \lfloor \frac{1}{b} \rfloor$ is the unique integer $k$ such that $\frac{1}{k+1}< b \le \frac{1}{k}$.}

The following definition extends Definition~\ref{def:personalMMS} to the case that the entitlement $b_i$ is not the inverse of an integer. 

\begin{definition}
\label{def:verypersonalMMS}
    For a valuation function $v_i$ and entitlement $b_i$ ($0 < b_i \le 1$), 
    the value of the {\em personalized MMS} share function $MMS_{v_i,b_i}$ is computed as follows. 
    Let integer $k$ be such that $\frac{1}{k+1} < b_i \le \frac{1}{k}$ (and hence $\hat{b}_i = \frac{1}{k}$). For every integer $\ell \in \{1, \ldots, k\}$, let ${\cal{F}}_{v_i,k,\ell}$ denote the set of all those bundles $B$ that have the property that $B$ is a disjoint union of $\ell$ bundles $B_1, \ldots, B_{\ell}$, where each of these $\ell$ bundles is an acceptable $MMS(v_i, \frac{1}{k})$ bundle.
    Consider arbitrary valuation $v_j$ and entitlement $b_j$. Let integer $f_j$ be such that $\frac{f_j}{k+1} < b_j \le \frac{f_j+1}{k+1}$. Then 
    $$MMS_{v_i,b_i}(v_j,b_j) = \min_{B \in {\cal{F}}_{v_i,k,f_j}} v_j(B)$$
    In particular, for $v_j=v_i$ and $b_j = b_i$ we have that $MMS_{v_i,b_i}(v_i,b_i) = \widehat{MMS}(v_i,b_i)$.
\end{definition}

Here is a way of interpreting Definition~\ref{def:verypersonalMMS} for $MMS_{v_i,b_i}$. 
For $b_i$ that is not the inverse of an integer, let integer $k$ be such that $\frac{1}{k+1} < b_i < \frac{1}{k}$. Agent $i$ thinks of $v_i$ as the ``true" valuation (as in the discussion following Definition~\ref{def:personalMMS}). The selection of the allocation is envisioned as being done in three steps. At first, each agent sends ``representatives" to a ``committee" that includes up to $k$ equally entitled representatives. Then, the committee selects an arbitrary $MMS(v_i,\frac{1}{k})$ partition of $\items$, and each representative gets one bundle of this partition. Finally, each agent receives the bundles of {all} her representatives. The question that remains is how many representatives should an agent with entitlement $b$ be allowed to send. To be ``fair", we want this number to be a non-decreasing function of $b$. This function should obey the constraint that the total number of representatives does not exceed $k$, regardless of the distribution of entitlements among the $n$ agents. Definition~\ref{def:verypersonalMMS} uses a step function, obtained by setting the ``price" of a representative to be just marginally above $\frac{1}{k+1}$ units of entitlements. The total entitlements of all agents combined allows them to ``purchase" at most $k$ representatives. Under this pricing, the number of representatives that agent $j$ can afford to purchase is precisely the value $f_j$ of Definition~\ref{def:verypersonalMMS}. 

\begin{proposition}
    \label{pro:verypersonalMMS}
    For every valuation $v$ and entitlement $b$, {for any class $C$ of valuation functions} the personalized MMS share $MMS_{v,b}$ is feasible for $C$ (ex-post). 
\end{proposition}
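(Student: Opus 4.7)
The plan is to design an explicit allocation procedure based on the ``representatives'' interpretation given immediately after Definition~\ref{def:verypersonalMMS}, and to verify that it produces an acceptable allocation under $MMS_{v,b}$ for any profile of valuations drawn from $C$ and any entitlement vector summing to $1$.

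Let $k$ be the unique integer with $\tfrac{1}{k+1} < b \le \tfrac{1}{k}$, so $\hat b = \tfrac{1}{k}$. First, I would fix once and for all an $MMS(v,\tfrac{1}{k})$ partition of $\items$ into $k$ bundles $B_1,\ldots,B_k$; by Definition~\ref{def:MMS} each $B_\ell$ satisfies $v(B_\ell)\ge MMS(v,\tfrac1k)$, i.e.\ each $B_\ell$ is an acceptable $MMS(v,\tfrac{1}{k})$ bundle in the sense of Definition~\ref{def:personalMMS}. Next, for every agent $j\in[n]$ let $f_j$ be the unique non-negative integer with $\tfrac{f_j}{k+1} < b_j \le \tfrac{f_j+1}{k+1}$, as in Definition~\ref{def:verypersonalMMS}.

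The key combinatorial step is to show that $\sum_j f_j \le k$, which is what lets us realize the ``representatives'' picture: from $\tfrac{f_j}{k+1} < b_j$ we get $f_j < (k+1)b_j$, and summing over $j$ gives
\[
\sum_j f_j \;<\; (k+1)\sum_j b_j \;=\; k+1,
\]
so integrality forces $\sum_j f_j \le k$. Consequently we can partition the index set $\{1,\ldots,k\}$ into disjoint groups $G_1,\ldots,G_n$ with $|G_j|\ge f_j$ for every $j$ (the slack $k-\sum_j f_j \ge 0$ can be distributed arbitrarily among the agents). Assign to agent $j$ the bundle $S_j=\bigcup_{\ell\in G_j}B_\ell$, which is a disjoint union of at least $f_j$ of the bundles $B_1,\ldots,B_k$, each of which is an acceptable $MMS(v,\tfrac{1}{k})$ bundle.

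It remains to verify acceptability: I need $v_j(S_j)\ge MMS_{v,b}(v_j,b_j)$. Since $S_j$ contains a sub-bundle $S_j'\subseteq S_j$ that is a disjoint union of exactly $f_j$ acceptable $MMS(v,\tfrac1k)$ bundles, we have $S_j'\in {\cal F}_{v,k,f_j}$ and hence $v_j(S_j')\ge \min_{B\in{\cal F}_{v,k,f_j}}v_j(B)=MMS_{v,b}(v_j,b_j)$ directly from Definition~\ref{def:verypersonalMMS}. Because we are in the goods setting and valuations in $C$ are monotone non-decreasing, $v_j(S_j)\ge v_j(S_j')$, which completes the argument. The only substantive ``hard part'' is the counting inequality $\sum_j f_j\le k$; the rest is setup and bookkeeping, and the special case $f_j=0$ is handled automatically since then ${\cal F}_{v,k,0}=\{\emptyset\}$ and $MMS_{v,b}(v_j,b_j)=0$, so any bundle (including the empty one) is acceptable for such agents.
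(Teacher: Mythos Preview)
Your proposal is correct and follows essentially the same approach as the paper: fix an $MMS(v,\tfrac{1}{k})$ partition, show $\sum_j f_j \le k$ via the strict inequality $f_j < (k+1)b_j$, and give each agent at least $f_j$ of the partition's bundles. Your version is slightly more explicit than the paper's in justifying the counting bound and in invoking monotonicity of valuations over goods to handle agents who receive more than $f_j$ bundles (the paper simply says to ``allocate the remaining items arbitrarily'' and leaves this implicit).
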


\begin{proof}
    Let $k$ be such that $\hat{b} = \frac{1}{k}$. Consider an arbitrary allocation instance, where the $n$ agents have valuations $(v_1,\ldots,v_n)$ and entitlements $(b_1, \ldots, b_n)$.  For every agent $j$, let integer $f_j$ be such that $\frac{f_j}{k+1} < b_j \le \frac{f_j+1}{k+1}$. Necessarily, $\sum_j f_j \le k$, because $\sum_j b_j = 1$.
    Consider an arbitrary $MMS(v,\frac{1}{k})$ partition of $\items$. For each agent $j$, give it $f_j$ distinct bundles from the $MMS(v,\frac{1}{k})$ partition. (If $\sum f_j < k$, then allocate the remaining items arbitrarily.) Thus, each agent $j$ gets a bundle that is the disjoint union of $f_j$ bundles, where each of these bundles is acceptable under $MMS(v,\frac{1}{k})$.  This satisfies the requirements of  Definition~\ref{def:verypersonalMMS} for being acceptable under $MMS_{v,b}$.  
\end{proof}

We can use a similar approach to extend the personalized MES (ex-ante) share to entitlements that are not the inverse of an integer. We omit the straightforward but tedious details of how this is done for arbitrary classes of valuations, and instead present only the special case of additive valuation, in which the definition simplifies to give a personalized proportional share.

\begin{definition}
\label{def:personalPS}
    For a valuation function $v_i$ and entitlement $b_i$, 
    the value of the {\em personalized PS} share function $PS_{v_i,b_i}$ is computed as follows. 
    Let integer $k$ be such that $\frac{1}{k+1} < b_i \le \frac{1}{k}$ (and hence $\hat{b}_i = \frac{1}{k}$). 
    Consider arbitrary additive valuation $v_j$ and entitlement $b_j$. Let integer $f_j$ be such that $\frac{f_j}{k+1} < b_j \le \frac{f_j+1}{k+1}$. Then 
    $$PS_{v_i,b_i}(v_j,b_j) = \frac{f_j}{k} v_j(\items)$$
    Observe that $PS_{v_i,b_i}(v_j,b_j)$ does not depend on $v_i$, and hence it does not matter whether $v_i$ is additive or not, as long as $v_j$ is additive. 
\end{definition}

We remark that for non-additive classes of valuations, the definition of personalized MES is such that the values $MES_{v_i,b_i}(v_j,b_j)$ do depend on $v_i$.

\begin{proposition}
    \label{pro:personalPS-goods}
    For every valuation $v$ and entitlement $b$, the personalized PS share $PS_{v,b}$ is feasible for the class of additive valuations {(ex-ante)}. 
\end{proposition}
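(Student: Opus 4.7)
The plan is to prove Proposition~\ref{pro:personalPS-goods} by exhibiting an explicit randomized allocation that is acceptable under $PS_{v,b}$ for every profile of additive valuations. The crucial structural fact to establish first is that the integers $f_j$ arising in Definition~\ref{def:personalPS} for the agents in any given instance satisfy $\sum_j f_j \le k$, where $k$ is the integer with $\hat{b} = \frac{1}{k}$.

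The first step is to derive this inequality. From $\frac{f_j}{k+1} < b_j$ we obtain $f_j < (k+1) b_j$ for every agent $j$, and summing over all $n$ agents gives $\sum_j f_j < (k+1) \sum_j b_j = k+1$. Since the $f_j$ are non-negative integers, this strict inequality forces $\sum_j f_j \le k$.

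The second step is the construction of the randomized allocation. Independently of the valuations, allocate the entire bundle $\items$ to agent $j$ with probability $\frac{f_j}{k}$, using disjoint events across agents (which is possible because the probabilities sum to at most~1 by the first step). Distribute any residual probability mass over the items in an arbitrary way. Because each $v_j$ is additive, in expectation agent $j$ receives value at least $\frac{f_j}{k} \cdot v_j(\items)$, which by Definition~\ref{def:personalPS} equals $PS_{v,b}(v_j,b_j)$. Hence the randomized allocation is acceptable under $PS_{v,b}$, proving feasibility ex-ante.

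There is no real obstacle here: the proof is essentially a counting argument combined with the trivial ``give the whole bundle with some probability" randomized mechanism, analogous to what was used in the proof of Theorem~\ref{thm:MES}. The only point that requires a moment's thought is why the strict inequality $\sum_j f_j < k+1$ upgrades to the weak inequality $\sum_j f_j \le k$, which follows immediately from integrality of the $f_j$.
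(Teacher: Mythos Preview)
Your proof is correct and follows essentially the same approach as the paper: establish $\sum_j f_j \le k$ from the strict inequality $f_j < (k+1)b_j$ and integrality, then give each agent the grand bundle with probability $\frac{f_j}{k}$. The paper's proof is slightly terser but identical in substance.
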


\begin{proof}
    Let $k$ be such that $\hat{b} = \frac{1}{k}$. Consider an arbitrary allocation instance, where the $n$ agents have additive valuations $(v_1,\ldots,v_n)$ and entitlements $(b_1, \ldots, b_n)$.  For every agent $j$, let integer $f_j$ be such that $\frac{f_j}{k+1} < b_j \le \frac{f_j+1}{k+1}$. Necessarily, $\sum_j f_j \le k$, because $\sum_j b_j = 1$. Give each agent $j$ the grand bundle $\items$ with probability $\frac{f_j}{k}$. (If $\sum f_j < k$, then with the remaining probability select an arbitrary allocation.) This satisfies the requirements of  Definition~\ref{def:personalPS} for being acceptable ex-ante under $PS_{v,b}$.  
\end{proof}

\subsubsection{Chores}\label{sec:person-chores}

We now explain how the notion of personalized shares is adapted to settings in which items are chores. Our presentation for chores will be less detailed than our presentation for goods. We start by recalling Definition~\ref{def:unitLowerBound}.

\begin{definition}
    For {responsibility} $0 < b < 1$, let $k$ be such that $\frac{1}{k+1} \le b < \frac{1}{k}$.
    Then the {\em unit lower bound} on $b$, denoted by $\check{b}$, is $\frac{1}{k+1}$. Define the ex-post share $\overline{MMS}$ as $\overline{MMS}(c_i,b_i) = MMS(c_i,\check{b}_i)$ and the ex-ante share $\overline{MES}$ as $\overline{MES}(c_i,b_i) = MES(c_i,\check{b}_i)$.
\end{definition}

For additive cost functions MES is the same as the proportional share PS, and so $\overline{MES}$ can be replaced by $\overline{PS}(c_i,b_i) = PS(c_i,\check{b}_i)$.

The following definition is the adaptation of Definition~\ref{def:verypersonalMMS} to the case of chores.

\begin{definition}
\label{def:personalMMSchores}
    For a cost function $c_i$ and responsibility $b_i$ ($0 < b_i < 1$), 
    the value of the {\em personalized MMS} share function $MMS_{c_i,b_i}$ is computed as follows. 
    Let integer $k$ be such that $\frac{1}{k+1} \le b_i < \frac{1}{k}$ (and hence $\check{b}_i = \frac{1}{k+1}$). For every integer $\ell \in \{1, \ldots, k+1\}$, let ${\cal{F}}_{v_i,k,\ell}$ denote the set of all those bundles $B$ that have the property that $B$ is a disjoint union of $\ell$ bundles $B_1, \ldots, B_{\ell}$, where each of these $\ell$ bundles is an acceptable $MMS(c_i, \frac{1}{k+1})$ bundle.
    Consider arbitrary cost function $c_j$ and responsibility $b_j$. Let integer $f_j$ be such that $\frac{f_j-1}{k} \le b_j < \frac{f_j}{k}$. Then 
    $$MMS_{c_i,b_i}(c_j,b_j) = \max_{B \in {\cal{F}}_{c_i,k,f_j}} c_j(B)$$
    In particular, for $c_j=c_i$ and $b_j = b_i$ we have that $MMS_{c_i,b_i}(c_i,b_i) = \overline{MMS}(c_i,b_i)$.
\end{definition}

\begin{proposition}
    \label{pro:personalMMSchores}
    For every cost function $c$ and responsibility $b$, {for any class $C$ of cost functions} the personalized MMS share $MMS_{c,b}$ is feasible for $C$ (ex-post). 
\end{proposition}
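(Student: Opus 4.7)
The plan is to mirror the proof of Proposition~\ref{pro:verypersonalMMS} (the analogous statement for goods), adapting the counting to the chores convention of rounding \emph{down} via $\check{b}$. Concretely, let $k$ be the integer with $\check{b} = \frac{1}{k+1}$, and fix an $MMS(c,\frac{1}{k+1})$ partition $P = (P_1,\ldots,P_{k+1})$ of $\items$, where $c(P_r) \le MMS(c,\frac{1}{k+1}) = \overline{MMS}(c,b)$ for every $r$. Given an arbitrary instance with cost functions $(c_1,\ldots,c_n)$ and responsibilities $(b_1,\ldots,b_n)$ summing to $1$, define integers $f_j$ by the condition $\frac{f_j-1}{k} \le b_j < \frac{f_j}{k}$. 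I will then distribute the $k+1$ parts of $P$ among the agents so that each agent $j$ receives a union of at most $f_j$ of them.

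The key counting step is to verify $\sum_j f_j \ge k+1$, so that all $k+1$ parts of $P$ can indeed be handed out. From $b_j < f_j/k$ we get $k = k\sum_j b_j < \sum_j f_j$, and since the $f_j$ are integers this gives $\sum_j f_j \ge k+1$. Also $f_j \ge 1$ for every $j$ since $b_j > 0$. Hence a greedy distribution of $P_1,\ldots,P_{k+1}$ succeeds: process agents one by one, assigning to agent $j$ any as-yet-unassigned parts of $P$ up to $f_j$ of them, until all $k+1$ parts are handed out.

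Let $B_j \subseteq \items$ denote the union of the parts assigned to agent $j$. Then $B_j$ is a disjoint union of at most $f_j$ acceptable $MMS(c,\frac{1}{k+1})$ bundles. Extending $B_j$ (if necessary) to a disjoint union of exactly $f_j$ such bundles yields a bundle $B_j' \in \mathcal{F}_{c,k,f_j}$ with $B_j \subseteq B_j'$; monotonicity of $c_j$ gives
\[
c_j(B_j) \;\le\; c_j(B_j') \;\le\; \max_{B \in \mathcal{F}_{c,k,f_j}} c_j(B) \;=\; MMS_{c,b}(c_j,b_j),
\]
so the assignment is acceptable under $MMS_{c,b}$, establishing feasibility.

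There is no serious obstacle here; the only subtlety is the counting inequality $\sum_j f_j \ge k+1$ (which is the chores-side analog of the goods inequality $\sum_j f_j \le k$ in Proposition~\ref{pro:verypersonalMMS}), and the minor technical point that an agent may receive strictly fewer than $f_j$ bundles, handled by monotonicity of $c_j$ together with the fact that $\mathcal{F}_{c,k,f_j}$ consists of unions of \emph{exactly} $f_j$ acceptable bundles.
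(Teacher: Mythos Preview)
Your proof is correct and follows essentially the same route as the paper: fix an $MMS(c,\frac{1}{k+1})$ partition, use the counting $\sum_j f_j \ge k+1$ to hand out all $k+1$ parts with agent $j$ receiving at most $f_j$ of them, and conclude acceptability. You are in fact a bit more explicit than the paper on two points it leaves implicit: the derivation of $\sum_j f_j \ge k+1$ from $b_j < f_j/k$, and the observation that receiving \emph{at most} $f_j$ acceptable bundles still yields a cost bounded by $\max_{B\in\mathcal{F}_{c,k,f_j}} c_j(B)$ (padding with empty bundles, which are acceptable since $c(\emptyset)=0$, and invoking monotonicity of $c_j$).
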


\begin{proof}
    Let $k$ be such that $\check{b} = \frac{1}{k+1}$. Consider an arbitrary allocation instance, where the $n$ agents have cost functions $(c_1,\ldots,c_n)$ and responsibilitiess $(b_1, \ldots, b_n)$.  For every agent $j$, let integer $f_j$ be such that $\frac{f_j-1}{k} \le b_j < \frac{f_j}{k}$. Necessarily, $\sum_j f_j \ge k+1$, because $\sum_j b_j = 1$.
    Consider an arbitrary $MMS(c,\frac{1}{k+1})$ partition of $\items$. For each agent $j$, give it at most $f_j$ distinct bundles from the $MMS(c,\frac{1}{k+1})$ partition, exhausting all bundles in the $MMS(c,\frac{1}{k+1})$ partition. Thus, each agent $j$ gets a bundle that is the disjoint union of at most $f_j$ bundles, where each of these bundles is acceptable under $MMS(c,\frac{1}{k+1})$.  This satisfies the requirements of  Definition~\ref{def:personalMMSchores} for being acceptable under $MMS_{c,b}$.  
\end{proof}

The following definition is the adaptation of Definition~\ref{def:personalPS} to the case of chores.

\begin{definition}
\label{def:personalPSchores}
    For a cost function $c_i$ and responsibility $b_i$, 
    the value of the {\em personalized PS} share function $PS_{c_i,b_i}$ is computed as follows. 
    Let integer $k$ be such that $\frac{1}{k+1} \le b_i < \frac{1}{k}$ (and hence $\check{b}_i = \frac{1}{k+1}$). 
    Consider arbitrary additive cost functions $c_j$ and responsibility $b_j$. Let integer $f_j$ be such that $\frac{f_j-1}{k} \le b_j < \frac{f_j}{k}$. Then 
    $$PS_{c_i,b_i}(c_j,b_j) = \frac{f_j}{k+1} c_j(\items)$$
    Observe that $PS_{c_i,b_i}(c_j,b_j)$ does not depend on $c_i$, and hence it does not matter whether $c_i$ is additive or not, as long as $c_j$ is additive. 
\end{definition}

\begin{proposition}
    \label{pro:personalPS-chores}
    For every cost function $c$ and responsibility $b$, the personalized PS share $PS_{v,b}$ is feasible for the class of additive cost functions {(ex-ante)}. 
\end{proposition}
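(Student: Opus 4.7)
\begin{proof}
The plan is to mimic the strategy used for goods in Proposition~\ref{pro:personalPS-goods}, but adapted to the fact that for chores the ``rounding'' of responsibilities is downward and therefore the $f_j$'s sum to at least (rather than at most) a specific threshold, which is exactly what one needs when assigning chores.

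Fix the reference cost function $c$ and responsibility $b$, and let $k$ be the integer with $\frac{1}{k+1}\le b<\frac{1}{k}$, so $\check{b}=\frac{1}{k+1}$. Consider an arbitrary allocation instance with additive cost functions $(c_1,\ldots,c_n)$ and responsibilities $(b_1,\ldots,b_n)$ satisfying $\sum_j b_j=1$. For each agent $j$, let $f_j$ be the integer with $\frac{f_j-1}{k}\le b_j<\frac{f_j}{k}$, so that by Definition~\ref{def:personalPSchores} we have $PS_{c,b}(c_j,b_j)=\frac{f_j}{k+1}c_j(\items)$.

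The key inequality is $\sum_j f_j \ge k+1$. Indeed, from $b_j<\frac{f_j}{k}$ we get $1=\sum_j b_j<\frac{1}{k}\sum_j f_j$, so $\sum_j f_j>k$, and since the $f_j$ are integers, $\sum_j f_j\ge k+1$. Consequently $\sum_j \frac{f_j}{k+1}\ge 1$, so we can choose nonnegative weights $p_j\le \frac{f_j}{k+1}$ with $\sum_j p_j=1$ (for example, reduce the last $\frac{f_j}{k+1}$'s arbitrarily until the sum is exactly $1$).

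Now define the randomized assignment that, with probability $p_j$, assigns all of $\items$ to agent $j$. This is a distribution over (deterministic) assignments of all chores to the agents. Since $c_j$ is additive, the expected cost suffered by agent $j$ is $p_j\cdot c_j(\items)\le \frac{f_j}{k+1}c_j(\items)=PS_{c,b}(c_j,b_j)$, so the randomized assignment is acceptable with respect to $PS_{c,b}$. This establishes the ex-ante feasibility of $PS_{c,b}$ on the class of additive cost functions.
\end{proof}

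There is essentially no obstacle: the proof is completely parallel to the goods case. The only non-routine step is observing that $\sum_j f_j\ge k+1$ (as opposed to $\sum_j f_j\le k$ in the goods case), which is exactly what makes the natural ``give everything to a random agent'' assignment work for chores, since now the target probabilities $\frac{f_j}{k+1}$ sum to at least $1$ rather than at most $1$.
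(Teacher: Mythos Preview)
Your proof is correct and follows essentially the same approach as the paper: both choose probabilities $p_j\le \frac{f_j}{k+1}$ summing to~1 (possible since $\sum_j f_j\ge k+1$) and assign all of $\items$ to agent $j$ with probability $p_j$. You simply spell out the derivation of $\sum_j f_j\ge k+1$ in slightly more detail than the paper does.
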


\begin{proof}
    Let $k$ be such that $\check{b} = \frac{1}{k+1}$. 
    Consider an arbitrary allocation instance, where the $n$ agents have additive cost functions $(c_1,\ldots,c_n)$ and responsibilities $(b_1, \ldots, b_n)$.  For every agent $j$, let integer $f_j$ be such that $\frac{f_j-1}{k} \le b_j < \frac{f_j}{k}$. Necessarily, $\sum_j f_j \ge k+1$, because $\sum_j b_j = 1$. Give each agent $j$ the grand bundle $\items$ with probability $0 \le p_j \le \frac{f_j}{k+1}$, for values of $p_j$ satisfying $\sum_j p_j = 1$.  This satisfies the requirements of  Definition~\ref{def:personalPSchores} for being acceptable ex-ante under $PS_{c,b}$.  
\end{proof}

\subsubsection{Nice shares for additive valuations}
\label{sec:nice}

Recall that we refer to a share as nice if it is name independent (see Definition~\ref{def:orderedShare}) and self maximizing (a property introduced in~\cite{BF22} and briefly discussed in Section~\ref{sec:model-share-props}). In this section we consider additive valuations {(both for goods and chores)}, and show that in this case all personalized shares are either nice, or can be replaced with nice shares that are feasible and are of the same value as the unrestricted personalized shares.

We start with the ex-ante case. In this case, the personalized share $PS_{v_i,b_i}$ is already name independent. As to being self maximizing, even though the proportional share is not self maximizing as an ex-post share~\cite{BF22}, it is self maximizing as an ex-ante share (when valuations are additive). This follows from the fact that to give an agent with valuation $v_j$ and entitlement $b_j$ {exactly her $PS_{v_i,b_i}$ share (ex-ante, according to her true $v_j$), one does not need to know $v_j$. It suffices to give her each of the items with probability $\hat{b}_j$ for goods, or $\check{b}_j$ for chores. Doing so is clearly an option even if the agent reports an incorrect $v'_j$. Hence, the agent cannot force the expected value that she receives (with respect to the true $v_j$) to be larger than $PS_{v_i,b_i}$, by misreporting of her valuation function.} 

Achieving name independence in the ex-post case is handled by a method of~\cite{bouveret2016characterizing}, that was also used for a similar purpose in~\cite{BF22}. {For completeness we {sketch} the basic idea here as well.} 
Instead of allocating $m$ items, allocate $m$ {\em coupons}. These coupons can later be used in order to determine a picking sequence among agents, where in every round $r$, the agent that gets to pick an item is the agent that holds coupon $r$. Hence the ``value" of coupon $r$ to agent $i$ is that of the $r$th most valuable item according to $v_i$, because the agent is guaranteed to be able to select such an item in exchange for her coupon, but not guaranteed to be able to select any better item. Coupon values are ``name independent" (a coupon $r$ has the same value for equivalent valuations). Interpreting the personalized share $MMS_{(v_i,b_i)}$ as determining which sets of coupons are acceptable (for example, if the bundle containing the three most valuable items under $v_i$ is acceptable, then the set of coupons $\{1,2,3\}$ is acceptable) transforms the unrestricted personalized share to a name independent share. We refer to this name independent share as $OMMS_{(v_i,b_i)}$, where $O$ stands for {\em ordered}, as the share involves ordering items by their values. {The fact that OMMS is self maximizing is a consequence of a result in~\cite{BF22} that states that $\omega$-picking-order shares are self maximizing. Further details are omitted.} 

\begin{corollary}
    \label{cor:nice}
    The following claims hold for personalized shares when agents have additive valuations and entitlement $b_i$. 
    \begin{enumerate}
        \item For any additive valuation $v_i$ over goods, the share 
        $OMMS_{(v_i,b_i)}$ is a nice (name independent and self maximizing) ex-post share that is feasible for additive goods, and satisfies $OMMS_{(v_i,b_i)}(v_i,b_i) = \widehat{MMS}(v_i,b_i)$.
        \item For any additive cost function $c_i$ over chores, the share $OMMS_{(c_i,b_i)}$ is a nice ex-post share that is a feasible for additive chores, and satisfies
        $OMMS_{(c_i,b_i)}(c_i,b_i) = \overline{MMS}(c_i,b_i)$.
        \item For any additive valuation $v_i$ over goods, the share $PS_{(v_i,b_i)}$ is a nice ex-ante share that is feasible for additive goods, and satisfies $PS_{(v_i,b_i)}(v_i,b_i) = \widehat{PS}(v_i,b_i)$ 
        \item For any additive cost function $c_i$ over chores, the share $PS_{(c_i,b_i)}(c_i,b_i)$ is a nice ex-ante share that is feasible for addiitve chores, and satisfies $PS_{(c_i,b_i)}(c_i,b_i) = \overline{PS}(c_i,b_i)$.
    \end{enumerate}
\end{corollary}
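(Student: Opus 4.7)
I will split the proof into the ex-ante cases (items 3 and 4), which follow almost directly from the definitions, and the ex-post cases (items 1 and 2), which require formalising the coupon/ordered construction sketched in the paragraph preceding the statement.

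For the ex-ante cases, I first note that $PS_{(v_i,b_i)}(v_j,b_j) = (f_j/k)\,v_j(\items)$ (Definition~\ref{def:personalPS}) depends on $v_j$ only through $v_j(\items)$, which is permutation-invariant, so name-independence is immediate; the chores case is identical. The value identity $PS_{(v_i,b_i)}(v_i,b_i)=\widehat{PS}(v_i,b_i)$ is a one-line check: for $b_j=b_i\in(1/(k+1),1/k]$ the integer $f_j$ collapses to $1$, and the formula yields $v_i(\items)/k = PS(v_i,\hat b_i)$; analogously for chores $c_i(\items)/(k+1) = \overline{PS}(c_i,b_i)$. Feasibility is exactly Proposition~\ref{pro:personalPS-goods} (resp.\ Proposition~\ref{pro:personalPS-chores}). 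For the self-maximizing property the key observation is that the feasibility witness in those propositions does not consult $v_j$ at all: each item is given to agent $j$ with a probability that depends only on $(f_j,k)$ for goods (or on $(f_j,k{+}1)$ for chores). Consequently the mechanism can always hand agent $j$ her reported personalised PS value regardless of her report, so a pessimistic agent who misreports gains nothing, which matches the operational definition of self-maximizing from~\cite{BF22}.

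For the ex-post cases, I plan to construct $OMMS_{(v_i,b_i)}$ via the coupon trick of~\cite{bouveret2016characterizing, BF22}. Fix an $MMS(v_i,\hat b_i)$-partition $(B_1,\dots,B_k)$ of $\items$ and let $R_\ell\subseteq[m]$ be the set of $v_i$-ranks of the items in $B_\ell$. With $f_j$ as in Definition~\ref{def:verypersonalMMS}, define
\[
OMMS_{(v_i,b_i)}(v_j,b_j)\ :=\ \min_{L\subseteq[k],\,|L|=f_j}\ \sum_{\ell\in L}\sum_{r\in R_\ell} v_j^{(r)},
\]
where $v_j^{(1)}\ge v_j^{(2)}\ge\cdots\ge v_j^{(m)}$ is the sorted value sequence of $v_j$. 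Name-independence is transparent because the right-hand side sees $v_j$ only through its sorted values. The identity $OMMS_{(v_i,b_i)}(v_i,b_i)=\widehat{MMS}(v_i,b_i)$ holds because $f_i=1$ and the minimum of $v_i(B_\ell)$ over $\ell$ is exactly $\widehat{MMS}(v_i,b_i)$ by choice of the partition. Feasibility is witnessed by a coupon-picking sequence: run the $m$ coupons through rounds $1,\ldots,m$; assign coupon ranks $\bigcup_{\ell\in L_j}R_\ell$ to agent $j$ (feasible by the same bookkeeping $\sum_j f_j\le k$ used in the proof of Proposition~\ref{pro:verypersonalMMS}); at each round the coupon-holder picks her own most valuable remaining item. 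A standard exchange argument shows that agent $j$'s final bundle has value at least $\sum_{\ell\in L_j}\sum_{r\in R_\ell} v_j^{(r)}\ge OMMS_{(v_i,b_i)}(v_j,b_j)$. Self-maximizing follows because $OMMS_{(v_i,b_i)}$ is realised as the worst case over an (agent-independent) collection of picking-order shares, each of which is self-maximizing by the theorem of~\cite{BF22}; taking pointwise minimum over a family of self-maximizing shares preserves the property. The chores claims (items 2 and 4) are handled by the same constructions with ``min'' and ``max'' exchanged, using Definitions~\ref{def:unitLowerBound} and~\ref{def:personalMMSchores}, and feasibility supplied by Proposition~\ref{pro:personalMMSchores}.

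The main obstacle I anticipate is the exchange argument in the feasibility step for OMMS: one needs to verify that when several agents simultaneously pick at their assigned coupon turns, agent $j$'s $r$-th pick is always worth at least $v_j^{(r')}$ for some $r'\le r$ from her assigned rank set, so that the total she obtains pointwise dominates $\sum_{r\in\bigcup R_\ell}v_j^{(r)}$. A secondary subtlety is that the coupon-to-item mapping is history-dependent, so when invoking~\cite{BF22} for self-maximizing one must phrase $OMMS_{(v_i,b_i)}$ as a minimum of (static) picking-order shares rather than as a single picking-order share; this is the reason for the ``$\min_{L}$'' in the definition above and is where a little care is needed in citing the BF22 framework.
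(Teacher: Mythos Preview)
Your approach is essentially the paper's: for the ex-ante items you argue name-independence from the formula and self-maximizing from the fact that the feasibility witness ignores the reported valuation (exactly the paper's reasoning), and for the ex-post items you pass to coupons/ranks and appeal to the picking-order framework of~\cite{BF22}. Your explicit definition of $OMMS_{(v_i,b_i)}$ via one fixed $MMS(v_i,\hat b_i)$-partition is a bit more concrete than the paper's sketch (which keeps all acceptable $MMS$ bundles), but either variant satisfies the three required properties, and your exchange argument for feasibility is the standard one.

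There is one step you should tighten. You justify self-maximizing by ``each picking-order share is self-maximizing by~\cite{BF22}; taking pointwise minimum over a family of self-maximizing shares preserves the property.'' The second clause is not a general closure theorem, and you do not prove it. What \cite{BF22} actually shows (and what the paper invokes) is that the class of \emph{$\omega$-picking-order shares}, namely shares defined as the worst case over a family of picking orders, is self-maximizing. Your $OMMS_{(v_i,b_i)}$, being $\min_{|L|=f_j}\sum_{r\in\bigcup_{\ell\in L} R_\ell} v_j^{(r)}$, is literally such a share. So drop the unproven min-closure claim and cite the $\omega$-picking-order theorem of~\cite{BF22} directly; then the argument is complete.
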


\subsection{Domination for Non-additive Valuations}\label{sec:non-add}

\begin{observation}\label{obs:MMS-non-add}
    There exists a class $C$ of valuation over goods (that is not additive) for which ${MMS}$ is \emph{not} the minimal ex-post share for equal entitlements that dominates every feasible \emph{name-independent} ex-post share (for class $C$). Additionally, for this class $C$ the share ${MES}$ is \emph{not} the minimal ex-ante share for equal entitlements that dominates every feasible \emph{name-independent} ex-ante share (for class $C$). 
\end{observation}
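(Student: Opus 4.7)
The plan is to exhibit an explicit class $C$ of non-additive monotone valuations over $m=4$ items where, at the equal-entitlement level $n=2$, name-independence collapses feasible shares well below $MMS$ and $MES$. For $i \in \{1,2,3\}$ let $M_1 = \{\{1,2\},\{3,4\}\}$, $M_2 = \{\{1,3\},\{2,4\}\}$, $M_3 = \{\{1,4\},\{2,3\}\}$ be the three perfect matchings of $[4]$, and define $v_i(S) = 1$ if $S$ contains some edge of $M_i$, and $v_i(S) = 0$ otherwise. Each $v_i$ is monotone over goods and non-additive, and the three are pairwise equivalent under coordinate permutations of $[4]$, so any name-independent share must assign them a common value. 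Take $C = \{v_1, v_2, v_3\}$. A direct computation gives $MMS(v_i, \tfrac{1}{2}) = MES(v_i, \tfrac{1}{2}) = 1$, attained by the partition $M_i$ itself.

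The key combinatorial step is an \emph{orthogonality} claim: for $i \neq j$ and any bipartition $(S_1, S_2)$ of $[4]$, we have $v_i(S_1) + v_j(S_2) \le 1$. This follows from a finite check: if both $v_i(S_1) = 1$ and $v_j(S_2) = 1$, then $S_1$ contains some edge of $M_i$ and $S_2$ contains some edge of $M_j$, but any edge of $M_i$ and any edge of $M_j$ (with $i \neq j$) share a vertex, contradicting disjointness of $S_1$ and $S_2$. Hence, in a 2-agent instance where agent~1 holds $v_i$ and agent~2 holds $v_j$ with $i \neq j$, every deterministic allocation gives value $0$ to at least one agent, and every randomized allocation satisfies $\mathbb{E}[v_i(S_1)] + \mathbb{E}[v_j(S_2)] \le 1$. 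Since name-independence forces the common value $s^\ast := s(v_i, \tfrac{1}{2})$ (equal across all $i$) for any name-independent share $s$, feasibility on these heterogeneous instances forces $s^\ast \le 0$ ex-post and $s^\ast \le \tfrac{1}{2}$ ex-ante. The ex-ante bound $\tfrac{1}{2}$ is tight: the uniform mixture of two natural deterministic allocations (each perfectly satisfying one of the two agents, e.g., agent~1 gets $\{1,2\}$ vs.\ agent~2 gets $\{2,4\}$) gives each agent expected value $\tfrac{1}{2}$.

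To conclude, I would define a share $\widetilde{s}$ on $C \times \{1/n : n \ge 1\}$ by $\widetilde{s}(v_i, 1) = 1$, $\widetilde{s}(v_i, \tfrac{1}{2}) = 0$ in the ex-post case or $\widetilde{s}(v_i, \tfrac{1}{2}) = \tfrac{1}{2}$ in the ex-ante case, and $\widetilde{s}(v_i, \tfrac{1}{n}) = 0$ for $n \ge 3$. This $\widetilde{s}$ is anonymous, name-independent, entitlement-monotone (since $1 \ge \tfrac{1}{2} \ge 0$), and feasible (trivially in the ex-post case, and in the ex-ante case via the explicit uniform mixture just described, using that homogeneous instances are handled by the partition $M_i$). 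The upper bounds of the previous paragraph show that $\widetilde{s}$ dominates every feasible name-independent share on $C$ at equal entitlements, yet $\widetilde{s}(v_i, \tfrac{1}{2}) < MMS(v_i, \tfrac{1}{2})$ and $\widetilde{s}(v_i, \tfrac{1}{2}) < MES(v_i, \tfrac{1}{2})$. Thus neither $MMS$ nor $MES$ is minimal among name-independent dominating shares on $C$. The main obstacle is just the orthogonality case analysis; once the bound $v_i(S_1) + v_j(S_2) \le 1$ is verified across all three pairs $\{i,j\}$, tightness of the ex-ante bound and entitlement monotonicity of $\widetilde{s}$ follow immediately.
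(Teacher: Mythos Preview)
Your proposal is correct and takes essentially the same approach as the paper. Both use matching-based valuations on four items: the paper defines $v$ with $v(\{A,B\})=v(\{C,D\})=1$ and takes $C$ to be the orbit of $v$ under item-name permutations, which is exactly your class $\{v_1,v_2,v_3\}$ of the three perfect-matching valuations on $[4]$. The paper then observes (as you do via the edge-intersection argument) that in any instance with two agents holding distinct $v_i,v_j$, no allocation gives both positive value, forcing every feasible name-independent share to be $0$ at entitlement $\tfrac12$ ex-post (and at most $\tfrac12$ ex-ante), while $MMS=MES=1$. Your write-up is a bit more explicit in building the witness $\tilde{s}$ and checking entitlement monotonicity and feasibility at all levels, but the construction and the key incompatibility argument are identical.
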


\begin{proof}
We show that for some class of valuations over goods and equal entitlements, it is possible to dominate every feasible share without dominating ${MMS}$. Consider the valuation $v$ defined over 4 items $\items=\{A,B,C,D\}$.  
The valuation is $v(\{A,B\})=v(\{C,D\})=1$, and $v(S)=1$ for any $S\subseteq \items$ such that $|S|>2$, and $0$ otherwise. 
Note that for two agent with equal entitlements $b=b_1=b_2=1/2$ it holds that ${MMS}(v,b)={MMS}(v,1/2)=1$.
Consider the class of valuations that includes the valuation $v$ and all valuations resulting from a permutation over the item names. We claim that the only {name independent} feasible share for two equally-entitled agents is the zero share.
Consider the valuation $v'$ that is defined as
$v'(\{A,C\})=v'(\{B,D\})=1$, and $v'(S)=1$ for any $S\subseteq \items$ such that $|S|>2$, and $0$ otherwise. 
As $v'$ is a result of a permutation over item names in $v$, for any name-independent share $s$ it holds that $s(v,1/2)=s(v',1/2)$ (in particular, ${MMS}(v',b)={MMS}(v',1/2)=1$). 
For any feasible share $s$, if $s(v,1/2)=s(v',1/2)>0$ then on the instance with two valuations $v$ and $v'$, it must give both agents positive value, but there is no allocation that gives both positive value.  
We conclude that the only feasible name-independent share for this class is the zero share, and it does not dominate the ${MMS}$.

The claim about ${MES}$ follow similarly, since ${MES}(v',b)={MES}(v',1/2)=1$ but for any feasible name-independent ex-ante share $s$, {it holds that $s(v,1/2)=s(v',1/2)$ and $s(v,1/2)+s(v',1/2) \le 1$, so  $s(v,1/2)=s(v',1/2)\leq \frac{1}{2}<1 = {MES}(v',b)={MES}(v',1/2)$.}
\end{proof}

\section{The  Bidding Game}\label{app:bidding}

\subsection{Myopic strategies are too weak}
\label{sec:myopic}

We show that no myopic strategy can guarantee $\frac{1}{2}$-$\widehat{TPS}$. 

Consider agent~1 with entitlement $0.51$ and agent~2 with entitlement $0.49$, both having the same additive valuation $v$ with $v(\items) = 1$. To get $\frac{1}{2}$-$\widehat{TPS}$, agent~1 needs to reach a value of at least $\frac{1}{2}$. What bid should agent~1 place if the most valuable item has value $0.34$? This depends on values of other items, and hence, the bidding strategy cannot be myopic.

If there are two other items of value $0.33$, agent~1 must bid not more than $0.27$, as otherwise agent~2 will let agent~1 win this bid, and agent~2 will collect the other two items by bidding $0.24$ on each of them. 

If there are $66$ other items each of value $0.01$, agent~1 must bid at least $0.3$, as otherwise agent~2 will bid $0.3$ and win the first item. Then agent~2 can bid $0.011$ on each of the remaining items, until she wins $17$ of them. Agent~2 has sufficient budget to afford~17 items, and agent~1 does not have sufficient budget to prevent this by winning more than $66 - 17 = 49$ items at these prices.

\subsection{Proof of Claim \ref{claim:k2}}\label{app:proof-k2}
Familiarity with the proof of Claim~\ref{claim:largek} can help in following the proof of Claim \ref{claim:k2} below. 

    \mbfuture{check this proof} 

    \begin{proof}
    The setting of the claim corresponds to entitlement $b_i$ satisfying $\frac{1}{k+1} < b_i \le \frac{1}{k}$ for $k = 2$.   
    The item value bounds then imply that $v_i(e_1) < \frac{1}{4}$, $v_i(e_2) + v_i(e_3) < \frac{1}{4}$, and $v_i(e_j) < \frac{1}{8}$ for all $j \ge 3$. We shall make use of one more item value bound, which is $v_i(e_8) < \frac{1}{12}$. This bound can be assumed to hold, because otherwise agent $i$ can bid $\frac{1}{9}$ on each of the first eight items. The adversary has a budget smaller than $\frac{2}{3}$, and can win an most five of these item. Hence, the agent can win three of the first eight items, for a value of at least $3 \cdot \frac{1}{12} = \frac{1}{4}$, as desired. Summarizing, in our case of $\frac{1}{3} < b_i \le \frac{1}{2}$, we have the following item value bounds.

 \begin{enumerate}
        \item $v_i(e_1) < \frac{1}{4}$.
        \item $v_i(e_2) + v_i(e_3) < \frac{1}{4}$.
        \item $v_i(e_j) < \frac{1}{8}$ for all $j \ge 3$. (This is implied by bound 2 above.) 
        \item $v_i(e_j) < \frac{1}{12}$ for all $j \ge 8$.
    \end{enumerate}

    To describe the bidding strategy of agent $i$, it will be convenient to represent the value of an item as $\frac{x}{72}$, where $x < 18$ (as all items have value smaller than $\frac{1}{4} = \frac{18}{72}$). The value of the bid will be represented as $\frac{f(x)}{72}$. The function $f(x)$ grows linearly with $x$, except around two special intervals of width $\frac{1}{72}$, in which it remains constant. These constant values are $\frac{1}{9}$ (corresponding to $f(x)=8$) and $\frac{1}{6}$ (corresponding to $f(x)=12$).  
    The function $f(x)$ is described in Table~\ref{tab:f(x)}.

    \begin{table}[hbt!]
        \centering
        \begin{tabular}{c|ccccc}
               & $0 < x < 8$ & $8 \le x < 9$ &  $9 \le x < 13$ & $13 \le x < 14$ & $14 \le x < 18$\\
               \hline
            $f(x)$ & $x$ & 8 & $x-1$ & 12 & $x-2$  
        \end{tabular}
        \caption{The bid $\frac{f(x)}{72}$ as a function of the value $\frac{x}{72}$ of the item.}
        \label{tab:f(x)}
        \end{table}

    There are two natural exceptions to using bids as in Table~\ref{tab:f(x)}. In both exceptions, agent $i$ bids all her remaining budget.

    \begin{itemize}
    \item If at a round $j$, the remaining budget of the agent is too small in order to place the bid dictated by Table~\ref{tab:f(x)}, bid the entire remaining budget in round $j$.
         \item If at a round $j$, winning item $e_j$ suffices for the agent in order to reach the desired value of $\frac{1}{4}$, bid the entire remaining budget in round $j$. 
    \end{itemize}
    


    Recall the notion of the surplus {introduced before} Claim~\ref{claim:largek} (the access value according to $v_i$ that the adversary consumes, compared to the adversary's payment) 
    {and the fact that} it suffices to show that the adversary cannot accumulate a surplus larger than $b_i - \frac{1}{2k}$. {In our case of $k=2$, this means that in suffices to bound the surplus by} ${\frac{1}{3}- \frac{1}{4} = } \frac{1}{12} = \frac{6}{72}$. We will not be able to show that the adversary's surplus is at most $\frac{1}{12}$. Instead we will show that either the adversary's surplus is at most $\frac{1}{12}$, or that after some items are consumed, the resulting sub-instance that remains is one for which there is a strategy that (combined with the items that the agent already won) gives the agent a value of at least $\frac{1}{4}$.

    \mbfuture{CHECK FROM HERE}

    There are several types of events in which the adversary can gain surpluses. We first list the types of surplus events that may happen when the agent has sufficient budget in order to bid according to Table~\ref{tab:f(x)}.

    \begin{enumerate}

    \item Type $S$ (where $S$ stands for Single surplus): an item $e_j$ had value $\frac{8}{72} < v_i(e_j) \le \frac{13}{72}$, the agent bid according to Table~\ref{tab:f(x)}, the adversary won the item and paid the bid of the agent. In this event, the adversary paid at least $\frac{8}{72}$, and the surplus is at most $\frac{1}{72}$. Observe that by the item value bounds, event $S$ can only happen for items $e_1, \ldots, e_7$, as $v_i(e_8) < \frac{1}{12} = \frac{6}{72} < \frac{8}{72}$.
        
        \item Type $D$ (where $D$ stands for Double surplus): an item $j$ had value $\frac{13}{72} < v_i(e_j) \le \frac{18}{72}$, the agent bid according to Table~\ref{tab:f(x)}, the adversary won the item and paid the bid of the agent. In this event, the adversary paid at least $\frac{12}{72}$, and the surplus is at most $\frac{2}{72}$. Observe that by the item value bounds, event $D$ can only happen for items $e_1$ and/or $e_2$, as $v_i(e_3) < \frac{1}{8} = \frac{9}{72} < \frac{13}{72}$. When it happens for item~1 we refer to it as $D_1$, and when it happens for item~2 we refer to it as $D_2$. 
        
    \end{enumerate}

We now list the types of surplus events that may happen when the agent already won some items, and does not have sufficient budget left in order to bid according to Table~\ref{tab:f(x)}.

    \begin{enumerate}

         \item Type $O$ (where $O$ stands for One item): the agent who previously won one item $e$ does not have sufficient budget left in order to bid on an item $e'$ according to Table~\ref{tab:f(x)} (in particular, the budget left is smaller than $v_i(e')$), the adversary wins $e'$ and pays less than $v_i(e')$.  An event $O$ may only happen if $e = e_1$, and $e' = e_2$. This is because according to Table~\ref{tab:f(x)}, the agent never pays more than $\frac{16}{72}$ for a single item, leaving the agent a budget of at least $\frac{8}{24}$. By the item value bounds, $v_i(e_3) < \frac{1}{8} = \frac{9}{72}$, and by Table~\ref{tab:f(x)} the bid in this case is at most $\frac{8}{72}$, so the agent can afford the bid. The surplus {from item $e_2$ when $O$ happens} 
        can be as large as $v_i(e_2) - \frac{1}{9}$. 
        As $v_i(e_2)$ might have value of nearly $\frac{1}{4}$, this surplus might be larger then our limit of $\frac{1}{12}$. 
        

        \item Type $T$ (where $T$ stands for Two items): the agent who previously won two items (but still did not reach a value of $\frac{1}{4}$) does not have sufficient budget in order to bid on an item $e'$ according to Table~\ref{tab:f(x)} (in particular, the budget left is smaller than $v_i(e')$), the adversary wins $e'$ and pays less than $v_i(e')$.  
        An event $O$ may only happen if $e' \in \{e_3, \ldots, e_7\}$. 
        The fact that $e' \not\in \{e_1, e_2\}$ holds because $O$ happens only if the agent previously won two items.
        The claim that $e' \not\in \{e_8, \ldots, e_m\}$ follows from the item value bound $v_i(e_8) \le \frac{1}{12}$. The agent who won two items and has not reached a value of $\frac{1}{4}$ still has a budget of at least $\frac{1}{3} - \frac{1}{4} = \frac{1}{12}$ left (because the agent never pays for an item more than its value), and hence can afford to bid on $e_8$.
        
        The surplus {from item $e'$ when $T$ happens} is at most $\frac{1}{72}$. Let us explain how we deduce this upper bound. For $y > 0$ and $z \ge 0$, let $\frac{8+y}{72}$ and $\frac{8-z}{72}$ denote the payments of the agent on the first item and second items that she won, respectively. Note that necessarily $y > 0$, as otherwise the agent can afford to win three items (item values are non-increasing), and that necessarily $z \ge 0$, because if $z < 0$ each of the items that the agent won has value at least $\frac{9}{72}$ (by inspection of Table~\ref{tab:f(x)}), and the agent already has a value of $\frac{1}{4}$. As the value that the agent has is assumed to be at most $\frac{1}{4}$, and the value of the first item is at least $\frac{8+y}{72} + \frac{1}{72}$ (by inspection of Table~\ref{tab:f(x)}). we have that $\frac{8+y}{72} + \frac{1}{72} + \frac{8-z}{72} < \frac{18}{72}$, implying that $y-z < 1$. As the bid on $e'$ is at least $\frac{1}{3} - \frac{8+y}{72} - \frac{8-z}{72} = \frac{8 - y + z}{72}$ and $v_i(e') \le \frac{8-z}{72}$ (as item values are non-increasing), the surplus is at most $\frac{8-z}{72} - \frac{8 - y + z}{72} = \frac{y - 2z}{72} < \frac{y - z}{72} < \frac{1}{72}$.
    \end{enumerate}

    Observe that there is no surplus event in which the agent previously won three or more items (but still did not reach a value of $\frac{1}{4}$) and does not have sufficient budget in order to bid on an item $e'$ according to Table~\ref{tab:f(x)}. This is because not reaching a value of $\frac{1}{4}$, the agent has a budget of at least $\frac{1}{3} - \frac{1}{4} = \frac{1}{12}$ left, whereas $v_i(e') < \frac{1}{12}$. (The fact that item values are non-increasing implies that the agent paid at most $\frac{1}{12}$ for the third item that she won, and $e'$ cannot be more valuable than this third item.)

    To reach surplus above $\frac{1}{12}$, the adversary can benefit from various combinations of the above types of events. We do a case analysis over all possible combinations. At its top level, the case analysis splits based on whether $D_1$ happens or not. At its next level, if $D_1$ happens, then the case analysis splits based on whether $D_2$ happens or not, whereas if $D_1$ does not happen, the case analysis splits based on whether $O$ happens or not. (If neither $D_1$ nor $O$ happen then there is an additional split, based on whether $D_2$ happens. This split is insignificant, because conditioned on $D_1$ not happening, only one of $D_2$ and $O$ might happen, and the case that $D_2$ happens is treated by the same argument that handles the case that $O$ happens.)
    

    \begin{enumerate}
            \item {There is both a $D_1$ event and a $D_2$ event.} Then, the adversary wins both $e_1$ and $e_2$, and pays 
        at least $\frac{1}{6} + \frac{1}{6}= \frac{1}{3}$. 
        At that point, the agent will have higher budget than the adversary {(as $b_i > \frac{1}{3} > (1 - b_i) - \frac{1}{3}$)}, and the total value of the items that remain is at least $\frac{1}{2}$ {(by the item value bounds)}. The agent can switch to the strategy for $k=1$ and get a value of at least $\frac{1}{2} \cdot \frac{1}{2} = \frac{1}{4}$, as desired.  
        
        \item There is a $D_1$ event but not a $D_2$ event. 
        {There cannot be an $O$ event (as it conflicts $D_1$)}. 
        There can be at most four $S$ events, because the $D_1$ event and five $S$ events would cost the adversary at least $\frac{1}{6} + 5 \cdot \frac {1}{9} > \frac{2}{3}$, which is more than its budget. Likewise, there can be at most four $T$ events, as the agent needs to first win two items, and so the $T$ event can only happen on the items $\{e_4,e_5,e_6,e_7\}$. A similar argument shows that the sum of $S$ events and $T$ events is at most four. As each such event contributes a surplus of at most $\frac{1}{72}$, the surplus is at most $\frac{1}{36} + 4 \cdot \frac{1}{72} = \frac{1}{12}$. 

        \item There is no $D_1$ event but there is a $D_2$ event. Our treatment of the case that there is a $D_1$ event and an $O$ event handles also this case (in both cases the agent wins $e_1$ and the adversary wins $e_2$). See item~5.

        \item There is no $D_1$ event and no $D_2$ event and no $O$ event. 
        There can be at most five $S$ events, because six $S$ events would cost the adversary at least $6 \cdot \frac {1}{9} = \frac{2}{3}$, which is more than its budget. Likewise, there can be at most five $T$ events, as the agent needs to first win two items, and so the $T$ events can only happen on the items $\{e_3, e_4,e_5,e_6,e_7\}$. A similar argument shows that the sum of $S$ events and $T$ events is at most five. As each such event contributes a surplus of at most $\frac{1}{72}$, the surplus is at most $5 \cdot \frac{1}{72} < \frac{1}{12}$.

        \item  
        There is no $D_1$ event and there is an $O$ event. By definition of $O$, this implies that the agent wins $e_1$ and the adversary wins $e_2$ with a surplus. The $O$ event implies that no $D_2$ event happens (in both events the adversary wins $e_2$, and the difference is only in whether the agent could bid according to Table~\ref{tab:f(x)} or not). The analysis here applies also in the complement case in which $D_2$ happens rather than $O$.  

        Let us first note that the fact that an $O$ event happens implies that no $T$ event will ever happen. This is because the agent pays at least $\frac{1}{6}$ for $e_1$. Hence if the agent wins a second item and does not yet reach value of $\frac{1}{4}$, she pays at most $\frac{1}{12}$ for the second item, and has a budget of at least $\frac{1}{12}$ left for any single additional item, whereas the additional item cannot require a higher bid (as item values are non-increasing). 
        
        We conclude that no $D_1$ even happens, no $D_2$ event happens and no $T$ event happens. The only way in which the adversary can gain surplus (beyond the $O$ event) is by $S$ events. We now analyse two cases, depending on the value of $v_i(e_1) + v_i(e_3)$.

        Suppose first that $v_i(e_1) + v_i(e_3) \ge \frac{1}{4}$. In this case the agent bids her entire remaining budget of $e_3$ (as it did on $e_2$), and the adversary is forced to win it. As the agent bid at most $\frac{16}{72} = \frac{2}{9}$ on $e_1$, her bids on each of $e_2$ and $e_3$ are at least $\frac{1}{9}$. The item value bounds imply that $v_i(e_2) + v_i(e_3) < \frac{1}{4}$, and the combined surplus of the two items is at most $\frac{1}{4} - \frac{2}{9} = \frac{1}{36}$. As additional $S$ events can each offer the adversary a surplus of at most $\frac{1}{72}$, and the budget of the adversary suffices for only three additional $S$ events, the total surplus is at most $\frac{1}{36} + 3 \cdot \frac{1}{72} < \frac{1}{12}$, as desired.

        It remains to consider the case that $v_i(e_1) + v_i(e_3) < \frac{1}{4}$. In this case, let us consider the instance $I'$ that remains after the agent wins $e_1$ and the adversary wins $e_2$. Denote the payment of the agent on $e_1$ by $\frac{1}{6} + s$, noting that $0 < s < \frac{4}{72}$ (the upper bound on $s$ is from Table~\ref{tab:f(x)}). Then the budget that the agent has in $I'$ is at least $\frac{1}{6} - s$. We also have that $v_i(e_1) = \frac{1}{6} + s + \frac{2}{72}$, and so the value that agent $i$ needs to achieve in $I'$ is $\frac{1}{4} - v_i(e_1) = \frac{4}{72}-s$, and no single item in $I'$ has such a value. The payment of the agent on $e_1$ plus the payment of the adversary on $e_2$ are at least $b_i$ (as the agent bids all her remaining budget on $e_2$), implying that the total budget in $I'$ is at most $\frac{2}{3}$. Scaling the total budget to~1, the budget $b'_i$ of the agent in $I'$ is at least  $\frac{3}{2}\cdot (\frac{1}{6} - s) = \frac{1}{4} - \frac{3}{2} \cdot s$. As $v_i(e_2) \le v_i(e_1) < \frac{1}{4}$, the total value of all items in $I'$ is at least $\frac{1}{2}$. Scaling values of item so that the total value is~1, the target value that the agent needs to achieve in $I'$ is at most $2\cdot(\frac{4}{72}-s) = \frac{8}{72} - 2s$, and no item has value higher than this. 
        This target value is less than $\frac{1}{2} b'_i$ (regardless of the value of $s$). 
        The agent can now switch to the ``bid your value" strategy for $I'$, as it is known that this strategy guarantees a value of at least $\frac{1}{2} \cdot b'_i$, if no single item has value larger than $b'_i$.  
        \end{enumerate}
        
    \end{proof}

\begin{remark}
\label{ref:lookahead}
    The bidding strategy described in the proof of Claim~\ref{claim:k2} involves substantial ``look ahead". The bid on $e_1$ depends on the value of $e_8$. If the item value bound $v_i(e_8) < \frac{1}{12}$ holds then the bid is according to Table~\ref{tab:f(x)}, whereas if $v_i(e_8) > \frac{1}{12}$ the bid is $\frac{1}{9}$, independently of $v_i(e_1)$. The use Table~\ref{tab:f(x)} while ignoring this look ahead to $v_i(e_8)$ does not guarantee a value of $\frac{1}{4}$. 
    
    For positive $\epsilon < \frac{1}{12}$, consider an example with $b_i = \frac{1}{3} + \frac{\epsilon}{72}$ and eight items whose $v_i$ values are $(14 + \epsilon, 10-\epsilon, 8, 8, 8, 8, 8, 8)$, all scaled down by a factor of~72 (so that the sum of item values is~1). This valuation satisfies all item value bounds except for the one for $e_8$, as $v_i(e_8) = \frac{8}{72} > \frac{1}{12}$. Suppose that the item value bound for $e_8$ is ignored and the agent uses Table~\ref{tab:f(x)}. In this case, the adversary might win $e_1$ paying $\frac{12 + \epsilon}{72}$, and the agent might win $e_2$  and $e_3$, paying $\frac{9 - \epsilon}{72}$ and $\frac{8}{72}$, respectively. At this point, the budget that the agent still holds is $\frac{7+2\epsilon}{72}$. The adversary can win each of the remaining five items, paying $\frac{7 + 2\epsilon}{72}$ on each one. The total payment of the adversary is $\frac{(12 + \epsilon) + 5\cdot(7 + 2\epsilon)}{72} < \frac{2}{3} - \frac{\epsilon}{72}$ (the inequality holds because $\epsilon < \frac{1}{12}$), implying that the adversary has sufficient budget for all these payments. The agent receives a value of $\frac{10-\epsilon}{72} + \frac{8}{72} < \frac{1}{4}$.
\end{remark}

\section{Proofs from Section \ref{sec:chores}}\label{app:chores}

\subsection{Proof of Proposition \ref{prop:rounding-chores}}
In this section we prove Proposition \ref{prop:rounding-chores}.
The proof directly follows from Lemma \ref{lem:chores-MMS-no-add}, Lemma \ref{lem:chores-MES} and Lemma \ref{lem:chores-MMS-additive} presented below. 

\begin{lemma}\label{lem:chores-MMS-no-add}
    For any class $C$ of cost functions  over chores,  $\overline{MMS}$ dominates every feasible unrestricted ex-post share for arbitrary responsibilities  (for class $C$). Moreover, it is the maximal ex-post share for arbitrary responsibilities that dominates every feasible unrestricted ex-post share (for class $C$).
\end{lemma}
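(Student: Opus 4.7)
The plan is to mirror the two-part proof of Proposition~\ref{prop:rounding-MMS-goods}, but with inequalities reversed to account for the fact that in the chores setting domination means ``cost is no larger''. So I split the proof into a domination half (every feasible unrestricted share has cost at least $\overline{MMS}$) and a maximality half (any share dominating all feasible unrestricted shares has cost at most $\overline{MMS}$).

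For the domination half, fix a feasible unrestricted share $s'$, a cost function $c \in C$, and a responsibility $b$, and let integer $k$ satisfy $\frac{1}{k+1} \le b < \frac{1}{k}$, so that $\check{b} = \frac{1}{k+1}$ and $\overline{MMS}(c,b) = MMS(c,\tfrac{1}{k+1})$. The clean instance to consider is the symmetric one with $k+1$ identical agents, each having cost function $c$ and responsibility $\frac{1}{k+1}$, whose responsibilities sum to~$1$. Feasibility of $s'$ provides an assignment in which each agent bears cost at most $s'(c,\frac{1}{k+1})$. On the other hand, in any partition of $\items$ into $k+1$ bundles at least one bundle has $c$-cost at least $MMS(c,\frac{1}{k+1})$, by definition of MMS for chores. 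Combining these two facts gives $s'(c,\frac{1}{k+1}) \ge MMS(c,\frac{1}{k+1})$.

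The only subtlety is that we need the inequality at the given responsibility $b$ (which can be strictly larger than $\frac{1}{k+1}$), not merely at $\check{b}$. For this I would invoke responsibility monotonicity, which is part of the definition of a share for chores: $s'(c,b) \ge s'(c,\frac{1}{k+1})$. Concatenating gives $s'(c,b) \ge MMS(c,\frac{1}{k+1}) = \overline{MMS}(c,b)$, which is the desired domination.

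For the maximality half, suppose $\bar{s}$ dominates every feasible unrestricted ex-post share on $C$, and assume for contradiction that $\bar{s}(c,b) > \overline{MMS}(c,b)$ for some $c \in C$ and some $b$. Here I would invoke Proposition~\ref{pro:personalMMSchores} from Appendix~\ref{sec:person-chores}, which produces a personalized share $MMS_{c,b}$ that is feasible for $C$ and satisfies $MMS_{c,b}(c,b) = \overline{MMS}(c,b)$. Since $\bar{s}$ dominates this particular feasible share, we obtain $\bar{s}(c,b) \le MMS_{c,b}(c,b) = \overline{MMS}(c,b)$, contradicting the assumption.

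The main obstacle is just the small but real mismatch compared to the goods argument: for goods one considers $k$ identical agents of entitlement $b$ plus one ``leftover'' agent with smaller entitlement, and this works uniformly because $kb \le 1$; the analogous setup for chores would want $k+1$ agents at responsibility $b$, but $(k+1)b$ can exceed~$1$. The two-step fix above (compare at $\check{b}$ via the symmetric $(k{+}1)$-agent instance, then lift to $b$ by responsibility monotonicity) is what makes the chores direction go through, and is why the explicit monotonicity requirement on shares for chores is needed in the proof, unlike in the goods case.
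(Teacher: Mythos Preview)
Your proof is correct and essentially the same as the paper's. The only cosmetic difference is in how you build the witness instance for the domination half: the paper takes $k$ agents at responsibility $b$ together with one leftover agent at $B = 1 - kb \le \check{b}$ (and applies responsibility monotonicity to that leftover agent), whereas you take $k{+}1$ agents all at $\check{b} = \frac{1}{k+1}$ and then apply responsibility monotonicity to lift from $\check{b}$ to $b$. Both constructions yield a $(k{+}1)$-agent instance and both rely on responsibility monotonicity in exactly the way the paper anticipates; the maximality half via Proposition~\ref{pro:personalMMSchores} is identical.
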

\begin{proof}
Fix a class $C$ of cost functions  over chores.  We first show that  $\overline{MMS}$ dominates every feasible unrestricted ex-post share for arbitrary responsibilities,  that is, for any feasible  unrestricted ex-post share $s$ it holds that $\overline{MMS}(c,b)\leq s(c,b)$ for any cost function $c$ and any responsibility $b$. 
Assume in contradiction that for such a feasible share $s$ there exist a cost function $c$ and a responsibility $b$ such that $\overline{MMS}(c,b)> s(c,b)$.
Let {$k+1=(\check{b})^{-1}$}, 
for 
$\check{b}$ which is the  unit {lower} 
bound on $b$.
Consider an instance with $k$ agents having a responsibility $b$, and one agent with responsibility $B= 1-b\cdot k \le 1-\frac{k}{k+1}= \frac{1}{k+1}={\check{b}}$. {All agents have the same cost function $c$.}
By monotonicity of the share $s$ it holds that $s(c,B)\leq s(c,b)$.
As the share $s$ is feasible, there must be an assignment that assigns each one of the $k+1$ agents chores of cost at most $s(c,b)< \overline{MMS}(c,b)= MMS(c, \check{b})= MMS(c, \frac{1}{k+1})$. But this contradicts the definition of $MMS(c, \frac{1}{k+1})$ as the minimum cost that can be suffered by every one of  $k+1$ equally responsible agents with cost function $c$. 

 We next show that $\overline{MMS}$ is the maximal share for arbitrary responsibilities that dominates every feasible unrestricted ex-post share (for class $C$).
 That is, we show that a share that dominates every feasible unrestricted share must dominate $\overline{MMS}$ . 
 Let $\bar{s}$ be an ex-post share that dominates every feasible unrestricted ex-post share for arbitrary responsibilities (for class $C$). 
 Assume in contradiction that $\bar{s}$ does not dominate $\overline{MMS}$. Then there exist a cost function $c\in C$ and a responsibility $b$ such that $\overline{MMS}(c,b)< \bar{s}(c,b)$. By Proposition \ref{pro:personalMMSchores}, the personalized MMS share $MMS_{c,b}$ is feasible for $C$ (ex-post). For this share it holds that $MMS_{c,b}(c,b) = \overline{MMS}(v,b)$. As $\overline{MMS}(c,b)< \bar{s}(c,b)$, the share $\bar{s}$ does not dominate the feasible share  $MMS_{c,b}$, a contradiction. 
\end{proof}

A similar claim can also be proven for MES and ex-ante shares. As the proof is essentially the same as the proof for the ex-post case, we omit it. 

\begin{lemma}\label{lem:chores-MES}
    For any class $C$ of cost functions  over chores,  $\overline{MES}$ dominates every feasible unrestricted ex-ante share for arbitrary responsibilities  (for class $C$). Moreover, it is the maximal ex-ante share for arbitrary responsibilities that dominates every feasible unrestricted ex-ante share (for class $C$). 
\end{lemma}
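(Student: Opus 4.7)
The plan is to mirror the argument used in Lemma \ref{lem:chores-MMS-no-add}, swapping ex-post for ex-ante and $\overline{MMS}$ for $\overline{MES}$. For chores, we interpret $MES(c,\frac{1}{k+1})$ as $\min_{(S_1,\ldots,S_{k+1})\in {\cal S}_{k+1}} E_{j}[c(S_j)]$ (the minimum replaces the maximum used in Definition~\ref{def:MES} for goods, matching the fact that agents try to minimize cost); for additive costs this coincides with the proportional share $\frac{1}{k+1}c(\items)$, so all arguments specialize correctly in the case of primary interest.

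For the first part (domination), I would assume for contradiction that some feasible unrestricted ex-ante share $s$ and some $(c,b)$ with $c\in C$ satisfy $\overline{MES}(c,b) > s(c,b)$. Let $k$ be the integer with $\check{b}=\frac{1}{k+1}$, and build the instance with $k$ agents of responsibility $b$ and one extra agent with responsibility $B = 1-kb \le \frac{1}{k+1}=\check{b}\le b$, all sharing the cost function $c$. Responsibility monotonicity of $s$ gives $s(c,B)\le s(c,b)$, so feasibility of $s$ yields a randomized assignment in which every one of the $k+1$ agents suffers expected cost at most $s(c,b) < MES(c,\frac{1}{k+1})$. Summing over agents, the total expected cost is strictly less than $(k+1)\cdot MES(c,\frac{1}{k+1})$. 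But that randomized assignment is a distribution over partitions $(S_1,\ldots,S_{k+1})$ of $\items$, and for every such partition $\sum_j c(S_j) \ge (k+1)\cdot MES(c,\frac{1}{k+1})$ by the definition of $MES$ for chores; taking expectation over the distribution gives the opposite inequality, a contradiction.

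For the second part (maximality), I would let $\bar{s}$ be any ex-ante share that dominates every feasible unrestricted ex-ante share (so $\bar{s}(c,b)\le s(c,b)$ whenever $s$ is feasible) and assume for contradiction that $\bar{s}(c,b) > \overline{MES}(c,b)$ for some $c\in C$ and some responsibility $b$. The natural ex-ante analog of Proposition~\ref{pro:personalMMSchores}, obtained by adapting Definition~\ref{def:verypersonalMMS} with the unit lower bound $\check{b}$ (distributing $\lceil b_j/\check{b}_i\rceil$ bundles of an $MES(c,\check{b}_i)$-minimizing partition to each agent $j$ via an appropriate randomized rule), produces a personalized ex-ante share $MES_{c,b}$ that is feasible for $C$ and satisfies $MES_{c,b}(c,b) = \overline{MES}(c,b)$. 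Since $\bar{s}$ dominates this feasible share, $\bar{s}(c,b) \le MES_{c,b}(c,b) = \overline{MES}(c,b)$, contradicting the assumption.

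The main obstacle is the construction and verification of the ex-ante personalized share used in the second part, since the appendix only spells out the analogous object for ex-post MMS (Proposition~\ref{pro:personalMMSchores}) and for additive ex-ante PS (Proposition~\ref{pro:personalPS-chores}); however, the construction is a direct adaptation of Definition~\ref{def:verypersonalMMS} combined with the chores-style responsibility rounding from Definition~\ref{def:unitLowerBound}, and the feasibility proof is identical in spirit to Proposition~\ref{pro:verypersonalMMS}, so no new idea is required.
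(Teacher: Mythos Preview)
Your approach matches the paper's own treatment: the paper omits this proof, stating it is essentially the same as the ex-post case (Lemma~\ref{lem:chores-MMS-no-add}), and your sketch mirrors that argument, with the summing-of-expected-costs step being exactly the natural ex-ante analogue of the MMS contradiction.

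One correction in Part~2: the number of bundles assigned to agent $j$ should be the $f_j$ from Definition~\ref{def:personalMMSchores}, namely the integer with $\frac{f_j-1}{k}\le b_j<\frac{f_j}{k}$, not $\lceil b_j/\check{b}_i\rceil$. With your formula, for the anchor agent herself (who has $\frac{1}{k+1}\le b_i<\frac{1}{k}$), one gets $\lceil b_i(k+1)\rceil=2$ whenever $b_i>\frac{1}{k+1}$, whereas the construction needs $f_i=1$ so that $MES_{c,b}(c,b)=\overline{MES}(c,b)$; otherwise the contradiction in the maximality step does not fire. The paper's $f_j$ ensures both $f_i=1$ and $\sum_j f_j\ge k+1$ (since $f_j>kb_j$), which is what feasibility of the personalized share requires.
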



    \begin{lemma}\label{lem:chores-MMS-additive}
    For the class of additive cost functions  over chores and arbitrary responsibilities, the share $\overline{MMS}$ is the maximal ex-post share  that dominates every feasible {nice} 
    ex-post share, and the share $\overline{PS}$ is the maximal ex-ante share  that dominates every feasible  {nice} 
    ex-ante share. 
    \end{lemma}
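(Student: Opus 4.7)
The plan is to mirror the proof of Proposition~\ref{prop:goods-additive-minimal} for the goods case, switching the direction of inequalities and using the chores analogues of the personalized shares developed in Appendix~\ref{sec:nice} and Appendix~\ref{app:domination-person}.

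First, the fact that $\overline{MMS}$ dominates every feasible unrestricted ex-post share (and $\overline{MES} = \overline{PS}$ for additive cost functions dominates every feasible unrestricted ex-ante share) for the class of additive cost functions over chores follows from Lemma~\ref{lem:chores-MMS-no-add} and Lemma~\ref{lem:chores-MES}, respectively. Since the class of nice shares is a subset of the class of unrestricted shares, this domination in particular covers the nice feasible shares. So all that remains is the maximality claim.

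For maximality of $\overline{MMS}$ in the ex-post setting, I will suppose for contradiction that $\bar{s}$ is an ex-post share that dominates every feasible nice ex-post share for the class of additive cost functions, yet $\bar{s}$ does not dominate $\overline{MMS}$. Then there exist an additive cost function $c$ and a responsibility $b$ with $\bar{s}(c,b) > \overline{MMS}(c,b)$. By Corollary~\ref{cor:nice} (item~2), the personalized share $OMMS_{(c,b)}$ is nice (name-independent and self-maximizing), feasible for additive chores, and satisfies $OMMS_{(c,b)}(c,b) = \overline{MMS}(c,b)$. Hence $\bar{s}(c,b) > OMMS_{(c,b)}(c,b)$, which contradicts the assumption that $\bar{s}$ dominates the nice feasible share $OMMS_{(c,b)}$.

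The proof for maximality of $\overline{PS}$ in the ex-ante setting is entirely analogous, replacing $OMMS_{(c,b)}$ by the nice feasible ex-ante share $PS_{(c,b)}$ from Corollary~\ref{cor:nice} (item~4), which satisfies $PS_{(c,b)}(c,b) = \overline{PS}(c,b)$. No step is really an obstacle here, as all the real technical work has already been done: specifically, the construction of personalized nice feasible shares whose value at $(c,b)$ matches the candidate maximal dominating share, which is exactly the content of Corollary~\ref{cor:nice} and Propositions~\ref{pro:personalMMSchores} and~\ref{pro:personalPS-chores}.
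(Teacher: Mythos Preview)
Your proof is correct and follows exactly the approach the paper indicates: it mirrors Proposition~\ref{prop:goods-additive-minimal} by invoking the nice personalized shares of Corollary~\ref{cor:nice} (items~2 and~4) to witness, for any $(c,b)$, a feasible nice share attaining the value $\overline{MMS}(c,b)$ (resp.\ $\overline{PS}(c,b)$), which forces the contradiction. The paper's own proof simply refers back to that proposition and to Appendix~\ref{sec:nice} and omits the details, so you have filled in precisely what was left implicit.
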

    \begin{proof}
   {The proof is similar to the proof of the corresponding claim for goods, presented in Proposition \ref{prop:goods-additive-minimal}, using the corresponding personalized shares defined in  Appendix \ref{sec:nice}. We omit the details.}
        \end{proof}


\end{document}